\PassOptionsToPackage{hyperfootnotes=false}{hyperref}
\documentclass[final]{colt2026}
\jmlrpages{}

\usepackage{booktabs}
\usepackage{array}
\usepackage{multirow}
\usepackage{colortbl}
\usepackage{tcolorbox}
\usepackage{microtype}
\usepackage{mathtools}
\usepackage{enumitem}
\usepackage{float}
\usepackage{tikz}
\usepackage{times}
\hypersetup{hypertexnames=false}
\ifdefined\standaloneappendix
  \usepackage{xr-hyper}
\fi
\newcolumntype{L}[1]{>{\raggedright\arraybackslash}p{#1}}
\definecolor{oursrowblue}{RGB}{234,243,252}
\newtcolorbox{researchquestion}{
  colback=blue!3,
  colframe=blue!45!black,
  boxrule=.55pt,
  arc=1.5pt,
  left=5pt,
  right=5pt,
  top=4pt,
  bottom=4pt,
  before skip=7pt,
  after skip=7pt
}

\RestyleAlgo{ruled}
\SetAlgoLined
\LinesNumbered
\DontPrintSemicolon
\SetAlgoNlRelativeSize{-1}
\SetKwInput{KwIn}{Input}
\SetKwInput{KwInit}{Initialize}
\SetKw{KwRet}{return}

\makeatletter
\newenvironment{coltalgorithm}[1][t]{%
  \setcounter{AlgoLine}{0}%
  \begin{algocf@algorithm}[#1]\small
}{%
  \end{algocf@algorithm}%
}
\makeatother

\ifdefined\standaloneappendix
  \title[Poisson Exchange Beyond Submodularity]{Poisson Exchange Beyond Submodularity: Effective Approximation Algorithms for Offline and Online Subset Selection over Matroids --- Appendix}
\else
  \title[Poisson Exchange Beyond Submodularity]{Poisson Exchange Beyond Submodularity: Effective Approximation Algorithms for Offline and Online Subset Selection over Matroids}
\fi

\coltauthor{
 \Name{Shi Fu}$^{1}$
 \and
 \Name{Youming Qiao}$^{2}$
 \and
 \Name{Dacheng Tao}$^{1}$
 \and
 \Name{Zongqi Wan}$^{3}$
 \and
 \Name{Qixin Zhang}$^{1,\dagger}$\\
 \addr $^{1}$College of Computing and Data Science, Nanyang Technological University, Singapore\\
 \addr $^{2}$Centre for Quantum Software and Information, University of Technology Sydney, Australia\\
 \addr $^{3}$School of Computing and Information Technology, Great Bay University, China%
 \footnotetext{Authors are listed in alphabetical order.\\$^{\dagger}$Corresponding author: qixin.zhang2026@gmail.com.}
}
\newcommand{\x}{\boldsymbol{x}}
\newcommand{\E}{\mathbb{E}}
\newcommand{\R}{\mathbb{R}}
\newcommand{\1}{\mathbf{1}}
\newcommand{\cI}{\mathcal{I}}
\newcommand{\cB}{\mathcal{B}}
\newcommand{\cM}{\mathcal{M}}

\newcommand{\OPT}{\mathrm{OPT}}
\newcommand{\MGPE}{\textnormal{\texttt{MGPE}}}

\newcommand{\Reg}{\operatorname{Reg}}

\newcommand{\odotprod}{\mathbin{\odot}}

\DeclareMathOperator*{\argmax}{arg\,max}
\DeclareMathOperator*{\argmin}{arg\,min}

\newtheorem{assumption}{Assumption}
\begin{document}
\maketitle
\ifdefined\standaloneappendix
  \makeatletter
  \let\appendix@originalref\ref
  \renewcommand{\ref}[1]{%
    \ifcsname r@#1\endcsname
      \appendix@originalref{#1}%
    \else
      \appendix@originalref{main-#1}%
    \fi}
  \makeatother
\else

\begin{abstract}
Over the past decade, a growing body of research has shown that
$\gamma$-weak submodularity broadly arises in numerous subset selection tasks,
including feature selection, neural network pruning, and video summarization. Despite its prevalence, maximizing a $\gamma$-weakly submodular function subject to a general matroid constraint
remains challenging. To
date, the only known approximation guarantee is the conservative
$(1+1/\gamma)^{-2}$ factor established by \citet{chen2018weakly}. To improve upon this result, this paper proposes a novel algorithm called \MGPE, which repeatedly
performs maximum-gain local exchanges through careful control of a
non-homogeneous Poisson clock, and proves that this \MGPE\ can attain an approximation
ratio arbitrarily close to $\rho_\gamma=1-\left(\gamma/(2-\gamma)\right)^{
\frac{\gamma^2}{2(1-\gamma)}
}$. In sharp contrast to the
previous guarantee,  our obtained factor
$\rho_\gamma$ not only strictly improves upon $(1+1/\gamma)^{-2}$ for every
$\gamma\in(0,1]$, but also can asymptotically approach the optimal
$(1-1/e)$-approximation for submodular maximization as $\gamma\to1$. Furthermore, we surprisingly find that when the matroid
constraint reduces to a cardinality or the objective satisfies
the stronger notion of $\alpha$-weak DR-submodularity, \MGPE\ can automatically recover the
tight approximation ratios of $1-e^{-\gamma}$ and $1-e^{-\alpha}$,
respectively. Here, $\alpha\in(0,1]$ denotes the DR ratio. Finally, by combining the balanced exchange matrices of
\citet{wan2026banditsubmodularmaximizationmatroid} with two specially designed online oracles, we extend \MGPE\ to the online settings
while preserving the corresponding approximation factors of its offline
counterpart. To the
best of our knowledge, this is the first constant-factor approximation
guarantee for online non-submodular maximization under a
general matroid constraint.
\end{abstract}

\section{Introduction}
Subset selection aims to identify a small set of representative items from a large ground set, which finds
 numerous applications in machine learning, statistics and data mining, including feature selection~\citep{das2018approximate,elenberg2018restricted}, sensor placement~\citep{krause2008near,hashemi2020randomized}, data summarization~\citep{wei2015submodularity,mirzasoleiman2016distributed} and product marketing~\citep{kempe2003maximizing,borgs2014maximizing}. Beyond
the aforementioned representational capability, many of these applications also require the selected items to obey certain structural constraints, such as cardinality  and diversity, which can be naturally modeled by a matroid. Motivated by these findings, this paper considers the problem of maximizing a set function subject to a matroid constraint. Formally, given a finite ground set $U$, a monotone utility function $f:2^U\to\mathbb{R}$, and a matroid $\mathcal{M}\triangleq(U,\mathcal{I})$ where $\mathcal{I}$ denotes the family of feasible subsets, our goal is to find a set $S\in\mathcal{I}$ that maximizes the corresponding utility $f(S)$, i.e., $\max_{S\in\mathcal{I}} f(S)$.

In general, such problem is \textbf{NP}-hard
~\citep{natarajan1995sparse,feige1998threshold} and a large body of work has therefore focused on developing efficient approximation algorithms. In particular, when $f$ is \emph{submodular}, matroid-constrained maximization has been extensively studied and admits the optimal $(1-1/e)$-approximation guarantee~\citep{calinescu2011maximizing,filmus2012tight,filmus2014monotone}. espite this elegant theory, exact submodularity may be overly restrictive for many practical applications. Indeed, many real-world scenarios give rise to utility functions that are \emph{close to submodular}. Examples include variable selection for regression~\citep{das2018approximate,elenberg2018restricted}, video summarization~\citep{chen2018weakly}, neural network pruning~\citep{el2022data}, experimental design~\citep{hashemi2019submodular,harshaw2019submodular} and sparse optimal transport~\citep{manupriya2024submodular}.

One prominent notion for characterizing \emph{close-to-submodular} objectives is \emph{weak submodularity}~\citep{das2011submodular,das2018approximate}. Specifically, a monotone set function $f:2^U\to\mathbb{R}$ is said to be $\gamma$-weakly submodular if, for any $A\subseteq B\subseteq U$ and $\gamma\in(0,1]$, \[\sum_{e\in B\setminus A} f(e\mid A)
\ge
\gamma\bigl(f(B)-f(A)\bigr),\] where $f(B\mid A)\triangleq f(A\cup B)-f(A)$. Notably, when $\gamma=1$, $f$ will reduce to a submodular function.  
Compared to the extensive literature on submodular maximization, there is only a limited amount of research exploring the maximization of $\gamma$-weakly submodular objectives over matroids. In particular, \citet{chen2018weakly} showed that the Residual Random Greedy algorithm~\citep{buchbinder2014submodular} can achieve an approximation guarantee of $(1+1/\gamma)^{-2}$. However, as $\gamma\to1$,  this ratio will approach $1/4$, which is far below the optimal $(1-1/e)$-approximation guarantee for monotone submodular maximization.  To improve this, \citet{thiery2022two} introduced the notion of \emph{upper submodularity ratio} $\beta$, which complements the standard $\gamma$-weak submodularity by characterizing approximate submodularity from the perspective of removing elements. Specifically, a monotone function $f$ is said to be \emph{$\beta$-weakly submodular from above} if, for any $A\subseteq B\subseteq U$, 
\[
\sum_{e\in B\setminus A} f(e\mid B\setminus\{e\})
\le
\beta\bigl(f(B)-f(A)\bigr),
\] where $\beta\ge 1$. Building upon this two-sided submodularity, \citet{thiery2022two} developed a distorted local-search algorithm that achieves an approximation ratio of
$\gamma^2\frac{(1-e^{-\phi})}{\phi}$ for $(\gamma,\beta)$-weakly submodular maximization, where $\phi\triangleq \gamma^2+\beta(1-\gamma)$. Note that this guarantee will recover the optimal  $(1-1/e)$ ratio in the submodular case $(\gamma=\beta=1)$.

However, this stronger approximation result still has several limitations. First, $(\gamma,\beta)$-weakly submodular functions only constitute \emph{a restricted subclass of} the $\gamma$-weakly submodular objectives we seek to address. Second, as shown by \citet{thiery2022two}, even for a fixed $\gamma$, the corresponding upper ratio $\beta$ can become \emph{very large}, which will substantially deteriorate the resulting $(\frac{\gamma^2(1-e^{-\phi})}{\phi})$-approximation. In view of all this, the following question comes to our mind:
\begin{tcolorbox}[
    colback=blue!4,
    colframe=blue!45!black,
    boxrule=0.8pt,
    arc=1.5pt,
    left=6pt,
    right=6pt,
    top=6pt,
    bottom=6pt
]
\textbf{Q1.} Is it possible to design a \emph{constant-factor} approximation algorithm for $\gamma$-weakly submodular maximization over matroids, which can \emph{recover the optimal $(1-1/e)$-guarantee as $\gamma\rightarrow1$}? 
\end{tcolorbox}

In addition, recent years have witnessed growing attention to \emph{online} subset selection over matroids, where the utility function evolves over time and the decision maker must repeatedly select a feasible subset before observing the current objective. Such settings capture numerous applications including online sensor placement~\citep{zhang2025effective,zhang2025nearoptimal}, sequential resource allocation~\citep{streeter2009online,patton2026online} and dynamic batch selection~\citep{pmlr-v162-mindermann22a}. Unfortunately, most existing studies on \emph{online} subset selection focus on submodular objectives and heavily rely on the multilinear extension~\citep{chen2018projection,chen2018online,zhang2019online,zhang2022boosting,wan2023bandit}. Extending these approaches to $\gamma$-weakly submodular functions is challenging, since it is unclear how to \emph{losslessly} round a fractional solution of the multilinear relaxation of $\gamma$-weakly submodular objectives to a feasible set~\citep{thiery2022two}. This naturally motivates us to ask:

\begin{tcolorbox}[
    colback=blue!4,
    colframe=blue!45!black,
    boxrule=0.8pt,
    arc=1.5pt,
    left=6pt,
    right=6pt,
    top=6pt,
    bottom=6pt
]
\textbf{Q2.} Is it possible to design an \emph{online} approximation algorithm with \emph{provable  guarantees} for $\gamma$-weakly submodular maximization over matroids?
\end{tcolorbox}
In this paper, we first answer \textbf{Q1} affirmatively by presenting a novel
maximum-gain Poisson exchange algorithm, called \MGPE, for $\gamma$-weakly submodular maximization over matroids. At a high level, our \MGPE\ is inspired by the
recent GS-Poisson algorithm of \citet{rozenman2026poisson} for monotone
submodular maximization. Its algorithmic core, however, is a fundamentally
different exchange criterion designed to address the analytical difficulties
arising from the lack of submodularity. Specifically, \MGPE\ selects exchanges
using a newly designed maximum-gain rule, which enables us to establish
constant-factor approximation guarantees for the non-submodular objectives
considered in this paper. Furthermore, by combining two specially designed
online linear oracles with the balanced exchange matrices of
\citet{wan2026banditsubmodularmaximizationmatroid}, we extend \MGPE\ to the
online setting. The resulting algorithm preserves the corresponding
approximation factors of its offline counterpart while achieving sublinear
approximation regret, thereby providing an answer to \textbf{Q2}.
Specifically, our main results are summarized as follows.

\subsection{Our Results}
Our first result gives an approximation guarantee for $\gamma$-weakly submodular maximization problems.
\begin{theorem}\label{thm:intro-offline}
Let $O$ be an optimal solution and $\varepsilon\in(0,1)$.  For every
monotone $\gamma$-weakly submodular objective $f$, our proposed \MGPE\ algorithm will output a 
randomized feasible base $S_{\mathrm{out}}$ satisfying:
\begin{enumerate}[label=(\roman*),leftmargin=2.2em]
\item under a general matroid $\cM=(U,\cI)$,
\begin{equation}\label{intro_add}
   \E[f(S_{\mathrm{out}})]
 \ge (\rho_\gamma-\varepsilon)f(O), 
\end{equation}where where $\rho_\gamma\triangleq1-\left(\gamma/(2-\gamma)\right)^{\gamma^2/(2(1-\gamma))}$ when $\gamma\in(0,1)$ and $\rho_1=1-1/e$;
\item under a $k$-size cardinality constraint, $\E[f(S_{\mathrm{out}})]
 \ge \bigl(1-e^{-\gamma}-\epsilon\bigr)f(O)$.
\end{enumerate}
\end{theorem}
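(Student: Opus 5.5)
We analyze \MGPE\ as a continuous-time process on bases. Start from an arbitrary base $S_0$. A non-homogeneous Poisson clock of rate $\lambda(t)$ runs on $[0,T]$, and at each ring we perform the maximum-gain exchange: we pick $(a,b)$ with $a\in S$, $b\notin S$, $S-a+b\in\cI$, maximizing $f(S-a+b)-f(S)$, possibly after a uniformly random removal. The output is $S_{\mathrm{out}}=S_T$. The plan has three steps: (a) derive a differential inequality for $\phi(t)\triangleq\E[f(S_t)]$; (b) choose $\lambda(\cdot)$ and $T$ to optimize the solution of that ODE; (c) discretize, losing $\varepsilon$.

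\textbf{Step (a), the core inequality.} Fix the current base $S$, with $|S|=|O|=k$. Take a bijection $\pi:O\setminus S\to S\setminus O$ such that $S-\pi(o)+o\in\cI$ for every $o$; this is the standard matroid base exchange. A maximum-gain exchange is worth at least the average over the pairs $(\pi(o),o)$. We split that average gain as
\[
f(S-\pi(o)+o)-f(S)\;\ge\; f(o\mid S)-f(\pi(o)\mid S-\pi(o)) - \text{(interaction term)}.
\]
Without submodularity the interaction term does not vanish. We control it by comparing the exchange with the add-only gain $f(o\mid S-\pi(o))$. Summing over $o$ and applying $\gamma$-weak submodularity with $A=S$ and $B=S\cup O$ gives $\sum_o f(o\mid S)\ge\gamma(f(O)-f(S))$. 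The removal losses should be handled through the Poisson structure: with a uniformly random removal at rate $\lambda$, the expected loss is $\frac1k\sum_{a\in S}f(a\mid S-a)$. We bound this loss by a quantity involving $f(S)$ through weak submodularity applied to chains of prefixes of $S$. The target is the ODE
\[
\phi'(t)\;\ge\;\lambda(t)\Bigl[\gamma\bigl(f(O)-\phi(t)\bigr)-c_\gamma(t)\,\phi(t)\Bigr],
\]
where the removal penalty $c_\gamma(t)$ depends on $\gamma$ and on how much of $S_t$ has been refreshed. That refreshed fraction is $1-e^{-\Lambda(t)}$ with $\Lambda(t)=\int_0^t\lambda$. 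The loss term is the main obstacle. For $\gamma=1$ it reduces to the GS-Poisson analysis of \citet{rozenman2026poisson}. For $\gamma<1$, elements added at different times cannot be credited independently. I would first try to track the contribution of each element by its insertion time, and bound the marginal of a surviving element by $\gamma^{-1}$ times its gain at insertion, which is where a factor like $(2-\gamma)/\gamma$ could enter.

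\textbf{Step (b), solving the ODE.} Reparametrize by $s=\Lambda(t)$ and integrate the linear ODE. Then optimize the horizon $s^\ast$. This should produce $\rho_\gamma=1-(\gamma/(2-\gamma))^{\gamma^2/(2(1-\gamma))}$: the base $\gamma/(2-\gamma)$ comes from balancing the gain rate $\gamma$ against the loss rate $2-\gamma$ over the horizon, and the exponent comes from the optimal stopping point. As a check, letting $\gamma\to1$ gives $\rho_1=1-1/e$.

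\textbf{Cardinality case (ii).} With a uniform matroid no exchange is forced, so a greedy-type addition step applies: until the set has size $k$, the max-gain move behaves as an addition, and $\sum_{o}f(o\mid S)\ge\gamma(f(O)-f(S))$ gives $\phi'\ge\frac{\gamma}{k}(f(O)-\phi)$ per unit time with no loss term. Integrating over time $k$ gives $1-e^{-\gamma}$.

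\textbf{Step (c), discretization.} The Poisson process is simulated in discrete steps, so the continuous-time bound must be transferred to the actual algorithm. Making the step count polynomial in $1/\varepsilon$ and $k$ costs at most $\varepsilon f(O)$, via a standard coupling argument or a Taylor bound on the per-step error.
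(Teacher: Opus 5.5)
\textbf{Gap 1: the algorithm and potential are not those of \MGPE.} You select swaps by the set gain $f(S-a+b)-f(S)$ and track $\phi(t)=\E[f(S_t)]$. With a uniformly random removal followed by the best reinsertion, and with $b=a$ allowed as in \MGPE\ (since $i\in\mathcal C_A(i)$), this is oblivious local search. It freezes at any local optimum of $f$.

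Already for $\gamma=1$ such local optima can have value about $f(O)/2$. Take a partition matroid with blocks $\{s_1,o_1\},\{s_2,o_2\}$. Let $s_1,s_2$ cover items $x_1,x_2$, and let $o_1,o_2$ cover $\{x_2,y_1\},\{x_1,y_2\}$, with $w(x_i)=1$ and $w(y_i)=1-\delta$. If the arbitrary initial base is $\{s_1,s_2\}$, the process never moves. So no inequality $\phi'\ge\lambda(t)\bigl[\gamma(f(O)-\phi)-c_\gamma(t)\phi\bigr]$ can yield $1-1/e$, whatever the schedule.

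\MGPE\ instead inserts $j_i^*\in\argmax_{j\in\mathcal C_A(i)}\nabla_jF(\tau\1_{A-i})$, a time-dependent, non-oblivious rule. The right potential is $Q(\tau)=\E[F(\tau\1_{A(\tau)})]$. With that potential, the removal loss you call the main obstacle cancels exactly. Conditioned on $A(\tau)=S$, the right derivative is $\sum_{i\in S}\nabla_iF(\tau\1_S)+\frac{\tau\lambda(\tau)}{k}\sum_{i\in S}\bigl(\nabla_{j_i^*}F(\tau\1_{S-i})-\nabla_iF(\tau\1_{S-i})\bigr)$. Since $\nabla_iF(\tau\1_S)=\nabla_iF(\tau\1_{S-i})$, the rate $\lambda(\tau)=k/\tau$ makes the scaling drift cancel the removal term. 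What is left is exactly $\sum_{i\in S}\nabla_{j_i^*}F(\tau\1_{S-i})$, with no insertion-time bookkeeping needed.

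\textbf{Gap 2: the two ideas that bound this sum are missing.} The first is a pairwise repair inequality. Apply $\gamma$-weak submodularity to $A=R_0$, $B=R_0+i+j$ with $R_0\sim\tau\1_{S-i}$. Combine it with $\E_{R\sim\tau\1_S}[f(j\mid R)]=(1-\tau)\E[f(j\mid R_0)]+\tau\E[f(j\mid R_0+i)]$. This gives $C_\gamma(\tau)\nabla_jF(\tau\1_{S-i})\ge\gamma\E_{R\sim\tau\1_S}[f(j\mid R)]-(1-\gamma)\tau\nabla_iF(\tau\1_{S-i})$, where $C_\gamma(\tau)=\gamma+(1-\gamma)\tau$.

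The second is self-absorption: because $i\in\mathcal C_S(i)$, the max-gain sum dominates $\sum_{i\in S}\nabla_iF(\tau\1_{S-i})$. Sum the repair inequality along a Brualdi bijection $h:S\to O$ and apply weak submodularity to $\sum_i f(h(i)\mid R)$. Absorbing the negative term then gives $Q'_+\ge\frac{\gamma^2}{\gamma+2(1-\gamma)\tau}\bigl(f(O)-Q\bigr)$.

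\textbf{Gap 3: Step (b) and the source of $\varepsilon$.} From that drift, $\rho_\gamma=1-\exp\bigl(-\int_0^1\frac{\gamma^2}{\gamma+2(1-\gamma)\tau}\,d\tau\bigr)$ on a fixed window. Your horizon-optimization account of the base $\gamma/(2-\gamma)$ is not a derivation. The $\varepsilon$ loss comes from starting the clock at $\varepsilon$, because $k/\tau$ blows up at $0$. It does not come from discretization; the Poisson process is run exactly.

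\textbf{Gap 4: the cardinality case.} \MGPE\ starts on a base and stays on bases, so there is no addition phase and every move is a swap. The paper instead uses that, in a uniform matroid, every $z\in S\setminus O$ can be swapped with every $o\in O\setminus S$. It bounds the maximum by the average $\frac1\ell\sum_{o\in O\setminus S}\nabla_oF(\tau\1_{S-z})$ and keeps $\nabla_yF(\tau\1_{S-y})$ for $y\in S\cap O$. Both are expanded with common multilinear weights $w_A$. Weak submodularity is then applied at the same set $A$, which gives drift $\gamma\bigl(f(O)-F(\tau\1_S)\bigr)$ with no $(1-\gamma)$ penalty.
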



At a high level, the key technical step in our analysis is that the
maximum-gain rule employed by \MGPE\ ensures that the total exchange gain
dominates the self-insertion contribution, namely,
$\sum_{i\in S}\nabla_iF(\tau\1_{S-i})$, where $F$ denotes the multilinear
extension of $f$. Consequently, this contribution can be absorbed by the
total exchange gain and removed from the final drift bound, yielding a closed
one-dimensional differential inequality for the expected value process
$Q(\tau)=\E[F(\tau\1_{A(\tau)})]$, where $A(\tau)$ denotes the random base
maintained by \MGPE\ at time $\tau$, that is to say,
\[
Q^{'}_{+}(\tau)
 \ge \frac{\gamma^2}{\gamma+2(1-\gamma)\tau}
       \bigl(f(O)-Q(\tau)\bigr).
\]
Solving this differential inequality gives the approximation guarantee stated
in Theorem~\ref{thm:intro-offline}. Our second result gives stronger guarantees when the objective has additional
structure. 

\begin{theorem}\label{thm:intro-stronger}
Let $O$ be an optimal base of a general matroid $\cM=(U,\cI)$ and fix
$\varepsilon\in(0,1)$.  There exists a randomized algorithm that returns a
base $S_{\mathrm{out}}\in\cI$ satisfying:
\begin{enumerate}[label=(\roman*),leftmargin=2.2em]
\item for every normalized monotone $(\gamma,\beta)$-weakly submodular
objective $f$,
\[
 \E[f(S_{\mathrm{out}})]
 \ge (R_{\gamma,\beta}-\varepsilon)f(O),
\]
where $R_{\gamma,\beta}\triangleq1-\left(\zeta/(2-\zeta)\right)^{\gamma\zeta/(2(1-\zeta))}$ and $\zeta=\max\{\gamma,2-\beta\}$;
\item for every normalized monotone $\alpha$-weakly DR-submodular objective $f$,
\[
 \E[f(S_{\mathrm{out}})]
 \ge (1-e^{-\alpha}-\varepsilon)f(O).
\]
\end{enumerate}
\end{theorem}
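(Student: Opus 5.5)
We reuse the continuous-time analysis behind Theorem~\ref{thm:intro-offline}. \MGPE\ maintains a random base $A(\tau)$, $\tau\in[0,1]$, and a non-homogeneous Poisson clock triggers maximum-gain exchanges. We track $Q(\tau)=\E[F(\tau\1_{A(\tau)})]$, where $F$ is the multilinear extension. As in the sketch after Theorem~\ref{thm:intro-offline}, the drift is
\[
Q'_+(\tau)=\E\Bigl[\sum_{i\in A}\nabla_i F(\tau\1_{A-i})\Bigr]+(\text{rate})\cdot\E[\text{max exchange gain}],
\]
and the maximum-gain rule lets the self-insertion term be absorbed. The only place where $\gamma$-weak submodularity entered was two inequalities. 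The first, via the bijection $\pi:O\to A$ from the base exchange property, lower-bounds the exchange gains $\sum_{o\in O}\bigl[\nabla_o F(\tau\1_{A-\pi(o)})-\nabla_{\pi(o)}F(\tau\1_{A-\pi(o)})\bigr]$ in terms of $f(O)-F(\tau\1_A)$. The second upper-bounds the removal loss $\sum_{i\in A}\nabla_iF(\tau\1_{A-i})$ in terms of $F(\tau\1_A)$. For each structured class we will re-derive these two inequalities with sharper constants, obtain a differential inequality of the same shape, and solve it.

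\textbf{Part (i).} Weak submodularity from above bounds the removal loss: $\sum_{i\in A}f(i\mid A-i)\le\beta f(A)$, and this lifts to $F$ by averaging over the random set $R(\tau\1_A)$. We then interpolate between this bound and the $\gamma$-based bound from the general proof. The goal is to show that the effective parameter replacing $\gamma$ in the denominator term $\gamma+2(1-\gamma)\tau$ is $\zeta=\max\{\gamma,2-\beta\}$, while the gain side keeps one factor of $\gamma$. That would give
\[
Q'_+(\tau)\ge\frac{\gamma\zeta}{\zeta+2(1-\zeta)\tau}\bigl(f(O)-Q(\tau)\bigr).
\]
The $2-\beta$ branch is plausible because the loss coefficient $2(1-\gamma)$ in the general analysis came from bounding a $(1/\gamma-1)$-type discrepancy, and with an upper ratio the same discrepancy is bounded by $\beta-1$ instead. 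Whichever bound is smaller is used, which yields the max. Integrating with $1-Q/f(O)$ as an integrating factor gives
\[
1-\frac{Q(1)}{f(O)}\le\exp\Bigl(-\int_0^1\frac{\gamma\zeta\,d\tau}{\zeta+2(1-\zeta)\tau}\Bigr)=\Bigl(\frac{\zeta}{2-\zeta}\Bigr)^{\gamma\zeta/(2(1-\zeta))},
\]
which is exactly $R_{\gamma,\beta}$. The $\varepsilon$ loss comes from the discretization and sampling errors already controlled in Theorem~\ref{thm:intro-offline}, and the final rounding is handled identically.

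\textbf{Part (ii).} $\alpha$-weak DR-submodularity means $f(e\mid A)\ge\alpha f(e\mid B)$ for $A\subseteq B$. This makes $\nabla F$ approximately antitone, so both inequalities become essentially submodular-type. The gain side gives $\sum_{o}\nabla_oF(\cdot)\ge\alpha\bigl(f(O)-F(\tau\1_A)\bigr)$. On the loss side, DR lets us compare $\nabla_iF(\tau\1_{A-i})$ against telescoping marginals, so the $(1-\gamma)\tau$ penalty disappears. The resulting inequality is $Q'_+\ge\alpha(f(O)-Q)$, which integrates to $1-e^{-\alpha}$. This mirrors the cardinality case of Theorem~\ref{thm:intro-offline}(ii), with $\alpha$ in place of $\gamma$.

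\textbf{Main obstacle.} The hardest step is making the removal-loss bound in part (i) produce exactly the coefficient $2(1-\zeta)$ with $\zeta=2-\beta$. The $\beta$-inequality is stated for sets, but we need it for the multilinear point $\tau\1_{A-i}$ combined with the absorption argument. We would first prove the set-level bound $\sum_i f(i\mid R\cup A-i)\le\beta f(R\cup A)$ pointwise for each random $R$, and then take expectations. If the constant instead comes out as $\beta-1$ in a different position, we would reparametrize the ODE accordingly.
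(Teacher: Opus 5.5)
Your part (ii) is essentially the paper's argument. You compare $j_i^*$ with a Brualdi partner $h(i)$, so the drift is at least $\E_{R\sim\tau\1_S}\bigl[\sum_{i}f(h(i)\mid R-i)\bigr]$. You then telescope with weak DR-submodularity, using $R-h^{-1}(o_\ell)\subseteq R+o_1+\cdots+o_{\ell-1}$. The self-insertion term cancels exactly, so there is no separate ``loss side''.

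Part (i), however, has a genuine gap. You never give the mechanism that produces $\zeta=\max\{\gamma,2-\beta\}$, and the one you sketch targets the wrong term. In the paper's $\gamma$-analysis, the self-insertion term $\sum_{i\in S}\nabla_iF(\tau\1_{S-i})$ is never upper-bounded by anything involving $F(\tau\1_S)$. It enters through a pointwise pair inequality $f(j\mid R_0)\ge\gamma f(j\mid R_0+i)-(1-\gamma)f(i\mid R_0)$ with $R_0\sim\tau\1_{S-i}$. It is then absorbed into the left-hand side, because $i\in\mathcal C_S(i)$ gives $\nabla_{j_i^*}F(\tau\1_{S-i})\ge\nabla_iF(\tau\1_{S-i})$.

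Your global bound $\sum_{i\in A}f(i\mid A-i)\le\beta f(A)$, lifted as $\tau\sum_i\nabla_iF(\tau\1_{A-i})\le\beta F(\tau\1_A)$, does not sharpen that pair inequality. If you substitute it in place of absorption, you get the drift $\bigl(\gamma^2f(O)-\phi F(\tau\1_S)\bigr)/(\gamma+(1-\gamma)\tau)$ with $\phi=\gamma^2+\beta(1-\gamma)$. That ODE is of distorted-local-search type, and its solution is not $R_{\gamma,\beta}$. So ``reparametrizing the ODE'' would prove a different bound. Your heuristic that the $(1/\gamma-1)$ discrepancy becomes $\beta-1$ is also off: when $\zeta=2-\beta$, the correct analogue is $1/\zeta-1=(\beta-1)/(2-\beta)$.

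The missing idea is to apply the upper ratio locally, to the same pair on top of each realization $R_0$. Set $a=f(i\mid R_0)$, $b=f(j\mid R_0)$, and $c=f(R_0+i+j)-f(R_0+i)-f(R_0+j)+f(R_0)$. The $\beta$-condition for $A=R_0$, $B=R_0+i+j$ then reads $(a+c)+(b+c)\le\beta(a+b+c)$, i.e.\ $a+b\ge(2-\beta)(a+b+c)$. Together with $a+b\ge\gamma(a+b+c)$ and $a+b+c\ge0$, this gives $a+b\ge\zeta(a+b+c)$, hence $f(j\mid R_0)\ge\zeta f(j\mid R_0+i)-(1-\zeta)f(i\mid R_0)$. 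That is the repair lemma with $\gamma$ replaced by $\zeta$. The same absorption then gives $\nabla_{j_i^*}F(\tau\1_{S-i})\ge\frac{\zeta}{\zeta+2(1-\zeta)\tau}\E_{R\sim\tau\1_S}[f(h(i)\mid R)]$. Only the final step $\sum_i f(h(i)\mid R)\ge\gamma(f(O)-f(R))$ still uses $\gamma$, which is exactly why the rate is $\gamma\zeta/(\zeta+2(1-\zeta)\tau)$.

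A smaller correction concerns the $\varepsilon$ loss, which does not come from discretization or rounding. The paper assumes exact access to $F$, and $Q(1)=\E[f(S_{\mathrm{out}})]$ directly. The loss comes from starting the rate-$k/\tau$ clock at $\tau=\varepsilon$. After that, the factor $\bigl(\frac{\zeta+2(1-\zeta)\varepsilon}{2-\zeta}\bigr)^{\gamma\zeta/(2(1-\zeta))}$ exceeds its value at $0$ by at most $\varepsilon$, since its derivative is at most $1$.
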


Table~\ref{tab:results} compares our guarantees with representative offline
results for non-submodular maximization. Our third result extends all the previous guarantees to the online settings.

\begin{theorem}\label{thm:intro-online}
Fix $\varepsilon\in(0,1)$, let $k$ be the rank of a general matroid $\cM=(U,\cI)$  and
$n=|U|$, if we consider an oblivious sequence of normalized monotone
objectives, namely, $\{f_1,\dots, f_T\}$,
then there exists a randomized online algorithm satisfies
\begin{align*}
 \Reg_{\rho_\gamma-\varepsilon}(T)
 &=O\bigl(k\sqrt{\log\left(n/k\right)T}\bigr)
 &&\text{for $\gamma$-weakly submodular maximization under a matroid},\\
 \Reg_{1-e^{-\gamma(1-\varepsilon)}}(T)
 &=O\bigl(k\sqrt{\log\left(n/k\right)T}\bigr)
 &&\text{for $\gamma$-weakly submodular maximization under cardinality},\\
 \Reg_{R_{\gamma,\beta}-\varepsilon}(T)
 &=O\bigl(k\sqrt{\log\left(n/k\right)T}\bigr)
 &&\text{for $(\gamma,\beta)$-weakly submodular maximization},\\
 \Reg_{1-e^{-\alpha}-\varepsilon}(T)
 &=O\bigl(k\sqrt{\log\left(n/k\right)T}\bigr)
 &&\text{for $\alpha$-weakly DR-submodular maximization},
\end{align*} where the symbol $ \Reg_\rho(T)$ denotes the $\rho$-regret of \citet{chen2018projection,chen2018online} and $\rho\in(0,1]$. 
\end{theorem}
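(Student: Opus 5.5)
The plan is to reduce the online problem to an online-learning version of the offline \MGPE\ analysis. The Poisson clock is discretized into $L=O(1/\varepsilon)$ stages, and each stage's maximum-gain exchange step is replaced by an online linear optimization problem solved by a no-regret oracle. Concretely, for each discretized time $\tau_\ell=\ell/L$ we keep a base $A_t(\tau_\ell)$ that is generated by a chain of $L$ exchange decisions. Each exchange decision, namely which element $i\in A$ to swap for which $j\notin A$, is made by an online oracle over the exchange structure. Following \citet{wan2026banditsubmodularmaximizationmatroid}, the swap choice is parametrized by a balanced exchange matrix between the current base and a comparator base. This makes the offline drift argument linear in the comparator $O$. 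The drift argument is the one behind Theorem~\ref{thm:intro-offline} and Theorem~\ref{thm:intro-stronger}: total exchange gain dominates $\sum_{i\in S}\nabla_iF(\tau\1_{S-i})$, which yields $Q'_+\ge c(\tau)(f(O)-Q)$.

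\textbf{Step 1 (discretized offline inequality with additive error).} Revisit the drift inequality in discrete form. For each stage $\ell$, show that
\[
\E\bigl[F_t(\tau_{\ell+1}\1_{A(\tau_{\ell+1})})-F_t(\tau_\ell\1_{A(\tau_\ell)})\bigr]\ge \tfrac1L c(\tau_\ell)\bigl(f_t(O)-\E F_t(\tau_\ell\1_{A(\tau_\ell)})\bigr)-\tfrac1L\,\mathrm{err}_{\ell,t}-O(1/L^2),
\]
where $\mathrm{err}_{\ell,t}$ is the difference between the linear exchange payoff achieved by the oracle's choice and the payoff of the exchange pairing with $O$. The coefficient $c(\tau)$ is taken from each setting: $\gamma^2/(\gamma+2(1-\gamma)\tau)$ for the matroid case, $\gamma$ for cardinality, and the $\zeta$- or $\alpha$-versions for the other two cases. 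Unrolling this discrete Gr\"onwall recursion over $\ell$ and summing over $t$ gives
\[
\sum_t\E f_t(S_t)\ge(\rho-\varepsilon)\sum_t f_t(O)-\sum_\ell\sum_t\mathrm{err}_{\ell,t}.
\]
Here $\rho$ is the respective ratio, and the $\varepsilon$ absorbs the discretization error. Once $c$ is fixed, this step is mechanical.

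\textbf{Step 2 (online oracles).} Two oracles are needed. The first is an insertion oracle: for each position in the base, an experts/Hedge algorithm over $U$ learns the element to bring in, with the gains $\nabla_jF_t$ estimated by sampling, which is unbiased for the multilinear extension. The second is a removal/assignment oracle: the balanced exchange matrix guarantees that a bijection between the current base and $O$ exists with each row summing correctly. The comparator's payoff is therefore a sum of $k$ linear terms, each handled by an expert algorithm over $n/k$-sized blocks on average. Each term has regret $O(\sqrt{\log(n/k)T})$ per slot, giving $O(k\sqrt{\log(n/k)T})$ per stage. Since $L$ depends only on $\varepsilon$, the total stays $O(k\sqrt{\log(n/k)T})$ for fixed $\varepsilon$. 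For the cardinality case, the loss $e^{-\gamma}\to e^{-\gamma(1-\varepsilon)}$ comes from stopping the clock at $1-\varepsilon$ so that the coefficients stay bounded.

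\textbf{Main obstacle.} The hardest step is showing that the maximum-gain absorption argument survives when the exchange is chosen by a no-regret learner rather than by exact maximization. Offline, the domination of the self-insertion term uses that the chosen swap beats \emph{every} swap, which is pointwise and nonlinear. Online, only regret against a fixed comparator is available. I would first try to rewrite the needed inequality so that it uses only two facts: domination against the specific pairing with $O$ induced by the balanced exchange matrix, plus a separate self-swap comparator, which is linear and can be fed as an additional expert to the insertion oracle. If that fails, I would run the oracle against the convex combination of both comparators.
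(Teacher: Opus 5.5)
Your overall architecture (one online linear oracle per exchange event, plugged into the drift inequality and then a Gr\"onwall/comparison argument) matches the paper. However, you leave the obstacle you flag unresolved, and the paper removes it with an idea your proposal lacks.

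A per-position Hedge over $U$ has no fixed comparator. The rows of an exchange matrix are indexed by the current base $S_t$, which is random and changes across rounds. The Brualdi partner $h_{S_t,O}(i)$ moves with $S_t$, and a learned partner need not be a feasible swap for the current row. A general matroid also has no fixed blocks of size $n/k$.

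The paper instead lets the decision be only the insertion marginal $\x\in P(\cM)$, realized by a fractional exchange matrix $Q_S^{\x}$ (Theorem~\ref{thm:balanced}). It then uses the decoupling inequality of Lemma~\ref{lem:repair}, $\nabla_jF(\tau\1_{S-i})\ge\frac{\gamma}{C_\gamma(\tau)}\E_{R\sim\tau\1_S}[f(j\mid R)]-\frac{(1-\gamma)\tau}{C_\gamma(\tau)}\nabla_iF(\tau\1_{S-i})$. By unit row sums and column sums $x_j$, the drift of \emph{any} such matrix is at least $\frac{\gamma}{C_\gamma(\tau)}\langle\x,\mathbf g\rangle-\frac{(1-\gamma)\tau}{C_\gamma(\tau)}Y$, where $Y=\sum_{i\in S}\nabla_iF(\tau\1_{S-i})$.

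The self term is then cancelled rather than dominated. The algorithm explicitly mixes with the identity exchange, $a\,Q_S^{\x}+(1-a)Q_S^{\1_S}$. With $a=a_\gamma(\tau)=\frac{\gamma+(1-\gamma)\tau}{\gamma+2(1-\gamma)\tau}$ the $-Y$ cancels exactly (Lemma~\ref{add_important}). Since $a_\gamma$ depends on the unknown $\gamma$, a second one-dimensional oracle learns $a$ from an unbiased estimate of the exact drift difference $\Delta_{t,r}$.

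Your fallback comparator (a convex combination of the $O$-pairing and the self-swap) is essentially what the paper competes with. It is realized through the product parametrization $(\x,a)\in P(\cM)\times[1/2,1]$. Feeding the self-swap as an extra expert inside the $\x$-oracle fails: the linear surrogate values $\x=\1_S$ at $\frac{\gamma-\tau}{C_\gamma(\tau)}Y$, which is negative for $\tau>\gamma$, instead of its true drift $Y$. Step~1 is also not mechanical per setting. Online, this surrogate only gives $c(\tau)=\gamma^2/(\gamma+2(1-\gamma)\tau)$. Getting $c=\gamma$ for cardinality needs the special matrix $\bar Q_S^{\x}$ and the loss $\ell^{\mathrm{card}}$ of Lemma~\ref{lem:32}.

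Separately, your discretization is at the wrong scale, and no choice of oracles can rescue it. Take $L=O(1/\varepsilon)$ stages with one swap each, $f(X)=|X\cap O|/k$ on a rank-$k$ uniform matroid, and an initial base disjoint from $O$. After all stages $f(S_t)\le L/k$ while $f(O)=1$, so regret is linear once $k\gg L$. The drift identity (Lemma~\ref{lem:conditioned-generator1}) needs each base element resampled at rate $1/\tau$. That means single swaps at total rate $k/\tau$, about $k\log(1/\varepsilon)$ of them on $[\varepsilon,1]$. They cannot be merged into fewer simultaneous multi-swaps, because in a general matroid only single Brualdi exchanges are guaranteed feasible.

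With that many events, your accounting (``$L$ depends only on $\varepsilon$, so summing per-stage regrets is harmless'') breaks. The paper's bound survives because each event's oracle regret enters the drift inequality as an additive $\delta$ weighted by elapsed time. It is then integrated through Lemma~\ref{lem:potential-comparison}, giving $O(k\sqrt{\log(n/k)T})$ in total rather than the number of events times that. Finally, the factor $1-e^{-\gamma(1-\varepsilon)}$ comes from starting the clock at $\varepsilon$, since the rate $k/\tau$ is singular at $0$, not from stopping it at $1-\varepsilon$.
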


As far as we know, this is the first polynomial-time algorithm to achieve a
constant-factor approximation guarantee for online $\gamma$-weakly submodular maximization under a matroid constraint.

\subsection{Related Work}

Table~\ref{tab:results} compares our guarantees with prior offline results.

\begin{table}[!t]
\centering
\caption{Offline approximation guarantees for non-submodular maximization.
Here, $\OPT=f(O)$,
$\phi\triangleq\gamma^2+\beta(1-\gamma)$,
$\rho_\gamma\triangleq
1-\left(\gamma/(2-\gamma)\right)^{\gamma^2/(2(1-\gamma))}$,
$\zeta\triangleq\max\{\gamma,2-\beta\}$, and
$R_{\gamma,\beta}\triangleq
1-\left(\zeta/(2-\zeta)\right)^{\gamma\zeta/(2(1-\zeta))}$.
The abbreviation ``Sub'' stands for ``submodular''.}
\label{tab:results}
\scriptsize
\setlength{\tabcolsep}{3.2pt}
\renewcommand{\arraystretch}{1.10}
\resizebox{\textwidth}{!}{%
\begin{tabular}{@{}llll@{}}
\toprule
Method & Objective & Constraint type & Guarantee\\
\midrule
Standard Greedy~\citep{das2018approximate}
  & $\gamma$-weakly Sub & Cardinality
  & $(1-e^{-\gamma})\OPT$\\
\midrule
Residual Random Greedy~\citep{chen2018weakly}
  & $\gamma$-weakly Sub & Matroid
  & $(1+1/\gamma)^{-2}\OPT$\\
\midrule
Standard Greedy~\citep{gatmiry2019nonsubmodular}
  & $\gamma$-weakly Sub & Matroid
  & $\dfrac{0.4\gamma^2}{\sqrt{\gamma k}+1}\OPT$\\
\midrule
Continuous Greedy~\citep{gong2019parametric}
  & $\alpha$-weakly DR-Sub & Matroid
  & $\alpha(1-1/e)(1-e^{-\alpha})\OPT$\\
\midrule
Residual Random Greedy~\citep{thiery2022two}
  & $(\gamma,\beta)$-weakly Sub & Matroid
  & $\dfrac{\gamma}{\gamma+\beta}\OPT$\\
\midrule
Distorted local search~\citep{thiery2022two}
  & $(\gamma,\beta)$-weakly Sub & Matroid
  & $\left(\dfrac{\gamma^2(1-e^{-\phi})}{\phi}
    -O(\varepsilon)\right)\OPT$\\
\midrule
Distorted local search~\citep{lu2022maximizing}
  & $\alpha$-weakly DR-Sub & Matroid
  & $(1-e^{-\alpha}-\varepsilon)\OPT$\\
\midrule
\multirow{2}{*}{Multinoulli-SCG~\citep{zhang2026multinoulli}}
  & $(\gamma,\beta)$-weakly Sub & Partition matroid
  & $\dfrac{\gamma^2(1-e^{-\phi})}{\phi}\OPT-\varepsilon$\\
  & $\alpha$-weakly DR-Sub & Partition matroid
  & $(1-e^{-\alpha})\OPT-\varepsilon$\\
\midrule
\rowcolor{oursrowblue}
  & $\gamma$-weakly Sub & Matroid
  & $(\rho_\gamma-\varepsilon)\OPT$\\
\rowcolor{oursrowblue}
  & $\gamma$-weakly Sub & Cardinality
  & $(1-e^{-\gamma}-\varepsilon)\OPT$\\
\rowcolor{oursrowblue}
  & $(\gamma,\beta)$-weakly Sub & Matroid
  & $(R_{\gamma,\beta}-\varepsilon)\OPT$\\
\rowcolor{oursrowblue}
\multirow{-4}{*}{\textbf{Algorithm~\ref{alg:mgpe}~(Ours)}}
  & $\alpha$-weakly DR-Sub & Matroid
  & $(1-e^{-\alpha}-\varepsilon)\OPT$\\
\bottomrule
\end{tabular}%
}
\end{table}

\paragraph{Submodular maximization under a matroid.}For monotone submodular maximization subject to a general matroid constraint,
the classical greedy algorithm achieves a $1/2$-approximation
~\citep{fisher1978analysis}. The optimal $1-1/e$ approximation ratio was later
obtained by continuous greedy combined with pipage or swap rounding
~\citep{vondrak2008optimal,calinescu2011maximizing}. The same guarantee can
also be achieved by purely combinatorial algorithms based on non-oblivious
local search~\citep{filmus2012tight,filmus2014monotone}. An alternative line
of research seeks to combine the continuous and combinatorial viewpoints.
In particular, \citet{rozenman2026poisson} recently proposed the Greedy Swap
Poisson process, which maximizes a monotone submodular function by repeatedly
performing local exchanges governed by a Poisson clock. Beyond the monotone
setting, non-monotone submodular maximization under matroid constraints has
also been extensively studied
~\citep{feldman2011unified,buchbinder2024constrained,buchbinder2025extending,lu2026upperlinearizability,kulik2026spiteful}.

\paragraph{Weakly submodular maximization.}
The $\gamma$-weak  submodularity
were originally introduced by the work~\citep{das2011submodular,das2018approximate}, which also show that the standard greedy algorithm can achieve an approximation ratio of $1-e^{-\gamma}$ for maximizing such functions subject to a cardinality constraint. Moving
beyond cardinality constraints, \citet{chen2018weakly}  subsequently proved that Residual
Random Greedy attains the rank-independent factor
$(1+1/\gamma)^{-2}$ under a general matroid constraint. In contrast,
\citet{gatmiry2019nonsubmodular} analyzed the standard greedy algorithm and
established the rank-dependent approximation guarantee of $\frac{0.4\gamma^{2}}{\sqrt{\gamma k}+1}$ where $k$ is the rank of the matroid.  The limitation of standard greedy was recently
clarified by \citet{ward2026maximizing}, who showed that, for every fixed
$\gamma<1$, its approximation ratio can vanish with the ground-set size $k$ even
under a simple partition matroid. Thus, unlike in the cardinality-constrained
setting, standard greedy admits no constant-factor guarantee under general
matroid constraints. To obtain stronger guarantees, \citet{thiery2022two} introduced the upper submodularity ratio $\beta$ and 
developed a more powerful distorted local search for $(\gamma,\beta)$-weakly submodular maximization, which can guarantee a $\frac{\gamma^{2}(1-e^{-(\beta(1-\gamma)+\gamma^2)})}{\beta(1-\gamma)+\gamma^2}$-approximation for the problem of maximizing a monotone $(\gamma,\beta)$-weakly submodular functions subject to a matroid constraint.  For partition matroids, \citet{zhang2026multinoulli} then introduced
the multinoulli extension and its associated Multinoulli-SCG algorithm can preserve the approximation guarantee of the aforementioned distorted local search with fewer function evaluations.

\paragraph{Weakly DR-submodular maximization.}
The first approximation guarantee for maximizing a monotone
$\alpha$-weakly DR-submodular function under a general matroid constraint was
established by \citet{gatmiry2019nonsubmodular}, whose analysis shows that the
standard greedy algorithm achieves an approximation factor of
$\alpha/(1+\alpha)$. 
A different line of work pursued continuous-relaxation approaches.
In particular, \citet{gong2019parametric} combined continuous greedy with a
contention-resolution scheme~\citep{chekuri2014submodular} to obtain an
asymptotic $\alpha(1-1/e)(1-e^{-\alpha})$-approximation under a general matroid constraint. To achieve the tight $(1-e^{-\alpha})$-approximation guarantee,
\citet{lu2022maximizing} next developed a distorted local-search algorithm that
achieves a $(1-e^{-\alpha}-\varepsilon)$-approximation under an arbitrary
matroid. For partition matroids, \citet{zhang2026multinoulli} subsequently introduced the Multinoulli Extension, a lossless continuous-relaxation framework, and
developed the associated Multinoulli-SCG algorithm, which achieves an approximation ratio of $(1-e^{-\alpha})$ with fewer function evaluations than distorted
local search.

\section{Preliminaries}\label{sec:prelim}

\subsection{Notations and Approximate Submodularity}\label{sec:notations_and_weak}

\paragraph{Notations.}
For a positive integer $m$, we let $[m]=\{1,\ldots,m\}$ and suppose that the finite ground set
is $U$, with $n=|U|$.  For any $A\subseteq U$ and $e\in U$, we write
$A+e=A\cup\{e\}$ and let $\1_A\in\{0,1\}^U$ be
the indicator vector of $A$.  Moreover, the symbols $\langle x,y\rangle$ and $x\odotprod y$ denote the
Euclidean inner product and coordinate-wise product, respectively.  For a set function $f:2^U\to\R_+$, we define
$f(B\mid A)=f(A\cup B)-f(A)$. Unless stated otherwise, in this paper, $f$ is \emph{normalized} with  $f(\varnothing)=0$, and \emph{monotone}, namely,
$f(A)\le f(B)$ for $A\subseteq B\subseteq U$.  

\begin{figure}[t]
\centering
\begin{tikzpicture}[>=stealth]
  \draw[thick,fill=blue!4] (0,0) ellipse (5.0cm and 2.25cm);
  \draw[thick,fill=green!6] (0,0) ellipse (3.65cm and 1.55cm);
  \draw[thick,fill=orange!9] (0,0) ellipse (2.15cm and .70cm);
  \node[font=\small] at (0,1.88) {$\gamma$-weakly submodular};
  \node[font=\small,align=center] at (0,1.08)
    {$(\gamma,\beta)$-weakly submodular};
  \node[font=\small,align=center] at (0,0)
    {$\alpha$-weakly DR-submodular\\
     ($\gamma\le\alpha$, $\beta\ge1/\alpha$)};
\end{tikzpicture}
\caption{Venn diagram of three classes of
approximate submodularity.}
\label{fig:function-classes}
\end{figure}

\begin{definition}[Weak submodularity]
A monotone set function $f$ is $\gamma$-weakly submodular, if for any $A\subseteq B\subseteq U$,
\begin{equation}\label{eq:weak-submod}
 \sum_{e\in B\setminus A}f(e\mid A)\ge \gamma\bigl(f(B)-f(A)\bigr).
\end{equation}
The parameter $\gamma\in(0,1]$ is called submodularity ratio~\citep{das2011submodular,das2018approximate}.
\end{definition}

\begin{definition}[Upper weak submodularity]
A monotone set function $f$ is \emph{$\beta$-weakly submodular from above}, if for any $A\subseteq B\subseteq U$,
\begin{equation}\label{eq:weak-upper}
 \sum_{e\in B\setminus A}f(e\mid B-e)
 \le \beta\bigl(f(B)-f(A)\bigr).
\end{equation}
The parameter $\beta\ge1$ is called upper submodularity ratio. Moreover, we say that a function is \((\gamma,\beta)\)-weakly submodular if it satisfies both \eqref{eq:weak-submod} and \eqref{eq:weak-upper}~\citep{thiery2022two,zhang2026multinoulli}.
\end{definition}

\begin{definition}[Weak DR-submodularity]\label{def:weak-dr}
A monotone set function $f$ is $\alpha$-weakly DR-submodular, if for any $A\subseteq B\subseteq U$,
\begin{equation}\label{eq:weak-dr}
 f(e\mid A)\ge \alpha f(e\mid B).
\end{equation}
The parameter $\alpha\in(0,1]$ is the diminishing-returns(DR) ratio~\citep{lu2022maximizing,zhang2026multinoulli}.
\end{definition}

It is worth noting that the definitions in
Eq.\eqref{eq:weak-submod} and Eq.\eqref{eq:weak-upper} imply, when $\gamma\le\alpha$ and $\beta\ge1/\alpha$, the class
of $\alpha$-weakly DR-submodular functions is contained in the
$(\gamma,\beta)$-weakly submodular class, while the latter is, in turn, a
subclass of the $\gamma$-weakly submodular class, as illustrated in
Figure~\ref{fig:function-classes}.



\subsection{Matroids and Brualdi exchange maps} A \emph{matroid} is a pair $\cM=(U,\cI)$, where $U$ is a finite ground set
and $\cI\subseteq 2^U$ is a collection of sets satisfying the following
axioms: (i)~$\varnothing\in\cI$; (ii)~if $A\subseteq C$ and $C\in\cI$, then
$A\in\cI$; and (iii)~if $A,C\in\cI$ and $|A|<|C|$, then there exists an
element $e\in C\setminus A$ such that $A+e\in\cI$.  The members of $\cI$
are called \emph{independent sets}, and its maximal members are called
\emph{bases}.  Every base has the same cardinality, called the
\emph{rank} of $\cM$, which we denote by $k$. Thus, we can denote the family of
bases as $\cB=\{B\in\cI:|B|=k\}$ and  set its corresponding polytope via $P(\cM)
 \triangleq\operatorname{conv}\{\1_S:S\in\cI\}$.

Furthermore, in our analysis, we will repeatedly use the following Brualdi’s strong basis-exchange
theorem~\citep{brualdi1969comments} about matroids, that is to say,

\begin{lemma}[Brualdi exchange map]\label{lem:brualdi-map}
For every pair of bases $S,O\in\cB$, there exists a bijection
$h_{S,O}:S\to O$ such that
\begin{equation*}
    S-i+h_{S,O}(i)\in\cB,\quad\text{for any}\ \ i\in S.
\end{equation*}Moreover, the bijection can be chosen so that
$h_{S,O}(i)=i$ for every $i\in S\cap O$.
\end{lemma}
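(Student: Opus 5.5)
The plan is to prove Brualdi's symmetric-exchange bijection by reducing it to a perfect matching in a bipartite graph and invoking Hall's theorem. The input is the classical single-element form of Brualdi's result: for bases $S,O$ and any $o\in O$, there is some $i\in S$ with $S-i+o\in\cB$.

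\textbf{Setting up the graph.} I work on the parts outside the intersection. Put $X=S\setminus O$ and $Y=O\setminus S$; these have the same size because $|S|=|O|=k$. For $i\in S\cap O$ I set $h_{S,O}(i)=i$. This choice is valid because $S-i+i=S\in\cB$.

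I then build a bipartite graph $G$ with sides $X$ and $Y$. An edge joins $i\in X$ to $o\in Y$ exactly when $S-i+o\in\cB$. A perfect matching in $G$ defines the bijection on $X$, and together with the identity on $S\cap O$ this gives the required $h_{S,O}$.

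\textbf{Checking Hall's condition.} I verify it from the $Y$ side: for every $W\subseteq Y$, the neighbourhood $N(W)\subseteq X$ must satisfy $|N(W)|\ge|W|$. For $o\in Y$, the set $S+o$ contains a unique circuit $C_o$, and $o\in C_o$. Matroid theory gives $S-i+o\in\cB$ if and only if $i\in C_o-o$. Since $C_o$ is dependent, it is not contained in $O$, so $C_o-o$ meets $X$; hence every vertex of $Y$ has a neighbour.

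For the general bound, fix $W\subseteq Y$ and let $N=N(W)$. Every $o\in W$ satisfies $C_o-o\subseteq N\cup(S\cap O)$. So each $o\in W$ lies in the closure of $N\cup(S\cap O)$, and therefore
\[
W\cup(S\cap O)\subseteq \operatorname{cl}\bigl(N\cup(S\cap O)\bigr).
\]
The set $W\cup(S\cap O)$ is a subset of $O$, hence independent, so its size is at most the rank of that closure. This gives
\[
|W|+|S\cap O|\le r\bigl(N\cup(S\cap O)\bigr)= |N|+|S\cap O|,
\]
where the equality holds because $N\cup(S\cap O)\subseteq S$ is independent. Thus $|W|\le|N(W)|$.

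\textbf{Conclusion.} Hall's theorem now gives a matching saturating $Y$, and since $|X|=|Y|$ it is perfect. Combining this matching with the identity on $S\cap O$ gives the bijection $h_{S,O}$ of Lemma~\ref{lem:brualdi-map}.

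The main obstacle is the closure/rank step in Hall's condition. In particular, one needs $C_o-o\subseteq N(W)\cup(S\cap O)$, which follows from the fundamental-circuit characterization of when $S-i+o\in\cB$. Everything else is routine.
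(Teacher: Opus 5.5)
Your proof is correct. The paper does not prove this lemma: it states it in the preliminaries as a black box and cites Brualdi (1969).

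Your argument therefore supplies a self-contained proof along the classical lines. You restrict to $X=S\setminus O$ and $Y=O\setminus S$, build the exchange graph between them, and check Hall's condition from the $Y$ side using the fundamental circuits $C_o$. The key inequality
\[
|W|+|S\cap O|\le r\bigl(N(W)\cup(S\cap O)\bigr)=|N(W)|+|S\cap O|
\]
is sound. It holds because $W\cup(S\cap O)\subseteq O$ is independent and lies in the closure of the independent set $N(W)\cup(S\cap O)\subseteq S$.

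Compared with the bare citation, your route gives an explicit justification of the ``moreover'' clause. If Brualdi's theorem is read as giving only some bijection $S\to O$, fixing $S\cap O$ needs an extra argument, and the paper asserts it without comment. Your construction gets it directly, since it matches only $X$ to $Y$ and uses the identity on $S\cap O$. Equivalently, this is Brualdi's theorem applied to the bases $X,Y$ of the contraction $\cM/(S\cap O)$. The citation, in turn, buys brevity.

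One small remark: the ``input'' in your first paragraph is not Brualdi's result, and the Hall argument does not use it. What you actually rely on are two standard facts: $S-i+o\in\cB$ if and only if $i\in C_o-o$, and $o\in\operatorname{cl}(C_o-o)$.
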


\subsection{Multilinear Extension}
 For a set function $f:2^U\to\R_+$, we define its multilinear extension as follows:
\begin{equation}\label{eq:multilinear}
 F(\x)=\sum_{A\subseteq U}f(A)\prod_{e\in A}x_e\prod_{e\notin A}(1-x_e)
     =\E_{R\sim \x}[f(R)],
\end{equation}
where $\x\triangleq(x_1,\dots,x_{|U|})$  and the symbol $R\sim \x$ denotes that each element $e\in U$  is included independently with probability $x_e$.

\subsection{Non-Homogeneous Poisson Process over \texorpdfstring{$[\varepsilon,\infty)$}{[epsilon, infinity)}}\label{sec:poisson-prelim}
 A non-homogeneous Poisson process with rate $\lambda(t)$ is a counting process \(\{N(t)\}_{t\ge\varepsilon}\) characterized by the following properties: (i) \(N(\varepsilon)=0\); (ii) its counts over pairwise disjoint intervals are independent, that is to say, if we denote write
$N_{(a,b]}=N(b)-N(a)$, for any two disjoint intervals \(I\) and \(J\) contained in \([\varepsilon,\infty)\), the random variables
\(N_I\) and \(N_J\) are independent; and (iii) at every time point \(t\ge\varepsilon\), its increments satisfy
\begin{equation}\label{eq:poisson-infinitesimal}
 \lim_{\delta\downarrow0}
 \frac{\Pr\!\left[N_{(t,t+\delta]}=1\right]}{\delta}
 =\lambda(t),
 \qquad
 \lim_{\delta\downarrow0}
 \frac{\Pr\!\left[N_{(t,t+\delta]}\ge2\right]}{\delta}
 =0.
\end{equation}



\section{Maximum-Gain Poisson Exchange}\label{sec:offline}
This subsection is devoted to presenting the Maximum-Gain Poisson Exchange~(\MGPE) algorithm for maximizing a utility set function $f:2^U\to\R_+$ over a matroid \(\cM\triangleq(U,\cI)\). Let \(F\) denote the multilinear extension of \(f\). Throughout this section, we assume exact access to \(F\) and its derivatives.

 As shown in Algorithm~\ref{alg:mgpe}, our proposed \MGPE\ starts from an arbitrary base and evolves over the time interval \([\varepsilon,1]\) via a non-homogeneous Poisson process with rate \(\lambda(t)=k/t\). At each event time \(t\in[\varepsilon,1]\), the \MGPE\ algorithm first draws an element \(i\) uniformly from the current base \(A\). It then searches for a maximum-gain partner \(j_i^{*}\) by maximizing the \emph{leave-one-out multilinear marginal} \(\nabla_jF(t\1_{A-i})\) over all \(j\in\mathcal C_A(i)\), and finally performs the exchange \(A\leftarrow A-i+j_i^{*}\). Here, \(\mathcal C_A(i)\) denotes the feasible exchange neighbors of \(i\), namely, $\mathcal C_A(i)\triangleq\{j\in U: A-i+j\in\cB\}$.

Compared with the earlier exchange framework~\citep{rozenman2026poisson}, the main difference of our proposed \MGPE\ algorithm lies in how an exchange is selected. Under submodularity, the
earlier rule constructs each exchange $(i,j)$ using only the derivatives
evaluated at \(t\1_A\), namely, $\nabla_{j}F(t\1_A)$. In contrast, \MGPE\ explicitly accounts for the
effect of removing \(i\) from the current base \(A\) and selects the
insertion element according to $j_i^*\in
\argmax_{j\in\mathcal C_A(i)}
\nabla_jF(t\1_{A-i})$. This leave-one-out evaluation makes our \MGPE\ applicable to broader classes of
subset selection problems.

\begin{coltalgorithm}[t]
\caption{Maximum-Gain Poisson Exchange~(\MGPE)}\label{alg:mgpe}
\KwIn{Matroid $\cM=(U,\cI)$, start time $\varepsilon\in(0,1)$, and multilinear extension $F$}
\KwInit{Find a base $A_0\in\cI$ and set $A\leftarrow A_0$ and $\tau\leftarrow\varepsilon$}
\While{$t\le1$}{
  Sample a $i$ uniformly from $A$\;
  Compute a  maximum gain $j_i^*$ via maximizing
  $\nabla_jF(\tau\1_{A-i})$ over $j\in\mathcal C_A(i)$\;
  $A\leftarrow A-i+j_i^*$\;
  Sample next random event $\tau\in[\varepsilon,\infty)$ from the Poisson process with rate \(\lambda(\tau)=k/\tau\)\;
  Set $\tau\leftarrow\tau$\;
}
\KwRet $S_{\mathrm{out}}\leftarrow A$\;
\end{coltalgorithm}

\subsection{State-Conditioned Analysis of Maximum-Gain Poisson Exchange}\label{sec:conditioned-analysis}

In the remainder of this section, we focus on analyzing our proposed
\MGPE\ algorithm for the close-to-submodular objectives introduced in Section~\ref{sec:notations_and_weak}. Let \(A(\tau)\) denote the random base maintained by
Algorithm~\ref{alg:mgpe} at time \(\tau\). Following the same proof
strategy of \citep{rozenman2026poisson,kulik2026spiteful}, our analysis mainly tracks the potential function
\(Q(r)\triangleq\E[F(r\1_{A(r)})]\) through its state-conditioned counterpart. Specifically, for any \(S\in\cB\) and \(\varepsilon\le \tau\le r\le1\), we define the conditioned value process of $Q(r)$ as $V_{S,\tau}(r)
:=\E\left[F(r\1_{A(r)})\mid A(\tau)=S\right]$.

For every fixed $\tau\le r$, the law of conditional expectation gives the exact bridge, i.e., 
\begin{equation}\label{eq:Q-V-bridge}
 Q(r)=\sum_{S\in\cB}\big(\Pr[A(t)=S]\cdot V_{S,t}(r)\big).
\end{equation}Consequently, a lower bound on $V_{S,\tau}$ yields a corresponding lower bound on \(Q\). Moreover, according to the procedure of \MGPE, we can show 

\begin{lemma}
\label{lem:conditioned-generator}
For every $\tau\in[\varepsilon,1)$ and $S\in\cB$, the following identity holds:
\begin{align}
  \left.\frac{\partial^+V_{S,\tau}(r)}{\partial r}\right|_{r=\tau}
 =\sum_{i\in S}\nabla_{j_i^*(S,\tau)}F(\tau\1_{S-i}),
\end{align} where $j_i^*(S,\tau)\in\argmax_{j\in\mathcal C_S(i)}
\nabla_jF(\tau\1_{S-i})$. Thus, the unconditional potential function satisfies
\begin{equation}\label{eq:Q-derivative-average}
 \left.\frac{\partial^+Q(r)}{\partial r}\right|_{r=\tau}=\sum_{S\in\cB}\Pr[A(\tau)=S]
 \cdot\left.\frac{\partial^+V_{S,t}(r)}{\partial r}\right|_{r=\tau}=\E_{A(\tau)}\left[\sum_{i\in A(\tau)}\nabla_{j_i^*(A(\tau),\tau)}F(\tau\1_{A(\tau)-i})\right],
\end{equation} where $A(\tau)$ is the random base at time $\tau$  of Algorithm~\ref{alg:mgpe}.
\end{lemma}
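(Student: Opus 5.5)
The plan is a direct infinitesimal (generator) computation over a short window $[\tau,\tau+\delta]$, using three ingredients: the Poisson increment properties \eqref{eq:poisson-infinitesimal}, the Markov structure of \MGPE, and multilinearity of $F$.

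\textbf{Step 1 (conditioning on the number of events).} Fix $S\in\cB$ and $\tau\in[\varepsilon,1)$, and condition on $A(\tau)=S$. Ties in the $\argmax$ are broken by a fixed deterministic rule, so $j_i^*(S,\tau)$ is well defined. Because the Poisson process has independent increments and \MGPE's transitions depend only on the current base and time, the evolution after $\tau$ depends only on $S$. Over $(\tau,\tau+\delta]$ there are three cases.
\begin{itemize}
\item No event occurs, with probability $1-\delta k/\tau+o(\delta)$. Then $A(\tau+\delta)=S$ and the value is $F((\tau+\delta)\1_S)$.
\item Exactly one event occurs, with probability $\delta k/\tau+o(\delta)$. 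Then $A(\tau+\delta)=S-i+j_i^*$ with $i$ uniform on $S$. The swap is evaluated at a time in $(\tau,\tau+\delta]$, so by continuity of the rate $k/t$ the index used is $j_i^*(S,\tau)$ up to an event of probability $o(1)$, which contributes $o(\delta)$ after multiplication. By continuity of $F$, the value is $F(\tau\1_{S-i+j_i^*})+o(1)$.
\item Two or more events occur, with probability $o(\delta)$. The value is still bounded by $f(U)$, so this case contributes $o(\delta)$.
\end{itemize}
Subtracting $V_{S,\tau}(\tau)=F(\tau\1_S)$, dividing by $\delta$ and letting $\delta\downarrow0$ gives
\[
\left.\frac{\partial^+V_{S,\tau}(r)}{\partial r}\right|_{r=\tau}
=\frac{d}{dr}F(r\1_S)\Big|_{r=\tau}+\frac{1}{\tau}\sum_{i\in S}\bigl(F(\tau\1_{S-i+j_i^*})-F(\tau\1_S)\bigr).
\]

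\textbf{Step 2 (multilinear algebra).} The chain rule gives $\frac{d}{dr}F(r\1_S)=\sum_{i\in S}\nabla_iF(\tau\1_S)$. Since $F$ is affine in each coordinate, $\nabla_iF$ does not depend on $x_i$, so this equals $\sum_{i\in S}\nabla_iF(\tau\1_{S-i})$.

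For the swap terms, multilinearity in the single coordinate $i$ gives
\[
F(\tau\1_S)=F(\tau\1_{S-i})+\tau\nabla_iF(\tau\1_{S-i}).
\]
Now $j=j_i^*$ satisfies $j\notin S-i$, since $S-i+j$ must have size $k$. The case $j=i$ is allowed, because $S\in\cB$ means $i\in\mathcal C_S(i)$. In either case, multilinearity in coordinate $j$ gives
\[
F(\tau\1_{S-i+j})=F(\tau\1_{S-i})+\tau\nabla_jF(\tau\1_{S-i}).
\]
Hence each swap contributes $\frac1\tau\cdot\tau\bigl(\nabla_{j_i^*}F-\nabla_iF\bigr)(\tau\1_{S-i})$. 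The $-\nabla_iF(\tau\1_{S-i})$ terms cancel exactly against the drift from the first term, leaving $\sum_{i\in S}\nabla_{j_i^*(S,\tau)}F(\tau\1_{S-i})$.

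\textbf{Step 3 (unconditional potential).} Apply the bridge \eqref{eq:Q-V-bridge} with $t=\tau$. For $r\ge\tau$, $Q(r)=\sum_{S\in\cB}\Pr[A(\tau)=S]\,V_{S,\tau}(r)$, and the weights do not depend on $r$. Since $\cB$ is finite, the right derivative passes through the sum, which yields \eqref{eq:Q-derivative-average}.

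The main obstacle is the rigor in Step 1: controlling the $o(\delta)$ terms uniformly. This includes the case where the swap occurs at a time slightly after $\tau$, possibly with a different tie-broken maximizer. I would handle it with the boundedness of $F$ (by $f(U)$), its smoothness in $r$, and the fact that the event-time distribution puts vanishing mass near any fixed threshold. Only first-order accuracy is needed, so these crude bounds suffice.
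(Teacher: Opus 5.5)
Your proposal is correct and follows essentially the paper's route. The paper combines the transition-rate lemma (Lemma~\ref{lem:transition-derivative}, deferred to \citet{rozenman2026poisson}) with a product-rule expansion of $V_{S,\tau}$ and the same multilinear identities, using $\lambda(\tau)=k/\tau$ and $p_{i,j}=1/k$. You instead derive those rates inline by conditioning on the number of events in $(\tau,\tau+\delta]$, and you make explicit the fact $\nabla_iF(\tau\1_S)=\nabla_iF(\tau\1_{S-i})$, which the paper uses silently.

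One small fix concerns ties in $\argmax_{j\in\mathcal C_S(i)}\nabla_jF(\tau\1_{S-i})$. When ties occur, the index chosen at event times just after $\tau$ need not be your tie-broken $j_i^*(S,\tau)$, so your ``probability $o(1)$'' claim and the threshold argument do not cover this case. However, that index's gradient value at $\tau$ is within $O(\delta)$ of the maximum, by continuity of $t\mapsto\nabla_jF(t\1_{S-i})$. Hence the post-swap value is still $F(\tau\1_{S-i+j_i^*(S,\tau)})+O(\delta)$, and the identity is unaffected.
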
 Note that this Lemma~\ref{lem:conditioned-generator} provides a core analytical foundation for our subsequent analysis.
\subsection{\texorpdfstring{$\gamma$}{gamma}-Weakly Submodular Objectives}\label{sec:offline-weakly-submodular}
Building on Lemma~\ref{lem:conditioned-generator}, this subsection establishes the approximation guarantee of \MGPE\ algorithm for \(\gamma\)-weakly submodular maximization. For notational convenience, let $ C_\gamma(\tau)=\gamma+(1-\gamma)\tau$. At first, we derive an inequality about $\nabla_{j} F(\tau\1_{S-i})$.
\begin{lemma}\label{lem:repair}When $f$ is monotone $\gamma$-weakly submodular, for any $i\in S$ and $j\in U$, we can show that
\begin{equation}\label{eq:repair}
 \nabla_{j} F(\tau\1_{S-i})\ge \frac{\gamma}{C_\gamma(\tau)}\E _{R\sim \tau\1_{S}}\left[f(j\mid R)\right]
 -\frac{(1-\gamma)\tau}{C_\gamma(\tau)}
   \nabla_iF(\tau\1_{S-i}),
\end{equation} where $R$ is the random subset drawn from the point $\tau\1_{S}$.
\end{lemma}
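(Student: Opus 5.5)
The plan is to prove the inequality pointwise over a coupling of the two random sets and then take expectations. Let $R\sim\tau\1_S$. Since $i\in S$, we can write $R=R'\cup I$, where $R'\triangleq R\setminus\{i\}\sim\tau\1_{S-i}$, $I=\{i\}$ with probability $\tau$ and $I=\varnothing$ otherwise, and $I$ is independent of $R'$. Set $\tilde R\triangleq R'\setminus\{j\}$. We use the standard identity $\nabla_jF(\x)=\E_{R\sim\x}[f(j\mid R\setminus\{j\})]$, which does not depend on $x_j$. This gives
\[
\nabla_jF(\tau\1_{S-i})=\E[f(j\mid\tilde R)],
\qquad
\nabla_iF(\tau\1_{S-i})=\E[f(i\mid R')].
\]
The second identity holds because $i\notin S-i$. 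Writing everything through $\tilde R$ handles $j\notin S$ and $j\in S-i$ uniformly. The case $j=i$ is treated separately at the end.

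\emph{Step 1 (removing $j$ from $R$).} For $j\neq i$, we have pointwise $f(j\mid R)\le f(j\mid R\setminus\{j\})=f(j\mid \tilde R\cup I)$. Indeed, the left side is $0$ when $j\in R$, and the two sides coincide otherwise. Conditioning on $I$ then gives
\[
\E[f(j\mid R)]\le(1-\tau)\,\E[f(j\mid\tilde R)]+\tau\,\E[f(j\mid\tilde R+i)].
\]

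\emph{Step 2 (the key weak-submodularity step).} We bound $f(j\mid\tilde R+i)$ in terms of quantities at $\tilde R$. If $i\in\tilde R$, this is trivial, but $i\notin\tilde R$ always holds here. Apply \eqref{eq:weak-submod} with $A=\tilde R$ and $B=\tilde R+i+j$, which gives $f(j\mid\tilde R)+f(i\mid\tilde R)\ge\gamma\, f(\{i,j\}\mid\tilde R)$. By the chain rule, $f(\{i,j\}\mid\tilde R)=f(i\mid\tilde R)+f(j\mid\tilde R+i)$. Rearranging yields
\[
f(j\mid \tilde R+i)\le \tfrac1\gamma f(j\mid\tilde R)+\tfrac{1-\gamma}{\gamma}f(i\mid\tilde R).
\]
We also need $f(i\mid\tilde R)$ to match $\nabla_iF(\tau\1_{S-i})=\E[f(i\mid R')]$. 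When $j\notin S$, we have $\tilde R=R'$ and there is nothing to do. When $j\in S-i$, the sets $\tilde R$ and $R'$ differ by $j$. This is the step I expect to need the most care. My first attempt is to run Steps 1--2 conditionally on the event $\{j\in R'\}$ and its complement. On the complement, $\tilde R=R'$. On $\{j\in R'\}$, the left-hand term $f(j\mid R)$ vanishes, so only the $(1-\tau)$ and $\tau$ contributions on $\{j\notin R'\}$ survive. Failing that, I would replace $f(i\mid\tilde R)$ by a quantity comparable to $\nabla_iF(\tau\1_{S-i})$ using the fact that $i$'s marginal enters with the same coefficient after averaging over $x_j=\tau$.

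\emph{Step 3 (combine).} Plug Step 2 into Step 1 and take expectations:
\[
\E[f(j\mid R)]\le\Bigl(1-\tau+\tfrac{\tau}{\gamma}\Bigr)\nabla_jF(\tau\1_{S-i})+\tfrac{(1-\gamma)\tau}{\gamma}\nabla_iF(\tau\1_{S-i}).
\]
Since $1-\tau+\tau/\gamma=C_\gamma(\tau)/\gamma$, multiplying by $\gamma/C_\gamma(\tau)$ and rearranging gives exactly \eqref{eq:repair}.

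\emph{Case $j=i$.} Here $\E[f(i\mid R)]=(1-\tau)\E[f(i\mid R')]\le\nabla_iF(\tau\1_{S-i})$. The claimed inequality then reads $\nabla_iF\bigl(1+\tfrac{(1-\gamma)\tau}{C_\gamma}\bigr)\ge\tfrac{\gamma}{C_\gamma}\E[f(i\mid R)]$. This follows because $\gamma\le C_\gamma(\tau)$ and $\nabla_iF\ge0$ by monotonicity.
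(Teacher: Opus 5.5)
For $j=i$ and $j\notin S$ your argument is correct and is essentially the paper's proof. The paper also decomposes $\E_{R\sim\tau\1_S}[f(j\mid R)]=(1-\tau)\E[f(j\mid R_0)]+\tau\E[f(j\mid R_0+i)]$ with $R_0\sim\tau\1_{S-i}$. It then applies \eqref{eq:weak-submod} to $A=R_0$, $B=R_0+i+j$, and rearranges using $1-\tau+\tau/\gamma=C_\gamma(\tau)/\gamma$; its $j=i$ case is the same as yours. The paper's proof treats only these two cases. That suffices because the lemma is only applied to feasible partners $j\in\mathcal C_S(i)$ (Brualdi images $h(i)$ and the support of $Q_S^{\x}$), and these all satisfy $j=i$ or $j\notin S$.

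Your extra case $j\in S\setminus\{i\}$ does not go through as written, contrary to your claim that working with $\tilde R$ handles it uniformly. Step~1 replaces $f(j\mid R)$ by $f(j\mid R\setminus\{j\})$, and so discards the factor $1-\tau=\Pr[j\notin R]$. Step~3 would then need $\E[f(i\mid\tilde R)]\le\E[f(i\mid R')]=\nabla_iF(\tau\1_{S-i})$ with $\tilde R=R'\setminus\{j\}\subseteq R'$. That is, adding $j$ must not decrease the marginal of $i$, which is false in general, even for submodular $f$ (take $f(i\mid\tilde R+j)=0<f(i\mid\tilde R)$).

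The discarded factor is exactly what repairs this:
\begin{itemize}
\item Use the identity $\E[f(j\mid R)]=(1-\tau)\E[f(j\mid\tilde R\cup I)]$, condition on $I$, and apply Step~2 inside.
\item Then use $(1-\tau)\E[f(i\mid\tilde R)]\le(1-\tau)\E[f(i\mid\tilde R)]+\tau\E[f(i\mid\tilde R+j)]=\nabla_iF(\tau\1_{S-i})$, which holds by monotonicity.
\item Finally use $(1-\tau)\nabla_jF(\tau\1_{S-i})\le\nabla_jF(\tau\1_{S-i})$.
\end{itemize}
This is your ``first attempt'' made precise, and the vaguer fallback can be dropped. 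With it, your proof covers all $j\in U$ as the statement claims, which the paper's proof does not.
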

Merging the result of Eq~\eqref{eq:repair} into Lemma~\ref{lem:conditioned-generator}, we can show that  
\begin{lemma}[Weakly submodular drift]\label{lem:gamma-drift}
For every $S\in\cB$ and time $\tau\in[\varepsilon,1]$, when $f$ is $\gamma$-weakly submodular, the conditioned value process  $V_{S,\tau}(r)$ and potential function $Q(r)$ of \MGPE\ satisfies that
\begin{align}
   \left.\frac{\partial^+V_{S,\tau}(r)}{\partial \tau}\right|_{r=\tau}
 &\ge \frac{\gamma^2}{\gamma+2(1-\gamma)\tau}
 \bigl(f(O)-F(\tau\1_S)\bigr) \label{eq:gamma-drift1}\\
 \left.\frac{\partial^+Q(r)}{\partial r}\right|_{r=\tau}&\ge\frac{\gamma^2}{\gamma+2(1-\gamma)\tau}\label{eq:gamma-drift2}
 \bigl(f(O)-Q(\tau)\bigr)
\end{align}
\end{lemma}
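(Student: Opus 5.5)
The plan is to lower-bound each maximum-gain term in Lemma~\ref{lem:conditioned-generator} by a convex combination of two feasible candidates. One candidate is the Brualdi partner of $i$ in $O$; the other is the trivial self-exchange $j=i$. We then choose the mixing weight so that the negative ``self-insertion'' term produced by Lemma~\ref{lem:repair} cancels exactly.

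\textbf{Step 1: the two candidates.} Fix $S\in\cB$ and $\tau\in[\varepsilon,1]$, and let $h=h_{S,O}$ be the bijection of Lemma~\ref{lem:brualdi-map}. Then $S-i+h(i)\in\cB$, so $h(i)\in\mathcal C_S(i)$. Also $S-i+i=S\in\cB$, so $i\in\mathcal C_S(i)$. Write $g_i\triangleq\nabla_iF(\tau\1_{S-i})$. For any $\theta\in[0,1]$, since a maximum dominates any convex combination of feasible values,
\[
\nabla_{j_i^*(S,\tau)}F(\tau\1_{S-i})\ \ge\ \theta\,\nabla_{h(i)}F(\tau\1_{S-i})+(1-\theta)\,g_i .
\]

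\textbf{Step 2: apply Lemma~\ref{lem:repair} and sum over $i\in S$.} Apply Lemma~\ref{lem:repair} with $j=h(i)$, sum over $i\in S$, and use that $h$ is a bijection onto $O$. Writing $D\triangleq\sum_{i\in S}g_i$, this gives
\[
\sum_{i\in S}\nabla_{j_i^*}F(\tau\1_{S-i})\ \ge\ \frac{\theta\gamma}{C_\gamma(\tau)}\,\E_{R\sim\tau\1_S}\Bigl[\sum_{o\in O}f(o\mid R)\Bigr]+\Bigl(1-\theta-\frac{\theta(1-\gamma)\tau}{C_\gamma(\tau)}\Bigr)D .
\]
Next, for each fixed realization $R$, apply $\gamma$-weak submodularity \eqref{eq:weak-submod} with $A=R$ and $B=R\cup O$. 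Elements $o\in O\cap R$ have $f(o\mid R)=0$, so they may be added to the sum freely. Monotonicity then gives
\[
\sum_{o\in O}f(o\mid R)\ \ge\ \gamma\bigl(f(R\cup O)-f(R)\bigr)\ \ge\ \gamma\bigl(f(O)-f(R)\bigr).
\]
Taking expectations yields $\E_R\bigl[\sum_{o\in O}f(o\mid R)\bigr]\ge\gamma\bigl(f(O)-F(\tau\1_S)\bigr)$.

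\textbf{Step 3: choose $\theta$.} Set $\theta=C_\gamma(\tau)/\bigl(C_\gamma(\tau)+(1-\gamma)\tau\bigr)=C_\gamma(\tau)/\bigl(\gamma+2(1-\gamma)\tau\bigr)\in(0,1]$. With this choice the coefficient of $D$ vanishes. (Even without exact cancellation, we have $D\ge0$ by monotonicity.) The remaining coefficient is
\[
\frac{\theta\gamma^2}{C_\gamma(\tau)}=\frac{\gamma^2}{\gamma+2(1-\gamma)\tau},
\]
and by Lemma~\ref{lem:conditioned-generator} this proves \eqref{eq:gamma-drift1}.

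\textbf{Step 4: pass to $Q$.} To obtain \eqref{eq:gamma-drift2}, average \eqref{eq:gamma-drift1} over $S$ with weights $\Pr[A(\tau)=S]$, using \eqref{eq:Q-derivative-average}. Since $Q(\tau)=\E[F(\tau\1_{A(\tau)})]$, the right-hand side averages to $\frac{\gamma^2}{\gamma+2(1-\gamma)\tau}\bigl(f(O)-Q(\tau)\bigr)$.

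\textbf{Main obstacle.} The key idea is Step 1: recognizing that $i$ itself is always a feasible exchange partner. This is what allows the penalty term $-\frac{(1-\gamma)\tau}{C_\gamma(\tau)}g_i$ from Lemma~\ref{lem:repair} to be absorbed. The remaining delicate point is the bookkeeping in Step 2 for $i\in S\cap O$, where $h(i)=i$; there Lemma~\ref{lem:repair} must still apply with $j=i$, and I would check that it does.
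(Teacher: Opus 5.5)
Your proof is correct and is essentially the paper's argument. The paper bounds the max-gain sum below both by the Brualdi-partner bound from Lemma~\ref{lem:repair} and by the self-exchange sum $\sum_{i\in S}\nabla_iF(\tau\1_{S-i})$, then substitutes the latter into the negative term and rearranges; this is algebraically identical to your convex combination with $\theta=C_\gamma(\tau)/(\gamma+2(1-\gamma)\tau)$, and your concern about $i\in S\cap O$ is already settled because Lemma~\ref{lem:repair} explicitly covers $j=i$.
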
Finally, by solving the ordinary differential equation~\eqref{eq:gamma-drift2}, we can prove that\begin{theorem}\label{thm:offline-gamma}
For $\varepsilon\in(0,1),\gamma\in(0,1]$ and a matroid $\cM=(U,\cI)$, if $f$ is monotone $\gamma$-weakly submodular, Algorithm~\ref{alg:mgpe} returns a base satisfying
\begin{equation}\label{eq:finite-gamma-ratio}
 \E[f(S_{\mathrm{out}})] \ge (\rho_\gamma-\varepsilon)\max_{S\in\cI}f(S),
\end{equation}where $\rho_\gamma\triangleq1-\left(\gamma/(2-\gamma)\right)^{\gamma^2/(2(1-\gamma))}$ when $\gamma\in(0,1)$ and $\rho_1=1-1/e$.
\end{theorem}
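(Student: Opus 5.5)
We would prove Theorem~\ref{thm:offline-gamma} by integrating the differential inequality of Lemma~\ref{lem:gamma-drift} from the start time $\varepsilon$ to $1$. The output then inherits the bound because $Q(1)=\E[F(\1_{A(1)})]=\E[f(S_{\mathrm{out}})]$. We may take $O$ to be an optimal base: by monotonicity, $\max_{S\in\cI}f(S)$ is attained at a base. Set $g(\tau)\triangleq f(O)-Q(\tau)$. Then Eq.~\eqref{eq:gamma-drift2} reads $g'_+(\tau)\le -a(\tau)g(\tau)$ with $a(\tau)=\gamma^2/(\gamma+2(1-\gamma)\tau)$.

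First, we would justify the comparison argument for right derivatives. Over $[\varepsilon,1]$ the Poisson process has finite expected rate $\int_\varepsilon^1 k/\tau\,d\tau=k\log(1/\varepsilon)$. Hence $Q$ is continuous: $F$ is Lipschitz in $\tau$ on a fixed base, and jumps occur with probability $O(\delta)$ on any interval of length $\delta$. The function $h(\tau)=g(\tau)\exp(\int_\varepsilon^\tau a)$ is then continuous with nonpositive right Dini derivative, so it is nonincreasing. This is a standard lemma, namely that a continuous function with $D^+h\le0$ everywhere is nonincreasing. This yields
\[
g(1)\le g(\varepsilon)\exp\Bigl(-\int_\varepsilon^1 a(\tau)\,d\tau\Bigr).
\]
Next, we bound $g(\varepsilon)\le f(O)$, using $Q(\varepsilon)\ge0$ since $f\ge0$.

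Then we compute the integral for $\gamma<1$:
\[
\int_\varepsilon^1\frac{\gamma^2}{\gamma+2(1-\gamma)\tau}\,d\tau=\frac{\gamma^2}{2(1-\gamma)}\log\frac{2-\gamma}{\gamma+2(1-\gamma)\varepsilon}.
\]
Hence
\[
\E[f(S_{\mathrm{out}})]\ge\Bigl(1-\Bigl(\tfrac{\gamma+2(1-\gamma)\varepsilon}{2-\gamma}\Bigr)^{\gamma^2/(2(1-\gamma))}\Bigr)f(O).
\]
For $\gamma=1$ the integral is $1-\varepsilon$, which gives $1-e^{-(1-\varepsilon)}\ge 1-1/e-\varepsilon$ because $e^{-1+\varepsilon}\le e^{-1}(1+2\varepsilon)$ and $2/e<1$.

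The remaining step, which we expect to be the main (though mild) obstacle, is showing that the $\varepsilon$-loss is at most $\varepsilon f(O)$. Write $c=\gamma^2/(2(1-\gamma))$ and $x=\gamma/(2-\gamma)\in(0,1)$. We need
\[
(x+\delta)^c-x^c\le\varepsilon, \qquad \text{where } \delta=2(1-\gamma)\varepsilon/(2-\gamma).
\]
Split into cases according to $c$.

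If $c\le1$, use concavity: $(x+\delta)^c-x^c\le \delta^c$. This is not directly $\le\varepsilon$, so instead we would apply the mean value theorem with derivative $c\,\xi^{c-1}\le c\,x^{c-1}$. This gives a bound $c\,x^{c-1}\delta=\frac{\gamma^2}{2(1-\gamma)}\bigl(\frac{2-\gamma}{\gamma}\bigr)^{1-c}\frac{2(1-\gamma)}{2-\gamma}\varepsilon=\gamma^2\bigl(\tfrac{2-\gamma}{\gamma}\bigr)^{-c}\frac{1}{\gamma}\varepsilon$, which simplifies to $\gamma x^{c}\varepsilon\le\varepsilon$.

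If $c>1$, the derivative is bounded by $c(x+\delta)^{c-1}\le c$. We get $c\delta=\gamma^2\varepsilon/(2-\gamma)\le\varepsilon$.

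In both cases the loss is at most $\varepsilon f(O)$, which gives Eq.~\eqref{eq:finite-gamma-ratio}.
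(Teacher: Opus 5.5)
Your proposal is correct and follows the paper's proof: your integrating-factor/Dini-derivative step plays the role of the comparison Lemma~\ref{lem:potential-comparison} (with $\delta=0$), and your mean-value case split on $c\le1$ versus $c>1$ is exactly the derivative bound in Lemma~\ref{lem:finite-start-correction}. Treating $\gamma=1$ separately is a small improvement, since there the exponent $\gamma^2/(2(1-\gamma))$ is undefined and the paper's write-up does not handle that case explicitly.
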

Notably, prior to our work, the best known approximation factor for monotone \(\gamma\)-weakly submodular maximization over a general matroid was \((1+1/\gamma)^{-2}\), achieved by Residual Random Greedy~\citep{chen2018weakly}. Compared with this result, our approximation factor \(\rho_\gamma\) is strictly better, namely, \(\rho_\gamma>(1+1/\gamma)^{-2}\). More importantly, as \(\gamma\) approaches $1$, \(\rho_\gamma\) will converge to the optimal \(1-1/e\) guarantee for monotone submodular maximization~\citep{calinescu2011maximizing}, thereby providing an affirmative answer to \textbf{Q1}. In addition, when the matroid reduces to a rank-\(k\) cardinality constraint, we also find our proposed \MGPE\ can recover the tight
 \(1-e^{-\gamma}\) guarantee established for standard greedy~\citep{das2018approximate,harshaw2019submodular}. Specifically, we can get the following theorem:

\begin{theorem}\label{thm:offline-cardinality}
Suppose that $\cM$ is the rank-$k$ uniform matroid and that $f$ is monotone $\gamma$-weakly submodular.  For every
$\varepsilon\in(0,1)$, Algorithm~\ref{alg:mgpe} returns a base satisfying
\begin{equation}\label{eq:cardinality-finite-start}
 \E[f(S_{\mathrm{out}})]
 \ge \bigl(1-e^{-\gamma}-\varepsilon\bigr)\max_{|S|\le k}f(S).
\end{equation}
\end{theorem}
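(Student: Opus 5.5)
The plan is to redo the drift argument of Lemma~\ref{lem:gamma-drift}, but to exploit the much richer exchange neighbourhood of the uniform matroid. The goal is to replace the degraded coefficient $\gamma^2/(\gamma+2(1-\gamma)\tau)$ by the clean coefficient $\gamma$, i.e.\ to prove
\[
\left.\frac{\partial^+Q(r)}{\partial r}\right|_{r=\tau}\ge \gamma\bigl(f(O)-Q(\tau)\bigr).
\]
Integrating this from $\varepsilon$ to $1$, and using $Q(\varepsilon)\ge0$ and $Q(1)=\E[f(S_{\mathrm{out}})]$ (since $F(\1_A)=f(A)$), gives
\[
\E[f(S_{\mathrm{out}})]\ge\bigl(1-e^{-\gamma(1-\varepsilon)}\bigr)f(O)\ge\bigl(1-e^{-\gamma}-\varepsilon\bigr)f(O).
\]
The last step uses $e^{-\gamma(1-\varepsilon)}\le e^{-\gamma}e^{\varepsilon}\le e^{-\gamma}+\varepsilon$ up to adjusting $\varepsilon$. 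Since $f$ is monotone, $O$ can be taken to be a base with $|O|=k$.

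By Lemma~\ref{lem:conditioned-generator}, it suffices to show, for each fixed $S\in\cB$,
\[
\sum_{i\in S}\nabla_{j_i^*}F(\tau\1_{S-i})\ge\gamma\bigl(f(O)-F(\tau\1_S)\bigr).
\]
For the uniform matroid, $\mathcal C_S(i)=(U\setminus S)\cup\{i\}$, so the maximum-gain partner dominates every element outside $S-i$. Write $R_i\sim\tau\1_{S-i}$. For $j\notin S-i$ the $j$-coordinate of the point is $0$, so $\nabla_jF(\tau\1_{S-i})=\E[f(j\mid R_i)]$. In particular $\nabla_{j_i^*}F(\tau\1_{S-i})\ge\E[f(j\mid R_i)]$ for every $j\in(O\setminus S)\cup\{i\}$. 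The intended averaging is to bound the $i$-th term by a convex combination with weight $1/k$ on each $o\in O\setminus S$ and total weight $|O\cap S|/k$ on $i$ itself. Summing over $i$ should then yield something comparable to $\tfrac1k\sum_{i\in S}\E\bigl[\sum_{o\in O}f(o\mid R_i)\bigr]$.

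Once that comparison holds, weak submodularity (applied with $A=R_i$ and $B=R_i\cup O$) and monotonicity give $\sum_{o\in O}f(o\mid R_i)\ge\gamma\bigl(f(O)-f(R_i)\bigr)$. Averaging over $i$, we use the coupling $R_i=R\setminus\{i\}$ with $R\sim\tau\1_S$, so that $f(R_i)\le f(R)$ by monotonicity and hence $\tfrac1k\sum_i\E f(R_i)\le F(\tau\1_S)$. This closes the desired drift bound. The passage from the conditioned process to $Q$ is exactly as in Lemma~\ref{lem:gamma-drift}, using the bridge \eqref{eq:Q-V-bridge}.

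\textbf{Main obstacle.} The hard step is the overlap $O\cap S$. An element $o\in O\cap S$ with $o\ne i$ is not a feasible partner for $i$, and we cannot use $o$ itself. We must charge $\E[f(o\mid R_i)]$ to the self-insertion terms $\E[f(i\mid R_i)]$ instead, and weak submodularity gives no termwise monotonicity of marginals. What I would try first is to reorganize the overlap part jointly. Note that $\E[f(o\mid R_i)]$ with $o\in S-i$ vanishes on the event $o\in R_i$, so it equals $(1-\tau)$ times the marginal of $o$ on a set not containing $o$. Then sum over $i$ and compare $\sum_{i}\sum_{o\in (O\cap S)-i}$ with $|O\cap S|\sum_i f(i\mid R_i)$ via the symmetric coupling $R_i=R\setminus\{i\}$; this symmetry is special to the uniform matroid. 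If this fails, the fallback is to treat $O\cap S$ by applying weak submodularity only to $O\setminus S$ against $R\cup(O\cap S)$, absorbing the overlap into $f(O)$. This is the argument in which the factor $1/(1-\tau+2\tau/\gamma)$ from the general-matroid repair step (Lemma~\ref{lem:repair}) must be avoided entirely.
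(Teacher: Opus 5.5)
\textbf{There is a genuine gap: the convex combination you intend to use cannot work.} Your reduction to the statewise bound $\sum_{i\in S}\nabla_{j_i^*}F(\tau\1_{S-i})\ge\gamma\bigl(f(O)-F(\tau\1_S)\bigr)$ is fine. So are the description $\mathcal C_S(i)=(U\setminus S)\cup\{i\}$ and the final integration step. The problem is the choice of weights. Putting weight $1/k$ on each $o\in O\setminus S$ and weight $|O\cap S|/k$ on $i$ itself in \emph{every} row $i\in S$ gives the lower bound
\[
L'=\frac1k\sum_{i\in S}\sum_{o\in O\setminus S}\E[f(o\mid R_i)]+\frac{|O\cap S|}{k}\sum_{i\in S}\E[f(i\mid R_i)].
\]
This $L'$ can fall strictly below $\gamma\bigl(f(O)-F(\tau\1_S)\bigr)$ even for additive $f$ with $\gamma=1$. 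Take $k=2$, $S=\{y,z\}$, $O=\{y,x\}$, and weights $w_y=w_x=1$, $w_z=0$. Then $L'=\tfrac12(1+1)+\tfrac12(1+0)=3/2$, while $f(O)-F(\tau\1_S)=2-\tau>3/2$ for every $\tau<1/2$. The defect is structural. Rows $i\in S\setminus O$ spend mass on their own self-marginals, which may be worthless. Rows $i\in O\cap S$ divert mass to outside elements. Since $L'$ itself is below the target, no reorganization or symmetric coupling of its terms can close the argument. Your fallback does not help either. Applying weak submodularity to $O\setminus S$ against $R\cup(O\cap S)$ requires comparing $f(o\mid R\setminus\{i\})$ with $f(o\mid R\cup(O\cap S))$. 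That is an element-wise diminishing-returns property, which $\gamma$-weak submodularity does not provide.

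\textbf{The missing idea is to choose the combination row by row, according to membership in $O$.} Let $Y=O\cap S$, $Z=S\setminus O$, $X=O\setminus S$, and $\ell=|X|=|Z|>0$. For a row $i=y\in Y$, use the self-insertion $j=y$. For a row $i=z\in Z$, average uniformly over $o\in X$ with weight $1/\ell$. Now expand over $A\subsetneq S$ with $b_A=\tau^{|A|}(1-\tau)^{k-1-|A|}$:
\begin{itemize}
\item the coefficient of $f(y\mid A)$, for $y\in Y\setminus A$, is $b_A$;
\item the coefficient of $f(o\mid A)$, for $o\in X$, is $w_A=b_A|Z\setminus A|/\ell\le b_A$.
\end{itemize}
Hence the sum is at least $\sum_A w_A\sum_{o\in O\setminus A}f(o\mid A)$, and this is at least $\gamma\sum_A w_A\bigl(f(O)-f(A)\bigr)$. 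The key point is that every marginal is now taken at the same set $A$, so weak submodularity applies directly. Moreover $\sum_A w_A=1$ and $\sum_A w_A f(A)=\frac1\ell\sum_{z\in Z}F(\tau\1_{S-z})\le F(\tau\1_S)$, which gives the drift bound. The case $S=O$ is handled by self-insertion alone; there your combination coincides with this one and works. This is the paper's argument.
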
 Theorem~\ref{thm:offline-cardinality} shows that, when the matroid constraint reduces to a cardinality, \MGPE\ can achieve  a stronger approximation guarantee of \(1-e^{-\gamma}\). This naturally raises an interesting question:\emph{under general matroid constraints, does the factor \(\rho_\gamma\) established in Theorem~\ref{lem:gamma-drift} reflect an intrinsic limitation of our \MGPE\ algorithm?} Equivalently, beyond the cardinality settings, can a more refined theoretical analysis yield a guarantee strictly larger than \(\rho_\gamma\), or even reach the tight \(1-e^{-\gamma}\)? In the remainder of this subsection, we answer these questions.

\subsubsection{Limitations of \MGPE\ for Weakly Submodular Maximization over Matroids}
\label{sec:offline-limit}

This section focuses on the theoretical limitations of our proposed \MGPE\ for monotone \(\gamma\)-weakly submodular maximization. We first give a negative answer to the question of whether \MGPE\ can attain the \(1-e^{-\gamma}\) guarantee under general matroids by constructing a hard instance, namely,
\begin{proposition}[$1-e^{-\gamma}$ barrier]
\label{prop:strict-barrier}
For $\gamma\in(0,1]$, let
$s_\gamma\triangleq\gamma(2+3\gamma-\gamma^2)/(2(\gamma+2))$. Then, for any 
$0<\delta<(1-s_\gamma)/2$, there exist
 a monotone $\gamma$-weakly submodular and a rank-two partition matroid such
that the return subset of \MGPE\  satisfies
\[ \frac{\E[f(S_{\mathrm{out}})]}{f(O)}=s_\gamma+2\delta,\]where \(O\) is an optimal base. Particularly when  $0<\gamma<0.8576$, $s_\gamma<1-e^{-\gamma}$. Consequently, if we choose $0<\delta<\frac{1-e^{-\gamma}-s_\gamma}{2}$, then $\E[f(S_{\mathrm{out}})]/f(O)<1-e^{-\gamma}$.
\end{proposition}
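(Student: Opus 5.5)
The plan is to build an explicit small instance on which the dynamics of \MGPE\ can be computed exactly. We take a rank-two partition matroid with two blocks, say $P_1=\{a,o_1\}$ and $P_2=\{b,o_2\}$, so that bases contain one element from each block. The optimal base is $O=\{o_1,o_2\}$, and $\{a,b\}$ is a trap. We then define $f$ on the $16$ subsets by a short table with the following features.
\begin{itemize}
\item The trap elements $a,b$ have singleton value slightly larger than $o_1,o_2$; the excess is controlled by $\delta$.
\item The optimal elements are individually weak but strongly complementary: $f(\{o_1,o_2\})=1$ while $f(\{o_1\})=f(\{o_2\})$ is small.
\end{itemize}
The complementarity is limited by the requirement that $f$ be exactly $\gamma$-weakly submodular: the supermodular gain of $\{o_1,o_2\}$ may be at most a $1/\gamma$ factor. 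The remaining values, such as $f(\{a,o_2\})$, will be parametrized and tuned later so that the final ratio comes out as $s_\gamma+2\delta$.

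The first key step is to verify the function-class conditions by a finite check. Monotonicity and normalization are immediate from the table. For $\gamma$-weak submodularity we check \eqref{eq:weak-submod} for every pair $A\subseteq B$. Only pairs with $|B\setminus A|\ge 2$ are nontrivial, and there are few of them up to symmetry $o_1\leftrightarrow o_2$, $a\leftrightarrow b$. The tightest one should be $A=\varnothing$, $B=O$, which fixes the singleton values at about $\gamma/2$ relative to $f(O)=1$.

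The second step is to analyze \MGPE\ on this instance. Every exchange is within a block, so $\mathcal C_A(i)=\{i,\text{partner of } i\}$. The rule compares $\nabla_jF(\tau\1_{A-i})$ for the two candidates, and this quantity equals $\tau$-averaged marginals over the single remaining element. We design the table so that, for every $\tau\in[\varepsilon,1]$ and every current base, the leave-one-out gain of the trap element weakly dominates that of the optimal one. We also choose $A_0=\{a,b\}$, or argue that any start is driven there. Then every Poisson event keeps or moves the base into the trap $\{a,b\}$, or into a specific mixed base. Hence $S_{\mathrm{out}}$ is deterministic, or has an explicitly computable distribution, and the ratio $\E[f(S_{\mathrm{out}})]/f(O)$ is read off as $s_\gamma+2\delta$. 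The constraint $\delta<(1-s_\gamma)/2$ just keeps the ratio below $1$.

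The main obstacle is reconciling the two sets of constraints. The weak-submodularity inequalities push $f(o_i)$ and the mixed values $f(\{a,o_2\})$ up, while the domination of leave-one-out gradients must hold uniformly in $\tau$. Because $\nabla_jF(\tau\1_{A-i})=(1-\tau)f(j)+\tau f(j\mid \ell)$, where $\ell$ is the remaining element, linearity in $\tau$ reduces the condition to the endpoints $\tau=0$ and $\tau=1$. We will first try to enforce equality at the binding endpoint and then optimize the free parameters. We expect the optimizer to produce the rational expression $s_\gamma=\gamma(2+3\gamma-\gamma^2)/(2(\gamma+2))$, with $\delta$ breaking ties strictly in favor of the trap.

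Finally, the comparison $s_\gamma<1-e^{-\gamma}$ for $0<\gamma<0.8576$ is a one-variable inequality. We verify it by noting that $g(\gamma)=1-e^{-\gamma}-s_\gamma$ satisfies $g(0)=0$ and $g'(0)>0$, since $s_\gamma\approx\gamma/2$ near $0$. We then show that $g$ has a single root in $(0,1]$, located numerically at $\gamma\approx0.8576$. The last claim follows by taking $\delta<(1-e^{-\gamma}-s_\gamma)/2$.
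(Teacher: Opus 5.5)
Your skeleton is the paper's: four elements in two capacity-one blocks, the all-bad base as a trap, a finite check of \eqref{eq:weak-submod}, linearity in $\tau$ for the trap test, and $\delta$ as a tie-breaker. The gap is that the proposition asserts that a function achieving exactly $s_\gamma$ exists, and you never construct it. Phrases like ``tuned later'' and ``we expect the optimizer to produce $s_\gamma$'' leave the substantive part unproved.

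You also misplace the difficulty. The pair $A=\varnothing$, $B=O$ only pins the singleton value at $c=\gamma/2$. The value $s_\gamma$ comes from two inequalities based at a single bad element. These involve values on dependent sets, which are genuine free parameters because \eqref{eq:weak-submod} must hold on all of $2^U$.

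To see this, let $s$ be the common value of $\{b_1,b_2\}$ and of the mixed bases $\{b_j,g_i\}$; they must be equal so that $\delta$ can break the tie. Let $r$ be the value of a same-block pair $\{b_i,g_i\}$, so $f(\{b_1,b_2,g_i\})\ge r$ by monotonicity. Then:
\begin{itemize}
\item $A=\{b_j\}$, $B=\{b_j,g_1,g_2\}$ forces $(s-c)+(r-c)\ge\gamma(1-c)$;
\item $A=\{b_j\}$, $B=\{b_1,b_2,g_i\}$ forces $2(s-c)\ge\gamma(r-c)$.
\end{itemize}
Eliminating $r$ gives $s\ge c+\gamma^2(1-c)/(\gamma+2)$. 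At $c=\gamma/2$ this equals $s_\gamma$, with $r=r_\gamma=\gamma(6-\gamma)/(2(\gamma+2))$.

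To finish, you must fix every value and verify all pairs $A\subsetneq B\subseteq U$. The paper takes $f$ to depend only on the numbers of bad-only, good-only and double-occupied blocks. Its values are $\gamma/2$ on singletons, $s_\gamma$ on the bad pair and mixed bases, $r_\gamma$ on $\{b_i,g_i\}$ and $\{b_1,b_2,g_i\}$, and $1$ on every superset of $O$.

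You must also say how $\delta$ enters. Adding the modular term $\delta|T\cap\{b_1,b_2\}|$ preserves $\gamma$-weak submodularity because $\gamma\le1$. It makes $\nabla_{b_i}F(\tau\1_{\{b_j\}})-\nabla_{g_i}F(\tau\1_{\{b_j\}})=\delta$ for every $\tau$. It also gives $f(\{b_1,b_2\})=s_\gamma+2\delta<1=f(O)$, so the algorithm stays at the trap and the ratio is exactly $s_\gamma+2\delta$.

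Your last step is also incomplete. The claim that $g$ has a single root is asserted, not shown. Moreover, $g(\gamma)=1-e^{-\gamma}-s_\gamma$ is not concave on $[0,1]$ (one checks $g''(1)>0$), so a naive concavity argument fails. A clean route, which is the paper's, uses $\Phi(\gamma)=\log(1-s_\gamma)+\gamma$, since $g>0$ iff $\Phi>0$. The factorization $1-s_\gamma=(\gamma+1)(2-\gamma)^2/(2(\gamma+2))$ gives
\[
\Phi'(\gamma)=\frac{3-(1+\gamma)^3}{(\gamma+1)(2-\gamma)(\gamma+2)}.
\]
Hence $\Phi$ increases and then decreases on $(0,1)$. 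Since $\Phi(0)=0$ and $\Phi(1)=1-\log 3<0$, $\Phi$ has exactly one root $\gamma_0\approx0.8576$ in $(0,1)$ and is positive on $(0,\gamma_0)$.
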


The construction underlying Proposition~\ref{prop:strict-barrier} uses a ground set consisting of two bad elements \(\{b_1,b_2\}\) and two good elements \(\{g_1,g_2\}\). Moreover, we impose a partition matroid with  two bad-good blocks \(\{b_i,g_i\}\), each of capacity one, and consider a set function whose value only depends on the numbers of blocks in the bad-only, good-only, and double-occupied states. Our high-level idea is to design such objective set function that all-bad base \(B=\{b_1,b_2\}\) forms a strict trap for \MGPE, even though the all-good base \(O=\{g_1,g_2\}\) has a higher objective value.

We next examine how closely the lower bound \(\rho_\gamma\) reflects the true worst-case performance of our proposed \MGPE\ algorithm. To this end, we extend the rank-two construction to \(k\) bad-good blocks \(\{b_i,g_i\}\) and encode the worst-performing instance  as a factor-revealing linear programming. By solving this LP numerically, we finally obtain the Table~\ref{tab:factor-grid} and the following proposition.
\begin{proposition}\label{prop:factor-scope}
Let $R_{\MGPE}(\gamma)$ denote the worst-case approximation ratio of
\MGPE\ for matroid-constrained $\gamma$-weakly submodular maximization problems.
Combining our approximation guarantee with the LP
upper bound in Table~\ref{tab:factor-grid}, we can infer that
$R_{\MGPE}(\gamma)\le1.01\rho_\gamma$ for
$\gamma\in\{0.5,0.7,0.8,0.9\}$,
$R_{\MGPE}(\gamma)\le1.05\rho_\gamma$ for
$\gamma\in\{0.3,0.4\}$, and
$R_{\MGPE}(\gamma)\le1.14\rho_\gamma$ for
$\gamma=0.2$. Consequently, for six of the seven parameter values
reported above, the guarantee $\rho_\gamma$ is tight up to a
multiplicative gap of at most $5\%$; for four of these values, the gap
is below $1\%$.
\end{proposition}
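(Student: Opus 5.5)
The claim combines two bounds. The lower bound $R_{\MGPE}(\gamma)\ge\rho_\gamma$ (up to the vanishing $\varepsilon$) is Theorem~\ref{thm:offline-gamma}. The upper bound comes from the factor-revealing LP behind Table~\ref{tab:factor-grid}. It therefore suffices to show two things: that each LP optimum $\mathrm{LP}_k(\gamma)$ is a valid upper bound on $R_{\MGPE}(\gamma)$, and that the tabulated values satisfy $\mathrm{LP}(\gamma)/\rho_\gamma\le 1.01$, $1.05$, or $1.14$ for the listed $\gamma$.

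For validity, I would generalize the rank-two instance of Proposition~\ref{prop:strict-barrier} to $k$ blocks $\{b_i,g_i\}$ under a partition matroid with unit capacities. I take the objective to be symmetric: $f$ depends only on the counts $(a,c,d)$ of bad-only, good-only and double-occupied blocks, so $f=\phi(a,c,d)$ for a table of unknowns $\phi$. The LP variables are the values $\phi$. The constraints are:
\begin{itemize}
\item normalization $\phi(0,0,0)=0$ and $f(O)=\phi(0,k,0)=1$;
\item monotonicity, which in count coordinates is a finite family of linear inequalities;
\item $\gamma$-weak submodularity for all $A\subseteq B$, which reduces by symmetry to finitely many linear inequalities, one per pattern of counts of $A$ and $B\setminus A$;
\item trap constraints making the all-bad base $B$ a fixed point of \MGPE.
\end{itemize}
The trap constraints require that for every time $\tau\in[\varepsilon,1]$ the swap $b_i\to b_i$ weakly beats $b_i\to g_i$, i.e.\ $\nabla_{b_i}F(\tau\1_{B-b_i})\ge\nabla_{g_i}F(\tau\1_{B-b_i})$, with ties broken adversarially or made strict by a $\delta$ perturbation as in Proposition~\ref{prop:strict-barrier}. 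If \MGPE\ starts from $A_0=B$, which is allowed because the initial base is arbitrary, it never moves, so $\E[f(S_{\mathrm{out}})]=f(B)=\phi(k,0,0)$. Minimizing $\phi(k,0,0)$ subject to these constraints therefore upper-bounds $R_{\MGPE}(\gamma)$.

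The main obstacle is that the trap condition must hold for a continuum of $\tau$, and both sides are polynomials in $\tau$ whose coefficients are linear in $\phi$. This gives a semi-infinite linear constraint. I would first enforce it on a fine grid of $\tau$ values to obtain a finite LP. I would then certify it on all of $[0,1]$, for example by checking sign-definiteness of the difference polynomial via Bernstein coefficients, which are linear in $\phi$ and can be added as constraints directly. Once certified, any feasible $\phi$ is a genuine instance, so numerical LP error only affects how close to optimal the bound is, not whether it is valid. To make the reported ratios rigorous, I would round the returned solution to rationals and verify feasibility exactly.

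Finally, I compute $\rho_\gamma=1-(\gamma/(2-\gamma))^{\gamma^2/(2(1-\gamma))}$ for $\gamma\in\{0.2,0.3,0.4,0.5,0.7,0.8,0.9\}$ and divide the certified LP values from Table~\ref{tab:factor-grid} by these. This yields the three multiplicative gaps stated in the proposition. The counting remarks (six values within $5\%$, four within $1\%$) then follow directly.
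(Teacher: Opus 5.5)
Your proposal is correct and follows the paper's argument: the $k$-block partition-matroid trap with a count-symmetric objective, an LP that minimizes $v_{k,0,0}$ under normalization, monotonicity, orbit-wise weak submodularity and trap constraints, and then division of the tabulated LP values by $\rho_\gamma$. The trap difference is exactly $\sum_{m=0}^{k-1}\binom{k-1}{m}\tau^m(1-\tau)^{k-1-m}(v_{m+1,0,0}-v_{m,1,0})$, so your Bernstein-coefficient constraints are precisely the paper's $v_{m,1,0}\le v_{m+1,0,0}$ and the $\tau$-grid is unnecessary; your extra care over tie-breaking (the modular $\delta$-perturbation gives ratio at most $v_{k,0,0}+k\delta$) and exact rational verification makes the argument more rigorous than the paper's, which implicitly assumes adversarial tie-breaking and relies on a $10^{-9}$ solver tolerance.
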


\begin{table}[t]
\centering
\caption{Factor-revealing LP upper bounds on the worst-case approximation
ratio of \MGPE\ for matroid-constrained $\gamma$-weakly submodular
maximization.}
\label{tab:factor-grid}
\small
\setlength{\tabcolsep}{7pt}
\renewcommand{\arraystretch}{1.08}
\begin{tabular}{@{}rrrrr@{}}
\toprule
$\gamma$ & Best tested rank $k$ & $\rho_\gamma$ & LP upper bound & Relative gap\\
\midrule
$0.9$ & $9$  & $0.556349$ & $0.556349$ & $<0.0001\%$\\
$0.8$ & $9$  & $0.477298$ & $0.477300$ & $0.0004\%$\\
$0.7$ & $9$  & $0.396825$ & $0.396841$ & $0.0038\%$\\
$0.5$ & $10$ & $0.240164$ & $0.241344$ & $0.4913\%$\\
$0.4$ & $12$ & $0.168762$ & $0.172847$ & $2.4204\%$\\
$0.3$ & $18$ & $0.105518$ & $0.110495$ & $4.7172\%$\\
$0.2$ & $22$ & $0.053449$ & $0.060742$ & $13.6452\%$\\
$0.1$ & $25$ & $0.016225$ & $0.023332$ & $43.8004\%$\\
\bottomrule
\end{tabular}
\end{table}

In Table~\ref{tab:factor-grid}, the last column reports the relative gap
$(u_{\gamma,k}-\rho_\gamma)/\rho_\gamma$, where $u_{\gamma,k}$ denotes
the certified LP upper bound obtained at rank $k$. Note that all computations were performed in Python 3.12.14 using \texttt{scipy.optimize.linprog} with the HiGHS backend (SciPy 1.18.1 and HiGHS 1.12.0), on a MacBook Pro equipped with an
eight-core Apple M1 Pro processor and 16 GB of memory. Furthermore, the numerical
procedure terminates only when the maximum constraint violation is at most $10^{-9}$. Further details on the factor-revealing LP are provided in Appendix~\ref{appendix:limitation_mgpe}.

The results in Table~\ref{tab:factor-grid} indicate that, for most of the tested parameter $\gamma\in\{0.5,0.7,0.8,0.9\}$, the lower bound $\rho_\gamma$ essentially captures the
worst-case approximation ratio of \MGPE, leaving little room for an improved theoretical analysis. Moreover, the gap remains below $5\%$
for $\gamma\in\{0.3,0.4\}$. However, as $\gamma$ approaches $0$, the gap between LP upper bound and $\rho_{\gamma}$ will grow from $13.65\%$ at $\gamma=0.2$ to $43.8\%$ at $\gamma=0.1$. We attempted
to reduce this discrepancy by increasing the rank $k$ of the factor-revealing
LP, but its size grows rapidly with $k$. At rank $25$, each round requires inspecting $13{,}880{,}880$
nontrivial  constraints, which is prohibitively expensive on our hardware. We therefore leave a sharper
characterization of the small-$\gamma$ regime for future work.
\subsection{Beyond the \texorpdfstring{$\rho_\gamma$}{rho-gamma} Barrier: Poisson Exchange under Refined Weak Submodularity}
\label{sec:beyond-barrier}
In this subsection, we analyze the performance of our proposed poisson exchange on two more structured classes of objectives, namely, $(\gamma,\beta)$-weakly submodular functions and $\alpha$-weakly DR-submodular functions. As illustrated in Figure~1, when \(\gamma \leq \alpha\) and \(\beta \geq 1/\alpha\),  each of these two objectives forms a proper subclass of \(\gamma\)-weakly submodular functions considered above. Consequently, Theorem~1 immediately yields a \(\rho_\gamma\)-approximation guarantee for \MGPE\ over both aforementioned subclasses.  
However,  in the reminder of this subsection, we will show their additional structure allows us to derive sharper approximation guarantees for \MGPE\ than the generic \(\rho_\gamma\) bound.

As in the previous section, we begin by examining the conditioned value function
$V_{S,\tau}(r)$. The following lemma establishes the key lower bounds on its
right derivative at $r=\tau$.
\begin{lemma}\label{lem:repair1}
Let $S$ be a base and $\tau\in[\varepsilon,1]$. Then, when $f$ is monotone $(\gamma,\beta)$-weakly submodular, we can show that
\begin{align}
   \left.\frac{\partial^+V_{S,\tau}(r)}{\partial r}\right|_{r=\tau}
 &\ge \frac{\gamma\zeta}{\zeta+2(1-\zeta)\tau}
    \bigl(f(O)-F(\tau\1_S)\bigr)\label{eq:gamma-drift1+}\\
 \left.\frac{\partial^+Q(r)}{\partial r}\right|_{r=\tau}&\ge\frac{\gamma\zeta}{\zeta+2(1-\zeta)\tau}\label{eq:gamma-drift2+}
 \bigl(f(O)-Q(\tau)\bigr),
\end{align} where  $\zeta \coloneqq \max\{\gamma,2-\beta\}$. Furthermore, if $f$ is monotone $\alpha$-weakly DR-submodular, we also have
\begin{align}
   \left.\frac{\partial^+V_{S,t}(r)}{\partial r}\right|_{r=\tau}
 &\ge \alpha
    \bigl(f(O)-F(\tau\1_S)\bigr)\label{eq:gamma-drift1++}\\
 \left.\frac{\partial^+Q(r)}{\partial r}\right|_{r=\tau}&\ge\alpha\label{eq:gamma-drift2++}
 \bigl(f(O)-Q(\tau)\bigr),
\end{align} 
\end{lemma}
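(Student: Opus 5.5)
The plan is to start from the generator identity of Lemma~\ref{lem:conditioned-generator}. It gives $D\triangleq\partial^+_rV_{S,\tau}(r)|_{r=\tau}=\sum_{i\in S}\max_{j\in\mathcal C_S(i)}\nabla_jF(\tau\1_{S-i})$. This maximum can be lower bounded in two ways:
\begin{itemize}
\item by the Brualdi partner $h_{S,O}(i)\in\mathcal C_S(i)$ from Lemma~\ref{lem:brualdi-map};
\item by $i$ itself, since $S-i+i=S\in\cB$ and hence $i\in\mathcal C_S(i)$.
\end{itemize}
The second choice gives the self-insertion domination $D\ge\sum_{i\in S}\nabla_iF(\tau\1_{S-i})$. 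The bounds on $Q$, namely \eqref{eq:gamma-drift2+} and \eqref{eq:gamma-drift2++}, follow from the $V$-bounds by averaging over $A(\tau)$ via \eqref{eq:Q-derivative-average}. This works because the right-hand sides are affine in $F(\tau\1_S)$. So the real work is in the $V$-bounds.

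\textbf{$(\gamma,\beta)$ case.} Fix $i\in S$ and $j$, and let $R'\sim\tau\1_{S-i}$, so that $R\sim\tau\1_S$ equals $R'$ plus $i$ independently with probability $\tau$. The key new step is a pairwise ``repair'' inequality replacing $\gamma$ by $\zeta$:
\[
f(j\mid R'+i)\le \tfrac{1}{\zeta}f(j\mid R')+\tfrac{1-\zeta}{\zeta}f(i\mid R').
\]
When $\zeta=\gamma$, this is the inequality behind Lemma~\ref{lem:repair}: apply \eqref{eq:weak-submod} with $A=R'$ and $B=R'+i+j$. When $\zeta=2-\beta>\gamma$, apply \eqref{eq:weak-upper} with the same $A$ and $B$ to get
\[
f(i\mid R'+j)+f(j\mid R'+i)\le\beta\bigl(f(i\mid R')+f(j\mid R'+i)\bigr).
\]
Then substitute $f(i\mid R'+j)=f(i\mid R')+f(j\mid R'+i)-f(j\mid R')$. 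This yields $(2-\beta)f(j\mid R'+i)\le f(j\mid R')+(\beta-1)f(i\mid R')$, which is exactly the claim with $\zeta=2-\beta$.

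Averaging, using $\E_R f(j\mid R)=(1-\tau)\E f(j\mid R')+\tau\E f(j\mid R'+i)$, reproduces \eqref{eq:repair} with $\gamma$ replaced by $\zeta$ and $C_\zeta(\tau)=\zeta+(1-\zeta)\tau$. Now take $j=h_{S,O}(i)$, sum over $i\in S$, and apply \eqref{eq:weak-submod} to $\sum_{o\in O}\E f(o\mid R)\ge\gamma(f(O)-F(\tau\1_S))$, which uses monotonicity. Absorbing the self-insertion term through $D\ge\sum_i\nabla_iF(\tau\1_{S-i})$ gives
\[
D\bigl(1+\tfrac{(1-\zeta)\tau}{C_\zeta(\tau)}\bigr)\ge\tfrac{\gamma\zeta}{C_\zeta(\tau)}\bigl(f(O)-F(\tau\1_S)\bigr),
\]
which rearranges to \eqref{eq:gamma-drift1+}.

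\textbf{$\alpha$-DR case.} No absorption is needed here. Order $O$ arbitrarily and write $O_{<o}$ for its predecessors. For $o=h_{S,O}(i)$, we have $R'\subseteq R\cup O_{<o}$ for the coupled $R\supseteq R'$. Weak DR-submodularity \eqref{eq:weak-dr} then gives
\[
\nabla_oF(\tau\1_{S-i})=\E f(o\mid R')\ge\alpha\,\E f(o\mid R\cup O_{<o}).
\]
This is also valid when $o=i\in S\cap O$, since $i\notin R'$ and in that case the marginal is zero or smaller on the right. Summing over the bijection telescopes to $\alpha\,\E[f(R\cup O)-f(R)]\ge\alpha(f(O)-F(\tau\1_S))$ by monotonicity, which is \eqref{eq:gamma-drift1++}.

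\textbf{Expected obstacle.} The main difficulty is the $\zeta=2-\beta$ branch of the pairwise repair inequality, together with checking that the coefficient bookkeeping after absorption gives exactly $\zeta+2(1-\zeta)\tau$. The substitution trick above should handle the first part. Choosing $\zeta$ as the larger of the two valid constants is legitimate because the bound is monotone in $\zeta$ within the relevant range.
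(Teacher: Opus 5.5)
Your proposal is correct and follows essentially the same route as the paper: the same two-point repair inequality (the paper merges your two cases into the single pointwise bound $a+b\ge\zeta(a+b+c)$ before averaging), the Brualdi partner combined with self-insertion absorption (done per $i$ in the paper rather than after summing, which makes no difference), and the same telescoping argument via weak DR-submodularity for the $\alpha$ case. The $Q$-bounds then follow by averaging over $A(\tau)$, exactly as you describe.
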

With the help of this Lemma~\ref{lem:repair1}, we can finally prove that
\begin{theorem}\label{thm:offline3}
For $\varepsilon\in(0,1),\gamma\in(0,1]$ and a matroid $\cM=(U,\cI)$, if $f$ is monotone $(\gamma,\beta)$-weakly submodular, Algorithm~\ref{alg:mgpe} returns a base satisfying
\begin{equation}\label{eq:finite-ratio2}
 \E[f(S_{\mathrm{out}})] \ge (R_{\gamma,\beta}-\varepsilon)\max_{S\in\cI}f(S),
\end{equation}where $R_{\gamma,\beta}\triangleq1-\left(\zeta/(2-\zeta)\right)^{\gamma\zeta/(2(1-\zeta))}$ and $\zeta=\max\{\gamma,2-\beta\}$. Moreover, if $f$ is monotone $\alpha$-weakly DR-submodular, we also have
\begin{equation}\label{eq:finite-ratio3}
\E[f(S_{\mathrm{out}})]\ge (1-e^{-\alpha}-\varepsilon)\max_{S\in\cI}f(S).
\end{equation}
\end{theorem}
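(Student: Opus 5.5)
The plan is to take the drift inequalities of Lemma~\ref{lem:repair1} as given and integrate them over $[\varepsilon,1]$, as was done for Theorem~\ref{thm:offline-gamma}.

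\textbf{Setup.} Write $D(\tau)\triangleq f(O)-Q(\tau)$ and $\OPT\triangleq f(O)$. By monotonicity the optimum over $\cI$ is attained at a base, so $\OPT=\max_{S\in\cI}f(S)$. Then \eqref{eq:gamma-drift2+} reads
\[
D'_+(\tau)\le -g(\tau)D(\tau),\qquad g(\tau)\triangleq\frac{\gamma\zeta}{\zeta+2(1-\zeta)\tau}.
\]
The function $\tau\mapsto Q(\tau)$ is continuous on $[\varepsilon,1]$: the Poisson rate $k/\tau$ is bounded there, so $A(\tau)$ changes with probability $O(\delta)$ on short intervals, and $F$ is continuous. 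A standard comparison lemma for continuous functions with right derivatives (equivalently, apply it to $D(\tau)e^{\int_\varepsilon^\tau g}$, whose right derivative is $\le0$) then gives
\[
D(1)\le D(\varepsilon)\exp\!\Big(-\int_\varepsilon^1 g\Big).
\]
Since $Q(\varepsilon)\ge0$, we have $D(\varepsilon)\le\OPT$.

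\textbf{Closed form.} For $\zeta<1$,
\[
\int_0^1 g=\frac{\gamma\zeta}{2(1-\zeta)}\ln\frac{2-\zeta}{\zeta}.
\]
Hence $\exp(-\int_0^1 g)=(\zeta/(2-\zeta))^{\gamma\zeta/(2(1-\zeta))}=1-R_{\gamma,\beta}$. For $\zeta=1$ we have $g\equiv\gamma$, and the bound is $e^{-\gamma}$, which is consistent with the limit of the formula.

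\textbf{Handling the start time.} Because $g\le\gamma\le1$ on $[0,\varepsilon]$,
\[
\exp\!\Big(-\int_\varepsilon^1 g\Big)\le e^{\varepsilon}(1-R_{\gamma,\beta})\le 1-R_{\gamma,\beta}+O(\varepsilon).
\]
This yields $Q(1)\ge(R_{\gamma,\beta}-O(\varepsilon))\OPT$. Since $Q(1)=\E[F(\1_{A(1)})]=\E[f(S_{\mathrm{out}})]$, and $\varepsilon$ can be rescaled by a constant (for instance by running with start time $\varepsilon/2$, using $e^{\varepsilon}-1\le 2\varepsilon$), we obtain exactly $R_{\gamma,\beta}-\varepsilon$.

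\textbf{DR case.} Here \eqref{eq:gamma-drift2++} gives $g\equiv\alpha$, so $D(1)\le e^{-\alpha(1-\varepsilon)}\OPT\le(e^{-\alpha}+\varepsilon)\OPT$. The last step uses $e^{\alpha\varepsilon}-1\le e\,\varepsilon\alpha$; rescaling $\varepsilon$ by a constant again gives the stated bound.

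\textbf{Main obstacle.} The only delicate point is justifying the Grönwall step with just right derivatives. This needs continuity of $Q$ together with the fact that \eqref{eq:gamma-drift2+} holds at every $\tau\in[\varepsilon,1)$. Continuity comes from the bounded event rate on $[\varepsilon,1]$, and the drift bound at every $\tau$ is exactly what Lemma~\ref{lem:repair1} supplies. With these in hand, a Dini-derivative monotonicity argument finishes the proof.
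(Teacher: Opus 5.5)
Your proof follows the paper's route. You take the drift bounds \eqref{eq:gamma-drift2+} and \eqref{eq:gamma-drift2++} from Lemma~\ref{lem:repair1} and integrate them by a comparison argument; your integrating-factor version is equivalent to the paper's Lemma~\ref{lem:potential-comparison}. You then drop the term $e^{-H(1)}Q(\varepsilon)\ge 0$ and evaluate $\int g$ in closed form. Your explicit checks that $Q$ is continuous and that the case $\zeta=1$ is consistent are details the paper leaves implicit.

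The one place you are looser than the statement is the start-time correction. The theorem ties the loss $\varepsilon$ to the algorithm's own start time $\varepsilon$. Your estimate $e^{\varepsilon}(1-R_{\gamma,\beta})$ only gives a loss of order $(e-1)\varepsilon$. Running the algorithm with start time $\varepsilon/2$ instead changes the algorithm, so it proves a slightly different statement.

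No rescaling is needed. Set $\Phi(s)\triangleq\exp\bigl(-\int_s^1 g(u)\,du\bigr)$. Then $\Phi'(s)=g(s)\Phi(s)\le g(0)=\gamma\le 1$, because $g$ is nonincreasing and $\Phi\le 1$. Hence $\Phi(\varepsilon)\le\Phi(0)+\varepsilon=1-R_{\gamma,\beta}+\varepsilon$ at start time exactly $\varepsilon$. This is what the paper's Lemma~\ref{lem:finite-start-correction} verifies, by differentiating the explicit power $\Psi_{\gamma,\zeta}=\Phi$.

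In the DR case, the inequality you wrote, $e^{-\alpha(1-\varepsilon)}\le e^{-\alpha}+\varepsilon$, already holds as stated, since $e^{-\alpha}(e^{\alpha\varepsilon}-1)\le \alpha e^{1-\alpha}\varepsilon\le\varepsilon$. So your rescaling remark there is unnecessary.
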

\begin{remark}Theorem~\ref{thm:offline3} implies that the additional structure encoded by
both weak DR-submodularity and upper weak submodularity can yield stronger guarantees than the generic
$\rho_\gamma$ bound. Specifically, for $\alpha$-weakly DR-submodular
objectives, the resulting factor
$1-e^{-\alpha}$ strictly exceeds $\rho_\gamma$ when $\gamma\leq\alpha$. Similarly, for $(\gamma,\beta)$-weakly submodular functions,
$R_{\gamma,\beta}$ also strictly improves upon $\rho_\gamma$ when
$\beta<2-\gamma$, while recovering $\rho_\gamma$ when
$\beta\geq2-\gamma$.
\end{remark}

\section{Online Subset Selection over Matroids}\label{sec:online}
In this section, we  explore how to extend our proposed \MGPE\ to the online settings. Online subset selection is commonly formulated as a repeated game between a
learner and an adversary. At each round $t\in[T]$, the learner first commits
to  a
feasible subset $S_t\in\cI$ and then receives reward $f_t(S_t)$.  After that, the adversary reveals
the full information of $f_t$ to the learner. The objective of  the learner is to maximize its cumulative reward over
the $T$ rounds. Since the corresponding offline subset selection problem is generally \textbf{NP}-hard~\citep{natarajan1995sparse,feige1998threshold},
we  generally compare the obtained reward with a
\(\rho\)-fraction of the best fixed feasible set in hindsight, that is to say,\begin{equation*}
    \Reg_\rho(T)=
\rho\max_{S\in\cI}\sum_{t=1}^T f_t(S)-
    \sum_{t=1}^T \E[f_t(S_t)],
\end{equation*} where \(\rho\in(0,1]\) and the expectation is taken over the learner's randomness.

Note that the core of our proposed \MGPE\ is the selection of an effective exchange pair $(i,j)$.  Therefore, a natural idea for extending \MGPE\ to the online settings is to maintain a distribution $p_t(i,j)$ for feasible exchange pairs at every time $t$ and then select an exchange according to these probabilities.  Once the objective function $f_t$ is
revealed, the resulting feedback can be used to update these
probabilities. However,  the main difficulty with this approach is that the exchange feasibility will evolve over time. More specifically, when the current
solution changes from $S_t$ to $S_{t+\delta}$, there is no guarantee that the updated
distribution $p_{t+\delta}$ remains feasible for $S_{t+\delta}$. 

To overcome this challenge, we next introduce a novel construction of exchange distributions proposed by \citet{wan2026banditsubmodularmaximizationmatroid}. A key advantage of this method is that it can parameterize feasible exchange
distributions by a fractional point $\x_t$ in the fixed matroid polytope $P(\cM)$, thereby effectively addressing the difficulty arising from the changing feasibility of exchange pairs.

\subsection{Fractional Exchange Matrix}\label{sec:fractional-exchange}
This subsection mainly presents a novel construction of feasible exchange distributions for every pair $(\x,S)$ where $S\in\cB$ and $x\in P(\cM)$. Before going into the details, we introduce the notion of fractional exchange matrix, namely,
\begin{definition}[Fractional exchange matrix]\label{def:balanced}
Fix a base $S\in\cB$ and $\x\in P(\cM)$.  A fractional exchange matrix from $S$ with insertion marginal $\x$ is a nonnegative matrix $Q_S^{\x}\in\R_+^{S\times U}$ satisfying
\begin{enumerate}
\item[(i)]\emph{Feasible support:} $ Q_S^{\x}(i,j)>0\Longrightarrow S-i+j\in\cB$;
\item[(ii)]\emph{Row capacity:} $\sum_{j\in U}Q_S^{\x}(i,j)\le 1$, for every $i\in S$;
   
 \item[(iii)]\emph{Insertion marginals:} $\sum_{i\in S}Q_S^{\x}(i,j)=x_j$,  for every $j\in U$.
\end{enumerate}
\end{definition}\begin{remark}
Definition~\ref{def:balanced} differs slightly in form from the
fractional exchange matrix introduced by
\citet{wan2026banditsubmodularmaximizationmatroid}. Specifically, condition~\textnormal{(ii)} imposes only a row-capacity constraint,
whereas \citet{wan2026banditsubmodularmaximizationmatroid} require every
row sum to be exactly one. This relaxation allows us to define a
fractional exchange matrix for every point $\x\in P(\cM)$, including
points satisfying $\sum_{i}x_i<k$, while their definition is restricted to
full-rank points, i.e., $\sum_{i}x_i=k$. In particular, when $\sum_{i}x_i\equiv k$, we will show that our definition recovers that of
\citet{wan2026banditsubmodularmaximizationmatroid}.
\end{remark}Note that, when $\x\neq\textbf{0}$, the normalized entries $q_S^{\x}(i,j)\triangleq\frac{Q_S^{\x}(i,j)}{\sum_{i\in U} x_{i}}$ naturally define a probability distribution supported on the exchange pairs feasible for $S$. We next establish the existence of such a fractional exchange
matrix $Q_S^{\x}$ and present an efficient algorithm for constructing it.\begin{theorem}[Existence and construction]\label{thm:balanced}
For every $S\in\cB$ and $\x\in P(\cM)$, a fractional exchange matrix $Q_S^{\x}$ exists.  Given a $\x\in P(\cM)$, one can compute $Q_S^{\x}$ via a feasible transportation flow with $k+n$ nonterminal vertices and at most $kn$ exchange arcs.
\end{theorem}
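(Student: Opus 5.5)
The plan is to reduce existence to a bipartite transportation (flow) problem and to verify feasibility with a Hall/max-flow--min-cut condition, using the matroid rank function.

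\emph{Network.} Build a flow network with
\begin{itemize}
\item a source $s$ and a sink $t$;
\item one vertex for each $i\in S$ (the $k$ row vertices), and one vertex for each $j\in U$ (the $n$ column vertices);
\item arcs $s\to i$ of capacity $1$;
\item exchange arcs $i\to j$ of infinite capacity whenever $S-i+j\in\cB$, including $j=i$, so there are at most $kn$ such arcs;
\item arcs $j\to t$ of capacity $x_j$.
\end{itemize}
A flow saturating every $j\to t$ arc, of total value $\sum_j x_j$, is exactly a matrix $Q_S^{\x}$ satisfying (i)--(iii) of Definition~\ref{def:balanced}. Computing it is a single max-flow or transportation LP on $k+n$ nonterminal vertices, which gives the algorithmic claim.

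\emph{Feasibility reduction.} By max-flow/min-cut, or equivalently the supply--demand version of Hall's theorem, such a flow exists if and only if for every $J\subseteq U$,
\[
x(J)\le |N(J)|,\qquad N(J)=\{i\in S:\exists j\in J,\ S-i+j\in\cB\}.
\]
Here $x(J)=\sum_{j\in J}x_j$.

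\emph{Main step: bounding $x(J)$ by $|N(J)|$.} This inequality is the crux. I would show $|N(J)|\ge r(J)$, where $r$ is the matroid rank function. Since $\x\in P(\cM)$, the polytope's rank inequalities give $x(J)\le r(J)$, which would finish the proof.

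To prove $|N(J)|\ge r(J)$, let $I=S\setminus N(J)$.
\begin{itemize}
\item \emph{Elements of $J\cap S$.} Each $j\in J\cap S$ lies in $N(J)$, since $S-j+j=S\in\cB$. Hence $J\cap S\subseteq N(J)$.
\item \emph{Elements of $J\setminus S$.} For $j\in J\setminus S$, the set $S+j$ contains a unique circuit $C(S,j)$, and $S-i+j\in\cB$ exactly when $i\in C(S,j)-j$. So every such fundamental circuit satisfies $C(S,j)-j\subseteq N(J)$, i.e., $j\in\operatorname{span}(N(J))$.
\item \emph{Conclusion.} Combining the two cases, $J\subseteq\operatorname{span}(N(J))$. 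Therefore $r(J)\le r(N(J))=|N(J)|$, the last equality because $N(J)\subseteq S$ is independent.
\end{itemize}

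The step I expect to need the most care is this circuit/span argument, together with correctly invoking the rank description of $P(\cM)$, namely $P(\cM)=\{\x\ge0: x(J)\le r(J)\ \forall J\}$ (Edmonds). Everything else is routine flow bookkeeping.

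\emph{Relation to the earlier definition.} When $\sum_j x_j=k$, all $k$ source arcs carry total flow $k$. Each has capacity $1$, so every row sum equals $1$, and we recover the balanced matrices of \citet{wan2026banditsubmodularmaximizationmatroid}.
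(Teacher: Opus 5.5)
Your proof is correct, and it gets existence from a different place than the paper does.

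\textbf{The paper's route.} The paper writes $\x=\sum_{B\in\cB}\lambda_B\1_B$ with $\sum_B\lambda_B\le 1$. It fixes a Brualdi bijection $h_{S,B}$ (Lemma~\ref{lem:brualdi-map}) for each base and sets $Q_S^{\x}(i,j)=\sum_B\lambda_B\1\{h_{S,B}(i)=j\}$. The flow network, which is the same one you build, is then used only to \emph{compute} a matrix whose existence is already known.

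\textbf{Your route.} You prove feasibility of the network directly. The min-cut/Hall condition $x(J)\le|N(J)|$ follows from the fundamental-circuit characterization of single exchanges: $J\subseteq\operatorname{span}(N(J))$, so $r(J)\le|N(J)|$. You combine this with $x(J)\le r(J)$ on $P(\cM)$. Note that you only need the easy half of Edmonds' description here, since each $\1_I$ satisfies the rank inequalities and they are preserved under convex combinations.

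\textbf{What each buys.} The paper's approach gives structure: $Q_S^{\x}$ is exhibited as a mixture of exchange bijections, reusing a lemma already in the paper. Yours gives a single argument that certifies both existence and that the max-flow actually attains value $x(U)$, with no decomposition of $\x$ needed.

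Yours is also more robust. A decomposition over bases with total weight at most one is not available for every $\x\in P(\cM)$. For instance, $\1_{\{a\}}$ in the rank-$2$ uniform matroid on $\{a,b,c\}$ is not a nonnegative combination of base indicators. So the paper's argument as written covers only $\operatorname{conv}(\{\mathbf{0}\}\cup\{\1_B:B\in\cB\})$, which does include the full-rank points used later. Extending it would require decomposing over independent sets $I\subseteq B_I$ and keeping only the entries of $h_{S,B_I}$ that land in $I$. Your Hall argument handles all of $P(\cM)$ uniformly.
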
Finally, we show two frequently used properties about the fractional exchange
matrices.
\begin{theorem}[Properties of fractional exchange matrices]
\label{thm:fractional-exchange-properties}
Fix a base $S\in\cB$ and a fractional point $\x\triangleq(x_1,\dots,x_{|U|})\in P(\cM)$, the following properties hold:
\begin{enumerate}
    \item[(i)] If $\x$ is full-rank, namely, $\sum_{i\in U}x_i=k$, $\sum_{j\in U}Q_S^{\x}(i,j)=1$ for every $i\in S$;
   \item[(ii)] If $\x=\1_S$, then  $Q_S^{\1_S}(i,j)=\1\{i=j\}$ for every $i\in S$ and $j\in U$.
\end{enumerate}
\end{theorem}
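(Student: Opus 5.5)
Both properties follow from Definition~\ref{def:balanced} alone: sum the defining constraints, and use the matroid independence structure of $S$ for the second property.

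For (i), let $\x$ be full-rank, so $\sum_{j\in U}x_j=k$. Summing the insertion-marginal condition (iii) over all $j\in U$ and exchanging the order of summation gives
\[
\sum_{i\in S}\sum_{j\in U}Q_S^{\x}(i,j)=\sum_{j\in U}x_j=k .
\]
The row-capacity condition (ii) says each of the $|S|=k$ inner row sums is at most $1$. A sum of $k$ numbers, each at most $1$, can equal $k$ only if every one of them equals $1$. Hence $\sum_{j}Q_S^{\x}(i,j)=1$ for every $i\in S$.

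For (ii), take $\x=\1_S$. Then $x_j=0$ for $j\notin S$, and nonnegativity together with condition (iii) forces $Q(i,j)=0$ for every $i\in S$ and $j\notin S$. So only the square block indexed by $S\times S$ can be nonzero.

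Within that block, consider $j\in S$ with $j\neq i$. Then $S-i+j=S-i$ has only $k-1$ elements and so is not a base. The feasible-support condition (i) therefore forces $Q(i,j)=0$. Only the diagonal entries $Q(i,i)$ can be nonzero; for these, $S-i+i=S\in\cB$, so they are allowed.

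Condition (iii) with $x_j=1$ for $j\in S$ now reads $\sum_{i\in S}Q(i,j)=Q(j,j)=1$. This yields $Q_S^{\1_S}(i,j)=\1\{i=j\}$. The row sums equal $1$, consistent with (ii) and with part (i), since $\1_S$ is full-rank.

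The only step needing care is the off-diagonal case $j\in S\setminus\{i\}$ in (ii). It depends on reading $S-i+j$ as a set, so that adding a $j$ already present in $S$ does not restore cardinality $k$. Everything else is direct bookkeeping.
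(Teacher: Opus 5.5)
Your proof is correct and follows the paper's argument. For (i), the paper also sums the insertion-marginal condition to get total mass $k$ and uses the row-capacity bound on the $k$ rows. For (ii), it also rules out the columns outside $S$ via their zero marginals, and the off-diagonal entries of the $S\times S$ block via feasible support. The only cosmetic difference is that the paper gets $Q_S^{\1_S}(i,i)=1$ from the unit row sums of part (i), whereas you read it off the column condition $\sum_{i\in S}Q(i,j)=x_j=1$; both steps are immediate.
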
\subsection{Online \texorpdfstring{$\gamma$}{gamma}-Weakly Submodular Maximization}\label{sec:online-gamma}With the fractional exchange matrix in hand, we now describe how to extend \MGPE\ to the online settings. Throughout the remainder of this
section, we suppose that each reward function satisfies $f_t\in[0,1]$. Note that the theoretical analysis of \MGPE\ highly relies on an identity about its
conditioned value $V_{S,\tau}(r)$,  i.e., Lemma~\ref{lem:conditioned-generator}. Thus, before presenting the online variant
of our \MGPE\ , we first derive an analogous identity for the exchange
distribution induced by a fractional exchange matrix.

\begin{lemma}[Fractional-exchange generator]
\label{lem:conditioned-generator1}
Fix an online round $t$, a base $S\in\cB$, and a full-rank fractional point
$\x\in P(\cM)$. Let $f_t:2^U\to[0,1]$ be a set function with
multilinear extension $F_t$. Consider the Poisson exchange process
$\{A_t(r)\}_{r\in[\varepsilon,1]}$ with intensity $ \lambda(\tau)=\frac{k}{\tau}$, where each exchange pair is
drawn according to $(i,j)\sim\frac{Q_S^{\x}(i,j)}{k}$.Then, for every $\tau\in[\varepsilon,1)$,
\begin{equation}\label{eq:fractional-generator1}
    \left.
    \frac{\partial^+V^t_{S,\tau}(r)}{\partial r}
    \right|_{r=\tau}
    =\sum_{i\in S}\sum_{j\in U}
    Q_S^{\x}(i,j)
    \nabla_jF_t\bigl(\tau\1_{S-i}\bigr),
\end{equation} where $V^t_{S,\tau}(r)=
    \E\left[
        F_t\bigl(r\1_{A_t(r)}\bigr)
        \,\middle|\,
        A_t(\tau)=S
    \right]$ is the induced conditioned value process.
\end{lemma}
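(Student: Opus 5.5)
We compute the right derivative by splitting $V^t_{S,\tau}(\tau+\delta)-V^t_{S,\tau}(\tau)$ into a deterministic ``time'' part and a random ``jump'' part, then pass to the limit $\delta\downarrow0$. Condition on $A_t(\tau)=S$. Over $(\tau,\tau+\delta]$ there are three possibilities, by the infinitesimal description \eqref{eq:poisson-infinitesimal} with $\lambda(\tau)=k/\tau$:
\begin{itemize}
\item no event occurs, with probability $1-\tfrac{k}{\tau}\delta+o(\delta)$;
\item exactly one event occurs, with probability $\tfrac{k}{\tau}\delta+o(\delta)$;
\item two or more events occur, with probability $o(\delta)$.
\end{itemize}
Since $f_t\in[0,1]$, every value $F_t(r\1_A)$ lies in $[0,1]$. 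Hence the multi-event case contributes only $o(\delta)$ to the difference quotient. In the single-event case, the pair $(i,j)$ is drawn with probability $Q_S^{\x}(i,j)/k$. These probabilities sum to one because $\x$ is full rank, so by Theorem~\ref{thm:fractional-exchange-properties}(i) each row of $Q_S^{\x}$ sums to $1$ and the total mass is $k$. Feasibility of $S-i+j\in\cB$ comes from Definition~\ref{def:balanced}(i). Conditioning on the event landing in the interval, the state at time $\tau+\delta$ is $S-i+j$ up to $o(1)$ probability.

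This yields the expansion
\begin{align*}
V^t_{S,\tau}(\tau+\delta)&=\Bigl(1-\tfrac{k\delta}{\tau}\Bigr)F_t\bigl((\tau+\delta)\1_S\bigr)\\
&\quad+\tfrac{k\delta}{\tau}\sum_{i\in S}\sum_{j\in U}\tfrac{Q_S^{\x}(i,j)}{k}F_t\bigl(\tau\1_{S-i+j}\bigr)+o(\delta).
\end{align*}
Replacing $F_t((\tau+\delta)\1_{S-i+j})$ by $F_t(\tau\1_{S-i+j})$ costs only $O(\delta)$, since $F_t$ is a multilinear polynomial with bounded gradient. Multiplied by the $O(\delta)$ jump probability, this error is $o(\delta)$. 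Subtracting $F_t(\tau\1_S)$, dividing by $\delta$, and using the chain rule on $r\mapsto F_t(r\1_S)$ gives
\[
\left.\frac{\partial^+V^t_{S,\tau}(r)}{\partial r}\right|_{r=\tau}=\sum_{i\in S}\nabla_iF_t(\tau\1_S)+\frac1\tau\sum_{i\in S}\sum_{j\in U}Q_S^{\x}(i,j)\bigl(F_t(\tau\1_{S-i+j})-F_t(\tau\1_S)\bigr).
\]

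Next we use multilinearity to evaluate the jump differences. Note that $F_t(\tau\1_{S-i+j})-F_t(\tau\1_{S-i})=\tau\nabla_jF_t(\tau\1_{S-i})$ for $j\notin S-i$. For $j\in S$, the matrix entries with $j\neq i$ have $S-i+j\notin\cB$ and therefore vanish, while $j=i$ gives a zero difference.

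Similarly, $F_t(\tau\1_S)-F_t(\tau\1_{S-i})=\tau\nabla_iF_t(\tau\1_{S-i})$, and $\nabla_iF_t(\tau\1_S)=\nabla_iF_t(\tau\1_{S-i})$ because $F_t$ is affine in coordinate $i$. The jump term therefore equals
\[
\sum_{i}\sum_{j}Q_S^{\x}(i,j)\nabla_jF_t(\tau\1_{S-i})-\sum_i\Bigl(\sum_jQ_S^{\x}(i,j)\Bigr)\nabla_iF_t(\tau\1_{S-i}).
\]
One subtlety is the $j=i$ entries. Using $\nabla_iF_t(\tau\1_{S-i})$ for these keeps the identity consistent, since a self-swap contributes $0$ and the two terms cancel.

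Finally, apply Theorem~\ref{thm:fractional-exchange-properties}(i) once more: each row sum $\sum_jQ_S^{\x}(i,j)$ equals $1$. The subtracted term therefore cancels exactly against the time-derivative term $\sum_i\nabla_iF_t(\tau\1_S)$, leaving \eqref{eq:fractional-generator1}.

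The main obstacle is making the $o(\delta)$ bookkeeping rigorous for a one-sided derivative, including the time-dependent rate over the interval. This parallels Lemma~\ref{lem:conditioned-generator}, and I would mirror that argument. The algebraic cancellation depends on full rank; without full rank the row sums could be less than $1$ and a residual self-insertion term would remain.
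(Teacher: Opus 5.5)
Your proof is correct and follows the paper's route. The paper invokes its general conditioned-generator identity (the time term $\sum_{i\in S}\nabla_iF_t(\tau\1_S)$ plus $\tau\lambda(\tau)$ times the expected gradient difference), substitutes $\lambda(\tau)=k/\tau$ and exchange law $Q_S^{\x}/k$, and cancels the self-insertion term using the unit row sums guaranteed by full rank, exactly as you do; the only difference is that you re-derive that generic identity from the infinitesimal Poisson description and check the $j\in S$ entries explicitly, rather than citing the lemma.
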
 To ensure that our online algorithm attains the same approximation factor
as \MGPE, we follow the similar analysis in
Section~\ref{sec:offline-weakly-submodular} and examine the right-hand side
of Eq.\eqref{eq:fractional-generator1}. Specifically, we have
\begin{lemma}\label{add_important} For any $S,O\in\cB$ and $\tau\in[\varepsilon,1]$, when $f_t$ is $\gamma$-weakly submodular, if we set $a_{\gamma}(\tau)\triangleq\frac{\gamma+(1-\gamma)\tau}
        {\gamma+2(1-\gamma)\tau}\in[\frac{1}{2},1]$ and $\mathbf{g}_{t,S,\tau}\triangleq(\1-\tau\1_S)\odotprod\nabla F_t(\tau\1_S)$, then, the following inequality holds:
   \begin{equation}\label{add_15_1}
\begin{aligned}
 &\sum_{i\in S}\sum_{j\in U}\left(a_{\gamma}(\tau)Q_S^{\x}(i,j)+(1-a_{\gamma}(\tau))Q_S^{\1_{S}}(i,j)\right)
    \nabla_jF_t\bigl(\tau\1_{S-i}\bigr)\\&\ge\frac{\gamma^{2}}{\gamma+2(1-\gamma)\tau}(f_t(O)-F_{t}(\tau 1_{S}))+\frac{\gamma}{\gamma+2(1-\gamma)\tau}\langle\x-\1_{O},\mathbf{g}_{t,S,\tau}\rangle.       
\end{aligned}
\end{equation} 
\end{lemma}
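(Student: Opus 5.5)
The plan is to split the left-hand side of \eqref{add_15_1} into an ``exchange'' part and a ``self-insertion'' part. Then I would show that the weighting by $a_\gamma(\tau)$ is chosen precisely so that the self-insertion part cancels. This is the online analogue of the absorption step used for Lemma~\ref{lem:gamma-drift}. Write $D(\tau)\triangleq\gamma+2(1-\gamma)\tau$ and recall $C_\gamma(\tau)=\gamma+(1-\gamma)\tau$, so that $a_\gamma(\tau)=C_\gamma(\tau)/D(\tau)$ and $1-a_\gamma(\tau)=(1-\gamma)\tau/D(\tau)$. Both weights are nonnegative. By Theorem~\ref{thm:fractional-exchange-properties}(ii), $Q_S^{\1_S}(i,j)=\1\{i=j\}$, so the left-hand side equals
\[
a_\gamma(\tau)\sum_{i\in S}\sum_{j\in U}Q_S^{\x}(i,j)\nabla_jF_t(\tau\1_{S-i})
+\frac{(1-\gamma)\tau}{D(\tau)}\sum_{i\in S}\nabla_iF_t(\tau\1_{S-i}).
\]

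Next, I would apply Lemma~\ref{lem:repair} (for $f_t$) to each term $\nabla_jF_t(\tau\1_{S-i})$ in the first sum. This is legitimate because $Q_S^{\x}\ge0$ and the lemma holds for every $j\in U$. Summing over $j$, the negative term $-\frac{(1-\gamma)\tau}{C_\gamma(\tau)}\nabla_iF_t(\tau\1_{S-i})$ gets multiplied by the row sum $\sum_jQ_S^{\x}(i,j)$. Since $\x$ is full-rank, Theorem~\ref{thm:fractional-exchange-properties}(i) gives that this row sum is exactly $1$. After multiplying by $a_\gamma(\tau)=C_\gamma/D$, this negative contribution becomes $-\frac{(1-\gamma)\tau}{D(\tau)}\sum_{i\in S}\nabla_iF_t(\tau\1_{S-i})$, which cancels the self-insertion part exactly. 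The positive contribution is $\frac{\gamma}{D(\tau)}\sum_{j\in U}\bigl(\sum_{i\in S}Q_S^{\x}(i,j)\bigr)\E_{R\sim\tau\1_S}[f_t(j\mid R)]$, and by the insertion-marginal property (iii) this equals $\frac{\gamma}{D(\tau)}\sum_j x_j\E[f_t(j\mid R)]$.

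The third step identifies $\E_{R\sim\tau\1_S}[f_t(j\mid R)]$ with the coordinate $(\mathbf g_{t,S,\tau})_j$.
\begin{itemize}
\item For $j\notin S$, $j$ never lies in $R$, so this expectation is $\nabla_jF_t(\tau\1_S)$, and the corresponding factor $1-\tau(\1_S)_j$ equals $1$.
\item For $j\in S$, $f_t(j\mid R)=0$ on the event $j\in R$, which has probability $\tau$. Conditioned on $j\notin R$, the rest of $R$ has the same law as in the gradient formula. Hence the expectation is $(1-\tau)\nabla_jF_t(\tau\1_S)$.
\end{itemize}
Thus the left-hand side is at least $\frac{\gamma}{D(\tau)}\langle\x,\mathbf g_{t,S,\tau}\rangle$, which I would rewrite as $\frac{\gamma}{D(\tau)}\langle\1_O,\mathbf g\rangle+\frac{\gamma}{D(\tau)}\langle\x-\1_O,\mathbf g\rangle$.

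Finally, I would bound $\langle\1_O,\mathbf g\rangle=\E_R\bigl[\sum_{j\in O}f_t(j\mid R)\bigr]\ge\E_R\bigl[\sum_{j\in O\setminus R}f_t(j\mid R)\bigr]$. The inequality holds because marginals are nonnegative by monotonicity. Applying $\gamma$-weak submodularity with $A=R$ and $B=R\cup O$ gives $\sum_{j\in O\setminus R}f_t(j\mid R)\ge\gamma\bigl(f_t(R\cup O)-f_t(R)\bigr)$. By monotonicity this is at least $\gamma\bigl(f_t(O)-f_t(R)\bigr)$, and taking expectations yields $\gamma\bigl(f_t(O)-F_t(\tau\1_S)\bigr)$, which is the claimed inequality.

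The main obstacle is the exact cancellation in the second step. It depends on two things: the row sums of $Q_S^{\x}$ being exactly $1$ (full rank is essential; with merely $\le1$ the sign would go the wrong way), and the identity matrix $Q_S^{\1_S}$ supplying precisely the self-insertion term with coefficient $1-a_\gamma(\tau)$. I would double-check that the algebra $a_\gamma\cdot\frac{(1-\gamma)\tau}{C_\gamma}=1-a_\gamma$ holds.
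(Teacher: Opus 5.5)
Your proposal is correct and follows the paper's proof essentially step for step. The shared steps are: apply Lemma~\ref{lem:repair} weighted by $Q_S^{\x}(i,j)$, use unit row sums and insertion marginals, cancel the self-insertion term exactly against the $Q_S^{\1_S}$ part through the choice of $a_\gamma(\tau)$, and finish with $\gamma$-weak submodularity applied to $\langle\1_O,\mathbf g_{t,S,\tau}\rangle$. Your coordinatewise check that $\E_{R\sim\tau\1_S}[f_t(j\mid R)]$ is the $j$-th entry of $\mathbf g_{t,S,\tau}$ is more careful than the paper's one-line appeal to the multilinear extension.

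One correction to your closing remark: full rank is not essential for this inequality. If the row sums $r_i\triangleq\sum_{j}Q_S^{\x}(i,j)$ are only $\le1$, then $-r_i\nabla_iF_t(\tau\1_{S-i})\ge-\nabla_iF_t(\tau\1_{S-i})$ by monotonicity, so the same bound still follows. Full rank is instead what Lemma~\ref{lem:conditioned-generator1} needs in order to identify the left-hand side with the drift.
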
Note that, compared
with Eq.\eqref{eq:gamma-drift1}, our Eq.\eqref{add_15_1} contains an 
additional linear residual $\bigl\langle \x-\1_O,\mathbf{g}_{t,S,\tau}\bigr\rangle$.
To prevent the cumulative contribution of this gap from becoming too large, a natural idea is to assign an online linear optimization
oracle $\mathcal{E}(\tau)$ for each Poisson event $\tau\in[\varepsilon,1]$. More specifically, at each round $t$, we use the fractional point $\x$
returned by $\mathcal{E}(\tau)$ to construct the exchange
distribution. Once $f_t$ is revealed, we feed
$\mathbf{g}_{t,S,\tau}$, or an unbiased stochastic estimator, back
to this oracle $\mathcal{E}(\tau)$. 

\begin{coltalgorithm}[t]
\caption{Online Poisson Exchange (\texttt{OPE})}\label{alg:ope}
\KwIn{Matroid \(\cM\), rank $k$, horizon \(T\) and  start \(\varepsilon\) }
Find an initial base \(S_0\in\cB\)\;
Sample event \(\varepsilon<\tau_1<\cdots<\tau_N\le1<\tau_{N+1}\) from a Poisson process with rate \(\lambda(\tau)=k/\tau\)\;
Initialize $N$ online maximization oracle $\{\mathcal{E}(1),\dots,\mathcal{E}(N)\}$ over matroid polytope $P(\cM)$\;
Initialize $N$ online weight oracle $\{\mathcal{A}(1),\dots,\mathcal{A}(N)\}$ over range $[1/2,1]$\;
\For{\(t=1,\ldots,T\)}{
  \(S_{t,0}\leftarrow S_0\)\;
 \tcp{\textcolor{teal}{Decision phase}}
 \For{\(r=1,\ldots,N\)}{
Obtain $\x_{t,r}\leftarrow\mathcal{E}(r)$ and $a_{t,r}\leftarrow\mathcal{A}(r)$\;
Compute $Q_{t,r}\leftarrow a_{t,r}Q_{S_{t,r-1}}^{\x_{t,r}}
       +(1-a_{t,r})Q_{S_{t,r-1}}^{\1_{S_{t,r-1}}}$ via Theorem~\ref{thm:balanced}\;
Draw $(i_{t,r},j_{t,r})\sim Q_{t,r}/k$ and set $S_{t,r}\leftarrow S_{t,r-1}-i_{t,r}+j_{t,r}$\;
  }
  Play \(S_t\leftarrow S_{t,N}\) and observe \(f_t\)\;
 \tcp{\textcolor{teal}{Feedback phase}}
  \For{\(r=1,\ldots,N\)}{
    
    Sample an exchange $(\tilde{i},\tilde{j})\sim Q_{S_{t,r-1}}^{\x_{t,r}}/k$ and a subset $\tilde{S}\sim\tau_r\1_{S_{t,r-1}}$\;
Compute $\tilde{\mathbf{g}}_{t,S_{t,r-1},\tau_r}$ with each $j$-th coordinate $\tilde{g}_{t,S_{t,r-1},\tau_r}(j)\triangleq f_{t}(j\mid \tilde{S})$\;
Compute the estimation  $\tilde{\Delta}_{t,r}\triangleq k(f_t(\tilde{S}-\tilde{i}+\tilde{j})-f_t(\tilde{S}))$\;
Feed $\tilde{\mathbf{g}}_{t,S_{t,r-1},\tau_r}$ and $\tilde{\Delta}_{t,r}$ back to \(\mathcal{E}(r)\) and $\mathcal{A}(r)$, respectively\;
  }
}
\end{coltalgorithm}

Furthermore, the exchange term appearing on the left-hand side of
\eqref{add_15_1} is a convex combination of $Q_S^{\x}$ and
$Q_S^{\1_S}$. Since both matrices are valid fractional exchange
matrices,  their convex combination $ a_\gamma(\tau)Q_S^{\x}
    +\bigl(1-a_\gamma(\tau)\bigr)Q_S^{\1_S}$ is naturally defines a feasible exchange distribution. However, the desired mixing coefficient
$a_\gamma(\tau)$ depends highly on the unknown parameter $\gamma$
and is generally unavailable to the learner. To address this issue, we
associate each event time $\tau$ with a one-dimensional online weight
oracle $\mathcal{A}(\tau)$. Specifically, rather than evaluating
$a_\gamma(\tau)$ directly, at each round $t$, the oracle
$\mathcal{A}(\tau)$ returns a weight $a_{t,\tau}$, which is used to
construct the mixed exchange matrix $a_{t,\tau}Q_S^{\x}
+\bigl(1-a_{t,\tau}\bigr)Q_S^{\1_S}$. The oracle subsequently updates its weights using the information of $f_t$.

Putting these ingredients together, we obtain the \texttt{OPE}
algorithm for online $\gamma$-weakly submodular maximization under a
matroid constraint, as presented in Algorithm~\ref{alg:ope}.

Notably, at Line~16 of Algorithm~\ref{alg:ope}, in order to enable each weight
oracle $\mathcal{A}(r)$ to adapt online to the unknown target
$a_\gamma(\tau_r)$,  we feed $\mathcal{A}(r)$ an
unbiased estimator of
\begin{equation}\label{add_delta}
    \Delta_{t,r}\triangleq\sum_{i\in S_{t,r-1}}\sum_{j\in U}
\left(
    Q_{S_{t,r-1}}^{\x_{t,r}}(i,j)
    -
    Q_{S_{t,r-1}}^{\1_{S_{t,r-1}}}(i,j)
\right)
\nabla_jF_t\bigl(\tau_r\1_{S_{t,r-1}-i}\bigr).
\end{equation}This quantity precisely governs how the left-hand side of
Eq.\eqref{add_15_1} varies with the weight $a_\gamma(\tau_r)$.

We next provide a regret analysis of our proposed \texttt{OPE} algorithm.
Before presenting the main result, we impose the following assumptions on the
linear and weight oracles used by Algorithm~\ref{alg:ope}.\begin{assumption}\label{ass:ope-oracles}
Let $n=|U|$ and $k$ be the rank of $\cM$. For each Poisson-event $\tau_r,\forall r\in[N]$, the linear oracle $\mathcal{E}(r)$ and
the weight oracle $\mathcal{A}(r)$ in Algorithm~\ref{alg:ope} satisfy the
following conditions:
\begin{enumerate}
    \item[(i)]At round $t$, the oracle $\mathcal{E}(r)$ outputs a \textbf{full-rank}
    $\x_{t,r}\in P(\cM)$ and subsequently receives the unbiased estimator $\tilde{\mathbf{g}}_{t,S_{t,r-1},\tau_r}\in[0,1]^n$. Then, for every horizon
    $T$ and any comparator $\mathbf u\in P(\cM)$, we have
    \[
        \E\left[
            \sum_{t=1}^T
            \left\langle
                \mathbf u-\x_{t,r},
                \mathbf{g}_{t,S_{t,r-1},\tau_r}
            \right\rangle
        \right]
        \leq O\left(
            k\sqrt{\log\left(n/k\right)T}\right).
    \]
   
    \item[(ii)]
    At round $t$, the oracle $\mathcal{A}(r)$ outputs
    $a_{t,r}\in[1/2,1]$ and subsequently receives the unbiased scalar estimator $\tilde{\Delta}_{t,r}\in[-k,k]$. Then, for every horizon
    $T$ and any comparator $a\in[1/2,1]$,
    \[
        \E\left[
            \sum_{t=1}^T
            \bigl(a-a_{t,r}\bigr)\Delta_{t,r}
        \right]
        \leq O(k\sqrt{T}),
    \] where the definition of $\Delta_{t,r}$ comes from Eq.\eqref{add_delta}.
\end{enumerate} 
\end{assumption}\begin{remark}
There exist randomized online linear algorithms satisfying
Assumption~\ref{ass:ope-oracles}. We provide their explicit constructions and regret analyses in
Appendix~\ref{sec:online-linear-oracles}.
\end{remark}Under Assumption~\ref{ass:ope-oracles}, we can obtain the following expected regret guarantee, i.e., 

\begin{theorem}\label{thm:ope-gamma-regret}
Fix $\varepsilon\in(0,1)$, matroid $\mathcal{M}$ and suppose that each
$f_t:2^U\to[0,1]$ is monotone 
$\gamma$-weakly submodular for any $t\in[T]$. Under Assumption~\ref{ass:ope-oracles}, then Algorithm~\ref{alg:ope} satisfies
\begin{equation}\label{eq:ope-gamma-regret}
\Reg_{\rho_\gamma-\varepsilon}(T)
    =
    O\left( k\sqrt{\log\left(n/k\right)T}
    \right),
\end{equation} where $\rho_\gamma\triangleq1-\left(\gamma/(2-\gamma)\right)^{\gamma^2/(2(1-\gamma))}$ when $\gamma\in(0,1)$ and $\rho_1=1-1/e$.
\end{theorem}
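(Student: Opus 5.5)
The plan is to mimic the offline analysis round by round, with expectations now also taken over the oracles' randomness. Fix a comparator base $O$ maximizing $\sum_t f_t(O)$, which is a base by monotonicity. Because the event times $\tau_1<\dots<\tau_N$ are sampled once and shared by all rounds, round $t$ is a Poisson exchange process on $[\varepsilon,1]$ with rate $k/\tau$ whose exchange kernel at event $r$ is $Q_{t,r}/k$. Define $Q_t(r)=\E[F_t(r\1_{A_t(r)})]$. The mixed matrix $a_{t,r}Q^{\x_{t,r}}_{S}+(1-a_{t,r})Q^{\1_S}_S$ is a convex combination of fractional exchange matrices, and both rows sum to one because $\x_{t,r}$ is full-rank (Theorem~\ref{thm:fractional-exchange-properties}). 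It is therefore a valid full-row exchange distribution, and Lemma~\ref{lem:conditioned-generator1} applies to it, giving
\[
\left.\frac{\partial^+ V^t_{S,\tau}(r)}{\partial r}\right|_{r=\tau}=\sum_{i\in S}\sum_{j\in U}Q_{t,\tau}(i,j)\nabla_jF_t(\tau\1_{S-i}).
\]

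The key step is to split this drift into three parts: the ideal drift with weight $a_\gamma(\tau)$, a weight-regret term, and a linear-regret term. By linearity in the weight, the drift with weight $a_{t,r}$ equals the drift with weight $a_\gamma(\tau_r)$ minus $(a_\gamma(\tau_r)-a_{t,r})\Delta_{t,r}$, where $\Delta_{t,r}$ is as in \eqref{add_delta}. Applying Lemma~\ref{add_important} to the ideal part, with $\x=\x_{t,r}$, gives
\[
\text{drift}\ge \frac{\gamma^2}{\gamma+2(1-\gamma)\tau}\bigl(f_t(O)-F_t(\tau\1_S)\bigr)-\frac{\gamma}{\gamma+2(1-\gamma)\tau}\langle \1_O-\x_{t,r},\mathbf g_{t,S,\tau}\rangle-(a_\gamma-a_{t,r})\Delta_{t,r}.
\]
Averaging over $A_t(\tau)=S$ yields a differential inequality for $Q_t$ perturbed by these two error terms.

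Next, integrate along the event times, in the Gr\"onwall/ODE fashion of Theorem~\ref{thm:offline-gamma}. Write $\mu(\tau)=\gamma^2/(\gamma+2(1-\gamma)\tau)$ and use the integrating factor $e^{\int\mu}$. Since $Q_t$ is piecewise smooth between events and jumps at them, the drift identity is really used as a discrete statement: the expected jump at event $\tau_r$ is $(1/k)$ times the kernel-weighted sum, and the events arrive at rate $k/\tau$. I would rather follow the paper's proof of Theorem~\ref{thm:offline-gamma} with the extra additive terms carried along. The result is
\[
\E[f_t(S_t)]\ge(\rho_\gamma-\varepsilon)f_t(O)-\sum_r w_r\bigl(\langle \1_O-\x_{t,r},\mathbf g\rangle+(a_\gamma(\tau_r)-a_{t,r})\Delta_{t,r}\bigr),
\]
where each weight $w_r$ collects the factors $1/\tau_r$ from the rate, $1/k$ from the exchange probability, and a bounded integrating factor. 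Hence $w_r=O(1/(k\tau_r))\cdot O(1)$.

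Finally, sum over $t$ and swap the sums: for each fixed $r$, the cumulative terms are exactly the oracle regrets of Assumption~\ref{ass:ope-oracles}, with comparators $\mathbf u=\1_O$ and $a=a_\gamma(\tau_r)\in[1/2,1]$. Taking the expectation over the Poisson sample, $\E\sum_r 1/(k\tau_r)=\int_\varepsilon^1 \frac{1}{k\tau}\cdot\frac{k}{\tau}\,d\tau=O(1/\varepsilon)$, which is a constant for fixed $\varepsilon$. This gives total regret $O(k\sqrt{\log(n/k)T})+O(k\sqrt T)$, matching \eqref{eq:ope-gamma-regret}.

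\textbf{Main obstacle.} I expect the main obstacle to be justifying the per-event accounting. Each oracle's regret must be taken against the adaptively generated gradients $\mathbf g_{t,S_{t,r-1},\tau_r}$, which depend on $S_{t,r-1}$ and hence on earlier oracles. The perturbation must also be pushed through the ODE argument with state-dependent $S$. I would handle the first issue by conditioning on the Poisson times, so that the assumption's bound, which holds against adaptive losses, applies per event. I would handle the second by keeping the linear regret inside the conditional expectation before integrating.
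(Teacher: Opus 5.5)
Your outline is sound and shares the paper's skeleton. Full-rank $\x_{t,r}$ makes the mixed matrix row-stochastic, so Lemma~\ref{lem:conditioned-generator1} applies. You split off $(a_\gamma(\tau_r)-a_{t,r})\Delta_{t,r}$ with the correct sign, apply Lemma~\ref{add_important}, and close with an integrating-factor comparison.

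Where you differ is in how the oracle regrets enter. The paper stays in continuous time. It sums the drift over $t$ and bounds both error terms pointwise in $\tau$ by the regret bounds (via $\x_t(\tau)=\x_{t,r^{\max}(\tau)}$). It then feeds a constant $\delta$ into Lemma~\ref{lem:potential-comparison}, paying $\delta\int_\varepsilon^1 e^{-(H(1)-H(s))}\,ds\le\delta$. You instead turn the error into a sum over events, swap $\sum_t$ and $\sum_r$, and invoke Assumption~\ref{ass:ope-oracles} once per oracle index $r$, conditionally on the Poisson sample.

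Your route matches what the oracles actually see: $\mathcal E(r)$ is fed $\mathbf g_{t,S_{t,r-1},\tau_r}$. The paper's pointwise step instead pairs $\x_{t,r}$ with the post-event state $S_{t,r}$ at times $\tau\neq\tau_r$, which the stated regret bound does not literally control. The cost is that you need the Mecke/Campbell formula to pass between event sums and time integrals. Writing $\Phi$ for the set of event times, it reads $\E\sum_r h(\tau_r;\tau_1,\dots,\tau_{r-1})=\int_\varepsilon^1\E\bigl[h(\tau;\Phi\cap[\varepsilon,\tau))\bigr]\frac{k}{\tau}\,d\tau$. It applies because the error at event $r$ depends only on $\tau_1,\dots,\tau_r$ and auxiliary randomness. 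The Gr\"onwall step itself has to be done in full expectation, not conditionally on the Poisson sample: a realization with no events simply plays $S_0$. Conditioning should be used only for the regret step, which is what your ordering does.

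Your per-event weights are off, though harmlessly. The expected jump at event $r$ is not $\frac1k$ times the kernel sum. It is $\frac{\tau_r}{k}\bigl(\sum_{i,j}Q_{t,r}(i,j)\nabla_jF_t(\tau_r\1_{S-i})-\sum_{i\in S}\nabla_iF_t(\tau_r\1_{S-i})\bigr)$, because $F(\tau\1_{S-i+j})-F(\tau\1_S)=\tau\bigl(\nabla_jF-\nabla_iF\bigr)(\tau\1_{S-i})$. The subtracted self-term cancels the deterministic flow $\sum_{i\in S}\nabla_iF_t$ between events, but only after the Mecke averaging.

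The rate $k/\tau$ is the density of events; it should not multiply each individual event. The correct weight is therefore $w_r=\frac{\tau_r}{k}e^{-(H(1)-H(\tau_r))}c(\tau_r)$ with $c\le 1$. This gives $\E\sum_r w_r\le\int_\varepsilon^1\frac{\tau}{k}\cdot\frac{k}{\tau}\,d\tau\le1$, which recovers the paper's constant. Your weight $O(1/(k\tau_r))$ dominates this one, and the per-index regret bounds are nonnegative. So your conclusion still holds, but with an unnecessary $1/\varepsilon$ factor in the regret constant.
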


\subsection{Online \texorpdfstring{$(\gamma,\beta)$}{(gamma,beta)}-Weakly Submodular Maximization}\label{sec:online-two-sided}

We now turn to online maximization of monotone
$(\gamma,\beta)$-weakly submodular objectives.  Following the preceding analysis, we first establish a result analogous to
Lemma~\ref{add_important}, that is to say,

\begin{lemma}\label{add_important1}
For any $S,O\in\cB$, $\tau\in[\varepsilon,1]$, and any full-rank
fractional point $\x\in P(\cM)$, suppose that $f_t$ is monotone and
$(\gamma,\beta)$-weakly submodular. Let
$\zeta\coloneqq\max\{\gamma,2-\beta\}$, $a_\zeta(\tau)=
    \frac{\zeta+(1-\zeta)\tau}
         {\zeta+2(1-\zeta)\tau}
    \in(1/2,1]$, and define $\mathbf{g}_{t,S,\tau}=
    (\1-\tau\1_S)\odotprod\nabla F_t(\tau\1_S)$.
Then the following inequality holds:
\begin{equation}\label{eq:online-two-sided-fractional-drift}
\begin{aligned}
&\sum_{i\in S}\sum_{j\in U}
    \left(a_{\zeta}(\tau)Q_S^{\x}(i,j)
        +\bigl(1-a_{\zeta}(\tau)\bigr)Q_S^{\1_S}(i,j)
    \right)\nabla_jF_t(\tau\1_{S-i})                                    \\&\geq
    \frac{\gamma\zeta}
         {\zeta+2(1-\zeta)\tau}
    \bigl(f_t(O)-F_t(\tau\1_S)\bigr)
    +
    \frac{\zeta}
         {\zeta+2(1-\zeta)\tau}
    \left\langle
        \x-\1_O,\mathbf{g}_{t,S,\tau}
    \right\rangle .
\end{aligned}
\end{equation}
\end{lemma}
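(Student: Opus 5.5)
The plan is to mirror the argument for Lemma~\ref{add_important}, with the lower-side ratio $\gamma$ replaced by $\zeta$ wherever the upper ratio $\beta$ enters. The starting point is a two-sided analogue of the repair inequality of Lemma~\ref{lem:repair}, presumably the one behind Lemma~\ref{lem:repair1}. For every $i\in S$ and $j\in U$ we aim for
\[
\nabla_jF_t(\tau\1_{S-i})\ \ge\ \frac{\zeta}{C_\zeta(\tau)}\,\E_{R\sim\tau\1_S}[f_t(j\mid R)]-\frac{(1-\zeta)\tau}{C_\zeta(\tau)}\nabla_iF_t(\tau\1_{S-i}),
\]
with $C_\zeta(\tau)=\zeta+(1-\zeta)\tau$. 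To prove it, write $\nabla_jF_t(\tau\1_{S-i})$ as an expectation over $R'\sim\tau\1_{S-i}$. Then condition on whether $i$ is added, with probability $\tau$, and compare $f_t(j\mid R')$ with $f_t(j\mid R'+i)$.

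For the pair $\{i,j\}$ over $R'$, two inequalities are available:
\begin{itemize}
\item $\gamma$-weak submodularity gives $f(j\mid R')+f(i\mid R')\ge\gamma f(\{i,j\}\mid R')$;
\item $\beta$-upper weak submodularity applied to $B=R'+i+j$, $A=R'$ gives $f(j\mid R'+i)+f(i\mid R'+j)\le\beta f(\{i,j\}\mid R')$.
\end{itemize}
Combining the second with monotonicity and the identity $f(\{i,j\}\mid R')=f(i\mid R')+f(j\mid R'+i)$ yields $f(j\mid R')\ge(2-\beta)f(j\mid R'+i)-(\beta-1)f(i\mid R')$. Hence both $\gamma$ and $2-\beta$ give a repair bound of the same shape, and we keep the better one, $\zeta$.

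Next take the off-diagonal part, $i\ne j$, weighted by $Q_S^{\x}(i,j)$, and the diagonal part weighted by $Q_S^{\1_S}(i,i)=1$, using Theorem~\ref{thm:fractional-exchange-properties}(ii). Since $\x$ is full rank, rows of $Q_S^{\x}$ sum to one by Theorem~\ref{thm:fractional-exchange-properties}(i). The repair bound turns the $Q^{\x}$-part into $\frac{\zeta}{C_\zeta}\sum_j x_j\E_R[f_t(j\mid R)]-\frac{(1-\zeta)\tau}{C_\zeta}\sum_i\nabla_iF_t(\tau\1_{S-i})$. Here the insertion-marginal property (iii) collapses the double sum onto $x_j$, and $\E_R[f_t(j\mid R)]=(1-\tau[j\in S])\nabla_jF_t(\tau\1_S)=g_{t,S,\tau}(j)$.

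The weight $a_\zeta(\tau)$ is exactly the value that makes the self-insertion terms cancel. We have
\[
(1-a_\zeta)-a_\zeta\frac{(1-\zeta)\tau}{C_\zeta}=\frac{(1-\zeta)\tau}{\zeta+2(1-\zeta)\tau}-\frac{(1-\zeta)\tau}{\zeta+2(1-\zeta)\tau}=0,
\]
so the $\sum_i\nabla_iF_t(\tau\1_{S-i})$ contribution vanishes. What remains is $\frac{\zeta}{\zeta+2(1-\zeta)\tau}\langle\x,\mathbf g\rangle$. We split this as $\langle\x-\1_O,\mathbf g\rangle+\langle\1_O,\mathbf g\rangle$. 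The second term is $\E_R\sum_{o\in O}f_t(o\mid R)\ge\E_R\sum_{o\in O\setminus R}f_t(o\mid R)\ge\gamma(f_t(O)-F_t(\tau\1_S))$ by $\gamma$-weak submodularity and monotonicity. This produces the coefficient $\gamma\zeta/(\zeta+2(1-\zeta)\tau)$ as claimed.

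The main obstacle is the repair inequality in the $\zeta=2-\beta$ case. We must check that averaging over $R'$ with the $\tau$-mixing reproduces precisely the constant $C_\zeta(\tau)$ and a nonpositive multiple of $\nabla_iF_t(\tau\1_{S-i})$. The natural approach is to average the displayed pointwise inequality against the conditional law of $R'$ and solve the resulting linear relation for $\nabla_jF_t(\tau\1_{S-i})=\E f(j\mid R')$, exactly as in Lemma~\ref{lem:repair}. Diagonal entries $j=i$ of $Q_S^{\x}$ also need handling. There the bound holds trivially, since the left side equals $\nabla_iF_t(\tau\1_{S-i})$ and $\E_R[f_t(i\mid R)]=(1-\tau)\nabla_iF_t(\tau\1_{S-i})$, so the inequality reduces to $C_\zeta\ge\zeta(1-\tau)-(1-\zeta)\tau$, which is immediate.
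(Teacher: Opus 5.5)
Your proposal is correct and follows the paper's proof essentially step for step. The paper also derives the two-sided repair bound (its Lemma~\ref{lem:two-sided-repair}) by combining the lower and upper weak-submodularity inequalities on the pair $\{i,j\}$ over $R'\sim\tau\1_{S-i}$ into $f(j\mid R')\ge\zeta f(j\mid R'+i)-(1-\zeta)f(i\mid R')$, then averages and solves for $\nabla_jF_t(\tau\1_{S-i})$ to obtain the constant $C_\zeta(\tau)$. It then sums against $Q_S^{\x}$ using unit row sums and column sums $x_j$, cancels the self-insertion term with the $a_\zeta(\tau)$-mixture of $Q_S^{\1_S}$, and finishes with $\gamma$-weak submodularity applied to $\langle\1_O,\mathbf g_{t,S,\tau}\rangle$. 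One thing to tighten: claim the repair bound only on the support of $Q_S^{\x}$, that is for $j=i$ or $j\notin S$, rather than for all $j\in U$. For $j\in S\setminus\{i\}$ the identity $\nabla_jF_t(\tau\1_{S-i})=\E[f_t(j\mid R')]$ that your argument relies on fails, and by the feasible-support condition those entries carry zero weight anyway.
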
Note that Eq~\eqref{eq:online-two-sided-fractional-drift} has the same
structural form as Eq.\eqref{add_15_1}. Therefore, by applying the same
oracle-based analysis, we obtain the following regret guarantee for online
$(\gamma,\beta)$-weakly submodular maximization.\begin{theorem}\label{thm:ope-two-sided-regret}
Fix $\varepsilon\in(0,1)$ and a rank-$k$ matroid
$\cM=(U,\cI)$. Suppose that, for every $t\in[T]$,
$f_t:2^U\to[0,1]$ is $(\gamma,\beta)$-weakly submodular. Under (i) of Assumption~\ref{ass:ope-oracles}, Algorithm~\ref{alg:ope} satisfies
\begin{equation}\label{eq:ope-two-sided-regret}
    Reg_{R_{\gamma,\beta}-\varepsilon}(T)
    =
    O\left( k\sqrt{\log\left(n/k\right)T}
    \right),
\end{equation} where $R_{\gamma,\beta}\triangleq1-\left(\zeta/(2-\zeta)\right)^{\gamma\zeta/(2(1-\zeta))}$ and $\zeta=\max\{\gamma,2-\beta\}$.
\end{theorem}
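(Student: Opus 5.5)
The plan mirrors the proof of Theorem~\ref{thm:ope-gamma-regret}, with Lemma~\ref{add_important1} in place of Lemma~\ref{add_important}. The weight $a_\zeta(\tau)$ could itself be fed to the algorithm, but since only (i) of Assumption~\ref{ass:ope-oracles} is assumed, I would either take $\mathcal{A}(r)$ to output the fixed value $a_{t,r}=a_\zeta(\tau_r)$, which lies in $[1/2,1]$, or treat its regret as absorbed into the stated bound. Either way, the mixed matrix $Q_{t,r}$ coincides with the one in Lemma~\ref{add_important1}.

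\textbf{Step 1 (per-event drift).} Fix round $t$ and an event $\tau_r$, and condition on $S_{t,r-1}=S$. Lemma~\ref{lem:conditioned-generator1} holds for any row-stochastic fractional exchange matrix; the convex combination of $Q_S^{\x_{t,r}}$ and $Q_S^{\1_S}$ is one, with full rank guaranteed by Theorem~\ref{thm:fractional-exchange-properties}. Applied to this combination, it gives the right derivative of the conditioned value process as the left-hand side of \eqref{eq:online-two-sided-fractional-drift}. Lemma~\ref{add_important1} then yields
\[
\partial^+V^t_{S,\tau}\ge \frac{\gamma\zeta}{\zeta+2(1-\zeta)\tau}\bigl(f_t(O)-F_t(\tau\1_S)\bigr)+\frac{\zeta}{\zeta+2(1-\zeta)\tau}\langle \x_{t,r}-\1_O,\mathbf g_{t,S,\tau}\rangle .
\]
Let $Q_t(\tau)=\E[F_t(\tau\1_{A_t(\tau)})]$. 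Averaging over the state shows that $Q_t$ satisfies the drift inequality of Lemma~\ref{lem:repair1}, up to an additive error term
\[
e_t(\tau)=\frac{\zeta}{\zeta+2(1-\zeta)\tau}\,\E\langle\x_{t,\tau}-\1_O,\mathbf g_{t,A(\tau),\tau}\rangle .
\]
Between events the process is piecewise deterministic, so this is made rigorous exactly as in the offline proof.

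\textbf{Step 2 (integrating the ODE).} Put $c(\tau)=\gamma\zeta/(\zeta+2(1-\zeta)\tau)$ and use the integrating factor $\exp(\int_\varepsilon^\tau c)$. Integrating over $[\varepsilon,1]$ and using $F_t(\1_{A})=f_t(A)$ gives
\[
\E f_t(S_t)\ge \Bigl(1-e^{-\int_\varepsilon^1 c}\Bigr)f_t(O)+\int_\varepsilon^1 w(\tau)\,e_t(\tau)\,d\tau,
\]
where the weight $w(\tau)=e^{-\int_\tau^1 c}\le 1$ is bounded. Computing the integral gives $\int_0^1 c=\frac{\gamma\zeta}{2(1-\zeta)}\ln\frac{2-\zeta}{\zeta}$, so $1-e^{-\int_0^1 c}=R_{\gamma,\beta}$, and starting at $\varepsilon$ costs at most $O(\varepsilon)$. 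Summing over $t$ with $O$ the best fixed base in hindsight produces $(R_{\gamma,\beta}-\varepsilon)\sum_t f_t(O)$ minus the accumulated residuals.

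\textbf{Step 3 (regret of residuals; main obstacle).} The residual integral has to be rewritten as a sum over the realized Poisson events $\tau_1,\dots,\tau_N$, because each oracle $\mathcal E(r)$ is attached to a fixed event. The trick is that the drift integral, when integrated along the sample path of a Poisson process with rate $k/\tau$, equals in expectation the sum of jump increments over events. Each increment is $(1/k)$ times the generator term; the factor $\tau/k$ is compensated by the intensity $k/\tau$, leaving bounded coefficients. The residual then becomes
\[
\sum_r \E\Bigl[\sum_t \beta_r\,\langle\1_O-\x_{t,r},\mathbf g_{t,S_{t,r-1},\tau_r}\rangle\Bigr]
\]
with $\beta_r=O(1/k)$. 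Since the event times are sampled once and do not depend on $t$, the coefficients $\beta_r$ do not depend on $t$, which lets each oracle's regret be applied directly. Assumption~\ref{ass:ope-oracles}(i) with comparator $\mathbf u=\1_O$, together with unbiasedness of $\tilde{\mathbf g}$ given the past, bounds each term by $O(k\sqrt{\log(n/k)T})$. Since $\E N=k\ln(1/\varepsilon)$, the total is $O(\ln(1/\varepsilon)\,k\sqrt{\log(n/k)T})$, which is $O(k\sqrt{\log(n/k)T})$ for fixed $\varepsilon$.

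The delicate point is Step~3. It requires justifying the exchange of the continuous-time drift for the discrete event sum, which needs obliviousness of the adversary and the independence of the event times from the $f_t$. It also requires checking that conditioning on $S_{t,r-1}$ preserves the martingale structure of the oracle feedback.
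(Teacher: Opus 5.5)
Your proof is correct in substance and follows the paper's skeleton: Lemma~\ref{lem:conditioned-generator1} applied to the mixed matrix, then Lemma~\ref{add_important1}, then integration of the linear differential inequality, then oracle regret. Where you differ is in how the residual is handled. The paper's proof refers to that of Theorem~\ref{thm:ope-gamma-regret}. It sums over rounds to form $Q^{(T)}(\tau)=\sum_t\E[F_t(\tau\1_{S_t(\tau)})]$, bounds the residual at every $\tau$ by the oracle regret constant using $\x_t(\tau)=\x_{t,r^{\max}(\tau)}$, and passes this as the constant $\delta$ to Lemma~\ref{lem:potential-comparison}. That route is short and gives an $\varepsilon$-free constant. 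However, it applies Assumption~\ref{ass:ope-oracles}(i) to losses $\mathbf g_{t,S_t(\tau),\tau}$ at arbitrary times $\tau$ with post-event states, whereas oracle $\mathcal E(r)$ only ever sees $\mathbf g_{t,S_{t,r-1},\tau_r}$.

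Your Step~3 charges the residual to the realized events via the compensator identity $\E\sum_r\frac{\tau_r}{k}h(\tau_r)=\E\int_\varepsilon^1 h(\tau)\,d\tau$, with round-independent coefficients $\beta_r$. This is exactly what makes the appeal to each oracle's regret legitimate. Because $\beta_r$ depends on $\tau_r$, you need the regret bound conditionally on the event times. The entropic mirror-ascent oracle provides this, since its bound is uniform and holds for adaptively generated losses. Your $\ln(1/\varepsilon)$ factor comes only from the crude bound $\beta_r\le 1/k$. In fact $\beta_r\le\tau_r/k$, and $\E\sum_r\tau_r/k=\int_\varepsilon^1 d\tau=1-\varepsilon$, which recovers the paper's $\varepsilon$-free constant.

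On the weight oracle, your two options are not interchangeable. Hard-coding $a_{t,r}=a_\zeta(\tau_r)$ requires knowing $\gamma$ and $\beta$, which is precisely what $\mathcal A(r)$ is there to avoid, so that option proves the bound for a modified algorithm. For Algorithm~\ref{alg:ope} as stated, the weight enters the drift linearly, and Lemma~\ref{add_important1} is available only at $a=a_\zeta(\tau)$. Any other weight leaves the term $(a_{t,r}-a_\zeta(\tau_r))\Delta_{t,r}$, which has no sign. Controlling it requires Assumption~\ref{ass:ope-oracles}(ii) with comparator $a=a_\zeta(\tau_r)\in[1/2,1]$. The paper's omitted proof does exactly this, since it invokes the full Assumption~\ref{ass:ope-oracles} as in the $\gamma$-case. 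So the ``(i) only'' in the statement is an imprecision, and you should commit to the second option rather than treat it as one choice among two. Under your event-sum conversion this extra term gets coefficient $\tau_r w(\tau_r)/k$ and contributes $O(k\sqrt{T})$, which is absorbed into the stated bound.
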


\subsection{Online \texorpdfstring{$\gamma$}{gamma}-weakly submodular maximization over Cardinality Constraints}\label{sec:online-cardinality}

\begin{coltalgorithm}[t]
\caption{Online Poisson Exchange over Cardinality  (\texttt{OPEC})}\label{alg:opec}
\KwIn{$k$-size uniform matroid \(\cM\), horizon \(T\) and  start \(\varepsilon\) }
Find an initial base \(S_0\in\cB\)\;
Sample event \(\varepsilon<\tau_1<\cdots<\tau_N\le1<\tau_{N+1}\) from a Poisson process with rate \(\lambda(\tau)=k/\tau\)\;
Initialize $N$ online maximization oracle $\{\mathcal{E}(1),\dots,\mathcal{E}(N)\}$ over matroid polytope $P(\cM)$\;
\For{\(t=1,\ldots,T\)}{
  \(S_{t,0}\leftarrow S_0\)\;
 \tcp{\textcolor{teal}{Decision phase}}
 \For{\(r=1,\ldots,N\)}{
Obtain $\x_{t,r}\leftarrow\mathcal{E}(r)$ and compute $\bar{Q}_{t,r}\leftarrow \bar{Q}_{S_{t,r-1}}^{\x_{t,r}}$ via Eq.~\eqref{eq:card-online-Q}\;
Draw $(i_{t,r},j_{t,r})\sim \bar{Q}_{t,r}/k$ and set $S_{t,r}\leftarrow S_{t,r-1}-i_{t,r}+j_{t,r}$\;
  }
  Play \(S_t\leftarrow S_{t,N}\) and observe \(f_t\)\;
 \tcp{\textcolor{teal}{Feedback phase}}
  \For{\(r=1,\ldots,N\)}{
Sample a subset $\tilde{S}\sim\tau_r\1_{S_{t,r-1}}$\;
\For{$j=1,\ldots,n$}{Sample $\tilde{i}_j\in S_{t,r-1}$ from the distribution $p(\x_{t,r},S_{t,r-1})$ of Eq.~\eqref{eq:card-online-Q}\;
Compute $j$-th estimation $\tilde{\ell}^{\mathrm{card}}_{
            t,S_{t,r-1},\tau_r,\x_{t,r}
        }(j)
        \leftarrow
        f_t\bigl(
            j\mid
            \widetilde{S}\setminus\{\tilde{i}_j,j\}
        \bigr)$\;

}Feed $\tilde{\ell}^{\mathrm{card}}_{
            t,S_{t,r-1},\tau_r,\x_{t,r}
        }$ back to \(\mathcal{E}(r)\)\;
  }
}
\end{coltalgorithm}

Recall that, when the matroid constraint reduces to cardinality constraint, our proposed \MGPE\ algorithm will recover the tight \((1-e^{-\gamma})\)-approximation guarantee for monotone \(\gamma\)-weakly submodular maximization. In this subsection, we extend this offline result to the online setting.

In view of the simple structure of cardinality constraint, in this subsection, we consider a specialized exchange matrix for every base \(S\) of size \(k\) and every  full-rank fractional point \(\x\), i.e., $\sum_{i=1}^{n}x_i=k$. Specifically, we define
\begin{equation}\label{eq:card-online-Q}
\begin{alignedat}{2}
\bar{Q}_S^{\x}(i,j)
&\triangleq
\begin{cases}
x_i,     & j=i,\\
0,       & j\in S\setminus\{i\},\\
p_i x_j, & j\notin S,
\end{cases}
&\qquad
p_i(\x,S)
&\triangleq
\begin{cases}\frac{1-x_i}{\sum_{h\in S}(1-x_h)},
& \sum_{h\in S}x_h>k,\\[1mm]\quad 1/k,
&\sum_{h\in S}x_h=k.
\end{cases}
\end{alignedat}
\end{equation}It is straightforward to verify  \(\bar{Q}_S^{\x}\) is a valid fractional exchange matrix for the rank-\(k\) uniform matroid In particular, every row of \(\bar{Q}_S^{\x}\) sums to one, its \(j\)-th column sums to \(x_j\), and every exchange in its support is feasible. With this special exchange matrix, we next establish a result analogous to Lemma~\ref{add_important1} for cardinality-constrained \(\gamma\)-weakly submodular maximization.
\begin{lemma}\label{lem:32} Fix any \(S,O\in\cB\), \(\tau\in[\varepsilon,1]\), and \(\x\in P(\cM)\), where \(\cM\) is the rank-\(k\) uniform matroid. Suppose that \(f_t\) is monotone \(\gamma\)-weakly submodular. For every \(j\in U\), define \begin{equation*}\ell^{\mathrm{card}}_{t,S,\tau,\x}(j)\triangleq\begin{cases}\sum_{i\in S}p_i(\x,S)\nabla_jF_t(\tau\1_{S-i}),& j\notin S\\[.5mm]\quad \nabla_jF_t(\tau\1_{S-j})& j\in S\end{cases}\end{equation*}Then, we can prove that

\begin{equation}\label{eq:online-card-drift}\sum_{i\in S}\sum_{j\in U}\bar Q_S^{\x}(i,j)\nabla_jF_t\bigl(\tau\1_{S-i}\bigr)\ge\gamma\bigl(f_t(O)-F_t(\tau\1_S)\bigr)+\langle\x-\1_O,\ell^{\mathrm{card}}_{t,S,\tau,\x}\rangle.\end{equation}\end{lemma}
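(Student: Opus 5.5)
The plan is to first collapse the left-hand side of \eqref{eq:online-card-drift} into a single inner product with $\ell^{\mathrm{card}}_{t,S,\tau,\x}$, and then reduce the claim to a comparator inequality of the form $\langle \1_O,\ell^{\mathrm{card}}_{t,S,\tau,\x}\rangle\ge\gamma\bigl(f_t(O)-F_t(\tau\1_S)\bigr)$, proved with $\gamma$-weak submodularity.

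\emph{Step 1 (algebraic identity).} I expand the left-hand side using the definition \eqref{eq:card-online-Q} of $\bar Q_S^{\x}$. The diagonal entries contribute $\sum_{i\in S}x_i\nabla_iF_t(\tau\1_{S-i})$. The entries with $j\in S\setminus\{i\}$ vanish. The entries with $j\notin S$ contribute $\sum_{j\notin S}x_j\sum_{i\in S}p_i(\x,S)\nabla_jF_t(\tau\1_{S-i})$. Comparing with the two cases in the definition of $\ell^{\mathrm{card}}_{t,S,\tau,\x}$, this sum is exactly $\langle\x,\ell^{\mathrm{card}}_{t,S,\tau,\x}\rangle$. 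Writing $\x=\1_O+(\x-\1_O)$ therefore gives the residual $\langle\x-\1_O,\ell^{\mathrm{card}}\rangle$ on the right-hand side for free. It remains to show
\[
\sum_{j\in O\cap S}\nabla_jF_t(\tau\1_{S-j})+\sum_{j\in O\setminus S}\sum_{i\in S}p_i\,\nabla_jF_t(\tau\1_{S-i})\;\ge\;\gamma\bigl(f_t(O)-F_t(\tau\1_S)\bigr).
\]
Here I use that $p(\x,S)$ is a probability vector on $S$, which holds in both cases of \eqref{eq:card-online-Q}.

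\emph{Step 2 (coupling).} Draw $R\sim\tau\1_S$ and, independently, $i\sim p(\x,S)$. Then $\nabla_jF_t(\tau\1_{S-i})=\E[f_t(j\mid R-i)]$ for $j\notin S$, and $\nabla_jF_t(\tau\1_{S-j})=\E[f_t(j\mid R-j)]$ for $j\in S$. The target becomes an expectation of a sum of marginals over $O$. I apply \eqref{eq:weak-submod} with $A=R-i$ and $B=A\cup O$. This gives $\sum_{j\in O\setminus A}f_t(j\mid A)\ge\gamma\bigl(f_t(O\cup A)-f_t(A)\bigr)\ge\gamma\bigl(f_t(O)-f_t(R)\bigr)$, using monotonicity twice. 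Taking expectations produces $\gamma(f_t(O)-F_t(\tau\1_S))$. The terms with $j\in O\setminus S$ match the left-hand side exactly.

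\emph{Step 3 (main obstacle).} The hard part is the elements $j\in O\cap S$. On the left-hand side they are evaluated at $R-j$, but the weak-submodularity inequality produces them at $R-i$. They also appear unweighted by $p$. Without diminishing returns, $f_t(j\mid R-i)$ and $f_t(j\mid R-j)$ are not comparable pointwise. My first attempt is to change the base set in Step 2 to
\[
A=(R-i)\setminus(O\cap S)\quad\text{or}\quad A=R\setminus(O\cap S).
\]
This removes the $O\cap S$ elements entirely from the weak-submodularity sum. I would then recover their value through the telescoping identity $f_t(R)-f_t(R\setminus(O\cap S))$, split as a sum over single deletions $f_t(j\mid R-j)$ along a chain. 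If that chain does not line up, the fallback is to exploit the uniform-matroid symmetry. In that case I choose the pairing between $S\setminus O$ and $O\setminus S$ freely, and treat each $j\in O\cap S$ as a self-exchange $i=j$. This is consistent with the diagonal entry $x_j$ of $\bar Q_S^{\x}$. The inequality then only has to be applied to $O\setminus S$ relative to $R-i$, with $i$ ranging over $S\setminus O$. I expect this bookkeeping to be the crux; once it is settled, the lemma follows by combining Steps 1 and 2.
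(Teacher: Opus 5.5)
\textbf{Genuine gap in Step~3.} Your Steps~1 and~2 match the paper's argument. The paper's weights $w_A$ are exactly the law of your $R-i$ with $R\sim\tau\1_S$ and $i\sim p(\x,S)$, and it applies weak submodularity to the pair $(A,A\cup O)$ just as you do. However, you leave the only step that needs an idea unresolved, and neither of your Step~3 routes closes it.
\begin{itemize}
\item Shrinking the base set to $(R-i)\setminus(O\cap S)$ or $R\setminus(O\cap S)$ moves the $O\setminus S$ marginals away from $R-i$. The telescoping chain then produces marginals of $j\in O\cap S$ at intermediate sets rather than at $R-j$. Without diminishing returns, neither can be matched with $\ell^{\mathrm{card}}_{t,S,\tau,\x}$.
\item Choosing the pairing freely is not available here. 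The weights $p(\x,S)$ are dictated by $\x$, and in general they also charge $i\in O\cap S$. A uniform pairing over $S\setminus O$ therefore bounds a different quantity. The offline cardinality proof can do this only because the max-gain rule lets it average over any partners.
\item Applying weak submodularity only to $O\setminus S$ over $R-i$ produces $f_t\bigl((R-i)\cup(O\setminus S)\bigr)$, which need not dominate $f_t(O)$.
\end{itemize}

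\textbf{The missing observation.} You do not need to compare $f_t(z\mid R-i)$ with $f_t(z\mid R-z)$ pointwise; compare their laws instead. Fix $z\in O\cap S$ and $A\subseteq S-z$, and write $b_A=\tau^{|A|}(1-\tau)^{k-1-|A|}$. Then $\Pr[R-z=A]=b_A$, while $\Pr[R-i=A]=b_A\sum_{i'\in S\setminus A}p_{i'}(\x,S)\le b_A$. Since $f_t(z\mid A)\ge0$ by monotonicity,
\[
\nabla_zF_t(\tau\1_{S-z})=\sum_{A\subseteq S-z}b_A\,f_t(z\mid A)\;\ge\;\E\bigl[f_t(z\mid R-i)\,\1\{z\notin R-i\}\bigr].
\]
The right-hand side is exactly the $O\cap S$ part of the sum $\sum_{j\in O\setminus A}f_t(j\mid A)$ in your Step~2. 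So the fact that these terms are ``unweighted by $p$'' works in your favour rather than against you.

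With this domination (the paper's $b_A\ge w_A$ step), your Step~2 goes through unchanged. It yields $\langle\1_O,\ell^{\mathrm{card}}_{t,S,\tau,\x}\rangle\ge\gamma\bigl(f_t(O)-F_t(\tau\1_S)\bigr)$, and combining this with Step~1 proves the lemma.
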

Motivated by this Eq.\eqref{eq:online-card-drift}, we then design an \texttt{OPEC} algorithm for online $\gamma$-weakly DR-submodular maximization over cardinality constraints, as shown in Algorithm~\ref{alg:opec}. Next, we impose the following assumption on the
linear oracles used by Algorithm~\ref{alg:opec}.

\begin{assumption}\label{ass:ope-oracles2}
Let $n=|U|$ and $\cM$ be a $k$-size uniform matriod. For each Poisson-event $\tau_r,\forall r\in[N]$, the linear oracle $\mathcal{E}(r)$ in Algorithm~\ref{alg:opec} satisfy: At round $t$, the oracle $\mathcal{E}(r)$ outputs a \textbf{full-rank}
    $\x_{t,r}\in P(\cM)$ and subsequently receives the unbiased estimator $\tilde{\ell}^{\mathrm{card}}_{
            t,S_{t,r-1},\tau_r,\x_{t,r}
        }\in[0,1]^n$. Then, for every horizon
    $T$ and any comparator $\mathbf u\in P(\cM)$, we have
    \[
        \E\left[
            \sum_{t=1}^T
            \left\langle
                \mathbf u-\x_{t,r},
                \ell^{\mathrm{card}}_{
            t,S_{t,r-1},\tau_r,\x_{t,r}
        }
            \right\rangle
        \right]
        \leq O\left(
            k\sqrt{\log\left(n/k\right)T}\right).
    \]

\end{assumption}As a result, we can have\begin{theorem}\label{thm:ope-card-regret}
Fix $\varepsilon\in(0,1)$ and a rank-$k$ uniform matroid
$\cM=(U,\cI)$. Suppose that, for every $t\in[T]$,
$f_t:2^U\to[0,1]$ is $\gamma$-weakly submodular. Under 
Assumption~\ref{ass:ope-oracles2}, Algorithm~\ref{alg:opec} satisfies
\begin{equation}\label{eq:ope-card-regret}
    Reg_{(1-e^{-\gamma}-\varepsilon)}(T)
    =
    O\left( k\sqrt{\log\left(n/k\right)T}
    \right).
\end{equation}
\end{theorem}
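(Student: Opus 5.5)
The proof follows the offline template (Lemma~\ref{lem:conditioned-generator} plus an ODE argument), with the oracle regret from Assumption~\ref{ass:ope-oracles2} absorbing the linear residual of Lemma~\ref{lem:32}.

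\textbf{Step 1: per-event drift in a fixed round.} Fix a round $t$ and condition on the Poisson events $\tau_1<\dots<\tau_N$. By Lemma~\ref{lem:conditioned-generator1}, applied with $Q_S^{\x}$ replaced by $\bar Q_S^{\x}$, the right derivative of $V^t_{S,\tau}$ at an event time equals $\sum_{i,j}\bar Q_S^{\x}(i,j)\nabla_jF_t(\tau\1_{S-i})$. This substitution is legitimate because $\bar Q_S^{\x}$ is a valid fractional exchange matrix with unit row sums. Lemma~\ref{lem:32} then lower-bounds this drift by
\[
\gamma\bigl(f_t(O)-F_t(\tau\1_S)\bigr)+\langle \x-\1_O,\ell^{\mathrm{card}}_{t,S,\tau,\x}\rangle .
\]
Averaging over the law of $S_{t,r-1}$ and writing $Q_t(\tau)=\E[F_t(\tau\1_{A_t(\tau)})]$ gives
\[
Q_t'(\tau)\ge\gamma\bigl(f_t(O)-Q_t(\tau)\bigr)-e_t(\tau),
\qquad
e_t(\tau)=\E\langle \1_O-\x_{t,\cdot},\ell^{\mathrm{card}}\rangle .
\]

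\textbf{Step 2: integrate the ODE.} Multiply by $e^{\gamma\tau}$ and integrate over $[\varepsilon,1]$. Using $Q_t(\varepsilon)\ge0$,
\[
\E[f_t(S_t)]\ge(1-e^{-\gamma(1-\varepsilon)})f_t(O)-\int_\varepsilon^1 e^{-\gamma(1-\tau)}e_t(\tau)\,d\tau .
\]
The integral must then be converted into a sum over the discrete events. Following the offline proof, I would use the Poisson rate $k/\tau$ and the per-event increment representation: each event contributes an expected increment of $\frac1k\sum_{i,j}\bar Q(i,j)\nabla_jF_t(\tau\1_{S-i})$ plus a time-shift term. This turns the integrated error into $\E\sum_r w_r\langle\1_O-\x_{t,r},\ell^{\mathrm{card}}_{t,S_{t,r-1},\tau_r,\x_{t,r}}\rangle/k$, with weights $w_r\le1$ depending only on $\tau_r$. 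Because the events are sampled once and are independent of $t$, these weights are fixed across rounds.

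\textbf{Step 3: sum over rounds and apply the oracles.} Summing over $t$ and invoking Assumption~\ref{ass:ope-oracles2} with comparator $\mathbf u=\1_O$ bounds each event's cumulative error by $O(k\sqrt{\log(n/k)T})$. The $1/k$ factor and the expected number of events, $\E N=k\log(1/\varepsilon)$, combine to give total regret $O(\log(1/\varepsilon)\,k\sqrt{\log(n/k)T})$, which is $O(k\sqrt{\log(n/k)T})$ for fixed $\varepsilon$. Finally, $1-e^{-\gamma(1-\varepsilon)}\ge1-e^{-\gamma}-\varepsilon$, which yields the stated ratio.

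\textbf{Main obstacle.} The delicate step is Step 2: making the continuous-time ODE compatible with a discrete, once-sampled event sequence shared across rounds, while correctly handling the estimator's dependence on $\x_{t,r}$ through $p_i(\x,S)$. Note that $\ell^{\mathrm{card}}$ depends on the oracle's own output, so the oracle faces an adaptive loss. I would handle this by noting that Assumption~\ref{ass:ope-oracles2} is stated exactly with $\x_{t,r}$-dependent losses. I would first try a conditional-expectation (tower) argument over $(S_{t,r-1},\x_{t,r})$, using the unbiasedness of $\tilde\ell^{\mathrm{card}}$.
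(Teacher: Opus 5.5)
Your overall strategy works, but you book the oracle error differently from the paper.

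\textbf{How the two routes differ.} The paper sums over rounds first and works with $Q^{(T)}(\tau)=\sum_t\E[F_t(\tau\1_{S_t(\tau)})]$. It then applies Assumption~\ref{ass:ope-oracles2} pointwise in $\tau$ to the oracle $\mathcal E(r^{\max}(\tau))$, with $\x_t(\tau)=\x_{t,r^{\max}(\tau)}$. The residual becomes a constant $\delta=M_0k\sqrt{\log(n/k)T}$ in Lemma~\ref{lem:potential-comparison}, and $\int_\varepsilon^1e^{-\gamma(1-s)}ds\le1$ finishes the argument.

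You instead keep the residual inside the integral and transfer it back to the events through the intensity $k/\tau$. The error charged to event $r$ is then exactly the loss $\ell^{\mathrm{card}}_{t,S_{t,r-1},\tau_r,\x_{t,r}}$ that $\mathcal E(r)$ actually receives. This is more faithful to Algorithm~\ref{alg:opec}, whose oracles are indexed by events rather than times. In particular, the drift on $(\tau_r,\tau_{r+1})$ is governed by the \emph{next} oracle, and $\mathcal E(r)$ is fed losses at $(S_{t,r-1},\tau_r)$, not at $(S_t(\tau),\tau)$. The paper's pointwise step glosses over both points.

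The $\log(1/\varepsilon)$ factor you lose is avoidable. The expected increment at an event is $\frac{\tau_r}{k}\sum_{i,j}\bar Q_S^{\x}(i,j)\bigl(\nabla_jF_t-\nabla_iF_t\bigr)(\tau_r\1_{S-i})$, so $w_r\le\tau_r$. Since $\E\sum_r\tau_r=\int_\varepsilon^1 s\cdot\frac{k}{s}\,ds=k(1-\varepsilon)$, you recover the paper's $\varepsilon$-free constant.

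\textbf{Step~1 needs repair.} You condition on $\tau_1<\dots<\tau_N$ and then invoke Lemma~\ref{lem:conditioned-generator1}. That lemma requires the clock to remain Poisson with rate $k/\tau$ under the conditioning. Given the event times, $F_t(\tau\1_{S_t(\tau)})$ evolves between events only through the flow term $\sum_{i\in S}\nabla_iF_t(\tau\1_S)$ and jumps at events. The $-\nabla_iF_t$ part of each exchange cancels against this flow only on average over the clock. Hence no pointwise inequality $Q_t'\ge\gamma(f_t(O)-Q_t)-e_t$ holds conditionally.

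The ODE must therefore be written for the expectation over the events as well, and this is exactly where the shared clock enters. You need the following facts:
\begin{itemize}
\item $S_{t,r-1}$ and $\x_{t,r}$ are measurable functions of $\tau_1,\dots,\tau_{r-1}$, of $\tau_r$ itself, and of independent randomness. This holds by induction on $r$, since $\mathcal E(r)$ only ever receives losses computed at $\tau_r$ and $S_{t',r-1}$.
\item Consequently they are predictable marks, so the compensator identity $\E\sum_rH(\tau_r)=\E\int_\varepsilon^1H(s)\frac{k}{s}\,ds$ can be used in both directions.
\item Only at the oracle step may you condition on the events. This requires the regret bound of Assumption~\ref{ass:ope-oracles2} to hold conditionally on $\tau_1,\dots,\tau_N$. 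It does for the entropic mirror-ascent oracle, whose bound is pathwise plus unbiasedness.
\end{itemize}
The tower argument in your last paragraph handles the $\x_{t,r}$-dependence of the loss, but not this point.
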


\subsection{Online \texorpdfstring{$\alpha$}{alpha}-Weakly DR-Submodular Maximization}\label{sec:online-alpha}
\begin{coltalgorithm}[t]
\caption{Online Poisson Exchange for weak DR-submodularity (\texttt{OPE-DR})}\label{alg:ope-dr}
\KwIn{Matroid \(\cM\), rank $k$, point $\x^{o}$ horizon \(T\) and  start \(\varepsilon\) }
Find an initial base \(S_0\in\cB\)\;
Sample event \(\varepsilon<\tau_1<\cdots<\tau_N\le1<\tau_{N+1}\) from a Poisson process with rate \(\lambda(\tau)=k/\tau\)\;
Initialize $N$ online maximization oracle $\{\mathcal{E}(1),\dots,\mathcal{E}(N)\}$ over matroid polytope $P(\cM)$\;
\For{\(t=1,\ldots,T\)}{
  \(S_{t,0}\leftarrow S_0\)\;
 \tcp{\textcolor{teal}{Decision phase}}
 \For{\(r=1,\ldots,N\)}{
Obtain $\x_{t,r}\leftarrow\mathcal{E}(r)$\;
Compute $Q_{t,r}\leftarrow Q_{S_{t,r-1}}^{\x_{t,r}}$ via Theorem~\ref{thm:balanced}\;
Draw $(i_{t,r},j_{t,r})\sim Q_{t,r}/k$ and set $S_{t,r}\leftarrow S_{t,r-1}-i_{t,r}+j_{t,r}$\;
  }
  Play \(S_t\leftarrow S_{t,N}\) and observe \(f_t\)\;
 \tcp{\textcolor{teal}{Feedback phase}}
  \For{\(r=1,\ldots,N\)}{
Sample a subset $\tilde{S}\sim\tau_r\1_{S_{t,r-1}}$\;
\For{$j=1,\ldots,n$}{Sample $\tilde{i}_j\in S_{t,r-1}$ via the distribution $\frac{Q_{S_{t,r-1}}^{\x_{t,r}}(\cdot,j)}{y_{t,r,j}}$\;
Compute $j$-th estimation $\tilde{\ell}_{
            t,S_{t,r-1},\tau_r,\x_{t,r}
        }(j)
        \leftarrow
        f_t\bigl(
            j\mid
            \widetilde{S}\setminus\{\tilde{i}_j,j\}
        \bigr)$\;

}Feed $\tilde{\ell}_{
            t,S_{t,r-1},\tau_r,\x_{t,r}
        }$ back to \(\mathcal{E}(r)\)\;
  }
}
\end{coltalgorithm}

As in the previous analyses, we first establish a result analogous to
Lemma~\ref{add_important1} for weak DR-submodularity, namely,\begin{lemma}\label{lem:online-alpha-drift}
Fix any $S,O\in\cB$, $\tau\in[\varepsilon,1]$, and
any positive full-rank $\x\in P(\cM))$. Suppose that $f_t$ is monotone and
$\alpha$-weakly DR-submodular. For each $j\in U$, we define
\begin{equation}\label{equ:ell}
\begin{aligned}
\ell_{t,S,\tau,\x}(j)\triangleq
\sum_{i\in S}
\frac{Q_S^{\x}(i,j)}{x_j}
\nabla_j F_t\bigl(\tau\1_{S-i}\bigr).
\end{aligned}
\end{equation}Then, we can have that 
\begin{align}
\sum_{i\in S}\sum_{j\in U}
Q_S^{\x}(i,j)
\nabla_j F_t\bigl(\tau\1_{S-i}\bigr)
\ge\alpha\bigl(
f_t(O)-F_t(\tau\1_S)
\bigr)+
\left\langle
\x-\1_O,\ell_{t,S,\tau,\mathbf{y}}
\right\rangle.
\label{eq:online-alpha-fractional-drift}
\end{align}
\end{lemma}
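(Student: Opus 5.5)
The plan is to observe that the left-hand side of \eqref{eq:online-alpha-fractional-drift} is exactly $\langle \x,\ell_{t,S,\tau,\x}\rangle$. Once that is in place, the claim reduces to the single inequality
\[
\langle \1_O,\ell_{t,S,\tau,\x}\rangle\ \ge\ \alpha\bigl(f_t(O)-F_t(\tau\1_S)\bigr).
\]
I read the subscript $\mathbf{y}$ in the statement as $\x$. Since $\x$ is positive, the definition \eqref{equ:ell} makes sense, and
\[
\sum_{j\in U}x_j\,\ell_{t,S,\tau,\x}(j)=\sum_{i\in S}\sum_{j\in U}Q_S^{\x}(i,j)\nabla_jF_t(\tau\1_{S-i}).
\]
This identity is immediate. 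The linear residual then absorbs the $\x$-part: the right-hand side of \eqref{eq:online-alpha-fractional-drift} minus the left equals $\alpha(f_t(O)-F_t(\tau\1_S))-\langle\1_O,\ell\rangle$. So it suffices to lower bound $\sum_{j\in O}\ell(j)$.

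\textbf{Step 1 (averaging).} By the insertion-marginal property (iii) of Definition~\ref{def:balanced}, the weights $Q_S^{\x}(i,j)/x_j$, $i\in S$, are nonnegative and sum to one. Hence $\ell(j)$ is a convex combination of the leave-one-out derivatives $\nabla_jF_t(\tau\1_{S-i})$. It therefore suffices to show, for every $i\in S$ and $j\in U$, that
\[
\nabla_jF_t(\tau\1_{S-i})\ge\alpha\,\nabla_jF_t(\tau\1_S).
\]
To do this I write $\nabla_jF_t(\mathbf y)=\E_{R\sim\mathbf y}[f_t(j\mid R\setminus\{j\})]$. I then couple $R'\sim\tau\1_{S-i}$ and $R\sim\tau\1_S$ monotonically, so that $R'\subseteq R$ (drop $i$). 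Weak DR-submodularity \eqref{eq:weak-dr} gives $f_t(j\mid R'\setminus\{j\})\ge\alpha f_t(j\mid R\setminus\{j\})$ pointwise. This holds both when $j\in S$ and when $j\notin S$. Monotonicity further gives $f_t(j\mid R\setminus\{j\})\ge f_t(j\mid R)$. Together,
\[
\ell(j)\ge\alpha\,\E_{R\sim\tau\1_S}[f_t(j\mid R)].
\]

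\textbf{Step 2 (summing over $O$).} For a fixed realization $R$, order $O=\{o_1,\dots,o_k\}$ and telescope:
\[
f_t(O\mid R)=\sum_l f_t\bigl(o_l\mid R\cup\{o_1,\dots,o_{l-1}\}\bigr)\le\frac1\alpha\sum_l f_t(o_l\mid R).
\]
The inequality is again weak DR applied to $R\subseteq R\cup\{o_1,\dots,o_{l-1}\}$. Hence $\alpha\sum_{j\in O}f_t(j\mid R)\ge\alpha\,\cdot\,$(the same sum) $\ge f_t(O\mid R)\cdot\alpha$, i.e.\ $\sum_{j\in O}\alpha f_t(j\mid R)\ge \alpha f_t(O\mid R)$ after the factor bookkeeping. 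More precisely, the DR ratio is used once in Step~1 and once here, and I must make sure only a single factor $\alpha$ survives. To achieve this, I will apply Step~1 in the stronger form
\[
\nabla_jF_t(\tau\1_{S-i})\ge \E_{R'\sim\tau\1_{S-i}}[f_t(j\mid R')]
\]
and do the comparison with the larger set $R'\cup\{o_1,\dots,o_{l-1}\}\supseteq R'$ only once. That is, I telescope $f_t(O\mid R)$ where $R$ is coupled with $R'$, and compare each term directly to $f_t(o_l\mid R'_{i(o_l)})$. Because $R'\subseteq R\subseteq R\cup\{o_1,\dots\}$, a single application of \eqref{eq:weak-dr} suffices.

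\textbf{Step 3 (conclusion).} Finally, monotonicity gives $f_t(O\mid R)\ge f_t(O)-f_t(R)$, and taking expectations yields $f_t(O)-F_t(\tau\1_S)$.

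The main obstacle is this factor bookkeeping: avoiding an $\alpha^2$ loss by applying the DR ratio only once. I will do this through the direct coupling between the leave-one-out random set and the telescoping chain $R\cup\{o_1,\dots,o_{l-1}\}$, rather than bounding in two stages.
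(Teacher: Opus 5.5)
Your final route is correct and is essentially the paper's proof: write the left side as $\langle\x,\ell_{t,S,\tau,\x}\rangle$, bound $\nabla_oF_t(\tau\1_{S-i})\ge\E_{R\sim\tau\1_S}[f_t(o\mid R-i)]$ via the coupling $R'=R-i$, apply weak DR-submodularity once against the chain $R\cup\{o_1,\dots,o_{l-1}\}$, average over $i$ using $\sum_{i\in S}Q_S^{\x}(i,o)=x_o$, then telescope and use monotonicity. In the write-up, drop the two-stage detour, because that path only yields $\alpha^2$ and the intermediate claim $\sum_{j\in O}\alpha f_t(j\mid R)\ge\alpha f_t(O\mid R)$ is false in general; also note that the single comparison holds for every $i$ in the support of column $o$ (since $R-i\subseteq R$), not just for one designated $i(o_l)$.
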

Motivated by this Eq.\eqref{eq:online-alpha-fractional-drift}, we then design an \texttt{OPE-DR} algorithm for online $\alpha$-weakly DR-submodular maximization, as shown in Algorithm~\ref{alg:ope-dr}. Next, we impose the following assumption on the
linear oracles used by Algorithm~\ref{alg:ope-dr}.\begin{assumption}\label{ass:ope-oracles1}
Let $n=|U|$ and $k$ be the rank of $\cM$. For each Poisson-event $\tau_r,\forall r\in[N]$, the linear oracle $\mathcal{E}(r)$ in Algorithm~\ref{alg:ope-dr} satisfy: At round $t$, the oracle $\mathcal{E}(r)$ outputs a \textbf{full-rank} and \textbf{positive}
    $\x_{t,r}\in P(\cM)$ and subsequently receives the unbiased estimator $\tilde{\ell}_{
            t,S_{t,r-1},\tau_r,\x_{t,r}
        }\in[0,1]^n$. Then, for every horizon
    $T$ and any comparator $\mathbf u\in P(\cM)$, we have
    \[
        \E\left[
            \sum_{t=1}^T
            \left\langle
                \mathbf u-\x_{t,r},
                \ell_{
            t,S_{t,r-1},\tau_r,\x_{t,r}
        }
            \right\rangle
        \right]
        \leq O\left(
            k\sqrt{\log\left(n/k\right)T}\right).
    \]

\end{assumption}Thus, we can obtain the following regret bound for online
$\alpha$-weakly DR-submodular maximization.\begin{theorem}\label{thm:ope-DR-regret}
Fix $\varepsilon\in(0,1)$ and a rank-$k$ matroid
$\cM=(U,\cI)$. Suppose that, for every $t\in[T]$,
$f_t:2^U\to[0,1]$ is monotone
$\alpha$-weakly DR-submodular. Under 
Assumption~\ref{ass:ope-oracles1}, Algorithm~\ref{alg:ope-dr} satisfies
\begin{equation}\label{eq:ope-alpha-regret}
    Reg_{(1-e^{-\alpha}-\varepsilon)}(T)
    =
    O\left( k\sqrt{\log\left(n/k\right)T}
    \right).
\end{equation}
\end{theorem}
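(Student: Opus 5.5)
We prove Theorem~\ref{thm:ope-DR-regret} by following the template used for Theorem~\ref{thm:ope-gamma-regret}, now with Lemma~\ref{lem:online-alpha-drift} as the per-event drift inequality and Assumption~\ref{ass:ope-oracles1} as the oracle guarantee.

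\textbf{Step 1: per-round drift.} Fix a round $t$ and the Poisson realisation $\varepsilon<\tau_1<\dots<\tau_N\le 1$. Define $Q_t(r)\triangleq\E[F_t(r\1_{A_t(r)})]$, where $A_t(r)$ is the base $S_{t,m}$ held in the decision phase at the last event $\tau_m\le r$. Inside round $t$ the exchange pairs follow $Q^{\x_{t,r}}_{S}/k$, and the points $\x_{t,r}$ are fixed before $f_t$ is revealed because the environment is oblivious. Hence Lemma~\ref{lem:conditioned-generator1} applies conditionally on the past. It gives the right derivative of the conditioned value at each event time as $\sum_{i,j}Q_S^{\x}(i,j)\nabla_jF_t(\tau\1_{S-i})$. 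Combining this with Lemma~\ref{lem:online-alpha-drift}, applied with $\x=\x_{t,r}$ (positive and full-rank by the assumption), yields
\[
\partial^+Q_t(\tau)\ \ge\ \alpha\bigl(f_t(O)-Q_t(\tau)\bigr)+\E\bigl[\langle \x_{t}(\tau)-\1_O,\ \ell_{t,A_t(\tau),\tau,\x_t(\tau)}\rangle\bigr].
\]
Here $\x_t(\tau)$ is the oracle output associated with the event governing time $\tau$.

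\textbf{Step 2: integrate the linear ODE.} Multiply by $e^{\alpha\tau}$ and integrate over $[\varepsilon,1]$:
\[
e^{\alpha}Q_t(1)-e^{\alpha\varepsilon}Q_t(\varepsilon)\ \ge\ (e^{\alpha}-e^{\alpha\varepsilon})f_t(O)-\int_\varepsilon^1 e^{\alpha\tau}\,\E\langle \1_O-\x_t(\tau),\ell\rangle\,d\tau .
\]
Drop $Q_t(\varepsilon)\ge0$ and use $F_t(\1_{S_t})=f_t(S_t)$. This gives $\E f_t(S_t)\ge (1-e^{-\alpha(1-\varepsilon)})f_t(O)-\int(\cdots)$, and $1-e^{-\alpha(1-\varepsilon)}\ge 1-e^{-\alpha}-\varepsilon$.

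\textbf{Step 3: discretise the residual.} The residual integral must be expressed as a sum over events, so that the oracle regret applies. I expect this to be the main obstacle. As in the offline analysis, the process is piecewise constant between events, and the generator identity converts the integral against time into an expectation over events with rate $\lambda(\tau)=k/\tau$. Concretely, the integral of the residual equals $\E\sum_{r}\frac{\tau_r}{k}e^{\alpha\tau_r}\langle\1_O-\x_{t,r},\ell_{t,S_{t,r-1},\tau_r,\x_{t,r}}\rangle$, by the compensator (Campbell/Mecke) formula for the Poisson process. The weights $\frac{\tau_r}{k}e^{\alpha\tau_r}\le e^{\alpha}/k$ are deterministic given the event times, which are sampled once and shared across rounds. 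Because of this, a fixed comparator $\1_O\in P(\cM)$ can be used for each oracle $\mathcal{E}(r)$.

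\textbf{Step 4: sum over rounds.} Summing over $t$ and exchanging the sums gives
\[
\Reg_{1-e^{-\alpha}-\varepsilon}(T)\le \E\sum_r \tfrac{e^{\alpha}}{k}\,\E\sum_t\langle\1_O-\x_{t,r},\ell_{t,\cdot}\rangle .
\]
For this we need the unbiasedness of $\tilde\ell$. This holds because $\tilde S\sim\tau_r\1_{S}$ and $\tilde i_j\sim Q(\cdot,j)/x_j$, so $f_t(j\mid\tilde S\setminus\{\tilde i_j,j\})$ has expectation $\ell(j)$, using that $\tilde S\setminus\{i\}\sim\tau\1_{S-i}$ restricted off $j$. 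Each inner sum is then at most $O(k\sqrt{\log(n/k)T})$ by Assumption~\ref{ass:ope-oracles1}. Finally, $\E N=\int_\varepsilon^1 k/\tau\,d\tau=k\log(1/\varepsilon)$, and the factor $1/k$ cancels this. The total is therefore $O(e^{\alpha}\log(1/\varepsilon)\,k\sqrt{\log(n/k)T})=O(k\sqrt{\log(n/k)T})$ for fixed $\varepsilon$.

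The delicate point is Step 3. The oracle indexed by $r$ sees the state $S_{t,r-1}$, which depends on earlier events and on the algorithm's own past randomness. This is handled because Assumption~\ref{ass:ope-oracles1} is stated in expectation with the adaptively generated losses, and the comparator $\1_O$ does not depend on $t$.
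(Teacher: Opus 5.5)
Your proof is correct in substance. It shares the paper's skeleton: Lemma~\ref{lem:conditioned-generator1} plus Lemma~\ref{lem:online-alpha-drift} give a linear drift in $\tau$, and the oracle regret absorbs the residual. Where you differ is in how the residual is handled.

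The paper sums over rounds and applies Assumption~\ref{ass:ope-oracles1} pointwise at every $\tau$. This replaces $\sum_t\langle\1_O-\x_t(\tau),\ell\rangle$ by a constant $\delta=M_0k\sqrt{\log(n/k)T}$. It then feeds $\eta\equiv\alpha$ and this $\delta$ into Lemma~\ref{lem:potential-comparison}, where $\int_\varepsilon^1e^{-\alpha(1-s)}ds\le1$ keeps the constant free of $\varepsilon$.

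You instead integrate first with the factor $e^{\alpha\tau}$. You then use the Mecke/compensator formula to rewrite the time integral of the residual as a weighted sum over events, $\E\sum_r\frac{\tau_r}{k}e^{-\alpha(1-\tau_r)}\langle\1_O-\x_{t,r},\ell_{t,S_{t,r-1},\tau_r,\x_{t,r}}\rangle$. Note the factor $e^{-\alpha}$, which you dropped after dividing by $e^{\alpha}$.

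Your route buys an exact match with the loss that $\mathcal{E}(r)$ is actually fed: state $S_{t,r-1}$ at time $\tau_r$. The paper's pointwise step instead applies the regret bound to $\ell$ evaluated at an arbitrary $\tau$ and state $S_{t,r^{\max}(\tau)}$, which no oracle receives. The price is bookkeeping over the random event set. Your weight bound $e^{\alpha}/k$ costs a factor $\E N=k\log(1/\varepsilon)$. That is harmless for fixed $\varepsilon$, but keeping the weight $\tau_r/k$ and using $\E\sum_r\tau_r=\int_\varepsilon^1\tau\cdot\frac{k}{\tau}\,d\tau=k(1-\varepsilon)$ recovers an $\varepsilon$-free constant.

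Two points need more care.

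First, the event times are shared across rounds, so $\x_{t,r}$ depends on $\tau_r$ through earlier rounds' feedback. ``Conditionally on the past'' therefore cannot mean conditioning on earlier rounds, since that freezes the clock. Use instead the filtration in $\tau$ generated by all rounds up to time $\tau$. Everything attached to event $r$ in every round is a function of $\tau_1,\dots,\tau_r$ and fresh randomness, so the clock stays Poisson for this filtration. The residual at a non-event time $\tau$ is then the Palm quantity for a hypothetical event at $\tau$, with state $A_t(\tau^-)$ and oracle index $N(\tau^-)+1$. Read this way, your Step~3 identity is exactly Mecke's formula.

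Second, in Step~4 you may not replace the random weight by $e^{\alpha}/k$ inside the expectation, because $\sum_t\langle\1_O-\x_{t,r},\ell\rangle$ can be negative. Condition on the event times and bound the conditional regret of $\mathcal{E}(r)$ first; the entropic oracle's guarantee holds for every realization of $\tau_{1:N}$. Only then multiply by the nonnegative weight.
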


\section{Conclusion}
This paper introduced maximum-gain poisson exchange~(\MGPE) algorithm for maximizing
$\gamma$-weakly submodular functions subject to a matroid
constraint. By repeatedly performing maximum-gain local exchanges through a
non-homogeneous Poisson clock, \MGPE\ achieves an approximation ratio
arbitrarily close to $\rho_\gamma$, strictly improving upon the previous
$(1+1/\gamma)^{-2}$ guarantee and asymptotically recovering the optimal
$1-1/e$ ratio as $\gamma\to1$. For $(\gamma,\beta)$-weakly submodular
objectives, we further establish an improved approximation ratio
$R_{\gamma,\beta}$, which is strictly larger than $\rho_\gamma$ when $\beta<2-\gamma$. Furthermore, we also find that \MGPE\ will recover the tight approximation
ratios $1-e^{-\gamma}$ under a cardinality constraint and
$1-e^{-\alpha}$ for $\alpha$-weakly DR-submodular objectives. Finally, by
combining balanced exchange matrices with two specially designed online
oracles, we extend \MGPE\ to the online settings while preserving the
corresponding approximation factors of its offline counterpart.


\bibliography{references}
\fi
\clearpage
\appendix
\ifdefined\standaloneappendix
  \setcounter{theorem}{12}
  \setcounter{lemma}{10}
  \setcounter{assumption}{3}
  \setcounter{proposition}{2}
  \setcounter{definition}{4}
  \setcounter{remark}{3}
  \setcounter{equation}{32}
  \setcounter{figure}{1}
  \setcounter{table}{2}
  \setcounter{algocf}{4}
\fi
\section*{Detailed Appendix Contents}
\phantomsection
\label{app:guide}
\addcontentsline{toc}{section}{Detailed Appendix Contents}
\ifdefined\standaloneappendix
  \newcommand{\AppendixMainRefNumber}[1]{\getrefnumber{main-#1}}
\else
  \newcommand{\AppendixMainRefNumber}[1]{\getrefnumber{#1}}
\fi

\makeatletter
\newcommand{\AppendixGuideSection}[2]{%
  \par\addvspace{0.35\baselineskip}%
  \@dottedtocline{1}{0em}{0em}%
    {\hyperref[#1]{\textbf{Appendix~\getrefnumber{#1}: #2}}}%
    {\hyperref[#1]{\textbf{\getpagerefnumber{#1}}}}}
\newcommand{\AppendixGuideSubsection}[2]{%
  \@dottedtocline{2}{1.5em}{1.7em}%
    {\hyperref[#1]{\getrefnumber{#1}\quad #2}}%
    {\hyperref[#1]{\getpagerefnumber{#1}}}}
\makeatother

\begingroup
\small
\setcounter{tocdepth}{2}
\AppendixGuideSection{app:auxiliary-results}{Auxiliary Lemmas}
\AppendixGuideSubsection{app:local-to-global}{From Local Drift to Global Approximation Guarantees}

\AppendixGuideSection{app:conditioned-generator}{Proof of Lemma~\AppendixMainRefNumber{lem:conditioned-generator}}

\AppendixGuideSection{app:offline-weakly-submodular}{Proofs of Section~\AppendixMainRefNumber{sec:offline-weakly-submodular}}
\AppendixGuideSubsection{app:proof-repair}{Proof of Lemma~\AppendixMainRefNumber{lem:repair}}
\AppendixGuideSubsection{app:proof-gamma-drift}{Proof of Lemma~\AppendixMainRefNumber{lem:gamma-drift}}
\AppendixGuideSubsection{app:proof-offline-gamma}{Proof of Theorem~\AppendixMainRefNumber{thm:offline-gamma}}
\AppendixGuideSubsection{app:proof-offline-cardinality}{Proof of Theorem~\AppendixMainRefNumber{thm:offline-cardinality}}

\AppendixGuideSection{appendix:limitation_mgpe}{Limitations of \MGPE\ Algorithm}
\AppendixGuideSubsection{app:strict-barrier}{The Strict $(1-e^{-\gamma})$-Barrier}
\AppendixGuideSubsection{app:factor-revealing}{Factor-Revealing Linear Programming}

\AppendixGuideSection{app:beyond-barrier}{Proofs of Section~\AppendixMainRefNumber{sec:beyond-barrier}}
\AppendixGuideSubsection{app:proof-repair1}{Proof of Lemma~\AppendixMainRefNumber{lem:repair1}}
\AppendixGuideSubsection{app:proof-offline3}{Proof of Theorem~\AppendixMainRefNumber{thm:offline3}}

\AppendixGuideSection{app:fractional-exchange}{Further Details on Fractional Exchange}
\AppendixGuideSubsection{app:proof-balanced}{Proof of Theorem~\AppendixMainRefNumber{thm:balanced}}
\AppendixGuideSubsection{app:proof-fractional-exchange-properties}{Proof of Lemma~\AppendixMainRefNumber{thm:fractional-exchange-properties}}

\AppendixGuideSection{app:online-gamma}{Proofs of Section~\AppendixMainRefNumber{sec:online-gamma}}
\AppendixGuideSubsection{app:proof-conditioned-generator1}{Proof of Lemma~\AppendixMainRefNumber{lem:conditioned-generator1}}
\AppendixGuideSubsection{app:proof-online-gamma-drift}{Proof of Lemma~\AppendixMainRefNumber{add_important}}
\AppendixGuideSubsection{app:proof-ope-gamma-regret}{Proof of Theorem~\AppendixMainRefNumber{thm:ope-gamma-regret}}

\AppendixGuideSection{app:online-two-sided}{Proofs of Section~\AppendixMainRefNumber{sec:online-two-sided}}
\AppendixGuideSubsection{app:proof-online-two-sided-drift}{Proof of Lemma~\AppendixMainRefNumber{add_important1}}
\AppendixGuideSubsection{app:proof-ope-two-sided-regret}{Proof of Theorem~\AppendixMainRefNumber{thm:ope-two-sided-regret}}

\AppendixGuideSection{app:online-cardinality}{Proofs of Section~\AppendixMainRefNumber{sec:online-cardinality}}
\AppendixGuideSubsection{app:proof-online-cardinality-drift}{Proof of Lemma~\AppendixMainRefNumber{lem:32}}
\AppendixGuideSubsection{app:proof-ope-card-regret}{Proof of Theorem~\AppendixMainRefNumber{thm:ope-card-regret}}

\AppendixGuideSection{app:online-alpha}{Proofs of Section~\AppendixMainRefNumber{sec:online-alpha}}
\AppendixGuideSubsection{app:proof-online-alpha-drift}{Proof of Lemma~\AppendixMainRefNumber{lem:online-alpha-drift}}
\AppendixGuideSubsection{app:proof-ope-dr-regret}{Proof of Theorem~\AppendixMainRefNumber{thm:ope-DR-regret}}

\AppendixGuideSection{app:online-linear-oracles}{Online Linear Oracles}
\AppendixGuideSubsection{subsec:matroid-linear-oracle}{Stochastic Online Linear Optimization over a Matroid Polytope}
\AppendixGuideSubsection{subsec:one-dimensional-linear-oracle}{One-Dimensional Online Linear Oracle over $[1/2,1]$}
\endgroup
\clearpage

\section{Auxiliary Lemmas}
\label{app:auxiliary-results}
In this section, we present several auxiliary results used repeatedly in our 
analysis.
\subsection{From Local Drift to Global Approximation Guarantees}\label{app:local-to-global}
For a function \(q:[a,b]\to\mathbb R\), we define
\begin{equation}\label{eq:right-derivative-definition}
q'_+(\tau)
\triangleq
\lim_{h\downarrow0}
\frac{q(\tau+h)-q(\tau)}{h}
\end{equation}
when this limit exists. Next, we can show  
\begin{lemma}\label{lem:potential-comparison}
Let both \(V\in\mathbb R\) and \(\delta\ge0\) be constants, and let the function \(\eta:[a,b]\to\mathbb R_+\) be continuous. If the function \(q:[a,b]\to\mathbb R\) satisfies the following conditions:
\begin{enumerate}[label=(\roman*),leftmargin=18pt]
    \item \(q\) is continuous on \((a,b)\);
    \item \(q\) is right-continuous at \(a\) and left-continuous at \(b\);
    \item \(q'_+(\tau)\) exists for every \(\tau\in[a,b)\);
    \item for any $\tau\in[a,b)$, the following inequality holds, namely,
    \begin{equation}\label{eq:potential-local-drift}
q'_+(\tau)
\ge
\eta(\tau)\bigl(V-q(\tau)\bigr)-\delta
\end{equation}
\end{enumerate}, then  we can have, for every \(\tau\in[a,b]\),
\begin{align}
q(\tau)\ge
e^{-H(\tau)}q(a)
+\bigl(1-e^{-H(\tau)}\bigr)V-\delta\int_a^\tau
\exp\left(-\int_s^t\eta(u)\,du\right)ds,
\end{align} where  $H(\tau)\triangleq\int_a^\tau\eta(u)\,du$.
\end{lemma}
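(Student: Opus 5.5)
The plan is a Grönwall-type comparison argument using only one-sided (right) derivatives. Note first that the displayed bound contains a typo: the inner integral should run from $s$ to $\tau$, i.e.\ it should read $\exp\bigl(-\int_s^\tau\eta(u)\,du\bigr)$. With this reading the bound is exactly the solution $p(\tau)$ of the linear ODE $p'=\eta(V-p)-\delta$ with $p(a)=q(a)$. Since $\eta$ is continuous on $[a,b]$, the function $H$ is $C^1$ with $H'=\eta$. Set
\[
\phi(\tau)\triangleq e^{H(\tau)}\bigl(q(\tau)-V\bigr)+\delta\int_a^\tau e^{H(s)}\,ds .
\]
Multiplying the target inequality by $e^{H(\tau)}>0$ shows it is equivalent to $\phi(\tau)\ge\phi(a)=q(a)-V$. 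So it suffices to show that $\phi$ is nondecreasing on $[a,b]$.

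\textbf{Step 1.} I would compute the right derivative of $\phi$ by the product rule for right derivatives. This is valid because $e^{H}$ is differentiable and $q'_+$ exists by (iii). The result is
\[
\phi'_+(\tau)=e^{H(\tau)}\Bigl[q'_+(\tau)+\eta(\tau)\bigl(q(\tau)-V\bigr)+\delta\Bigr]\ge0
\]
for every $\tau\in[a,b)$, by (iv). Also, $\phi$ is continuous on $(a,b)$, right-continuous at $a$ and left-continuous at $b$, since it inherits these properties from $q$ through (i)--(ii).

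\textbf{Step 2 (main obstacle).} The key step is the monotonicity lemma: a function with the continuity properties above and $\phi'_+\ge0$ on $[a,b)$ is nondecreasing. A one-sided derivative does not give the mean value theorem directly, so I would use the standard $\epsilon$-perturbation trick. Fix $\epsilon>0$ and set $\psi_\epsilon(\tau)=\phi(\tau)+\epsilon(\tau-a)$, so that $(\psi_\epsilon)'_+\ge\epsilon>0$. Suppose $\psi_\epsilon(\tau_1)<\psi_\epsilon(a)$ for some $\tau_1\in(a,b]$, and let $c=\sup\{\tau\in[a,\tau_1]:\psi_\epsilon(\tau)\ge\psi_\epsilon(a)\}$.

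\begin{enumerate}[label=(\alph*),leftmargin=18pt]
\item \emph{Case $c<\tau_1$.} Continuity from the left (valid at interior points, and at $b$ by (ii)) gives $\psi_\epsilon(c)\ge\psi_\epsilon(a)$. Right-continuity at $a$ covers the case $c=a$. But a strictly positive right derivative at $c$ forces $\psi_\epsilon>\psi_\epsilon(c)\ge\psi_\epsilon(a)$ just to the right of $c$, contradicting the definition of the supremum.
\item \emph{Case $c=\tau_1$.} Left continuity at $\tau_1$ gives $\psi_\epsilon(\tau_1)\ge\psi_\epsilon(a)$, again a contradiction.
\end{enumerate}

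The case $c=a$ needs care, since then the points near $a$ where $\psi_\epsilon\ge\psi_\epsilon(a)$ may be only $a$ itself. The right derivative at $a$, $(\psi_\epsilon)'_+(a)\ge\epsilon>0$, handles it directly. Hence $\psi_\epsilon(\tau)\ge\psi_\epsilon(a)$ for every $\tau$. Letting $\epsilon\downarrow0$ gives $\phi(\tau)\ge\phi(a)$.

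\textbf{Step 3.} Finally, I would unwind the definition of $\phi$. Dividing $\phi(\tau)\ge q(a)-V$ by $e^{H(\tau)}$ and using $\int_a^\tau e^{H(s)-H(\tau)}\,ds=\int_a^\tau\exp\bigl(-\int_s^\tau\eta(u)\,du\bigr)\,ds$ yields exactly the claimed bound.
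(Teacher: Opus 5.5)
Your proof is correct, and your reading of the typo (the inner integral should be $\int_s^\tau\eta$) matches the explicit formula the paper itself derives for its comparison trajectory. Your route differs from the paper's mainly in how the problem is decomposed.

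\textbf{The paper's route.} The paper never introduces an integrating factor. It writes down the solution $z$ of $z'=\eta(V-z)-\delta$, $z(a)=q(a)$, and proves $q\ge z$ directly by a first-contact argument at $s=\sup\{\tau\le\bar\tau: q(\tau)\ge z(\tau)\}$. There $q(s)=z(s)$, which lets it transfer the drift bound, $q'_+(s)>\eta(s)(V-q(s))-\delta=z'(s)$, and contradict $q<z$ just to the right of $s$. The non-strict case is handled by perturbing $q$ itself to $q+\xi(\tau-a)$.

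\textbf{Your route.} You absorb the state-dependent term into $\phi=e^{H}(q-V)+\delta\int_a^\tau e^{H}$. What remains is a one-function statement: a nonnegative right derivative, plus the given continuity, implies $\phi(\tau)\ge\phi(a)$. The closed form then drops out by dividing by $e^{H(\tau)}$, rather than being verified separately as an ODE solution.

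\textbf{What each buys.} In your version the perturbation step is trivial, since $(\psi_\epsilon)'_+\ge\epsilon$. You also never use $\eta\ge 0$, so your argument holds for any continuous $\eta$. By contrast, the paper's strict-to-non-strict step needs $1+\eta(\tau)(\tau-a)>0$, which it obtains from nonnegativity of $\eta$. The paper's route, in turn, does not depend on the drift being linear. Given a solution of the comparison ODE, the same first-contact argument extends with little change to nonlinear drifts $g(\tau,q)$ that are nonincreasing in $q$, where no integrating factor is available.

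A minor point: in your case (a) with $c=a$, no continuity is needed at all, since $\psi_\epsilon(a)\ge\psi_\epsilon(a)$ holds trivially.
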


\begin{proof}
We first construct a function \(z:[a,b]\to\mathbb R\) and suppose that it is
be the solution of 
\begin{equation}\label{eq:comparison-trajectory-ode}
z'(\tau)
=
\eta(\tau)\bigl(V-z(\tau)\bigr)-\delta,
\qquad
z(a)=q(a).
\end{equation}Consequently, we can show that 
\begin{align}
z(\tau)=e^{-H(\tau)}q(a)
+\bigl(1-e^{-H(\tau)}\bigr)V-\delta\int_a^\tau
\exp\left(-\int_s^\tau\eta(u)\,du\right)ds.
\label{eq:comparison-trajectory-explicit}
\end{align}
Next, we prove that \(q(\tau)\ge z(\tau)\) for every \(\tau\in[a,b]\).

We first consider the case in which
Eq.\eqref{eq:potential-local-drift} holds strictly, namely,
\begin{equation}\label{eq:potential-strict-local-drift}
q'_+(\tau)
>
\eta(\tau)\bigl(V-q(\tau)\bigr)-\delta
\qquad\text{for every }\tau\in[a,b).
\end{equation}
Fix an arbitrary \(\bar \tau\in(a,b]\) and define
\[
\mathcal X_{\bar \tau}
\triangleq
\{\tau\in[a,\bar \tau]:q(\tau)\ge z(\tau)\}.
\]
Since \(q(a)=z(a)\), the set \(\mathcal X_{\bar t}\) is nonempty. Let $s\triangleq\sup\mathcal X_{\bar \tau}$. If \(s=\bar \tau\), there exists a sequence
\(\{\tau_m\}_{m\ge1}\subseteq\mathcal X_{\bar \tau}\) converging to
\(\bar \tau\) from the left. The continuity assumptions on \(q\), together
with the continuity of \(z\), then imply $q(\bar \tau)
=
\lim_{m\to\infty}q(\tau_m)
\ge
\lim_{m\to\infty}z(\tau_m)
=
z(\bar \tau)$ such that 
\[
q(\tau)\ge z(\tau),\quad\text{for every}\,\tau\in[a,\bar \tau].
\]

As for the scenario that \(s<\bar \tau\), we also infer that \(q(s)= z(s)\). Like the previous analysis, by continuity, we can show \(q(s)\ge z(s)\). If \(q(s)>z(s)\), and right continuity would give
\(q(t)>z(t)\) for some \(t>s\), contradicting the definition of \(s\).
Thus, $q(s)=z(s)$, when \(s<\bar \tau\). Furthermore, because \(s=\sup\mathcal X_{\bar t}\), we have $q(s+h)<z(s+h)$ for every $h\in(0,\bar t-s]$. Then, we can show  that
\begin{align}
q'_+(s)
&=
\lim_{h\downarrow0}
\frac{q(s+h)-q(s)}{h}
\nonumber\\
&\le
\lim_{h\downarrow0}
\frac{z(s+h)-z(s)}{h}
=
z'(s).
\label{eq:first-contact-comparison}
\end{align}
On the other hand, by
Eq.\eqref{eq:potential-strict-local-drift},
\(q(s)=z(s)\), and Eq.\eqref{eq:comparison-trajectory-ode},
\[
q'_+(s)
>
\eta(s)\bigl(V-q(s)\bigr)-\delta
=
\eta(s)\bigl(V-z(s)\bigr)-\delta
=
z'(s),
\]
which contradicts Eq.\eqref{eq:first-contact-comparison}. Hence the case
\(s<\bar t\) is impossible, and therefore, for any fixed \(\bar \tau\in(a,b]\), we conclude
that
\[
q(\tau)\ge z(\tau),\quad\text{for every}\, \tau\in[a,\bar \tau].
\]

We now remove the strictness assumption of Eq.\eqref{eq:potential-strict-local-drift}. For every \(\xi>0\), define
\[
q_\xi(\tau)\triangleq q(\tau)+\xi(\tau-a).
\]
The function \(q_\xi\) satisfies the same continuity assumptions as \(q\),
and \(q_\xi(a)=q(a)=z(a)\). Moreover, for every \(\tau\in[a,b)\),
\begin{align}
(q_\xi)'_+(\tau)
&=q'_+(\tau)+\xi\ge
\eta(\tau)\bigl(V-q(\tau)\bigr)-\delta+\xi\\
&=
\eta(\tau)\bigl(V-q_\xi(\tau)\bigr)-\delta
+\xi\bigl(1+\eta(t)(\tau-a)\bigr)\\&>
\eta(t)\bigl(V-q_\xi(\tau)\bigr)-\delta,
\end{align}
where the final inequality follows from \(\eta(t)\ge0\). 

Therefore, we can show  $q_\xi(\tau)\ge z(\tau)$ for every  $\tau\in[a,b]$. Letting \(\xi\downarrow0\) yields \(q(\tau)\ge z(\tau)\). Combining this
comparison with Eq.\eqref{eq:comparison-trajectory-explicit} proves Lemma~\ref{lem:potential-comparison}.
\end{proof}
\begin{lemma}\label{lem:finite-start-correction}
Let \(0<a\le\theta\le1\) and \(\varepsilon\in[0,1]\). Define
\begin{equation}\label{eq:finite-start-function}
\Psi_{\theta,a}(s)
\triangleq\left(\frac{a+2(1-a)s}{2-a}
\right)^{\frac{\theta a}{2(1-a)}}\quad\forall s\in[0,1].
\end{equation}Then
\begin{equation}\label{eq:finite-start-general-bound}
\Psi_{\theta,a}(\varepsilon)
\le\Psi_{\theta,a}(0)+\varepsilon.
\end{equation}
\end{lemma}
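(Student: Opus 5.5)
The plan is a direct mean-value argument: we show that $\Psi_{\theta,a}$ is $1$-Lipschitz on $[0,1]$, so that
\[
\Psi_{\theta,a}(\varepsilon)-\Psi_{\theta,a}(0)=\int_0^\varepsilon \Psi_{\theta,a}'(s)\,ds\le\varepsilon .
\]

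First treat $0<a<1$. Write $b(s)\triangleq \frac{a+2(1-a)s}{2-a}$ and $p\triangleq\frac{\theta a}{2(1-a)}>0$. Since $b$ is affine and increasing, it ranges over $\bigl[\frac{a}{2-a},\,1\bigr]$ on $[0,1]$, with $b(1)=\frac{2-a}{2-a}=1$. Because $a\le 1$ gives $a\le 2-a$, the lower end satisfies $\frac{a}{2-a}\le 1$. In particular $b(s)>0$ on $[0,1]$, so $\Psi_{\theta,a}=b^p$ is smooth there.

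Differentiating gives
\[
\Psi_{\theta,a}'(s)=p\cdot\frac{2(1-a)}{2-a}\,b(s)^{p-1}=\frac{\theta a}{2-a}\,b(s)^{p-1}\ \ge 0 .
\]
We bound $b^{p-1}$ in two cases.
\begin{itemize}
\item If $p\ge1$, then $b(s)^{p-1}\le 1$, so $\Psi_{\theta,a}'(s)\le \frac{\theta a}{2-a}\le\theta\le1$.
\item If $p<1$, then $b^{p-1}$ is decreasing in $b$, so it is maximized at $b=\frac{a}{2-a}$. This gives $\Psi_{\theta,a}'(s)\le \theta\bigl(\frac{a}{2-a}\bigr)^{p}\le\theta\le 1$, using $\frac{a}{2-a}\le1$ and $p>0$.
\end{itemize}
In both cases $0\le\Psi_{\theta,a}'\le 1$ on $[0,1]$. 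Integrating over $[0,\varepsilon]$ then yields \eqref{eq:finite-start-general-bound}.

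The only delicate point is the boundary case $a=1$ (which forces $\theta=1$), where the exponent $\frac{\theta a}{2(1-a)}$ is undefined. There I would interpret $\Psi_{\theta,1}$ as the limit $a\uparrow1$, consistent with the convention $\rho_1=1-1/e$. Writing $b(s)=1-\frac{2(1-a)(1-s)}{2-a}$ and taking logarithms gives $\Psi_{\theta,1}(s)=e^{-\theta(1-s)}$. Its derivative is $\theta e^{-\theta(1-s)}\le\theta\le1$, so the same integration argument applies.

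Alternatively, the $a=1$ case follows by passing to the limit $a\uparrow1$ in the inequality already proved for $a<1$, since both sides converge pointwise. I expect no real obstacle beyond this bookkeeping. The key observation is that the prefactor $\frac{\theta a}{2-a}$ exactly compensates the worst value of $b^{p-1}$ when $p<1$.
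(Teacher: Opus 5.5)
Your proof is correct and follows the paper's argument essentially verbatim: bound $\Psi_{\theta,a}'(s)=\frac{\theta a}{2-a}\,b(s)^{p-1}$ by $1$ via the cases $p\ge1$ and $p<1$ (using $\frac{a}{2-a}\le b(s)\le 1$), then integrate over $[0,\varepsilon]$. Your treatment of the boundary case $a=1$ through the limit $\Psi_{\theta,1}(s)=e^{-\theta(1-s)}$ is a sensible addition, since the paper's statement allows $a=1$ but its proof does not address the undefined exponent there.
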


\begin{proof}For convenience, we define
\[
p_{\theta,a}
\triangleq
\frac{\theta a}{2(1-a)}
\qquad\text{and}\qquad
x_a(s)
\triangleq
\frac{a+2(1-a)s}{2-a}.
\]
Thus, we can rewrite $\Psi_{\theta,a}(s)=x_a(s)^{p_{\theta,a}}$. Differentiating \(\Psi_{\theta,a}\) gives
\begin{align}
\Psi_{\theta,a}'(s)=
p_{\theta,a}\cdot x_a(s)^{p_{\theta,a}-1}
\cdot \frac{2(1-a)}{2-a}=
\frac{\theta a}{2-a}
x_a(s)^{p_{\theta,a}-1}.
\label{eq:finite-start-derivative}
\end{align}Moreover, for every \(s\in[0,1]\), we know that  $\frac{a}{2-a}
\le
x_a(s)
\le1$.

Next, we prove that $\Psi_{\theta,a}'(s)\le1$ for any $s\in[0,1]$ by distinguishing two cases.

Firstly, we suppose that \(p_{\theta,a}\ge1\). Since \(x_a(s)\le1\),
we have \(x_a(s)^{p_{\theta,a}-1}\le1\). Hence,
\[
\Psi_{\theta,a}'(s)
\le
\frac{\theta a}{2-a}
\le1,
\]
where the final inequality follows from
\(0<\theta\le1\) and \(0<a\le1\).

As for \(p_{\theta,a}<1\), due to \(x_a(s)\ge a/(2-a)\), we ca  obtain
\begin{align*}
\Psi_{\theta,a}'(s)
&=\frac{\theta a}{2-a}
x_a(s)^{p_{\theta,a}-1}\le
\frac{\theta a}{2-a}
\left(\frac{a}{2-a}\right)^{p_{\theta,a}-1}\\
&=
\theta
\left(\frac{a}{2-a}\right)^{p_{\theta,a}}
\le1.
\end{align*} Therefore, we have $\Psi_{\theta,a}'(s)\le1$ for any $s\in[0,1]$. Integrating this derivative bound over \([0,\varepsilon]\) gives
\[
\Psi_{\theta,a}(\varepsilon)-\Psi_{\theta,a}(0)
=
\int_0^\varepsilon
\Psi_{\theta,a}'(s)\,ds
\le
\varepsilon.
\]
\end{proof}

\section{Proof of Lemma~\ref{lem:conditioned-generator}}\label{app:conditioned-generator}

In this section, we focus on verifying the core Lemma~\ref{lem:conditioned-generator} about the potential value function $Q(r)$ and its conditioned value process $V_{S,\tau}(r)$. Before going into the details, we first present a related result about the conditioned Transition probabilities $\Pr[A(r)=T\mid A(t)=S]$, that is to say, 
\begin{lemma}[Derivatives of conditioned transition probabilities]\label{lem:transition-derivative}
For any $\tau\in[\varepsilon,1)$ and any $S,T\in\cB$, we have 
\begin{equation}\label{eq:transition-derivative}
 \left.\frac{\partial_+\Pr[A(r)=T\mid A(\tau)=S]}{\partial r}\right|_{r=\tau}
 =
 \begin{cases}
 \lambda(\tau)p_{i,j}(\tau,S), & T=S-i+j,\ j\notin S,\\
 -\lambda(\tau)\left(1-\sum_{i\in S}p_{i,i}(\tau,S)\right), & T=S,\\
 0, & \text{otherwise},
 \end{cases}
\end{equation}where \(p_{i,j}(\tau,S)\) denotes the probability that we delete
\(i\) and insert \(j\), conditional on a Poisson event occurring at time
\(\tau\) while the current base is \(S\).
\end{lemma}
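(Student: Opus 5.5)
The plan is to compute the one-step transition law of the \MGPE\ Markov chain over the short window $(\tau,\tau+\delta]$ and then pass to the limit $\delta\downarrow0$, using only the infinitesimal properties of the non-homogeneous Poisson clock in Eq.~\eqref{eq:poisson-infinitesimal} together with the independent-increments property.

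First, fix $\tau\in[\varepsilon,1)$ and $S,T\in\cB$, and condition on $A(\tau)=S$. The event $N_{(\tau,\tau+\delta]}$ splits into three cases.
\begin{enumerate}
\item[(a)] No event occurs, with probability $1-\lambda(\tau)\delta+o(\delta)$. Then $A(\tau+\delta)=S$.
\item[(b)] Exactly one event occurs at some time $s\in(\tau,\tau+\delta]$, with probability $\lambda(\tau)\delta+o(\delta)$. Then the chain moves from $S$ to $S-i+j$ with probability $p_{i,j}(s,S)$, which is the law induced by sampling $i$ uniformly from $S$ and choosing $j_i^*$.
\item[(c)] Two or more events occur, with probability $o(\delta)$.
\end{enumerate}
By independent increments, the past (which determines $A(\tau)=S$) does not affect the clock on $(\tau,\tau+\delta]$, so these probabilities hold conditionally on $A(\tau)=S$.

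Next, expand the transition probability. Writing $p_{i,i}$ for moves whose result is $S$ itself (for instance $j_i^*=i$, which is allowed since $i\in\mathcal C_S(i)$), we get
\[
\Pr[A(\tau+\delta)=T\mid A(\tau)=S]=\1\{T=S\}(1-\lambda(\tau)\delta)+\lambda(\tau)\delta\cdot\Pr[\text{jump }S\to T]+o(\delta).
\]
For $T=S-i+j$ with $j\notin S$, distinct pairs $(i,j)$ give distinct bases, so the jump probability is exactly $p_{i,j}$. For $T=S$, the jump probability is $\sum_{i}p_{i,i}$. Every other $T$ cannot be reached with a single jump, so its probability is $o(\delta)$. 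Subtracting the $\delta=0$ value, dividing by $\delta$ and letting $\delta\downarrow0$ yields the three cases of Eq.~\eqref{eq:transition-derivative}, provided we can replace $p_{i,j}(s,S)$ by $p_{i,j}(\tau,S)$.

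That replacement is the main obstacle. The maximizer $j_i^*(S,s)$ depends on the event time $s$ through $\nabla_jF(s\1_{S-i})$, and an argmax need not be continuous in $s$. I will handle it as follows.
\begin{enumerate}
\item Each $\nabla_jF(s\1_{S-i})$ is a polynomial in $s$, so any two candidates $j,j'$ either coincide identically or cross at only finitely many points.
\item Fix a deterministic tie-breaking rule. Then on some right-neighbourhood $(\tau,\tau+\delta_0)$ the argmax is constant, and it agrees with the right limit at $\tau$.
\item If necessary, define $p_{i,j}(\tau,S)$ as this right-limit selection, so that $p_{i,j}(s,S)\to p_{i,j}(\tau,S)$ as $s\downarrow\tau$.
\end{enumerate}
Right-continuity of $\lambda(s)=k/s$ also holds. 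Together these give the claim, with error terms uniform over the finitely many $S,T$.
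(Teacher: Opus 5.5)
Your proposal is correct and uses the standard decomposition into zero, one, or two-or-more clock events on $(\tau,\tau+\delta]$, which is the argument the paper invokes by deferring to Lemma~3.5 of \citet{rozenman2026poisson}. Your handling of the argmax is a refinement the paper glosses over. Each $\nabla_jF(s\1_{S-i})$ is a polynomial in $s$, so the selection is constant just to the right of $\tau$ and, by continuity, is itself a maximizer at $\tau$. Taking it as $j_i^*(S,\tau)$ is therefore a legitimate tie-break: it changes neither the law of the process, since the finitely many crossing times carry an event with probability zero, nor the maximal value used in Lemma~\ref{lem:conditioned-generator}.
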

\begin{remark} Note that the proof of Lemma~\ref{lem:transition-derivative} follows the same argument
as that of Lemma~3.5 in \citet{rozenman2026poisson} and is therefore omitted.
\end{remark}
With the the help of Lemma~\ref{lem:transition-derivative}, we next can prove that 
\begin{lemma}[Conditioned Generator]
\label{lem:core_lemma+}
For every $\tau\in[\varepsilon,1)$ and $S\in\cB$,
\begin{align}
 \left.\frac{\partial^+V_{S,\tau}(r)}{\partial r}\right|_{r=\tau}
 &=\sum_{i\in S}\nabla_iF(\tau\1_S)
  +\lambda(t)\E_{(i,j)\sim p(\tau,S)}
    \left[F(\tau\1_{S-i+j})-F(\tau\1_S)\right]\label{eq:generator-general}\\
 &=\sum_{i\in S}\nabla_iF(\tau\1_S)
  +\tau\lambda(\tau)\E_{(i,j)\sim p(\tau,S)}
  \left[\nabla_jF(\tau\1_{S-i})-\nabla_iF(\tau\1_{S-i})\right].\notag
\end{align}
\end{lemma}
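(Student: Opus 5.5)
We prove Lemma~\ref{lem:core_lemma+} by differentiating $V_{S,\tau}(r)$ at $r=\tau$. The change in $V_{S,\tau}(r)$ has two sources: the deterministic drift of the point $r\1_{A(r)}$ while the base is frozen, and the jumps of the base. Write
\[
V_{S,\tau}(r)=\sum_{T\in\cB}\Pr[A(r)=T\mid A(\tau)=S]\,F(r\1_T),
\]
and split the difference quotient as
\[
\frac{V_{S,\tau}(\tau+h)-F(\tau\1_S)}{h}
=\sum_{T}\Pr[A(\tau+h)=T\mid A(\tau)=S]\,\frac{F((\tau+h)\1_T)-F(\tau\1_T)}{h}
+\sum_T\frac{\Pr[A(\tau+h)=T\mid A(\tau)=S]-\1\{T=S\}}{h}\,F(\tau\1_T).
\]

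\textbf{First sum.} $F$ is a polynomial, so the quotient converges uniformly to $\langle \1_T,\nabla F(\tau\1_T)\rangle$. The transition probabilities converge to $\1\{T=S\}$, and $\cB$ is finite. Hence the first sum tends to $\sum_{i\in S}\nabla_iF(\tau\1_S)$.

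\textbf{Second sum.} By Lemma~\ref{lem:transition-derivative}, this sum tends to
\[
\lambda(\tau)\sum_{i,\,j\notin S}p_{i,j}(\tau,S)\bigl(F(\tau\1_{S-i+j})-F(\tau\1_S)\bigr).
\]
Self-swaps with $j=i$ contribute zero, since then $S-i+j=S$. So we can write this as $\lambda(\tau)\E_{(i,j)\sim p(\tau,S)}[F(\tau\1_{S-i+j})-F(\tau\1_S)]$, which proves the first line of Eq.~\eqref{eq:generator-general}. The swaps with $j\in S\setminus\{i\}$ carry zero probability, because $S-i+j$ would then not be a base.

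\textbf{Second line.} This step uses multilinearity. $F$ is affine in each coordinate, and $\tau\1_{S-i+j}$ and $\tau\1_S$ both agree with $\tau\1_{S-i}$ except in one coordinate, which equals $\tau$. Therefore
\[
F(\tau\1_{S-i+j})=F(\tau\1_{S-i})+\tau\nabla_jF(\tau\1_{S-i}),
\qquad
F(\tau\1_S)=F(\tau\1_{S-i})+\tau\nabla_iF(\tau\1_{S-i}).
\]
Here we use that $j\notin S-i$ and that the partial derivative in coordinate $j$ does not depend on $x_j$. Subtracting gives $\tau[\nabla_jF(\tau\1_{S-i})-\nabla_iF(\tau\1_{S-i})]$. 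This difference also vanishes for $j=i$, so the expectation over all of $p(\tau,S)$ is consistent.

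\textbf{Main obstacle.} The delicate step is justifying the interchange of limits in the second sum. We need the transition probabilities to have right derivatives at $\tau$ uniformly over the finite set of $T$, which Lemma~\ref{lem:transition-derivative} supplies. Multiple jumps occur with probability $o(h)$, by Eq.~\eqref{eq:poisson-infinitesimal}, so their contribution vanishes; this is already built into that lemma. Finiteness of $\cB$ then makes the interchange immediate.
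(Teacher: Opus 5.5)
Your proposal is correct and follows essentially the same route as the paper: you write $V_{S,\tau}(r)=\sum_{T\in\cB}F(r\1_T)\Pr[A(r)=T\mid A(\tau)=S]$, split the right derivative into the drift term and the jump term, and evaluate the jump term via Lemma~\ref{lem:transition-derivative}. Your multilinearity computation for the second line, which the paper's proof leaves implicit, is also correct.
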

\begin{proof}
According to the definition of $V_{S,\tau}(r)$, we can show that
\begin{equation*}
    V_{S,\tau}(r)\triangleq\sum_{T\in\cB}F(r\1_{T})\cdot \Pr[A(r)=T\mid A(\tau)=S].
\end{equation*}
Then, we have that 
\begin{equation*}
    \begin{aligned}
    &\left.\frac{\partial^+V_{S,\tau}(r)}{\partial r}\right|_{r=\tau}\\&=\sum_{T\in\cB}\left.\frac{\partial^+F(r\1_{T})}{\partial r}\right|_{r=\tau}\cdot \Pr[A(\tau)=T\mid A(\tau)=S]+ \sum_{T\in\cB}\left.F(\tau\1_{T})\frac{\partial^+\Pr[A(r)=T\mid A(\tau)=S]}{\partial r}\right|_{r=\tau}\\
    &=\sum_{i\in S}\nabla_{j}F(\tau\1_{S})-\sum_{i\in S}\lambda(\tau)\left(1-\sum_{i\in S}p_{i,i}(\tau,S)\right)\cdot F(\tau \1_{S})+\sum_{i\in S}\sum_{j:S-i+j\in\cB,j\notin S}\lambda(\tau)p_{i,j}(\tau,S)\cdot F(\tau \1_{S-i+j}),\\
    &=\sum_{i\in S}\nabla_{j}F(\tau\1_{S})-\sum_{i\in S}\lambda(\tau)\left(1-\sum_{i\in S}p_{i,i}(\tau,S)\right)\cdot F(\tau \1_{S})+\sum_{i\in S}\sum_{j\in U\setminus S}\lambda(\tau)p_{i,j}(\tau,S)\cdot F(\tau \1_{S-i+j})\\
    &=\sum_{i\in S}\nabla_{j}F(\tau\1_{S})+\lambda(\tau)\sum_{i\in S}\sum_{j\in U}p_{i,j}(\tau,S)\cdot \big(F(\tau \1_{S-i+j})-F(\tau\1_{S})\big)\\&=\sum_{i\in S}\nabla_iF(\tau\1_S)
  +\lambda(t)\E_{(i,j)\sim p(\tau,S)}
    \left[F(\tau\1_{S-i+j})-F(\tau\1_S)\right],
    \end{aligned}
\end{equation*} where the second equality follows from  Lemma~\ref{lem:transition-derivative}; the third equality comes from $p_{i,j}(\tau,S)\equiv0$ if $S-i+j\notin\cB$ and the final  equality comes from $\sum_{i\in S}\sum_{j\in U}p_{i,j}=1$. 
\end{proof}

Note that our proposed \MGPE\ , we only select the maximum-gain exchange such that we have 
\begin{equation*}
 p_{i,j}(\tau,S)\triangleq 
 \begin{cases}
 1/k, & i\in S, j\in\max_{z\in C_S(i)}\nabla_{z}F(\tau\1_{S-i}),\\
 0, & \text{otherwise}.
 \end{cases}   
\end{equation*}

As a result, by applying this lemma~\ref{lem:core_lemma+} to
our proposed \MGPE, we immediately get 
\begin{equation*}
    \begin{aligned}
     \left.\frac{\partial^+V_{S,\tau}(r)}{\partial r}\right|_{r=\tau}&=\sum_{i\in S}\nabla_iF(\tau\1_S)
  +\tau\lambda(\tau)\E_{(i,j)\sim p^{\text{\MGPE}}(\tau,S)}
  \left[\nabla_jF(\tau\1_{S-i})-\nabla_iF(\tau\1_{S-i})\right]\\
&=\sum_{i\in S}\nabla_iF(\tau\1_S)
  +\frac{\tau\lambda(\tau)}{k}\sum_{i\in S}
  \left(\nabla_{j^{*}(S,\tau)}F(\tau\1_{S-i})-\nabla_iF(\tau\1_{S-i})\right)=\sum_{i\in S}\nabla_{j^{*}(S,\tau)}F(\tau\1_{S-i}),
    \end{aligned}
\end{equation*} where the final equality comes from $\lambda(\tau)=k/\tau$.

\section{Proofs of Section~\ref{sec:offline-weakly-submodular}}\label{app:offline-weakly-submodular}

\subsection{Proof of Lemma~\ref{lem:repair}}\label{app:proof-repair}
\begin{proof}
If $j=i$, we can show that 
\begin{equation*}
    \E _{R\sim \tau\1_{S}}\left[f(j\mid R)\right]=\E _{R\sim \tau\1_{S}}\left[f(i\mid R)\right]=(1-\tau)\E _{R\sim \tau\1_{S-i}}\left[f(i\mid R)\right]=(1-\tau)\nabla_iF(\tau\1_{S-i}).
\end{equation*}where the second equation comes from $f(i\mid R)=0$ when $i\in R$. Hence, if $j=i$, we have that
\begin{equation*}
   \begin{aligned}
\frac{\gamma}{C_\gamma(\tau)}\E _{R\sim \tau\1_{S}}\left[f(j\mid R)\right]
 -\frac{(1-\gamma)\tau}{C_\gamma(\tau)}\nabla_iF(\tau\1_{S-i})\le\frac{\gamma}{C_\gamma(\tau)}\E _{R\sim \tau\1_{S}}\left[f(j\mid R)\right]=\frac{\gamma(1-\tau)}{C_\gamma(\tau)}\nabla_jF(\tau\1_{S-i}).
   \end{aligned} 
\end{equation*}Due to that $\gamma(1-\tau)\le C_\gamma(\tau)=\gamma+(1-\gamma)\tau$, we thus have Eq.\eqref{eq:repair}.

Now suppose $j\notin S$ and set $R_0\sim \tau\1_{S-i}$. At first, we have
\begin{equation}\label{equ:add1}
    \E _{R\sim \tau\1_{S}}\left[f(j\mid R)\right]=(1-\tau)\E _{R_0}\left[f(j\mid R_0)\right]+\tau\E _{R_0}\left[f(j\mid R_0+i)\right].
\end{equation}Then, applying weak submodularity with
$A=R_0$ and $B=R_0+i+j$ gives
\begin{equation*}\label{eq:repair-two-point}
f(i\mid R_0)+f(j\mid R_0)\ge\gamma\left(f(R_0+i+j)-f(R_0)\right)=\gamma\left(f(j\mid R_0+i)+f(i\mid R_0)\right).
\end{equation*} 
As a result, we get the following inequality:
\begin{equation}\label{equ:add2}
\begin{aligned}
\nabla_{j} F(t\1_{S-i})=\E_{R_0}[f(j\mid R_0)]\ge\gamma\E_{R_0}[f(j\mid R_0+i)]-(1-\gamma)\E_{R_0}[f(i\mid R_0)]
\end{aligned}
\end{equation}
Multiplying the Eq.\eqref{equ:add2} by $t$, we then have that 
\begin{equation}\label{equ:add3}
    \begin{aligned}
        \tau\nabla_{j} F(\tau\1_{S-i})&=\tau\E_{R_0}[f(j\mid R_0)]\ge\gamma(\tau\E_{R_0}[f(j\mid R_0+i)])-(1-\gamma)\tau\E_{R_0}[f(i\mid R_0)]\\
        &=\gamma(E _{R\sim \tau\1_{S}}\left[f(j\mid R)\right]-(1-\tau)\E _{R_0}\left[f(j\mid R_0)\right])-(1-\gamma)\tau\E_{R_0}[f(i\mid R_0)]\\
        &=\gamma E _{R\sim \tau\1_{S}}\left[f(j\mid R)\right]-(1-\tau)\gamma\nabla_{j} F(\tau\1_{S-i})-(1-\gamma)\tau\E_{R_0}[f(i\mid R_0)],
    \end{aligned}
\end{equation} where the second equation comes from  Eq.\eqref{equ:add1}. Therefore, adding $(1-\tau)\gamma\nabla_{j} F(\tau\1_{S-i})$ to both sides of Eq.\eqref{equ:add3} and then dividing by  $C_{\gamma}(\tau)$, we obtain Eq.\eqref{eq:repair}.
\end{proof}

\subsection{Proof of Lemma~\ref{lem:gamma-drift}}\label{app:proof-gamma-drift}
\begin{proof}
Let $O$ be the optimal solution and $h:S\to O$ be a Brualdi map. Since $h(i)\in\mathcal C_S(i)$, Lemma~\ref{lem:repair} gives that
\begin{equation}\label{add:4}
    \begin{aligned}
        &\sum_{i\in S}\nabla_{j_i^{*}(S,\tau)}F(\tau\1_{S-i})\ge\sum_{i\in S}\nabla_{h(i)}F(\tau\1_{S-i})\\&\ge\frac{\gamma}{C_\gamma(\tau)}\E _{R\sim \tau\1_{S}}\left[\sum_{i\in S}f(h(i)\mid R)\right]
 -\frac{(1-\gamma)\tau}{C_\gamma(\tau)}
   \sum_{i\in S}\nabla_iF(\tau\1_{S-i})\\
   &\ge\frac{\gamma^{2}}{C_\gamma(\tau)}\big(f(O)-\E _{R\sim \tau\1_{S}}\left[f(R)\right]\big)
 -\frac{(1-\gamma)\tau}{C_\gamma(\tau)}
   \sum_{i\in S}\nabla_iF(\tau\1_{S-i}),
    \end{aligned}
\end{equation} where the second inequality comes from the $\gamma$-weak submodularity, that is, $\sum_{i\in S}f(h(i)\mid R)\ge\gamma(f(O)-f(R))$. Furthermore, due to $i\in\mathcal C_S(i)$, we have 
\begin{equation}\label{eq:self-absorption}
\sum_{i\in S}\nabla_{j_i^{*}(S,\tau)}F(\tau\1_{S-i})\ge \sum_{i\in S}\nabla_iF(\tau\1_{S-i}).
\end{equation}
Replacing $\sum_{i\in S}\nabla_iF(\tau\1_{S-i})$ in Eq.\eqref{add:4} by $\sum_{i\in S}\nabla_{j_i^{*}(S,\tau)}F(\tau\1_{S-i})$ and moving the resulting term to the left yields
\begin{equation*}
 \sum_{i\in S}\nabla_{j_i^{*}(S,\tau)}F(\tau\1_{S-i})
 \ge\frac{\gamma^2}{\gamma+2(1-\gamma)\tau}\bigl(f(O)-F(\tau\1_S)\bigr).
\end{equation*}
Merging this inequality into the result of Lemma~\ref{lem:conditioned-generator}, we get the Eq.\eqref{eq:gamma-drift1} and Eq.\eqref{eq:gamma-drift2}.
\end{proof}

\subsection{Proof of Theorem~\ref{thm:offline-gamma}}\label{app:proof-offline-gamma}
\begin{proof} According to Eq.\eqref{eq:gamma-drift2} of Lemma~\ref{lem:gamma-drift}, we can show that  \begin{equation*}
\left.\frac{\partial^+Q(r)}{\partial r}\right|_{r=\tau}\ge\frac{\gamma^2}{\gamma+2(1-\gamma)\tau}
 \bigl(f(O)-Q(\tau)\bigr).
 \end{equation*}Then, by substituting $\eta(\tau)=\frac{\gamma^2}{\gamma+2(1-\gamma)\tau}$ and $\delta=0$ into Lemma~\ref{lem:potential-comparison}, we get 
\begin{equation*}
   Q(1)\ge e^{-H(1)}Q(\epsilon)+(1-e^{-H(1)})f(O). 
\end{equation*}Here,
\begin{equation*}
\left.H(1)\triangleq\int_\epsilon^1\eta(u)\,du=\frac{\gamma^{2}}{2(1-\gamma)}\log(\gamma+2(1-\gamma)\tau)\right|_{\tau=\epsilon}^{1}=\frac{\gamma^{2}}{2(1-\gamma)}\log(\frac{2-\gamma}{\gamma+2(1-\gamma)\epsilon}).
\end{equation*}
Then, we have 
\begin{equation*}
    \begin{aligned}
         Q(1)&\ge e^{-H(1)}Q(\epsilon)+(1-e^{-H(1)})f(O)\\
         &\ge(1-e^{-H(1)})f(O)=\left(1-(\frac{\gamma+2(1-\gamma)\epsilon}{2-\gamma})^{\frac{\gamma^{2}}{2(1-\gamma)}}\right)f(O)\ge(1-\rho_{\gamma}-\epsilon)f(O),
    \end{aligned}
\end{equation*} where the second inequality comes from $Q(\epsilon)\ge0$ and the final inequality comes from Lemma~\ref{lem:finite-start-correction} as well as $\rho_{\gamma}\triangleq(\frac{\gamma}{2-\gamma})^{\frac{\gamma^{2}}{2(1-\gamma)}}$.
\end{proof}

\subsection{Proof of Theorem~\ref{thm:offline-cardinality}}\label{app:proof-offline-cardinality}
\begin{proof}
Fix a base $S$, an optimal base $O\in\argmax_{|S|\le k}f(S)$, and $t\in(\varepsilon,1]$, we define 
\begin{equation*}
 X=O\setminus S,\qquad Y=O\cap S,\qquad Z=S\setminus O,
 \qquad \ell=|X|=|Z|,
\end{equation*}
and abbreviate the maximum-gain sum by
\begin{equation*}
 G_S(\tau)=\sum_{i\in S}\max_{j:\,S-i+j\in\cB}
             \nabla_jF(\tau\1_{S-i}),
\end{equation*} where $\cB=\{S\subseteq U: |S|=k\}$.

At first, we suppose that $\ell>0$.  Moreover, from the definition of maximum gain, we can show that
\begin{equation}\label{eq:cardinality-row-average}
\begin{aligned}
   G_S(\tau)&=\sum_{i\in S}\max_{j:\,S-i+j\in\cB}
             \nabla_jF(\tau\1_{S-i})\\&=\sum_{i\in Y}\max_{j:\,S-i+j\in\cB}
             \nabla_jF(\tau\1_{S-i}) +\sum_{i\in Z}\max_{j:\,S-i+j\in\cB}
             \nabla_jF(\tau\1_{S-i})\\&\ge L_{S}(\tau)\triangleq
 \sum_{y\in Y}\nabla_yF(\tau\1_{S-y})
 +\frac1\ell\sum_{z\in Z}\sum_{o\in X}
       \nabla_oF(\tau\1_{S-z}),
\end{aligned}
\end{equation} where the final inequality comes from that $S=S-y+y\in\cB$ when $y\in Y$ and $S-z+x\in\cB$ when $z\in Z$ and $o\in X$. 

Secondly, for every $A\subsetneq S$,  we define
\begin{equation}\label{eq:cardinality-weights}
 w_A=\frac{|Z\setminus A|}{\ell}
      \tau^{|A|}(1-\tau)^{k-1-|A|},
 \qquad w_S=0.
\end{equation}
Then, from the definition of  multilinear extension, we can show that, for any $o\in X$, the following identity holds:
\begin{equation*}
    \frac1\ell\sum_{z\in Z}
       \nabla_oF(\tau\1_{S-z})=\sum_{A\subseteq S} w_A\cdot f(o\mid A)
\end{equation*} Similarly, we also can show that
\begin{equation*}
    \sum_{y\in Y}\nabla_yF(\tau\1_{S-y})=\sum_{A\subseteq S}\sum_{Y\setminus A}\left(\tau^{|A|}(1-\tau)^{k-1-|A|}\right)f(y\mid A).
\end{equation*}
As a result, we have that
\begin{equation}\label{eq:cardinality-1}
    \begin{aligned}
L_S(\tau)&= \sum_{y\in Y}\nabla_yF(\tau\1_{S-y})
 +\frac1\ell\sum_{z\in Z}\sum_{o\in X}
       \nabla_oF(\tau\1_{S-z})\\
&=\sum_{A\subseteq S}\sum_{Y\setminus A}\left(\tau^{|A|}(1-\tau)^{k-1-|A|}\right)f(y\mid A)+\sum_{A\subseteq S} w_A\sum_{o\in X}f(o\mid A)\\
 &\ge\sum_{A\subseteq S}w_A
       \sum_{o\in O\setminus A}f(o\mid A)\ge\gamma\sum_{A\subseteq S}w_A
       \bigl(f(A\cup O)-f(A)\bigr)\\&\ge\gamma\left(f(O)-\sum_{A\subseteq S}w_Af(A)\right),
    \end{aligned}
\end{equation} where the first inequality comes from that $\tau^{|A|}(1-\tau)^{k-1-|A|}\ge w_A$ and the second inequality follows from weak submodularity. Moreover, we also can show that
\begin{equation}\label{eq:cardinality-weighted-value}
 \sum_{A\subseteq S}w_Af(A)
 =\frac1\ell\sum_{z\in Z}F(\tau\1_{S-z})
 \le F(\tau\1_S).
\end{equation}
Substitution this Eq.\eqref{eq:cardinality-weighted-value} into
Eq.\eqref{eq:cardinality-1} yields the following result, when $\ell>0$,
\begin{equation}\label{eq:cardinality-statewise-drift}
 G_S(\tau)\ge\gamma\bigl(f(O)-F(\tau\1_S)\bigr).
\end{equation}

If $\ell=0$, then $S=O$.  For $R\sim t\1_S$, weak submodularity applied
from $R$ to $S$ gives
\begin{equation*}
\begin{aligned}
  &G_S(\tau)=\sum_{i\in S}\max_{j:\,S-i+j\in\cB}
             \nabla_jF(\tau\1_{S-i})\ge \sum_{i\in S}
             \nabla_iF(\tau\1_{S-i})\\&=\sum_{A\subsetneq S}\sum_{S\setminus A}\left(\tau^{|A|}(1-\tau)^{k-1-|A|}\right)f(y\mid A)=\sum_{A\subsetneq S}\left(\tau^{|A|}(1-\tau)^{k-1-|A|}\right)\sum_{S\setminus A}f(y\mid A)\\
             &\ge \sum_{A\subsetneq S}\left(\tau^{|A|}(1-\tau)^{k-1-|A|}\right)(f(S)-f(A))=\frac{\gamma}{1-\tau}\sum_{A\subseteq S}\left(\tau^{|A|}(1-\tau)^{k-|A|}\right)(f(S)-f(A))\\&\ge\gamma(f(S)-F(t\1_{S}))=\gamma(f(O)-F(t\1_{S})).  
\end{aligned}
\end{equation*}
Thus, according to Lemma~\ref{lem:conditioned-generator}, we now can infer that
\begin{equation}\label{eq:cardinality-Q-ode}
 Q'_+(\tau)\ge\gamma\bigl(f(O)-Q(\tau)\bigr).
\end{equation}
Solving Eq.\eqref{eq:cardinality-Q-ode} via Lemma~\ref{lem:potential-comparison} and using
$Q(\varepsilon)\ge0$ proves Eq.\eqref{eq:cardinality-finite-start}.
\end{proof}

\section{Limitations of \MGPE\ Algorithm}\label{appendix:limitation_mgpe}
\subsection{The Strict \texorpdfstring{$(1-e^{-\gamma})$}{(1-exp(-gamma))}-Barrier}\label{app:strict-barrier}

In this subsection, we consider a ground set $U$ consisting of two bad elements \(\{b_1,b_2\}\) and two good elements \(\{g_1,g_2\}\), and simultaneously set the partition matroid $\mathcal{M}$ with blocks
$\{b_1,g_1\}$ and $\{b_2,g_2\}$, each of capacity one.  As for the set objective function $f$, we suppose its value only depends on the numbers of blocks in the bad-only, good-only, and double-occupied states. More specifically, for any subset $T\subseteq U$, let the tuple $(p,q,d)$ count the bad-only, good-only, and
double-occupied blocks of $T$. Then, $f(T)=v_{p,q,d}$. 

Furthermore, we define \begin{equation}\label{eq:barrier-parameters}
 c_\gamma=\frac\gamma2,\qquad
 s_\gamma=\frac{\gamma(2+3\gamma-\gamma^2)}{2(\gamma+2)},\qquad
 r_\gamma=\frac{\gamma(6-\gamma)}{2(\gamma+2)},
\end{equation}
and set the count function $v_{p,q,d}$ as follows: \begin{center}
\small
\begin{tabular}{@{}lc@{}}
\toprule
State or states $(p,q,d)$ & $v_{p,q,d}$\\
\midrule
$(0,0,0)$ & $0$\\
$(1,0,0),(0,1,0)$ & $c_\gamma$\\
$(2,0,0),(1,1,0)$ & $s_\gamma$\\
$(0,0,1),(1,0,1)$ & $r_\gamma$\\
$(0,2,0),(0,1,1),(0,0,2)$ & $1$\\
\bottomrule
\end{tabular}
\end{center}

Then, when $\gamma\in(0,1]$, we can show 
\begin{equation}\label{eq:barrier-monotone-gaps}
 s_\gamma-c_\gamma
 =\frac{\gamma^2(2-\gamma)}{2(\gamma+2)}\ge0,\quad
 r_\gamma-s_\gamma
 =\frac{\gamma(2-\gamma)^2}{2(\gamma+2)}\ge0,\quad
 1-r_\gamma=\frac{(2-\gamma)^2}{2(\gamma+2)}\ge0,
\end{equation}so $f_\gamma$ is normalized and monotone. Next, we verify that $f$ is $\gamma$-weakly submodular.
For every pair of sets $A\subsetneq C\subseteq U$, we define $\sigma_\gamma(A,C)
\coloneqq
\sum_{e\in C\setminus A} f_\gamma(e\mid A)
-\gamma\bigl(f_\gamma(C)-f_\gamma(A)\bigr)$. Thus, it suffices to show that $\sigma_\gamma(A,C)\geq 0$ for every
$A\subsetneq C\subseteq U$. Since the ground set contains only four
elements, we can enumerate all feasible pairs $(A,C)$. Table~\ref{tab:weak-submodularity-slacks} lists all possible values of
$\sigma_\gamma(A,C)$ together with their counts. It is worth noting that, for every $\gamma\in(0,1]$, all the expressions reported in
Table~\ref{tab:weak-submodularity-slacks} are nonnegative, which proves
that $f$ is $\gamma$-weakly submodular.

\begin{table}[t]
\centering
\footnotesize
\begingroup
\newcommand{\slackcell}[2]{%
    \shortstack{
        $\displaystyle #1$\\[-1pt]
        {\scriptsize $(\times\,#2)$}
    }%
}
\setlength{\tabcolsep}{2pt}
\renewcommand{\arraystretch}{1.25}
\caption{Possible values of $\sigma_\gamma(A,C)$. The notation
$(\times m)$ indicates that the corresponding value occurs $m$ times and $D_\gamma\triangleq 2(\gamma+2)$.}
\label{tab:weak-submodularity-slacks}
\begin{tabular*}{\linewidth}{
    @{\extracolsep{\fill}}
    ccccc
    @{}
}
\toprule
\slackcell{\frac{(2-\gamma)(1-\gamma)}{2}}{2}
&
\slackcell{\frac{(2-\gamma)^2(1+\gamma)}{D_\gamma}}{2}
&
\slackcell{\frac{(2-\gamma)^2}{D_\gamma}}{2}
&
\slackcell{\frac{(2-\gamma)^2(1-\gamma)(1+\gamma)}
{D_\gamma}}{2}
&
\slackcell{\frac{(2-\gamma)^2(1-\gamma)}
{D_\gamma}}{6}
\\[9pt]

\slackcell{\frac{(2-\gamma)^2(1+\gamma-\gamma^2)}
{D_\gamma}}{2}
&
\slackcell{\frac{2-\gamma}{2}}{2}
&
\slackcell{\frac{\gamma(1-\gamma)}{2}}{4}
&
\slackcell{\frac{\gamma(2-\gamma)^2(1-\gamma)}
{D_\gamma}}{5}
&
\slackcell{\frac{\gamma^2(2-\gamma)}
{D_\gamma}}{2}
\\[9pt]

\slackcell{0}{12}
&
\slackcell{\gamma}{1}
&
\slackcell{\frac{2\gamma(2-\gamma)(1-\gamma)}
{D_\gamma}}{4}
&
\slackcell{\frac{\gamma(2-\gamma)^2(1+\gamma)}
{D_\gamma}}{3}
&
\slackcell{\frac{\gamma(2-\gamma)^2}
{D_\gamma}}{6}
\\[9pt]

&
\slackcell{\frac{\gamma(\gamma^2-3\gamma+6)}
{D_\gamma}}{2}
&
\slackcell{\frac{\gamma^2(2-\gamma)(1-\gamma)}
{D_\gamma}}{6}
&
\slackcell{\frac{\gamma}{2}}{2}
&
\\
\bottomrule
\end{tabular*}
\endgroup
\end{table}

With the help of this $\gamma$-weakly submodular function $f$, we next prove the results of Proposition~\ref{prop:strict-barrier}. Generally speaking, adding the nonnegative modular function
will preserves the weak submodularity. Thus, for
$0<\delta<(1-s_\gamma)/2$, we define
\begin{equation*}
 g_{\gamma,\delta}(T)=f_\gamma(T)+\delta|T\cap\{b_1,b_2\}|,
\end{equation*} and then reset  $ f(T)\triangleq\min\{g_{\gamma,\delta}(T),1\}$. It is worth noting that  Truncation at one also can preserve the weak submodularity of $f$: if
$g_{\gamma,\delta}(A)<1$, we define $T_0=1-g_{\gamma,\delta}(A)$,
$D=g_{\gamma,\delta}(C)-g_{\gamma,\delta}(A)$, and
$d_e=g_{\gamma,\delta}(e\mid A)$.  Then
\begin{align*}
 \sum_{e\in C\setminus A}f(e\mid A)=\sum_{e\in C\setminus A}\min\{d_e,T_0\}.
\end{align*}
When all $\sum_{e\in C\setminus A}d_e\le T_0$, we have $d_e\le T_0$ such that $\sum_{e\in C\setminus A}f(e\mid A)=\sum_{e\in C\setminus A}d_e$. As for the setting $\sum_{e\in C\setminus A}d_e>T_0$, if there exists a $d_e\ge T_0$, we have 
\begin{equation*}
  \sum_{e\in C\setminus A}f(e\mid A)\ge T_0\ge \min\!\left\{\sum_{e\in C\setminus A} d_e,T_0\right\}.
\end{equation*} Otherwise, all $d_e\le T_0$, we go back to $\sum_{e\in C\setminus A}f(e\mid A)=\sum_{e\in C\setminus A}d_e$. As a result, we have 
\begin{equation*}
\begin{aligned}
 &\sum_{e\in C\setminus A}f(e\mid A)=\sum_{e\in C\setminus A}\min\{d_e,T_0\}\ge\min\!\left\{\sum_e d_e,T_0\right\}\\
 &\ge\min\{\gamma D,T_0\}
 \ge\gamma\min\{D,T_0\}
 =\gamma(f(C)-f(A)).
\end{aligned}
\end{equation*}In the case $g_{\gamma,\delta}(A)\ge1$, we have $\sum_{e\in C\setminus A}f(e\mid A)=0=f(C)-f(A)=\gamma(f(C)-f(A))$. 


If our \MGPE\ starts from $B=\{b_1,b_2\}$ and deletes $b_i$, then we can show that, for any $\tau\in[\epsilon,1]$
\begin{equation*}
\begin{aligned}
\nabla_{b_i}F(\tau\1_{B-b_i})-
 \nabla_{g_i}F(\tau\1_{B-b_i})=\E_{R\sim\1_{B-b_i}}[f(b_i|R)-f(g_i|R)].
\end{aligned}
 \end{equation*}From the definition of multilinear extension, we know that either $R=\varnothing$ or $R=\{b_j\}$ with $j\ne i$, so $f(b_i\mid R)-f(g_i\mid R)=\delta$. So the process remains at $B$. In this scenario, we have 
\begin{equation*}
    \frac{f(B)}{\max_{S\in\cI}f(S)}\le\frac{f(B)}{f(\{g_1,g_2\})}=s_{\gamma}+2\delta.
\end{equation*}
 
Note that when $\log(1-s_\gamma)+\gamma>0$, we have  $s_\gamma<1-e^{-\gamma}$. So we next investigate the property of $\Phi(\gamma)=\log(1-s_\gamma)+\gamma$ over $\gamma\in(0,1)$. Specifically, the derivative
\begin{equation*}
 \Phi'(\gamma)=
 \frac{(1+\gamma)^3-3}{(\gamma-2)(\gamma+1)(\gamma+2)}
\end{equation*}
shows that $\Phi$ first increases and then decreases on $(0,1)$.
Together with $\Phi(0)=0$ and, when $\gamma_0=0.857633$, $\log(1-s_{\gamma_0})+\gamma_0=0$, we get
Proposition~\ref{prop:strict-barrier}.

\subsection{Factor-Revealing Linear Programming}\label{app:factor-revealing}
In this appendix, we aim to extend the previous rank-two bad case. Like the prior subsection, we suppose the ground set $U$ consisting of $k$ bad elements $\{b_1,\dots, b_k\}$ and $k$ good elements $\{g_1,\dots, g_k\}$. Then, we take the partition
matroid with blocks $\{b_i,g_i\}$, each of capacity one. Similarly, we suppose the obejective function $f$ only depends on the numbers of blocks in the bad-only, good-only, and double-occupied states. Specifically, let 
$(p,q,r)$ be the numbers of bad-only, good-only, and double-occupied blocks of $T$, and then $f(T)\triangleq v_{p,q,r}$.  

Then, we consider a linear programming minimizes $v_{k,0,0}$ subject to
\begin{equation}\label{eq:factor-lp}
\begin{array}{ll}
 v_{0,0,0}=0,\quad v_{0,k,0}=1,\quad v_{p,q,r}\ge0,
 &\text{normalization},\\
 \text{all single-element monotonicity inequalities},
 &\text{monotonicity},\\
 \text{all orbit forms of \eqref{eq:weak-submod}},
 &\text{weak submodularity},\\
 v_{m,k-m,0}\le1\quad(0\le m\le k),
 &\text{optimality of the all-good base},\\
 v_{m,1,0}\le v_{m+1,0,0}\quad(0\le m<k),
 &\text{trapping of the all-bad base}.
\end{array}
\end{equation} Here, the orbit system of weak submodularity is exhaustive over all $A\subsetneq C\subseteq U$. Note that the first three families of
constraints ensure that the induced set function $f$ is normalized,
monotone, and $\gamma$-weakly submodular. The final two families create
the desired trapping structure: the all-good base is optimal, whereas
the all-bad base admits no improving single-element exchange and
therefore traps the \MGPE\ algorithm. More specifically, we can show

\begin{lemma}[Meaning of a feasible LP point]
\label{lem:factor-lp-semantics}Every feasible point of \eqref{eq:factor-lp} defines a normalized
monotone $\gamma$-weakly submodular function for which the all-good base is
optimal.  If we start at the all-bad base, our proposed \MGPE\ remains at the all-bad base,
so $R_{\MGPE}(\gamma)\le v_{k,0,0}$, where $R_{\MGPE}(\gamma)$ denotes the worst-case approximation ratio of
\MGPE\ for matroid-constrained $\gamma$-weakly submodular maximization problems. 
\end{lemma}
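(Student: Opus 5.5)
The statement has three parts: the LP constraints define a valid instance, the all-bad base traps \MGPE, and hence $R_{\MGPE}(\gamma)\le v_{k,0,0}$. I will verify them in that order.

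\emph{Step 1: the instance is valid.} Let $(v_{p,q,r})$ be a feasible point of \eqref{eq:factor-lp}. Set $f(T)=v_{p(T),q(T),r(T)}$, which is invariant under block permutations and under the in-block symmetry. Then:
\begin{itemize}
\item Normalization gives $f(\varnothing)=v_{0,0,0}=0$.
\item Monotonicity needs only single-element additions, because any $A\subseteq C$ is reached by a chain of them. Adding one element moves a count triple to one of finitely many neighbouring triples, and these are exactly the single-element monotonicity constraints.
\item For weak submodularity, the value of $\sum_{e\in C\setminus A}f(e\mid A)-\gamma(f(C)-f(A))$ depends only on the orbit of the pair $(A,C)$ under the symmetry group. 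The orbit constraints are assumed exhaustive, so \eqref{eq:weak-submod} holds for every pair.
\end{itemize}

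\emph{Step 2: the all-good base is optimal.} Every base of this partition matroid takes one element per block, so it is $\{b_i:i\in I\}\cup\{g_i:i\notin I\}$ and has value $v_{m,k-m,0}$ with $m=|I|$. The constraint $v_{m,k-m,0}\le1=v_{0,k,0}$ shows the all-good base $O$ is optimal among bases. By monotonicity, every independent set extends to a base of no smaller value, so $O$ maximizes $f$ over $\cI$.

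\emph{Step 3: the all-bad base traps \MGPE.} Start from $A=B=\{b_1,\dots,b_k\}$ and suppose the algorithm deletes $b_i$. The only feasible partners are $\mathcal C_B(i)=\{b_i,g_i\}$. For $R\sim\tau\1_{B-b_i}$, $R$ is a set of bad elements with some size $m<k$. Then $f(g_i\mid R)=v_{m,1,0}-v_{m,0,0}$ and $f(b_i\mid R)=v_{m+1,0,0}-v_{m,0,0}$. The trapping constraint $v_{m,1,0}\le v_{m+1,0,0}$ gives $\nabla_{b_i}F(\tau\1_{B-b_i})\ge\nabla_{g_i}F(\tau\1_{B-b_i})$ for every $\tau$. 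So $b_i$ is a maximizer, and breaking ties in favour of staying (adversarially, or by an arbitrarily small perturbation as in Proposition~\ref{prop:strict-barrier}) keeps the algorithm at $B$ at every Poisson event. Induction over events then gives $S_{\mathrm{out}}=B$ almost surely.

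\emph{Conclusion and main obstacle.} Step 3 yields $\E[f(S_{\mathrm{out}})]/f(O)=v_{k,0,0}$, and taking the worst case over instances gives $R_{\MGPE}(\gamma)\le v_{k,0,0}$. The delicate point is tie-breaking. If ties are resolved favourably, I would perturb by adding $\delta|T\cap\{b_1,\dots,b_k\}|$ and truncating at $1$, exactly as in the rank-two construction, which preserves weak submodularity. This produces strict traps with ratio $v_{k,0,0}+O(k\delta)$, and letting $\delta\to0$ gives the bound.
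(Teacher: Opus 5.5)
Your proposal is correct and follows the paper's argument. The LP constraints directly encode normalization, monotonicity, weak submodularity and optimality of the all-good base, and the trap rests on the pointwise identity $f(b_i\mid R)-f(g_i\mid R)=v_{m+1,0,0}-v_{m,1,0}\ge0$ for $R\subseteq B-b_i$ with $|R|=m$; this is exactly the binomial expansion of $\nabla_{b_i}F(\tau\1_{B-b_i})-\nabla_{g_i}F(\tau\1_{B-b_i})$ used in the paper.

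Your explicit handling of ties via the $\delta$-perturbation is a worthwhile addition that the paper leaves implicit. It works once $\delta<(1-v_{k,0,0})/k$, so that truncation at one does not equalize the two marginals. One small slip: $f$ is invariant only under simultaneous permutations of blocks, not under swapping $b_i$ and $g_i$ inside a block. This does not affect your argument, because you rely only on the orbit system being exhaustive.
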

\begin{proof}
Only the trapping claim needs detailed explanation.  After deleting $b_i$, we can show that
\begin{equation*}
\nabla_{b_i}F(\tau\1_{B-b_i})-\nabla_{g_i}F(\tau\1_{B-b_i})=\sum_{m=0}^{k-1}\binom{k-1}{m}\tau^m(1-\tau)^{k-1-m}
 \bigl(v_{m+1,0,0}-v_{m,1,0}\bigr)\ge0,
\end{equation*} where the final inequality comes from the last line of Eq.\eqref{eq:factor-lp}.  Consequently, when initialized at the all-bad base, \MGPE\ remains trapped there. As a result, $R_{\MGPE}(\gamma)\le\frac{v_{k,0,0}}{v_{0,k,0}}=v_{k,0,0}$.

\end{proof}

\section{Proofs of Section~\ref{sec:beyond-barrier}}\label{app:beyond-barrier}
In this section, we mainly present the proofs in of Section~\ref{sec:beyond-barrier}. Before that, we establish a result analogous to Lemma~\ref{lem:repair}.
\begin{lemma}[Two-sided marginal repair]
\label{lem:two-sided-repair}Let $\zeta\coloneqq\max\{\gamma,2-\beta\}$ and $C_\zeta(\tau)\coloneqq
\zeta+(1-\zeta)\tau$. When $f$ is monotone and $(\gamma,\beta)$-weakly
submodular, for any feasible exchange $(i,j)$, we have
\begin{equation}\label{eq:two-sided-repair}
\nabla_jF\bigl(\tau\1_{S-i}\bigr)
\geq
\frac{\zeta}{C_\zeta(\tau)}
\E_{R\sim\tau\1_S}\bigl[f(j\mid R)\bigr]
-
\frac{(1-\zeta)\tau}{C_\zeta(\tau)}
\nabla_iF\bigl(\tau\1_{S-i}\bigr),
\end{equation}
where $R$ is the random subset drawn from the point
$\tau\1_S$.
\end{lemma}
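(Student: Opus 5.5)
The plan is to mirror the proof of Lemma~\ref{lem:repair}, replacing the two-point weak-submodularity inequality by one that uses whichever of the lower or upper ratio is stronger. Since $\zeta=\max\{\gamma,2-\beta\}$, it suffices to prove the key two-point inequality
\begin{equation*}
f(i\mid R_0)+f(j\mid R_0)\ \ge\ \zeta\bigl(f(j\mid R_0+i)+f(i\mid R_0)\bigr),\qquad R_0\subseteq S-i,\ j\notin S,
\end{equation*}
separately in the cases $\zeta=\gamma$ and $\zeta=2-\beta$. Everything after that is copied verbatim from Lemma~\ref{lem:repair} with $\gamma$ replaced by $\zeta$.

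\textbf{The case $\zeta=\gamma$} is exactly the argument already given. Apply \eqref{eq:weak-submod} with $A=R_0$ and $B=R_0+i+j$.

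\textbf{The case $\zeta=2-\beta$} (so that $\beta<2-\gamma\le 2$ and $\zeta\in(0,1]$) uses \eqref{eq:weak-upper} with $A=R_0$ and $B=R_0+i+j$:
\begin{equation*}
f(i\mid R_0+j)+f(j\mid R_0+i)\le \beta\bigl(f(R_0+i+j)-f(R_0)\bigr).
\end{equation*}
Write $D=f(R_0+i+j)-f(R_0)$. The identity $f(i\mid R_0+j)=D-f(j\mid R_0)$ turns this into
\begin{equation*}
f(j\mid R_0)\ \ge\ f(j\mid R_0+i)-(\beta-1)D.
\end{equation*}
Adding $f(i\mid R_0)$ to both sides gives
\begin{equation*}
f(i\mid R_0)+f(j\mid R_0)\ \ge\ D-(\beta-1)D=(2-\beta)D.
\end{equation*}
Since $D=f(j\mid R_0+i)+f(i\mid R_0)$, this is precisely the two-point inequality with $\zeta=2-\beta$. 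If $\beta\ge 2-\gamma$, the first case already applies.

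\textbf{From the two-point inequality to \eqref{eq:two-sided-repair}.} Rearranging and taking expectation over $R_0\sim\tau\1_{S-i}$ gives
\begin{equation*}
\nabla_jF(\tau\1_{S-i})\ \ge\ \zeta\,\E[f(j\mid R_0+i)]-(1-\zeta)\nabla_iF(\tau\1_{S-i}).
\end{equation*}
Multiply by $\tau$. Then substitute the decomposition \eqref{equ:add1},
\begin{equation*}
\E_{R\sim\tau\1_S}[f(j\mid R)]=(1-\tau)\nabla_jF(\tau\1_{S-i})+\tau\,\E[f(j\mid R_0+i)].
\end{equation*}
Collect the $\nabla_jF$ terms to get the coefficient $\tau+\zeta(1-\tau)=C_\zeta(\tau)$, and divide by it. This yields \eqref{eq:two-sided-repair}.

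\textbf{Degenerate cases.} The case $j=i$ is handled exactly as in Lemma~\ref{lem:repair}. There $\E_{R}[f(i\mid R)]=(1-\tau)\nabla_iF(\tau\1_{S-i})$, and $\zeta(1-\tau)\le C_\zeta(\tau)$. Any other $j\in S$ does not correspond to a feasible exchange, because $S-i+j$ would not be a base.

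\textbf{Main obstacle.} The only real step is the upper-ratio manipulation in the second case. I need to check the sign bookkeeping: $\beta-1\ge0$ and $D\ge0$ by monotonicity, so the bound stays valid. The rest is routine.
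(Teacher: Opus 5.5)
Your proposal is correct and is essentially the paper's proof. The paper also applies the lower and upper ratios to $A=R_0$, $B=R_0+i+j$ to get $f(i\mid R_0)+f(j\mid R_0)\ge\gamma D$ and $\ge(2-\beta)D$. It phrases this through the interaction term $c=f(R_0+i+j)-f(R_0+i)-f(R_0+j)+f(R_0)$ rather than a case split. It then deduces $f(j\mid R_0)\ge\zeta f(j\mid R_0+i)-(1-\zeta)f(i\mid R_0)$, averages over $R_0$, multiplies by $\tau$, and uses the decomposition of $\E_{R\sim\tau\1_S}[f(j\mid R)]$ exactly as you do.

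The sign check you flag as the main obstacle is not needed. Your rearrangement of the upper-ratio inequality is an exact identity, and the two bounds combine to $\zeta D$ whatever the sign of $D$.
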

\begin{proof}
We first consider the case $j=i$. According to the definition of multilinear extension, we can show
\begin{align}
\E_{R\sim\tau\1_S}\bigl[f(i\mid R)\bigr]=
(1-\tau)
\E_{R_0\sim\tau\1_{S-i}}
\bigl[f(i\mid R_0)\bigr]=
(1-\tau)
\nabla_iF\bigl(\tau\1_{S-i}\bigr).
\label{eq:two-sided-self-marginal}
\end{align}
Therefore, we can infer that
\begin{align*}
&\frac{\zeta}{C_\zeta(\tau)}
\E_{R\sim\tau\1_S}\bigl[f(i\mid R)\bigr]
-
\frac{(1-\zeta)\tau}{C_\zeta(\tau)}
\nabla_iF\bigl(\tau\1_{S-i}\bigr)\\&\leq
\frac{\zeta}{C_\zeta(\tau)}
\E_{R\sim\tau\1_S}\bigl[f(i\mid R)\bigr]=
\frac{\zeta(1-\tau)}{C_\zeta(\tau)}
\nabla_iF\bigl(\tau\1_{S-i}\bigr)\leq
\nabla_iF\bigl(\tau\1_{S-i}\bigr),
\end{align*}
where the final inequality follows from $\zeta(1-\tau)
\leq
\zeta+(1-\zeta)\tau
=
C_\zeta(\tau)$. Thus, Eq.~\eqref{eq:two-sided-repair} holds when $j=i$. We next consider the case $j\notin S$. Draw $R_0\sim\tau\1_{S-i}$. According to the definition of multilinear extension, we can show
\begin{align}
\E_{R\sim\tau\1_S}\bigl[f(j\mid R)\bigr]
&=
(1-\tau)
\E_{R_0}\bigl[f(j\mid R_0)\bigr]
+
\tau
\E_{R_0}\bigl[f(j\mid R_0+i)\bigr].
\label{eq:two-sided-marginal-decomposition}
\end{align}

We now derive a pointwise comparison between
$f(j\mid R_0)$ and $f(j\mid R_0+i)$. Fix a realization of $R_0$
and define $a\coloneqq f(i\mid R_0),
\qquad
b\coloneqq f(j\mid R_0)$, and set 
\[
c\coloneqq
f(R_0+i+j)-f(R_0+i)-f(R_0+j)+f(R_0).
\]
These definitions imply
\begin{align}
f(R_0+i+j)-f(R_0)
&=a+b+c,                                      \label{eq:two-sided-gap}\\
f(j\mid R_0+i)
&=b+c,                                        \label{eq:two-sided-j-after-i}\\
f(i\mid R_0+j)
&=a+c.                                        \label{eq:two-sided-i-after-j}
\end{align}

Applying lower weak submodularity with
$A=R_0$ and $B=R_0+i+j$ yields
\begin{align}
f(i\mid R_0)+f(j\mid R_0)
&\geq
\gamma\bigl(f(R_0+i+j)-f(R_0)\bigr),
\end{align}
or equivalently, $a+b\geq\gamma(a+b+c).$

We next apply upper weak submodularity with the same sets
$A=R_0$ and $B=R_0+i+j$. This gives $f(i\mid R_0+j)+f(j\mid R_0+i)
\leq\beta\bigl(f(R_0+i+j)-f(R_0)\bigr)$. Using Eqs.~\eqref{eq:two-sided-gap}--\eqref{eq:two-sided-i-after-j},
we obtain
\[
(a+c)+(b+c)
\leq
\beta(a+b+c),
\]
and hence $a+b\geq(2-\beta)(a+b+c)$.

Combining the aformentioned two results, we get
\begin{equation}\label{eq:two-sided-zeta-pair}
a+b\geq\zeta(a+b+c),
\end{equation} where $\zeta\coloneqq\max\{\gamma,2-\beta\}$.
Rearranging Eq.~\eqref{eq:two-sided-zeta-pair}, we have $\zeta c\leq(1-\zeta)(a+b)$ such that
\begin{align*}
f(j\mid R_0+i)=b+c\leq
b+\frac{1-\zeta}{\zeta}(a+b)=
\frac{1}{\zeta}b
+
\frac{1-\zeta}{\zeta}a.
\end{align*}
Equivalently, $f(j\mid R_0)
\geq
\zeta f(j\mid R_0+i)
-
(1-\zeta)f(i\mid R_0)$.

Taking expectations over
$R_0\sim\tau\1_{S-i}$ yields
\begin{align}
\nabla_jF\bigl(\tau\1_{S-i}\bigr)
&=
\E_{R_0}\bigl[f(j\mid R_0)\bigr]\notag\\
&\geq
\zeta
\E_{R_0}\bigl[f(j\mid R_0+i)\bigr]
-
(1-\zeta)
\E_{R_0}\bigl[f(i\mid R_0)\bigr]\notag\\
&=
\zeta
\E_{R_0}\bigl[f(j\mid R_0+i)\bigr]
-
(1-\zeta)
\nabla_iF\bigl(\tau\1_{S-i}\bigr).
\label{eq:two-sided-expected-repair}
\end{align}

Multiplying Eq.~\eqref{eq:two-sided-expected-repair} by $\tau$
gives
\begin{align}
\tau\nabla_jF\bigl(\tau\1_{S-i}\bigr)\geq
\zeta\tau
\E_{R_0}\bigl[f(j\mid R_0+i)\bigr]-
(1-\zeta)\tau
\nabla_iF\bigl(\tau\1_{S-i}\bigr).
\label{eq:two-sided-expected-repair-scaled}
\end{align}
From Eq.~\eqref{eq:two-sided-marginal-decomposition}, we have
\begin{align*}
\tau
\E_{R_0}\bigl[f(j\mid R_0+i)\bigr]
&=
\E_{R\sim\tau\1_S}\bigl[f(j\mid R)\bigr]-
(1-\tau)
\E_{R_0}\bigl[f(j\mid R_0)\bigr]\\
&=
\E_{R\sim\tau\1_S}\bigl[f(j\mid R)\bigr]
-
(1-\tau)
\nabla_jF\bigl(\tau\1_{S-i}\bigr).
\end{align*}
Substituting this identity into
Eq.~\eqref{eq:two-sided-expected-repair-scaled}, we obtain
\begin{align*}
\tau\nabla_jF\bigl(\tau\1_{S-i}\bigr)\geq
\zeta
\E_{R\sim\tau\1_S}\bigl[f(j\mid R)\bigr]-
\zeta(1-\tau)
\nabla_jF\bigl(\tau\1_{S-i}\bigr)-
(1-\zeta)\tau
\nabla_iF\bigl(\tau\1_{S-i}\bigr).
\end{align*}So we prove Eq.\eqref{eq:two-sided-repair}
\end{proof}

\subsection{Proof of Lemma~\ref{lem:repair1}}\label{app:proof-repair1}
\begin{proof}
Let $O$ be a optimal base. At first, we consider the $\alpha$-weakly DR-submodular objectives. From Lemma~\ref{lem:conditioned-generator}, we can know that 
\begin{align}
  \left.\frac{\partial^+V_{S,\tau}(r)}{\partial r}\right|_{r=\tau}
 =\sum_{i\in S}\nabla_{j_i^*(S,\tau)}F(\tau\1_{S-i}),
\end{align} where $j_i^*(S,\tau)\in\argmax_{j\in\mathcal C_S(i)}
\nabla_jF(\tau\1_{S-i})$. Fix a Brualdi map $h:S\to O$ and draw $R\sim t\1_S$.  Moreover, we order $O\setminus R=\{o_1,\ldots,o_m\}$ and let $i_\ell=h^{-1}(o_\ell)$. Then, for every $\ell\in[m]$, $R\setminus\{i_\ell\}\subseteq R\cup\{o_1,\ldots,o_{\ell-1}\}$. Therefore, weak DR-submodularity gives
\begin{align}
& \sum_{i\in S}f(h(i)\mid R-i)
 \ge\sum_{\ell=1}^m f(o_\ell\mid R-i_\ell)\notag\\
 &\ge\alpha\sum_{\ell=1}^m
 f(o_\ell\mid R+o_1+\cdots+o_{\ell-1})=\alpha\bigl(f(R\cup O)-f(R)\bigr).
 \label{eq:alpha-telescope}
\end{align}
After that, by taking the expectation of $R$ and using monotonicity $f(R\cup O)\ge f(O)$,  we get
\begin{align*}
  \left.\frac{\partial^+V_{S,\tau}(r)}{\partial r}\right|_{r=\tau}&=\sum_{i\in S}\nabla_{j_i^*}F(\tau\1_{S-i})
 \ge\sum_{i\in S}\nabla_{h(i)}F(\tau\1_{S-i})\\&=\E_{R\sim \tau\1_S}\left[\sum_{i\in S}f(h(i)\mid R-i)\right]\ge\alpha\bigl(f(O)-F(\tau\1_S)\bigr).
\end{align*}
Thus, we get Eq.\eqref{eq:gamma-drift1++} and Eq.\eqref{eq:gamma-drift2++}. 

Next, we prove the results for $(\gamma,\beta)$-weakly submodular objectives. At first, we set $\zeta=\max\{\gamma,2-\beta\}\in(0,1]$. After that, we fix a Brualdi map $h:S\to O$. Then, from Lemma~\ref{lem:two-sided-repair}, we can show
\begin{equation*}
\begin{aligned}
 \nabla_{j_i^{*}(S,\tau)}F(\tau\1_{S-i})\ge\nabla_{h(i)}F(\tau\1_{S})&\geq
\frac{\zeta}{C_\zeta(\tau)}
\E_{R\sim\tau\1_S}\bigl[f(h(i)\mid R)\bigr]
-
\frac{(1-\zeta)\tau}{C_\zeta(\tau)}
\nabla_iF\bigl(\tau\1_{S-i}\bigr)\\
&\geq
\frac{\zeta}{C_\zeta(\tau)}
\E_{R\sim\tau\1_S}\bigl[f(h(i)\mid R)\bigr]
-
\frac{(1-\zeta)\tau}{C_\zeta(\tau)}
\nabla_{j_i^{*}(S,\tau)}F(\tau\1_{S-i}),
\end{aligned}   
\end{equation*} where the first inequality comes from Lemma~\ref{lem:two-sided-repair}. As a result, we have
\begin{equation*}
    \nabla_{j_i^{*}(S,\tau)}F(\tau\1_{S-i})\ge\frac{\zeta}{\zeta+2(1-\zeta)\tau}\E_{R\sim\tau\1_S}\bigl[f(h(i)\mid R)\bigr].
\end{equation*}Similarly, from Lemma~\ref{lem:conditioned-generator}, we have
\begin{equation*}
    \begin{aligned}
       \left.\frac{\partial^+V_{S,\tau}(r)}{\partial r}\right|_{r=\tau}&=\sum_{i\in S}\nabla_{j_i^*}F(\tau\1_{S-i})\\&\ge\frac{\zeta}{\zeta+2(1-\zeta)\tau}\sum_{i\in S}\E_{R\sim \tau\1_{S}}\left[f(h(i)|R)\right]=\frac{\gamma\zeta}{\zeta+2(1-\zeta)\tau}(f(O)-F(\tau\1_{S})).
    \end{aligned}
\end{equation*} Thus, we get Eq.\eqref{eq:gamma-drift1+} and Eq.\eqref{eq:gamma-drift2+}.
\end{proof}

\subsection{Poorf of Theorem~\ref{thm:offline3}}\label{app:proof-offline3}
According to Lemma~\ref{lem:repair1}, we can show that, when $f$ is monotone $(\gamma,\beta)$-weakly submodular,
\begin{align}
   \left.\frac{\partial^+V_{S,\tau}(r)}{\partial r}\right|_{r=\tau}
 \ge \frac{\gamma\zeta}{\zeta+2(1-\zeta)\tau}
    \bigl(f(O)-F(\tau\1_S)\bigr)
\end{align} where  $\zeta \coloneqq \max\{\gamma,2-\beta\}$. Furthermore, if $f$ is monotone $\alpha$-weakly DR-submodular, we also have
\begin{align}
   \left.\frac{\partial^+V_{S,\tau}(r)}{\partial r}\right|_{r=\tau}
 \ge \alpha
    \bigl(f(O)-F(\tau\1_S)\bigr).
\end{align}

Then, by substituting $\eta(\tau)=\frac{\gamma\zeta}{\zeta+2(1-\zeta)\tau}$ or $\eta(\tau)=\alpha$ and $\delta=0$ into Lemma~\ref{lem:potential-comparison}, we get 
\begin{equation*}
   Q(1)\ge e^{-H(1)}Q(\epsilon)+(1-e^{-H(1)})f(O). 
\end{equation*}

Here, when $f$ is monotone $(\gamma,\beta)$-weakly submodular
\begin{equation*}
\left.H(1)\triangleq\int_\epsilon^1\eta(u)\,du=\frac{\gamma\zeta}{2(1-\zeta)}\log(\zeta+2(1-\zeta)\tau)\right|_{\tau=\epsilon}^{1}=\frac{\gamma\zeta}{2(1-\zeta)}\log(\frac{2-\zeta}{\zeta+2(1-\zeta)\epsilon}).
\end{equation*}
Then, we have 
\begin{equation*}
    \begin{aligned}
         Q(1)&\ge e^{-H(1)}Q(\epsilon)+(1-e^{-H(1)})f(O)\\
         &\ge(1-e^{-H(1)})f(O)=\left(1-(\frac{\zeta+2(1-\zeta)\epsilon}{2-\zeta})^{\frac{\gamma\zeta}{2(1-\zeta)}}\right)f(O)\ge(1-R_{\gamma,\beta}-\epsilon)f(O),
    \end{aligned}
\end{equation*} where the second  inequality comes from $Q(\epsilon)\ge0$ and the final inequality comes from Lemma~\ref{lem:finite-start-correction} as well as $R_{\gamma,\beta}\triangleq(\frac{\zeta}{2-\zeta})^{\frac{\gamma\zeta}{2(1-\zeta)}}$. Also, when $f$ is monotone $\alpha$-weakly DR-submodular
\begin{equation*}
H(1)\triangleq\int_\epsilon^1\eta(u)\,du=\alpha(1-\epsilon)
\end{equation*}
Then, we have 
\begin{equation*}
    \begin{aligned}
         Q(1)&\ge e^{-H(1)}Q(\epsilon)+(1-e^{-H(1)})f(O)\ge(1-e^{-\alpha(1-\epsilon)})f(O)\ge(1-\alpha-\epsilon)f(O),
    \end{aligned}
\end{equation*} where the second inequality comes from $Q(\epsilon)\ge0$.

\section{Further Details on Fractional Exchange}\label{app:fractional-exchange}
In this section, we first establish the existence of a fractional exchange
matrix $Q_S^{\x}$ and then present an efficient algorithm to calculate it. 

\subsection{Proof of Theorem~\ref{thm:balanced}}\label{app:proof-balanced}
\begin{proof}At first, from the definition of matroid polytope $P(\cM)$, we know that, for any $\x\in P(\cM)$, there are coefficients $\lambda_B\ge0$ with
\begin{equation}\label{eq:base-decomposition}
 \sum_{B\in\cB}\lambda_B\le 1,
 \qquad \x=\sum_{B\in\cB}\lambda_B\1_B.
\end{equation}Then, for any base $S\in\cB$, choose a Brualdi bijection $h_{S,B}:S\to B$ and define
\begin{equation}\label{eq:balanced-mixture-construction}
 Q_S^{\x}(i,j)\triangleq\sum_{B\in\cB}\lambda_B\1\{h_{S,B}(i)=j\}.
\end{equation}After that, we verify the matrix $ Q_S^{\x}(i,j)$ in Eq.\eqref{eq:balanced-mixture-construction} satisfies the three conditions in Definition~\ref{def:balanced}. At first, if the element $Q_S^{\x}(i,j)$ is positive, then there exists a base $B$ such that  $j=h_{S,B}(i)$, so $S-i+j\in\cB$; For a fixed row $i$, every bijection chooses one image, so $\sum_jQ_S^{\x}(i,j)=\sum_{B\in\cB}\lambda_B\le 1$.  For a fixed column $j$, bijectivity implies that exactly one $i\in S$ maps to $j$ when $j\in B$.  Therefore $\sum_iQ_S^{\x}(i,j)=\sum_B\lambda_B\1\{j\in B\}=x_j$,
where the last equality is coordinate $j$ of the definition of $\x$ in Eq.\eqref{eq:base-decomposition}.  Thus, we prove the existence of fractional exchange matrix.

In the remainder of this subsection, for any fixed $\x\in P(\cM)$ and $S\in\cB$, we show how to compute its induced fractional exchange matrix $Q_S^{\x}(i,j)$. 

\begin{figure}[t]
    \centering
    \includegraphics[width=0.75\linewidth]
    {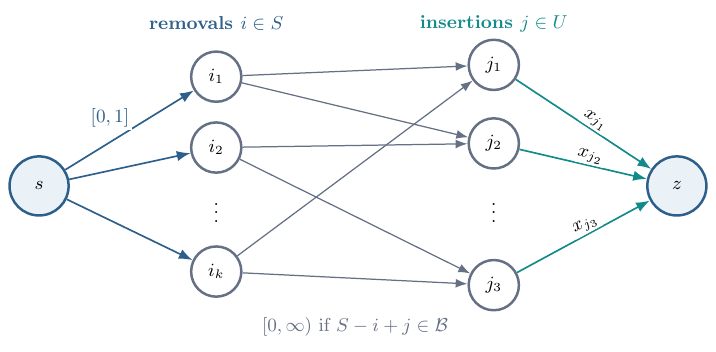}
    \caption{The directed network used to construct the fractional
    exchange matrix $Q_S^{\x}$.}
    \label{fig:fractional-exchange-flow}
\end{figure}

We first construct a directed network with source $s$ and sink $z$.
Between them, we introduce two layers of vertices: a left vertex for
each $i\in S$ and a right vertex for each $j\in U$. Them, for every
$i\in S$, we add an arc $s\to i$ with capacity at most one. For every pair
$(i,j)\in S\times U$ satisfying $S-i+j\in\cB$, we add an arc
$i\to j$ with unbounded capacity. Finally, for every $j\in U$, we add
an arc $j\to z$ with capacity $x_j$, as illustrated in
Figure~\ref{fig:fractional-exchange-flow}. We then seek a feasible
$s$-$z$ flow of value $\sum_{j\in U}x_j$, where $\x\triangleq(x_1,\ldots,x_n)$. Given such a feasible flow
$\varphi$, we define
\[
Q_S^{\x}(i,j)\triangleq
\begin{cases}
\varphi(i,j),
    & \text{if } S-i+j\in\cB,\\
0,
    & \text{otherwise}.
\end{cases}
\]
The source capacities ensure that the row sums of $Q_S^{\x}$ are at most
one, the capacity of the arcs entering $z$ gives the prescribed
column sums $x_j$, and the construction of the exchange arcs guarantees
feasible support. Hence, $Q_S^{\x}$ is a fractional exchange matrix.

Because the resulting network contains $k+n$ non-terminal vertices and at most
$kn$ exchange arcs, according to the maximum-flow algorithm of
\citet{orlin2013maxflows},  a feasible flow can therefore be computed in $O\bigl(kn(k+n)\bigr)$. 
\end{proof}
\subsection{Proof of Lemma~\ref{thm:fractional-exchange-properties}}\label{app:proof-fractional-exchange-properties}
\begin{proof}We first prove point (i). If there exist some $i\in S$ making $\sum_{j\in U}Q^{\x}_{S}(i,j)<1$, we then can show $\sum_{i\in S}\sum_{j\in U}Q^{\x}_{S}(i,j)<1$. But, from the definition of Insertion marginals, we have $\sum_{i\in S}Q^{\x}_{S}(i,j)=x_j$ such that $\sum_{j\in U}\sum_{i\in S}Q^{\x}_{S}(i,j)=\sum_{j\in U}x_j=k$. By contradiction, we have $\sum_{j\in U}Q^{\x}_{S}(i,j)=1$ for every $i\in $.

As for point (ii), for any $j\notin S$, we have $\sum_{i\in S}Q^{\1_{S}}_{S}(i,j)=0$ such that $Q^{\1_{S}}_{S}(i,j)=0$ for any $j\notin S$ and $i\in S$. Then, for any $j\in S$, from the feasible support condition, we know that, only when $i=j$, $S-i+j\in\cB$ such that $Q^{\1_{S}}_{S}(i,j)=0$ for any $j\in S\setminus\{i\}$. Moreover, according to the previous point (i), when $j\in S$, we have $Q^{\1_{S}}_{S}(i,j)=\sum_{j\in U}Q^{\x}_{S}(i,j)=1$. Finally, we have when $\x=\1_S$, then  $Q_S^{\1_S}(i,j)=\1\{i=j\}$ for every $i\in S$ and $j\in U$.\end{proof}
\section{Proofs of Section~\ref{sec:online-gamma}}\label{app:online-gamma}

\subsection{Proof of Lemma~\ref{lem:conditioned-generator1}}\label{app:proof-conditioned-generator1}
\begin{proof}
By applying Lemma~\ref{lem:core_lemma+} to objective function $f_t$, we can know that, for every $\tau\in[\varepsilon,1)$ and $S\in\cB$,
\begin{align}
 \left.\frac{\partial^+V^{t}_{S,\tau}(r)}{\partial r}\right|_{r=\tau}=\sum_{i\in S}\nabla_iF_t(\tau\1_S)
  +\tau\lambda(\tau)\E_{(i,j)\sim p(\tau,S)}
  \left[\nabla_jF_t(\tau\1_{S-i})-\nabla_iF_t(\tau\1_{S-i})\right], 
\end{align} where $p_{i,j}(\tau,S)$ denotes the probability that we delete
\(i\) and insert \(j\), conditional on a Poisson event occurring at time
\(\tau\) while the current base is \(S\)

Note that, in Lemma~\ref{lem:conditioned-generator1}, we suppose the rate $ \lambda(\tau)=\frac{k}{\tau}$ and we select exchange $(i,j)\sim\frac{Q^{\x}_{S}(i,j)}{k}$ such that 
\begin{equation*}
    \begin{aligned}
      \left.\frac{\partial^+V^{t}_{S,\tau}(r)}{\partial r}\right|_{r=\tau}&=\sum_{i\in S}\nabla_iF_t(\tau\1_S)
  +\tau\lambda(\tau)\E_{(i,j)\sim p(\tau,S)}
  \left[\nabla_jF_t(\tau\1_{S-i})-\nabla_iF_t(\tau\1_{S-i})\right]\\
  &=\sum_{i\in S}\nabla_iF_t(\tau\1_S)
  +k\cdot \E_{(i,j)\sim \frac{Q^{\x}_{S}}{k}}
  \left[\nabla_jF_t(\tau\1_{S-i})-\nabla_iF_t(\tau\1_{S-i})\right]\\
  &=\sum_{i\in S}\nabla_iF_t(\tau\1_S)
  +\sum_{i\in S}\sum_{j\in U}Q^{\x}_{S}(i,j)
  \left(\nabla_jF_t(\tau\1_{S-i})-\nabla_iF_t(\tau\1_{S-i})\right)\\
  &=\sum_{i\in S}\sum_{j\in U}Q^{\x}_{S}(i,j)\nabla_jF_t(\tau\1_{S-i}),
    \end{aligned}
\end{equation*} where the final identity comes from $\sum_{j\in U}Q^{\x}_{S}(i,j)=1$.
\end{proof}
\subsection{Proof of Lemma~\ref{add_important}}\label{app:proof-online-gamma-drift}
\begin{proof}
According to Lemma~\ref{lem:repair} and the definition of $Q_S^{\x}$,  we can have\begin{equation}\label{eq:fractional-generator_inequality}
    \begin{aligned}
&\sum_{i\in S}\sum_{j\in U}
    Q_S^{\x}(i,j)
    \nabla_jF_t\bigl(\tau\1_{S-i}\bigr)\\&\ge\frac{\gamma}{C_\gamma(\tau)}\sum_{j\in U}\sum_{i\in S}Q_S^{\x}(i,j)
       \E_{R\sim\tau\1_{S}}[f_t(j\mid R)]-\frac{(1-\gamma)\tau}{C_\gamma(\tau)}
       \sum_{i\in S}\sum_{j\in U}Q_S^{\x}(i,j)\nabla_iF_t(\tau\1_{S-i}),\\
&=\frac{\gamma}{C_\gamma(\tau)}\sum_{j\in U}x_i\cdot
       \E_{R\sim\tau\1_{S}}[f_t(j\mid R)]-\frac{(1-\gamma)\tau}{C_\gamma(\tau)}
       \sum_{i\in S}\nabla_iF_t(\tau\1_{S-i}),
    \end{aligned} 
\end{equation}where $C_\gamma(\tau)=\gamma+(1-\gamma)\tau$. Furthermore, from the point \textbf{(ii)} in Theorem~\ref{thm:fractional-exchange-properties}, we can rewrite 
\begin{equation*}
   \sum_{i\in S}\nabla_iF_t(\tau\1_{S-i})=\sum_{i\in S}\sum_{j\in U}Q^{\1_{S}}_{S}(i,j)\nabla_iF_t(\tau\1_{S-j}).
\end{equation*}As a result, if we set $a_{\gamma}(\tau)=\frac{\gamma+(1-\gamma)\tau}
        {\gamma+2(1-\gamma)\tau}$, we can infer that
\begin{equation*}
\begin{aligned}
 &\sum_{i\in S}\sum_{j\in U}\left(a_{\gamma}(\tau)Q_S^{\x}(i,j)+(1-a_{\gamma}(\tau))Q_S^{\1_{S}}(i,j)\right)
    \nabla_jF_t\bigl(\tau\1_{S-i}\bigr)\\&\ge\frac{\gamma}{\gamma+2(1-\gamma)\tau}\sum_{j\in U}x_i\cdot
       \E_{R\sim\tau\1_{S}}[f_t(j\mid R)]       
\end{aligned}
\end{equation*}From the definition of multilinear extension, we know that the $j$-th coordinate of $\textbf{g}_{t,S,\tau}$ satisfies $g_{t,S,\tau}(j)=\E_{R\sim\tau\1_{S}}[f_t(j\mid R)]$. So we have 
\begin{equation*}
\begin{aligned}
 &\sum_{i\in S}\sum_{j\in U}\left(a_{\gamma}(\tau)Q_S^{\x}(i,j)+(1-a_{\gamma}(\tau))Q_S^{\1_{S}}(i,j)\right)
    \nabla_jF_t\bigl(\tau\1_{S-i}\bigr)\\&\ge\frac{\gamma}{\gamma+2(1-\gamma)\tau}\sum_{j\in U}x_i\cdot
       \E_{R\sim\tau\1_{S}}[f_t(j\mid R)]=\frac{\gamma}{\gamma+2(1-\gamma)\tau}\langle\x,\textbf{g}_{t,S,\tau}\rangle\\
       &=\frac{\gamma}{\gamma+2(1-\gamma)\tau}\langle\1_{O},\textbf{g}_{t,S,\tau}\rangle+\frac{\gamma}{\gamma+2(1-\gamma)\tau}\langle\x-\1_{O},\textbf{g}_{t,S,\tau}\rangle\\
      &=\frac{\gamma}{\gamma+2(1-\gamma)\tau}\sum_{j\in O}\cdot
       \E_{R\sim\tau\1_{S}}[f_t(j\mid R)]+\frac{\gamma}{\gamma+2(1-\gamma)\tau}\langle\x-\1_{O},\textbf{g}_{t,S,\tau}\rangle\\
       &\ge\frac{\gamma^{2}}{\gamma+2(1-\gamma)\tau}(f_t(O)-F_{t}(\tau 1_{S}))+\frac{\gamma}{\gamma+2(1-\gamma)\tau}\langle\x-\1_{O},\mathbf{g}_{t,S,\tau}\rangle,
\end{aligned}
\end{equation*}where the final inequality comes from the $\gamma$-weak submodularity.
\end{proof}
\subsection{Proof of Theorem~\ref{thm:ope-gamma-regret}}\label{app:proof-ope-gamma-regret}
\begin{proof}
 In this subsection, we mainly focus on the verification of Theorem~\ref{thm:ope-gamma-regret}. Before going into the details, we define some useful notation. Firstly, we rewrite the evolution of fractional point $\x_{t,r},\forall r\in N$ in \texttt{OPE} algorithm as $\x_t(\tau),\forall\tau\in[\epsilon,1]$. Specifically, for any $\tau\in[\epsilon,1]$, we set $r^{max}(\tau)\triangleq\max_{0\le r\le N}\{r:\tau_r\le\tau\}$ with $\tau_0=\epsilon$. Then, we can infer that 
\begin{equation*}
\x_t(\tau)\triangleq\x_{t,r^{max}(\tau)},\forall\tau\in[\epsilon,1].
\end{equation*}Similarly, we also reset the the evolution of fractional point $S_{t,r},\forall r\in N$ in \texttt{OPE} algorithm as 
\begin{equation*}
    S_t(\tau)\triangleq S_{t,r^{max}(\tau)},\forall\tau\in[\epsilon,1].
\end{equation*}
Then, according to Lemma~\ref{lem:conditioned-generator1} and Line 9 in \texttt{OPE} algorithm, we can know that
\begin{equation*}
    \begin{aligned}
      \left.\frac{\partial^+V^{t}_{S_t(\tau),\tau}(r)}{\partial r}\right|_{r=\tau}=\sum_{i\in S_t(\tau)}\sum_{j\in U}\left(a_{t}(\tau)Q^{\x_t(\tau)}_{S_t(\tau)}(i,j)+(1-a_{t}(\tau)Q^{\1_{S_t(\tau)}}_{S_t(\tau)}(i,j))\right)\nabla_jF_t(\tau\1_{S_t(\tau)-i}).
    \end{aligned}
\end{equation*} Here, we also set the evolution of $a_{t,r},\forall r\in N$ in \texttt{OPE} algorithm as $ a_t(\tau)\triangleq a_{t,r^{max}(\tau)},\forall\tau\in[\epsilon,1]$.

Next, from Lemma~\ref{add_important}, we get 
\begin{equation*}
    \begin{aligned}
         \left.\frac{\partial^+V^{t}_{S_t(\tau),\tau}(r)}{\partial r}\right|_{r=\tau}&=\sum_{i\in S_t(\tau)}\sum_{j\in U}\left(a_{t}(\tau)Q^{\x_t(\tau)}_{S_t(\tau)}(i,j)+(1-a_{t}(\tau)Q^{\1_{S_t(\tau)}}_{S_t(\tau)}(i,j))\right)\nabla_jF_t(\tau\1_{S_t(\tau)-i})\\
&=\sum_{i\in S_t(\tau)}\sum_{j\in U}\left(a_{\gamma}(\tau)Q^{\x_t(\tau)}_{S_t(\tau)}(i,j)+(1-a_{\gamma}(\tau)Q^{\1_{S_t(\tau)}}_{S_t(\tau)}(i,j))\right)\nabla_jF_t(\tau\1_{S_t(\tau)-i})\\
&+\left(a_t(\tau)-a_{\gamma}(\tau)\right)\cdot\sum_{i\in S_t(\tau)}\sum_{j\in U}\left(Q^{\x_t(\tau)}_{S_t(\tau)}(i,j)-Q^{\1_{S_t(\tau)}}_{S_t(\tau)}(i,j))\right)\nabla_jF_t(\tau\1_{S_t(\tau)-i})
         \\&\ge\frac{\gamma^{2}}{\gamma+2(1-\gamma)\tau}(f_t(O)-F_{t}(\tau 1_{S_t(\tau)}))+\frac{\gamma}{\gamma+2(1-\gamma)\tau}\langle\x_t(\tau)-\1_{O},\mathbf{g}_{t,S_t(\tau),\tau}\rangle\\&+\left(a_t(\tau)-a_{\gamma}(\tau)\right)\cdot\sum_{i\in S_t(\tau)}\sum_{j\in U}\left(Q^{\x_t(\tau)}_{S_t(\tau)}(i,j)-Q^{\1_{S_t(\tau)}}_{S_t(\tau)}(i,j))\right)\nabla_jF_t(\tau\1_{S_t(\tau)-i}).
    \end{aligned}
\end{equation*}For notational convenience, we also define \begin{equation*}
    \Delta_{t,r}\triangleq\sum_{i\in S_{t,r-1}}\sum_{j\in U}
\left(
    Q_{S_{t,r-1}}^{\x_{t,r}}(i,j)
    -
    Q_{S_{t,r-1}}^{\1_{S_{t,r-1}}}(i,j)
\right)
\nabla_jF_t\bigl(\tau_r\1_{S_{t,r-1}-i}\bigr).
\end{equation*} and set $\Delta_t(\tau)\triangleq\Delta_{t,r^{max}(\tau)},\forall\tau\in[\epsilon,1]$.

As a result, we have 
\begin{equation}\label{add_sum}
    \begin{aligned}
\left.\frac{\partial^+\left(\sum_{t=1}^{T}V^{t}_{S_t(\tau),\tau}(r)\right)}{\partial r}\right|_{r=\tau}&\ge \frac{\gamma^{2}}{\gamma+2(1-\gamma)\tau}\left(\sum_{t=1}^{T}f_t(O)-\sum_{t=1}^{T}F_{t}(\tau 1_{S_t(\tau)})\right)\\&+\frac{\gamma}{\gamma+2(1-\gamma)\tau}\sum_{t=1}^{T}\langle\x_t(\tau)-\1_{O},\mathbf{g}_{t,S_t(\tau),\tau}\rangle+\sum_{t=1}^{T}\left(a_t(\tau)-a_{\gamma}(\tau)\right)\Delta_t(\tau).
    \end{aligned}
\end{equation}

If we set $Q^{(T)}(\tau)\triangleq\sum_{t=1}^{T}\E[F_{t}(\tau 1_{S_t(\tau)})]$, from Eq.\eqref{add_sum}, we have 
\begin{equation*}
    \begin{aligned}
        \left.\frac{\partial^+Q^{(T)}(r)}{\partial r}\right|_{r=\tau}&\ge \frac{\gamma^{2}}{\gamma+2(1-\gamma)\tau}\left(\sum_{t=1}^{T}f_t(O)-Q^{(T)}(\tau)\right)\\&+\frac{\gamma}{\gamma+2(1-\gamma)\tau}\sum_{t=1}^{T}\langle\x_t(\tau)-\1_{O},\mathbf{g}_{t,S_t(\tau),\tau}\rangle+\sum_{t=1}^{T}\left(a_t(\tau)-a_{\gamma}(\tau)\right)\Delta_t(\tau).
    \end{aligned}
\end{equation*}

Furthermore, under Assumption~\ref{ass:ope-oracles}, we can have 
\begin{equation}\label{add_regret}
\begin{aligned}
      \E\left[\sum_{t=1}^{T}\langle\1_{O}-\x_t(\tau),\mathbf{g}_{t,S_t(\tau),\tau}\rangle\right]&\le O\left(k\sqrt{\log\left(n/k\right)T}\right),\\
      \E\left[\sum_{t=1}^{T}\left(a_t(\tau)-a_{\gamma}(\tau)\right)\Delta_t(\tau)\right]&\le O\left(k\sqrt{T}\right).
\end{aligned}
\end{equation}

By applying the previous regret results Eq.\eqref{add_regret} about oracle $\mathcal{E}$ and weight oracle $\mathcal{A}$, we immediately obtain:
\begin{equation*}
    \left.\frac{\partial^+Q^{(T)}(r)}{\partial r}\right|_{r=\tau}\ge \frac{\gamma^{2}}{\gamma+2(1-\gamma)\tau}\left(\sum_{t=1}^{T}f_t(O)-Q^{(T)}(\tau)\right)-M_{0}\left(k\sqrt{\log\left(n/k\right)T}+k\sqrt{T}\right),
\end{equation*} where we assume $M_0$ is the hidden large constant in Eq.\eqref{add_regret}.

Finally, according to the Lemma~\ref{lem:potential-comparison} in Appendix~\ref{app:auxiliary-results}, we can show that
\begin{equation*}
    \begin{aligned}
       Q^{(T)}(1)&\ge e^{-H(1)}Q^{(T)}(\epsilon)+(1-e^{-H(1)}Q^{(T)})\cdot\sum_{t=1}^{T}f_t(O) \\
       &-M_{0}\left(k\sqrt{\log\left(n/k\right)T}+k\sqrt{T}\right)\int_{\epsilon}^{1}\exp{\left(-\left(H(1)-H(s)\right)\right)}\mathrm{d}s,
    \end{aligned}
\end{equation*} where $H(\tau)\triangleq\int_{\epsilon}^{\tau}\frac{\gamma^{2}}{\gamma+2(1-\gamma)u}\mathrm{d}u=\frac{\gamma^{2}}{2(1-\gamma)}\log(\frac{\gamma+2(1-\gamma)\tau}{\gamma+2(1-\gamma)\epsilon})$. Note that $H(\tau)=\frac{\gamma^{2}}{2(1-\gamma)}\log(\frac{\gamma+2(1-\gamma)\tau}{\gamma+2(1-\gamma)\epsilon})$ is a non-decreasing function about the variable $\tau$ such that  $\exp{\left(-\left(H(1)-H(s)\right)\right)}\le 1$.

As a result, we have 
\begin{equation*}
    \begin{aligned}
        Q^{(T)}(1)=\sum_{t=1}^{T}\E[f_{t}(S_t)]&\ge (1-e^{-H(1)}Q^{(T)})\cdot\sum_{t=1}^{T}f_t(O)-M_{0}\left(k\sqrt{\log\left(n/k\right)T}+k\sqrt{T}\right)\\
        &=\left(1-(\frac{\gamma+2(1-\gamma)\epsilon}{2-\gamma})^{\frac{\gamma^{2}}{2(1-\gamma)}}\right)\cdot\sum_{t=1}^{T}f_t(O)-M_{0}\left(k\sqrt{\log\left(n/k\right)T}+k\sqrt{T}\right)\\&\ge(1-\rho_{\gamma}-\epsilon)\sum_{t=1}^{T}f_t(O)-M_{0}\left(k\sqrt{\log\left(n/k\right)T}+k\sqrt{T}\right),
    \end{aligned}
\end{equation*} where the first inequality comes from $Q^{(T)}(\epsilon)\ge0$ and the final inequality comes from Lemma~\ref{lem:finite-start-correction} as well as $\rho_{\gamma}\triangleq(\frac{\gamma}{2-\gamma})^{\frac{\gamma^{2}}{2(1-\gamma)}}$. So we can get Eq.\eqref{eq:ope-gamma-regret} of Theorem~\ref{thm:ope-gamma-regret}.
\end{proof}
\section{Proofs of Section~\ref{sec:online-two-sided}}\label{app:online-two-sided}
\subsection{Proof of Lemma~\ref{add_important1}}\label{app:proof-online-two-sided-drift}
\begin{proof}
According to Lemma~\ref{lem:two-sided-repair} and the definition of $Q_S^{\x}$,  we can have\begin{equation}\label{eq:fractional-generator_inequality+}
    \begin{aligned}
&\sum_{i\in S}\sum_{j\in U}
    Q_S^{\x}(i,j)
    \nabla_jF_t\bigl(\tau\1_{S-i}\bigr)\\&\ge\frac{\zeta}{C_\zeta(\tau)}\sum_{j\in U}\sum_{i\in S}Q_S^{\x}(i,j)
       \E_{R\sim\tau\1_{S}}[f_t(j\mid R)]-\frac{(1-\zeta)\tau}{C_\zeta(\tau)}
       \sum_{i\in S}\sum_{j\in U}Q_S^{\x}(i,j)\nabla_iF_t(\tau\1_{S-i}),\\
&=\frac{\zeta}{C_\zeta(\tau)}\sum_{j\in U}x_i\cdot
       \E_{R\sim\tau\1_{S}}[f_t(j\mid R)]-\frac{(1-\zeta)\tau}{C_\zeta(\tau)}
       \sum_{i\in S}\nabla_iF_t(\tau\1_{S-i}),
    \end{aligned} 
\end{equation}where $C_\zeta(\tau)=\zeta+(1-\zeta)\tau$. Furthermore, from the point \textbf{(ii)} in Theorem~\ref{thm:fractional-exchange-properties}, we can rewrite 
\begin{equation*}
   \sum_{i\in S}\nabla_iF_t(\tau\1_{S-i})=\sum_{i\in S}\sum_{j\in U}Q^{\1_{S}}_{S}(i,j)\nabla_iF_t(\tau\1_{S-j}).
\end{equation*}As a result, if we set $a_{\zeta}(\tau)=\frac{\zeta+(1-\zeta)\tau}
        {\zeta+2(1-\zeta)\tau}$, we can infer that
\begin{equation*}
\begin{aligned}
 &\sum_{i\in S}\sum_{j\in U}\left(a_{\zeta}(\tau)Q_S^{\x}(i,j)+(1-a_{\zeta}(\tau))Q_S^{\1_{S}}(i,j)\right)
    \nabla_jF_t\bigl(\tau\1_{S-i}\bigr)\\&\ge\frac{\zeta}{\zeta+2(1-\zeta)\tau}\sum_{j\in U}x_i\cdot
       \E_{R\sim\tau\1_{S}}[f_t(j\mid R)]       
\end{aligned}
\end{equation*}From the definition of multilinear extension, we know that the $j$-th coordinate of $\textbf{g}_{t,S,\tau}$ satisfies $g_{t,S,\tau}(j)=\E_{R\sim\tau\1_{S}}[f_t(j\mid R)]$. So we have 
\begin{equation*}
\begin{aligned}
 &\sum_{i\in S}\sum_{j\in U}\left(a_{\zeta}(\tau)Q_S^{\x}(i,j)+(1-a_{\zeta}(\tau))Q_S^{\1_{S}}(i,j)\right)
    \nabla_jF_t\bigl(\tau\1_{S-i}\bigr)\\&\ge\frac{\zeta}{\zeta+2(1-\zeta)\tau}\sum_{j\in U}x_i\cdot
       \E_{R\sim\tau\1_{S}}[f_t(j\mid R)]=\frac{\zeta}{\zeta+2(1-\zeta)\tau}\langle\x,\textbf{g}_{t,S,\tau}\rangle\\
       &=\frac{\zeta}{\zeta+2(1-\zeta)\tau}\langle\1_{O},\textbf{g}_{t,S,\tau}\rangle+\frac{\zeta}{\zeta+2(1-\zeta)\tau}\langle\x-\1_{O},\textbf{g}_{t,S,\tau}\rangle\\
      &=\frac{\zeta}{\zeta+2(1-\zeta)\tau}\sum_{j\in O}\cdot
       \E_{R\sim\tau\1_{S}}[f_t(j\mid R)]+\frac{\zeta}{\zeta+2(1-\zeta)\tau}\langle\x-\1_{O},\textbf{g}_{t,S,\tau}\rangle\\
       &\ge\frac{\zeta\gamma}{\gamma+2(1-\zeta)\tau}(f_t(O)-F_{t}(\tau 1_{S}))+\frac{\zeta}{\zeta+2(1-\zeta)\tau}\langle\x-\1_{O},\mathbf{g}_{t,S,\tau}\rangle,
\end{aligned}
\end{equation*}where the final inequality comes from the $\gamma$-weak submodularity.
\end{proof}

\subsection{Proof of Theorem~\ref{thm:ope-two-sided-regret}}\label{app:proof-ope-two-sided-regret}
Under Lemma~\ref{lem:two-sided-repair} and Assumption~\ref{ass:ope-oracles}, the proof of
Theorem~\ref{thm:ope-two-sided-regret} follows the same argument as that of
Theorem~\ref{thm:ope-gamma-regret}. We therefore omit the details to avoid repetition.

\section{Proofs of Section~\ref{sec:online-cardinality}}\label{app:online-cardinality}

\subsection{Proof of Lemma~\ref{lem:32}}\label{app:proof-online-cardinality-drift}
\begin{proof} We write $X=O\setminus S$ and $Y=O\cap S$. Then, we can show that 
\begin{equation*}
\begin{aligned}
&\sum_{j\in U}\sum_{i\in S}\bar Q_S^{\x}(i,j)\nabla_jF_t\bigl(\tau\1_{S-i}\bigr)\\&=\sum_{j\in S}\sum_{i\in S}\bar Q_S^{\x}(i,j)\nabla_jF_t\bigl(\tau\1_{S-i}\bigr)+\sum_{j\in U\setminus S}\sum_{i\in S}\bar Q_S^{\x}(i,j)\nabla_jF_t\bigl(\tau\1_{S-i}\bigr)\\
&=\sum_{j\in S}x_j\nabla_j F_t(\tau\1_{S-j})
 +\sum_{j\in U\setminus S}\sum_{i\in S}p_i(\x,S)x_j
   \nabla_jF_t(\tau\1_{S-i})\\&=\langle\x,\ell^{\mathrm{card}}_{t,S,\tau,\x}\rangle=\langle\1_{O},\ell^{\mathrm{card}}_{t,S,\tau,\x}\rangle+\langle\x-\1_O,\ell^{\mathrm{card}}_{t,S,\tau,\x}\rangle.
\end{aligned}
\end{equation*}

Then, from the definition of $\ell^{\mathrm{card}}_{t,S,\tau,\x}$, we can show that
\begin{equation}\label{eq:card-online-comparator-expand}
    \begin{aligned}
\langle\1_{O},\ell^{\mathrm{card}}_{t,S,\tau,\x}\rangle   =\sum_{z\in Y}\nabla_z F_t(\tau\1_{S-z})
 +\sum_{o\in X }\sum_{i\in S}p_i(\x,S)
   \nabla_oF_t(\tau\1_{S-i})
    \end{aligned}
\end{equation}

For $A\subsetneq S$, we set
\begin{equation}\label{eq:card-online-weights}
 b_A=\tau^{|A|}(1-\tau)^{k-1-|A|},
 \qquad
 w_A=b_A\sum_{i\in S\setminus A}p_i(\x,S),
 \qquad w_S=0.
\end{equation}
According to the multilinear expansion of
Eq.\eqref{eq:card-online-comparator-expand}, we can know that the coefficient of
$f_t(o\mid A)$ is $w_A$ for every $o\in X$.  For every
$z\in Y\setminus A$, the coefficient of $f_t(z\mid A)$ is $b_A$. Thus, we can show that
\begin{align}\left\langle\1_O,\ell^{\mathrm{card}}_{t,S,\tau,y}\right\rangle
 &\ge\sum_{A\subseteq S}\sum_{z\in Y\setminus A}b_A\cdot f_t(z\mid A)+\sum_{A\subseteq S}\sum_{o\in X}w_A\cdot f_t(o\mid A)\\&\ge\sum_{A\subseteq S}w_A\sum_{o\in O\setminus A}f_t(o\mid A)\ge\gamma\sum_{A\subseteq S}w_A
       \bigl(f_t(O)-f_t(A)\bigr),
 \label{eq:card-online-weak-average}
\end{align}
where the second inequality comes from  and
$b_A\ge w_A$ due to that $\sum_{i\in S\setminus A}p_i(\x,S)\le1$. The weights form a probability distribution:
\begin{equation}\label{eq:card-online-weight-sum}
 \sum_{A\subseteq S}w_A
 =\sum_{i\in S}p_i(\x,S)\sum_{A\subseteq S-i}
   \tau^{|A|}(1-\tau)^{k-1-|A|}
 =\sum_{i\in S}p_i(\x,S)=1.
\end{equation}
Moreover,
\begin{equation}\label{eq:card-online-weighted-value}
 \sum_{A\subseteq S}w_Af_t(A)
 =\sum_{i\in S}p_i(\x,S)F_t(\tau\1_{S-i})
 \le F_t(\tau\1_S).
\end{equation}
Substituting
Eq.\eqref{eq:card-online-weight-sum} and Eq.\eqref{eq:card-online-weighted-value}
into Eq.\eqref{eq:card-online-weak-average} proves
\begin{equation*}
 \left\langle\1_O,\ell^{\mathrm{card}}_{t,S,\tau,y}\right\rangle\ge\gamma(f_{t}(O)-F_t(\tau\1_S)).   
\end{equation*} Thus, we get the result of Eq.\eqref{eq:online-card-drift}.
\end{proof}
\subsection{Proof of Theorem~\ref{thm:ope-card-regret}}\label{app:proof-ope-card-regret}
\begin{proof}
 In this subsection, we mainly focus on the verification of Theorem~\ref{thm:ope-card-regret}. Before going into the details, we define some useful notation. Firstly, we rewrite the evolution of fractional point $\x_{t,r},\forall r\in N$ in \texttt{OPEC} algorithm as $\x_t(\tau),\forall\tau\in[\epsilon,1]$. Specifically, for any $\tau\in[\epsilon,1]$, we set $r^{max}(\tau)\triangleq\max_{0\le r\le N}\{r:\tau_r\le\tau\}$ with $\tau_0=\epsilon$. Then, we can infer that 
\begin{equation*}
\x_t(\tau)\triangleq\x_{t,r^{max}(\tau)},\forall\tau\in[\epsilon,1].
\end{equation*}Similarly, we also reset the the evolution of fractional point $S_{t,r},\forall r\in N$ in \texttt{OPE} algorithm as 
\begin{equation*}
    S_t(\tau)\triangleq S_{t,r^{max}(\tau)},\forall\tau\in[\epsilon,1].
\end{equation*}
Then, according to Lemma~\ref{lem:32} and Line 8 in \texttt{OPEC} algorithm, we can know that
\begin{equation*}
    \begin{aligned}
      \left.\frac{\partial^+V^{t}_{S_t(\tau),\tau}(r)}{\partial r}\right|_{r=\tau}&=\sum_{i\in S_t(\tau)}\sum_{j\in U}\bar{Q}^{\x_t(\tau)}_{S_t(\tau)}(i,j)\nabla_jF_t(\tau\1_{S_t(\tau)-i})\\
      &\ge\gamma(f_t(O)-F_{t}(\tau 1_{S_t(\tau)}))+\langle\x_t(\tau)-\1_{O},\ell^{\mathrm{card}}_{t,S_t(\tau),\tau,\x_t(\tau)}\rangle.
    \end{aligned}
\end{equation*} 

If we set $Q^{(T)}(\tau)\triangleq\sum_{t=1}^{T}\E[F_{t}(\tau 1_{S_t(\tau)})]$,  we have 
\begin{equation*}
    \begin{aligned}
        \left.\frac{\partial^+Q^{(T)}(r)}{\partial r}\right|_{r=\tau}\ge \gamma\left(\sum_{t=1}^{T}f_t(O)-Q^{(T)}(\tau)\right)+\sum_{t=1}^{T}\langle\x_t(\tau)-\1_{O},\ell^{\mathrm{card}}_{t,S_t(\tau),\tau,\x_t(\tau)}\rangle
    \end{aligned}
\end{equation*}

Furthermore, under Assumption~\ref{ass:ope-oracles2}, we can have 
\begin{equation}\label{add_regret2}
\begin{aligned}
      \E\left[
            \sum_{t=1}^T
            \langle\1_{O}-\x_t(\tau),\ell^{\mathrm{card}}_{t,S_t(\tau),\tau,\x_t(\tau)}\rangle
        \right]
        \leq \mathcal{O}\left(
            k\sqrt{\log\left(n/k\right)T}\right).
\end{aligned}
\end{equation}

By applying the previous regret results Eq.\eqref{add_regret2} about oracle $\mathcal{E}$, we immediately obtain:
\begin{equation*}
    \left.\frac{\partial^+Q^{(T)}(r)}{\partial r}\right|_{r=\tau}\ge \gamma\left(\sum_{t=1}^{T}f_t(O)-Q^{(T)}(\tau)\right)-M_{0}k\sqrt{\log\left(n/k\right)T},
\end{equation*} where we assume $M_0$ is the hidden large constant in Eq.\eqref{add_regret2}.

Finally, according to the Lemma~\ref{lem:potential-comparison} in Appendix~\ref{app:auxiliary-results}, we can show that
\begin{equation*}
    \begin{aligned}
       Q^{(T)}(1)=\sum_{t=1}^{T}\E[f_{t}(S_t)]&\ge e^{-\gamma(1-\epsilon)}Q^{(T)}(\epsilon)+(1-e^{-\gamma(1-\epsilon)})\cdot\sum_{t=1}^{T}f_t(O) \\
     &-M_{0}\cdot k\sqrt{\log\left(n/k\right)T}\cdot\int_{\epsilon}^{1}\exp{\left(-\gamma(1-s)\right)}\mathrm{d}s\\
     &\ge (1-e^{-\gamma(1-\epsilon)})\cdot\sum_{t=1}^{T}f_t(O)-M_{0}\cdot k\sqrt{\log\left(n/k\right)T}\\
     &\ge(1-e^{-\gamma}-\epsilon)\cdot\sum_{t=1}^{T}f_t(O)-M_{0}\cdot k\sqrt{\log\left(n/k\right)T} ,
    \end{aligned}
\end{equation*} where the second inequality comes from $Q^{(T)}(\epsilon)\ge0$ and $\exp{\left(-\gamma(1-s)\right)}\le 1$ when $s\in[\epsilon,1]$, and the final inequality follows from $e^{-\gamma(1-\epsilon)})\le e^{-\gamma}-\epsilon$. So we can get Eq.\eqref{eq:ope-card-regret} of Theorem~\ref{thm:ope-card-regret}.
\end{proof}

\section{Proofs of Section~\ref{sec:online-alpha}}\label{app:online-alpha}
\subsection{Proof of Lemma~\ref{lem:online-alpha-drift}}\label{app:proof-online-alpha-drift}
\begin{proof} From the definition of Eq.\eqref{equ:ell}, we can show that
\begin{equation*}
    \begin{aligned}
      &\sum_{i\in S}\sum_{j\in U}
Q_S^{\x}(i,j)
\nabla_j F_t\bigl(\tau\1_{S-i}\bigr)=\sum_{j\in U}y_j\sum_{i\in S}
\frac{Q_S^{\x}(i,j)}{x_j}
\nabla_j F_t\bigl(\tau\1_{S-i}\bigr)\\
&=\langle\x,\ell_{t,S,\tau,\x}\rangle=\langle\1_{O},\ell_{t,S,\tau,\x}\rangle+\langle
\x-\1_{O},\ell_{t,S,\tau,\mathbf{y}}\rangle.
    \end{aligned}
\end{equation*}  According to the definition of multilinear extension, we have 
\begin{align*}
 \langle\1_{O},\ell_{t,S,\tau,\x}\rangle& =\sum_{o\in O}\sum_{i\in S}
 \frac{Q_S^{\x}(i,o)}{x_o}\nabla_{o} F_t(\tau\1_{S-i})\\&\ge\E_{R\sim\tau\1_S}\left[\sum_{o\in O\setminus R}\sum_{i\in S}
 \frac{Q_S^{\x}(i,o)}{x_o}f_t(o\mid R-i)\right]\\&\ge\alpha\cdot\E_{R\sim\tau\1_S}\left[\sum_{o\in O\setminus R}\sum_{i\in S}
 \frac{Q_S^{\x}(i,o)}{x_o}f_t(o_q\mid R+o_1+\cdots+o_{q-1})\right]\\&=\alpha\cdot\E_{R\sim\tau\1_S}\left[\sum_{o\in O\setminus R}f_t(o_q\mid R+o_1+\cdots+o_{q-1})\right]\\&\ge\alpha(f_t(O)-F_t(\tau\1_S)),
\end{align*} where the second inequality uses $R-i\subseteq R+o_1+\cdots+o_{q-1}$ and weak DR-submodularity.
\end{proof}
\subsection{Proof of Theorem~\ref{thm:ope-DR-regret}}\label{app:proof-ope-dr-regret}
\begin{proof}
 In this subsection, we mainly focus on the verification of Theorem~\ref{thm:ope-DR-regret}. Before going into the details, we define some useful notation. Firstly, we rewrite the evolution of fractional points $\x_{t,r},\mathbf{y}_{t,r},\forall r\in N$ in \texttt{OPE-DR} algorithm as $\x_t(\tau),\mathbf{y}_{t}(\tau),\forall\tau\in[\epsilon,1]$. Specifically, for any $\tau\in[\epsilon,1]$, we set $r^{max}(\tau)\triangleq\max_{0\le r\le N}\{r:\tau_r\le\tau\}$ with $\tau_0=\epsilon$. Then, we can infer that 
\begin{equation*}
\x_t(\tau)\triangleq\x_{t,r^{max}(\tau)},\quad\mathbf{y}_t(\tau)\triangleq\mathbf{y}_{t,r^{max}(\tau)},\quad\forall\tau\in[\epsilon,1].
\end{equation*}Similarly, we also reset the the evolution of fractional point $S_{t,r},\forall r\in N$ in \texttt{OPE} algorithm as 
\begin{equation*}
    S_t(\tau)\triangleq S_{t,r^{max}(\tau)},\forall\tau\in[\epsilon,1].
\end{equation*}
Then, according to Lemma~\ref{lem:online-alpha-drift} and Line 9 in \texttt{OPE-DR} algorithm, we can know that
\begin{equation*}
    \begin{aligned}
      \left.\frac{\partial^+V^{t}_{S_t(\tau),\tau}(r)}{\partial r}\right|_{r=\tau}&=\sum_{i\in S_t(\tau)}\sum_{j\in U}Q^{\x_t(\tau)}_{S_t(\tau)}(i,j)\nabla_jF_t(\tau\1_{S_t(\tau)-i})\\
      &\ge\alpha(f_t(O)-F_{t}(\tau 1_{S_t(\tau)}))+\langle\x_t(\tau)-\1_{O},\ell_{t,S_t(\tau),\tau,\x_t(\tau)}\rangle.
    \end{aligned}
\end{equation*} 

If we set $Q^{(T)}(\tau)\triangleq\sum_{t=1}^{T}\E[F_{t}(\tau 1_{S_t(\tau)})]$,  we have 
\begin{equation*}
    \begin{aligned}
        \left.\frac{\partial^+Q^{(T)}(r)}{\partial r}\right|_{r=\tau}\ge \alpha\left(\sum_{t=1}^{T}f_t(O)-Q^{(T)}(\tau)\right)+(1-\epsilon)\sum_{t=1}^{T}\langle\x_t(\tau)-\1_{O},\ell_{t,S_t(\tau),\tau,\x_t(\tau)}\rangle
    \end{aligned}
\end{equation*}

Furthermore, under Assumption~\ref{ass:ope-oracles1}, we can have 
\begin{equation}\label{add_regret1}
\begin{aligned}
      \E\left[
            \sum_{t=1}^T\langle\1_{O}-\x_t(\tau),\ell_{t,S_t(\tau),\tau,\x_t(\tau)}\rangle
        \right]
        \leq \mathcal{O}\left(
            k\sqrt{\log\left(n/k\right)T}\right).
\end{aligned}
\end{equation}

By applying the previous regret results Eq.\eqref{add_regret1} about oracle $\mathcal{E}$, we immediately obtain:
\begin{equation*}
    \left.\frac{\partial^+Q^{(T)}(r)}{\partial r}\right|_{r=\tau}\ge \alpha\left(\sum_{t=1}^{T}f_t(O)-Q^{(T)}(\tau)\right)-(1-\epsilon)M_{0}k\sqrt{\log\left(n/k\right)T},
\end{equation*} where we assume $M_0$ is the hidden large constant in Eq.\eqref{add_regret1}.

Finally, according to the Lemma~\ref{lem:potential-comparison} in Appendix~\ref{app:auxiliary-results}, we can show that
\begin{equation*}
    \begin{aligned}
       Q^{(T)}(1)=\sum_{t=1}^{T}\E[f_{t}(S_t)]&\ge e^{-\alpha(1-\epsilon)}Q^{(T)}(\epsilon)+(1-e^{-\alpha(1-\epsilon)})\cdot\sum_{t=1}^{T}f_t(O) \\
     &-M_{0}k\sqrt{\log\left(n/k\right)T}\cdot \int_{\epsilon}^{1}\exp{\left(-\alpha(1-s)\right)}\mathrm{d}s\\
     &\ge (1-e^{-\alpha(1-\epsilon)})\cdot\sum_{t=1}^{T}f_t(O)-M_{0}\cdot k\sqrt{\log\left(n/k\right)T}\\
     &\ge(1-e^{-\alpha}-\epsilon)\cdot\sum_{t=1}^{T}f_t(O)-M_{0}\cdot k\sqrt{\log\left(n/k\right)T},
    \end{aligned}
\end{equation*} where the second inequality comes from $Q^{(T)}(\epsilon)\ge0$ and $\exp{\left(-\alpha(1-s)\right)}\le 1$ when $s\in[\epsilon,1]$, and the final inequality comes from  $e^{-\alpha(1-\epsilon)}\le e^{-\alpha}-\epsilon$. So we can get Eq.\eqref{eq:ope-alpha-regret} of Theorem~\ref{thm:ope-DR-regret}.
\end{proof}

\section{Online Linear Oracles}\label{app:online-linear-oracles}
\label{sec:online-linear-oracles}
In this appendix, we mainly focus on constructing two 
stochastic online linear oracles used in our proposed online algorithms.

\subsection{Stochastic Online Linear Optimization over a Matroid Polytope}
\label{subsec:matroid-linear-oracle}
In this subsection, we suppose that $\cM=(U,\cI)$ is a matroid of rank $k$ with $n=|U|$ and $P(\cM)\triangleq\operatorname{conv}\{\1_I:I\in\cI\}
$ denotes the corresponding matroid polytope. 

Then, we consider a online optimization over $P(\cM)$. Specifically, at each round $t\in[T]$, the online leaner outputs a predictable
point $\x_t\in P(\cM)$ and subsequently receives a stochastic gradient
estimator $\tilde{\mathbf g}_t$ satisfying $\E_t[\tilde{\mathbf g}_t]=\mathbf g_t$ and $\tilde g_t(j)\in[0,G], \forall j\in U$, where $\E_t[\cdot]$ denotes conditional expectation given the history
available when $\x_t$ is selected and $\tilde g_t(j)$ is the $j$-th coordinate of $\tilde{\mathbf g}_t$.

Furthermore, for $\mathbf{x}\in\R_+^U$, we define the negative entropy as follows:
\[
\Psi(\mathbf{x})\coloneqq
\sum_{j\in U}\bigl(x_j\log x_j-x_j\bigr),
\]
with the convention $0\log 0=0$.  Its associated Bregman divergence is the KL divergence, namely,
\[
D_\Psi(\mathbf{x},\mathbf z)
\coloneqq
\sum_{j\in U}
\left(x_j\log\frac{x_j}{z_j}-x_j+z_j\right).
\]

With these preliminaries in place, we now recall the standard online entropic mirror-ascent algorithm~\citep{hazan2022introduction}, presented in Algorithm~\ref{alg:matroid-entropy-omd}. 

\begin{coltalgorithm}[t]
\caption{Stochastic Online entropic mirror ascent over $P(\cM)$}
\label{alg:matroid-entropy-omd}
\KwIn{Matroid polytope $P(\cM)$, horizon $T$, and feedback bound
$G$}
Set $\Lambda_{n,k}\leftarrow1+\log(n/k)$ and
$\eta\leftarrow
\min\{1,\sqrt{\Lambda_{n,k}/T}\}/G$\;
Compute the entropy center $\mathbf{x}_1\in\argmin_{\mathbf{x}\in P(\cM)}\Psi(\mathbf{x})$\;
\For{$t=1,\ldots,T$}{
    Output $\x_t$ and receive a stochastic feedback vector
    $\tilde{\mathbf g}_t\in[0,G]^U$\;
    Set
    $z_{t+1}(j)\leftarrow
    x_{t}(j)\exp\bigl(\eta\cdot
    \tilde g_t(j)\bigr)$ for every $j\in U$\;
    Set
    $\displaystyle
    \x_{t+1}\leftarrow
    \argmin_{\mathbf{x}\in P(\cM)}D_\Psi(\mathbf{x},\mathbf z_{t+1})$\;
}
\end{coltalgorithm}

Algorithm~\ref{alg:matroid-entropy-omd} first uses the entropy center $\mathbf{x}_1\in\argmin_{\mathbf{x}\in P(\cM)}\Psi(\mathbf{x})$ as its initial decision. Then, at Line~5, it performs a multiplicative mirror-ascent update using the
stochastic feedback vector $\widetilde{\mathbf g}_t\in[0,G]^U$. Finally, in Line~6, we maps the resulting point back to the feasible region
$P(\cM)$ through a Bregman projection. Next, we show that every point returned by
Algorithm~\ref{alg:matroid-entropy-omd} is full-rank.

\subsubsection{Full-Rank Property of Algorithm~\ref{alg:matroid-entropy-omd}}
We first recall a fractional augmentation property of matroid
polytopes. Before that, for any point $\x\in P(\cM)$, we define $\x(U)=\sum_{i\in U}x_i$. Then, we can show that
\begin{lemma}[Fractional augmentation]
\label{lem:E-fractional-augmentation}
Let $\mathbf a,\mathbf b\in P(\cM)$ satisfy
$a(U)<b(U)$. Then there exist an element $j\in U$ and a scalar
$\delta>0$ such that
\[
a_j<b_j
\qquad\text{and}\qquad
\mathbf a+\delta\mathbf e_j\in P(\cM).
\]
\end{lemma}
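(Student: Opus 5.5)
This is the fractional analogue of the matroid augmentation axiom. I would prove it with the rank-function description of the polytope,
\[
P(\cM)=\{\x\in\R_+^U:\ x(A)\le r(A)\ \forall A\subseteq U\},
\]
where $r$ is the (submodular) rank function of $\cM$. Fix $\mathbf a,\mathbf b\in P(\cM)$ with $a(U)<b(U)$, and let $J\triangleq\{j\in U:a_j<b_j\}$. This set is nonempty because $a(U)<b(U)$.

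\textbf{Setting up the contradiction.} Suppose the claim fails. Then for every $j\in J$, every $\delta>0$ gives $\mathbf a+\delta\mathbf e_j\notin P(\cM)$. Nonnegativity is unaffected by increasing a coordinate, and a set $A$ with $x(A)<r(A)$ leaves room for a small increase. Since there are finitely many sets $A$, this means that for each $j\in J$ some set $T_j\ni j$ is tight for $\mathbf a$, i.e.\ $a(T_j)=r(T_j)$.

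\textbf{Tight sets form a lattice.} Call $A$ tight if $a(A)=r(A)$. If $A$ and $B$ are tight, then by modularity of $a$, submodularity of $r$, and feasibility of $\mathbf a$,
\[
a(A\cup B)+a(A\cap B)=a(A)+a(B)=r(A)+r(B)\ge r(A\cup B)+r(A\cap B)\ge a(A\cup B)+a(A\cap B).
\]
So equality holds throughout, and both $A\cup B$ and $A\cap B$ are tight. Consequently $T\triangleq\bigcup_{j\in J}T_j$ is tight and contains $J$.

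\textbf{Deriving the contradiction.} Every $j\notin T$ lies outside $J$, so $b_j\le a_j$. Hence
\[
b(U)=b(T)+b(U\setminus T)\le r(T)+a(U\setminus T)=a(T)+a(U\setminus T)=a(U),
\]
using $\mathbf b\in P(\cM)$ for the bound on $b(T)$ and tightness of $T$ for the equality. This contradicts $a(U)<b(U)$. So some $j\in J$ lies in no tight set, and $\delta\triangleq\min_{A\ni j}\bigl(r(A)-a(A)\bigr)>0$ works for that $j$.

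\textbf{Main obstacle.} The only nontrivial ingredient is the polyhedral description of $P(\cM)=\operatorname{conv}\{\1_S:S\in\cI\}$ through the rank inequalities. This is Edmonds' theorem, which I would cite as standard rather than reprove. The remaining step needing care is the uncrossing argument, which is what makes the union of tight sets tight.
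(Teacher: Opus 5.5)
Your proof is correct and follows essentially the same route as the paper. Both arguments assume no coordinate in $J=\{j:a_j<b_j\}$ can be raised, obtain a tight set through each $j\in J$, uncross via submodularity of the rank function to get one tight set containing $J$, and conclude $b(U)\le r(T)+a(U\setminus T)=a(U)$; you are somewhat more explicit than the paper in justifying why blocked increases force a tight set through $j$ and in naming Edmonds' rank-inequality description of $P(\cM)$, which the paper uses implicitly.
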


\begin{proof}
Let $J\coloneqq\{j\in U:a_j<b_j\}$. We suppose, toward a contradiction, that no coordinate in $J$ can be
increased while preserving feasibility. Then, for every $j\in J$,
there exists a set $A_j\subseteq U$ containing $j$ that is tight at
$\mathbf a$, i.e.,
\[
a(A_j)=r_{\cM}(A_j),
\] where $r_{\cM}(A)\triangleq\max_{I\in\cI,I\subseteq A}|I|$ is the rank function of matroid $\cM$. Note that the family of tight sets is closed under unions. More specifically, due to the submodularity of rank function $r_{\cM}$, for any two tight sets $C$ and $D$, we have
\begin{align*}
&r_{\cM}(C\cup D)+r_{\cM}(C\cap D)\ge
a(C\cup D)+a(C\cap D)
\\&=a(C)+a(D)=r_{\cM}(C)+r_{\cM}(D)\\
&\geq r_{\cM}(C\cup D)+r_{\cM}(C\cap D)\geq a(C\cup D)+a(C\cap D),
\end{align*} where the first equality comes from the linearity of $a$. As a result, if we set $A\coloneqq\bigcup_{j\in J}A_j$, we can show  $a(A)=r_{\cM}(A)$. 
For every $j\notin A$, we have $j\notin J$ and hence
$b_j\leq a_j$. Consequently,
\[
b(U)
=
b(A)+b(U\setminus A)
\leq
r_{\cM}(A)+a(U\setminus A)=a(A)+a(U\setminus A)
=
a(U),
\]
contradicting $a(U)<b(U)$.  
\end{proof}

With the fractional augmentation property, we next investigate the entropy center $\x_1$, that is, 
\begin{lemma}[Entropy-center properties]
\label{lem:E-entropy-center}
The initialization $\x_1$ is strictly positive and full-rank. Moreover, for every $\mathbf u\in P(\cM)$,
\begin{equation}\label{eq:E-entropy-diameter}
D_\Psi(\mathbf u,\x_1)
\leq
k\log\left(\frac{en}{k}\right).
\end{equation}
\end{lemma}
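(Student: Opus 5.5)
Three things have to be shown about the entropy center $\x_1\in\argmin_{\x\in P(\cM)}\Psi(\x)$: it is strictly positive, it is full-rank, and it satisfies the diameter bound \eqref{eq:E-entropy-diameter}. Positivity and full rank both follow from first-order optimality and the fact that $\partial_j\Psi(\x)=\log x_j$. The diameter bound follows from a lower bound on the coordinates of $\x_1$ or, more cleanly, from the three-point (Pythagorean) inequality. Throughout I assume, as is implicit in the algorithm, that $\cM$ has no loops. Loops satisfy $x_j\equiv0$ on $P(\cM)$ and can be discarded.

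\emph{Positivity.} Suppose $x_{1,j}=0$ for some non-loop $j$. Take any base $B\ni j$ and set $\x_\theta=(1-\theta)\x_1+\theta\1_B\in P(\cM)$. The directional derivative of $\Psi$ along $\1_B-\x_1$ is $\sum_{\ell}(\1_B(\ell)-x_{1,\ell})\log x_{\ell}$. The coordinate $j$ contributes a term whose derivative tends to $-\infty$ as $\theta\downarrow0$, since $\theta\log\theta$ has slope $\log\theta\to-\infty$. The remaining coordinates contribute bounded terms, except for other zero coordinates, which also push in the same negative direction. Hence $\Psi(\x_\theta)<\Psi(\x_1)$ for small $\theta$, contradicting minimality.

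\emph{Full rank.} Suppose $x_1(U)<k$. Apply Lemma~\ref{lem:E-fractional-augmentation} with $\mathbf a=\x_1$ and $\mathbf b=\1_B$ for a base $B$. This gives an index $j$ and $\delta>0$ with $\x_1+\delta\mathbf e_j\in P(\cM)$. Moving along $\mathbf e_j$ changes $\Psi$ at rate $\log x_{1,j}$, and this is negative whenever $x_{1,j}<1$. Because $a_j<b_j\le1$, we do have $x_{1,j}<1$, so $\Psi$ strictly decreases along this direction, again a contradiction. Therefore $x_1(U)=k$.

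\emph{Diameter bound.} The first-order optimality condition for the minimizer gives $\langle\nabla\Psi(\x_1),\mathbf u-\x_1\rangle\ge0$ for every $\mathbf u\in P(\cM)$. Substituting this into the identity $D_\Psi(\mathbf u,\x_1)=\Psi(\mathbf u)-\Psi(\x_1)-\langle\nabla\Psi(\x_1),\mathbf u-\x_1\rangle$ yields $D_\Psi(\mathbf u,\x_1)\le\Psi(\mathbf u)-\Psi(\x_1)$. It then suffices to bound $\max_{P(\cM)}\Psi-\min_{P(\cM)}\Psi$.

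For the maximum, each $u_j\in[0,1]$ gives $u_j\log u_j\le0$, so $\Psi(\mathbf u)\le -u(U)\le0$. For the minimum, $\sum_j x_j\log x_j$ subject to $x(U)=m$ with $m\le k$ is minimized at the uniform point, where it equals $m\log(m/n)$. Hence $\Psi(\x_1)\ge k\log(k/n)-k$; the function $m\log(m/n)-m$ is decreasing on $[0,k]$ because $n\ge k$. Combining the two bounds gives $D_\Psi(\mathbf u,\x_1)\le -u(U)+k\log(n/k)+k\le k\log(en/k)$.

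The main obstacle I expect is the positivity argument, because several coordinates may vanish at once and the derivative is only one-sided at the boundary. I would handle this by computing $\Psi(\x_\theta)-\Psi(\x_1)$ directly. It is $\theta\log\theta\cdot\sum_{\ell:x_{1,\ell}=0}\1_B(\ell)+O(\theta)$, and this is strictly negative for small $\theta>0$ because the sum contains the term for $j$.
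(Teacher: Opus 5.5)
Your proposal is correct and follows essentially the same route as the paper. Positivity comes from perturbing toward a feasible point that is positive on the vanishing coordinates. Full rank comes from Lemma~\ref{lem:E-fractional-augmentation} together with $\partial_j\Psi=\log x_j<0$. The diameter bound combines first-order optimality with $\Psi(\mathbf u)\le 0$ and the uniform-point lower bound. The only differences are minor: you mix toward a single base $\1_B$ with $B\ni j$ rather than the paper's averaged strictly positive witness $\overline{\x}$, and your remark that $m\log(m/n)-m$ is decreasing makes the diameter bound independent of the full-rank step.
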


\begin{proof} Before going into the details, we fix a base $B^{(u)}\in\cB$ containing $u$ for every $u\in U$ and define
\begin{equation}\label{eq:E-positive-witness}
\overline{\x}
\coloneqq
\frac{1}{n}
\sum_{u\in U}\1_{B^{(u)}}.
\end{equation}

We first establish the positivity of $\x^{o}$. We suppose, for
contradiction, that $\x_1(j)=0$ for some $j\in U$, and set  $Z\coloneqq\{i\in U: \x_1(i)=0\}$. Then, for any $\varepsilon\in(0,1)$, let
\[
\x_\varepsilon
\coloneqq
(1-\varepsilon)\x_1
+\varepsilon\overline{\x}.
\]
By the convexity of $P(\cM)$, we have
$\x_\varepsilon\in P(\cM)$. Recall that $\Psi(\x)
=
\sum_{j\in U}\bigl(x_j\log x_j-x_j\bigr)$, where $0\log 0\coloneqq0$. For every $j\in Z$, we have
$\x_{\varepsilon}(j)=\varepsilon\overline{\x}(j)$, and hence
\begin{align*}
&\x_{\varepsilon}(j)\log \x_{\varepsilon}(j)
-\x_{\varepsilon}(j)
-\bigl(\x_{1}(j)\log \x_{1}(j)-\x_{1}(j)\bigr)\\
&\qquad
=
\varepsilon\overline{\x}(j)\log\varepsilon
+\varepsilon\overline{\x}(j)
\bigl(\log \overline{\x}(j)-1\bigr).
\end{align*}
On the other hand, for each $j\notin Z$, the function
$u\mapsto u\log u-u$ is continuously differentiable in a neighborhood
of $\x_1(j)>0$. Therefore,
\[
\x_{\varepsilon}(j)\log \x_{\varepsilon}(j)
-\x_{\varepsilon}(j)
-\bigl(\x_{1}(j)\log \x_{1}(j)-\x_{1}(j)\bigr)
=
\varepsilon
(\overline{\x}(j)-\x_{1}(j))\log \x_{1}(j)
+O(\varepsilon^2).
\]
Summing over all coordinates gives
\[
\Psi(\x_\varepsilon)-\Psi(\x_1)
=
\varepsilon
\left(\sum_{j\in Z}\overline{\x}(j)\right)\log\varepsilon
+O(\varepsilon).
\]
Since $\overline{\x}(j)>0$ for every $j\in U$ and $Z\neq\varnothing$, we have $\sum_{j\in Z}\overline{\x}(j)>0$. Consequently,
\[
\frac{\Psi(\x_\varepsilon)-\Psi(\x_1)}{\varepsilon}
=
\left(\sum_{j\in Z}\overline{\x}(j)\right)\log\varepsilon
+O(1)
\longrightarrow-\infty
\qquad\text{as }\varepsilon\downarrow0.
\]
Thus, $\Psi(\x_\varepsilon)<\Psi(\x_1)$ for all sufficiently small
$\varepsilon>0$, contradicting the optimality of $\x_1$. Hence,
$\x_1(j)>0$ for every $j\in U$.

Then, we suppose that $\x_1(U)<k=\overline{\x}(U)$. By
Lemma~\ref{lem:E-fractional-augmentation}, there exist $j\in U$ and
$\delta>0$ such that
\[
\x_1(j)<\overline{\x}(j)
\qquad\text{and}\qquad
\x_1+\delta\mathbf e_j\in P(\cM).
\]
Because $\x_1(j)<1$, $\frac{\partial\Psi(\x_1)}{\partial x_j}
=
\log(\x_1(j))<0$. Therefore, increasing coordinate $j$
slightly strictly will decrease $\Psi$, again contradicting the optimality of $\x_1$. Hence $\x_1(U)=k$. Finally, the first-order optimality gives $\left\langle
\nabla\Psi(\x_1),
\mathbf u-\x_1
\right\rangle
\geq0$, and hence
\[
D_\Psi(\mathbf u,\x_1)
\leq
\Psi(\mathbf u)-\Psi(\x_1).
\]
Since every coordinate of $\mathbf u$ lies in $[0,1]$,
$\Psi(\mathbf u)\leq0$. Moreover, we also know that 
\[
\min_{\x(U)=k,\x\ge0}\Psi(\x)=\Psi(\frac{k\cdot\1_{U}}{n})= k\log\left(\frac{k}{n}\right)-k.
\]
Since $\x_1(U)=k$, it follows that $\Psi(\x_1)
\geq
k\log\left(\frac{k}{n}\right)-k$. Consequently,
\[
D_\Psi(\mathbf u,\x_1)
\leq
k\log\left(\frac{n}{k}\right)+k
=
k\log\left(\frac{en}{k}\right).
\]
\end{proof}

Finally, we prove all iterates of Algorithm~\ref{alg:matroid-entropy-omd} is positive and full-rank, that is,
\begin{lemma}
\label{lem:E-full-rank}
Every iterate returned by Algorithm~\ref{alg:matroid-entropy-omd} is strictly
positive and full-rank.
\end{lemma}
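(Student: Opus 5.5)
We argue by induction on $t$. The base case $t=1$ is exactly Lemma~\ref{lem:E-entropy-center}: $\x_1$ is strictly positive and satisfies $\x_1(U)=k$. For the inductive step, we assume $\x_t$ is strictly positive and full-rank. Since $\tilde{\mathbf g}_t\in[0,G]^U$ and $\eta>0$, the multiplicative update gives $z_{t+1}(j)=x_t(j)\exp(\eta\tilde g_t(j))\ge x_t(j)>0$ for every $j$. So $\mathbf z_{t+1}$ is strictly positive, and coordinatewise it is at least $\x_t$. It remains to show that the Bregman projection $\x_{t+1}=\argmin_{\x\in P(\cM)}D_\Psi(\x,\mathbf z_{t+1})$ inherits both properties. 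Existence and uniqueness hold because $P(\cM)$ is compact and $D_\Psi(\cdot,\mathbf z)$ is strictly convex.

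\textbf{Positivity.} We repeat the perturbation argument from Lemma~\ref{lem:E-entropy-center}, now applied to $\Phi(\x)\coloneqq D_\Psi(\x,\mathbf z_{t+1})=\Psi(\x)-\langle\x,\log\mathbf z_{t+1}\rangle+\text{const}$. This differs from $\Psi$ only by a linear term with finite coefficients, because $\mathbf z_{t+1}>0$. Suppose $Z=\{j:x_{t+1}(j)=0\}\neq\varnothing$, and set $\x_\varepsilon=(1-\varepsilon)\x_{t+1}+\varepsilon\overline{\x}$, where $\overline{\x}$ is the strictly positive witness from Eq.~\eqref{eq:E-positive-witness}. The zero coordinates contribute $\varepsilon(\sum_{j\in Z}\overline{x}_j)\log\varepsilon$, while all other contributions, including the linear term, are $O(\varepsilon)$. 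Hence $\Phi(\x_\varepsilon)<\Phi(\x_{t+1})$ for small $\varepsilon$, a contradiction.

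\textbf{Full rank.} Suppose $\x_{t+1}(U)<k$. We apply Lemma~\ref{lem:E-fractional-augmentation} with $\mathbf a=\x_{t+1}$. The choice of $\mathbf b$ is the main obstacle: we need a coordinate $j$ that can be increased and along which $\Phi$ decreases, i.e.\ $\partial_j\Phi(\x_{t+1})=\log(x_{t+1}(j)/z_{t+1}(j))<0$, equivalently $x_{t+1}(j)<z_{t+1}(j)$. The first attempt is $\mathbf b=\x_t$, which is full-rank by induction. The lemma then returns $j$ with $x_{t+1}(j)<x_t(j)\le z_{t+1}(j)$ and $\x_{t+1}+\delta\mathbf e_j\in P(\cM)$, which is exactly what we need. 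Since $x_{t+1}(j)>0$ by positivity, $\Phi$ is differentiable in that direction, and a small step along $\mathbf e_j$ strictly decreases $\Phi$ while staying feasible. This contradicts optimality, so $\x_{t+1}(U)=k$. The step that needs care is confirming that the inequality $z_{t+1}\ge\x_t$ supplied by the nonnegative feedback is exactly what closes the gap. If the feedback could be negative, this choice of $\mathbf b$ would fail, and one would instead need the tight-set structure of the KKT conditions.
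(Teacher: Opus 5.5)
Your proof is correct and follows essentially the same route as the paper. Both argue by induction from Lemma~\ref{lem:E-entropy-center}, use the same $\varepsilon\log\varepsilon$ perturbation toward $\overline{\x}$ for positivity, and for full rank apply Lemma~\ref{lem:E-fractional-augmentation} with $\mathbf a=\x_{t+1}$, $\mathbf b=\x_t$ together with $x_{t+1}(j)<x_t(j)\le z_{t+1}(j)$ to obtain a feasible descent direction. The paper additionally proves the coordinatewise bound $\x_{t+1}\le\mathbf z_{t+1}$ along the way, which your argument correctly shows is unnecessary.
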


\begin{proof}
The claim holds at initialization by
Lemma~\ref{lem:E-entropy-center}. Suppose that $\x_t$ is
strictly positive and satisfies $\x_t(U)=k$. Since
$\tilde{\mathbf g}_t\geq\mathbf0$, $z_{t+1}(j)
=
x_{t}(j)
\exp\bigl(
    \eta_{\mathcal E}\tilde g_t(j)
\bigr)
\geq
x_{t}(j)>0$.

Suppose, for contradiction, that $\x_{t+1}$ has at least one zero
coordinate, and let $Z\coloneqq\{j\in U:\x_{t+1}(j)=0\}$. Then, for any $\varepsilon\in(0,1)$, define
\[
\mathbf{y}_{\epsilon}
\coloneqq
(1-\varepsilon)\x_{t+1}
+\varepsilon\overline{\x},
\]
where $\overline{\x}\in P(\cM)$ is defined in Eq.\eqref{eq:E-positive-witness}. For each coordinate $j\in U$,  we define 
\begin{equation*}
    \phi_j(u)
\coloneqq
u\log\frac{u}{z_{t+1}(j)}-u+z_{t+1}(j),
\end{equation*} so that $D_{\Psi}(\x,\mathbf z_{t+1})
=
\sum_{j\in U}\phi_j(\x(j))$. For every $j\in Z$, we have
$\mathbf{y}_{\epsilon}(j)=\varepsilon\overline{x}(j)$, and therefore
\begin{align*}
\phi_j(\mathbf{y}_{\epsilon}(j))-\phi_j(\x_{t+1}(j))
&=
\varepsilon\overline{x}(j)
\log\frac{\epsilon\overline{x}(j)}{z_{t+1}(j)}
-\varepsilon\overline{x}(j)\\
&=
\varepsilon\overline{x}(j)\log\varepsilon
+\varepsilon\overline{x}(j)
\left(
\log\frac{\overline{x}(j)}{z_{t+1}(j)}-1
\right).
\end{align*}
For every $j\notin Z$, we have $\x_{t+1}(j)>0$. Since $\phi_j$ is continuously
differentiable in a neighborhood of $\x_{t+1}(j)$, with $\phi_j'(\x_{t+1}(j))=\log\frac{\x_{t+1}(j)}{z_{t+1}(j)}$,
a first-order expansion gives
\[
\phi_j(\mathbf{y}_{\epsilon}(j))-\phi_j(\x_{t+1}(j))
=
\varepsilon\cdot(\overline{x}(j)-\x_{t+1}(j))
\log\frac{\x_{t+1}(j)}{z_{t+1}(j)}
+O(\varepsilon^2).
\]
Summing over all coordinates, we obtain
\[
D_{\Psi}(\mathbf y_\varepsilon,\mathbf z_{t+1})
-
D_{\Psi}(\x_{t+1},\mathbf z_{t+1})
=
\varepsilon
\left(\sum_{j\in Z}\overline{x}(j)\right)
\log\varepsilon
+O(\varepsilon).
\]
Because $\overline{\x}$ is strictly positive and $Z\neq\varnothing$, we have 
\[
\frac{
D_{\Psi}(\mathbf y_\varepsilon,\mathbf z_{t+1})
-
D_{\Psi}(\x_{t+1},\mathbf z_{t+1})
}{\varepsilon}
=
\left(\sum_{j\in Z}\overline{x}(j)\right)\log\varepsilon
+O(1)
\longrightarrow-\infty
\]
as $\varepsilon\downarrow0$. Consequently, $D_{\Psi}(\mathbf y_\varepsilon,\mathbf z_{t+1})
<
D_{\Psi}(\x_{t+1},\mathbf z_{t+1})$ for all sufficiently small $\varepsilon>0$. This contradicts the
optimality of $\x_{t+1}$. Hence, every coordinate of
$\x_{t+1}$ is strictly positive.

We next claim that
\begin{equation}\label{eq:E-y-below-z}
\x_{t+1}(j)\leq z_{t+1}(j)
\qquad
\text{for every }j\in U.
\end{equation}
If $\x_{t+1}(j)>z_{t+1}(j)$, we set 
\begin{equation*}
    \mathbf{y}(b)=\x_{t+1}-b\cdot(\x_{t+1}(j)-z_{t+1}(j))\mathbf{e}_j.
\end{equation*}Due to the down-closedness of $P(\cM)$, we have $\mathbf{y}(b)\in P(\cM)$ for any $b\in[0,1]$.Similarly, we can show 
\[
\left.\frac{\mathrm{d}D_\Psi(\mathbf{y}(b),\mathbf z_{t+1})}{\mathrm{d}b}\right|_{b=0}=-(\x_{t+1}(j)-z_{t+1}(j))\log\frac{\x_{t+1}(j)}{z_{t+1}(j)}\le 0.
\] Thus,there exists $b\ge$ such that $D_\Psi(\mathbf{y}(b),\mathbf z_{t+1})\le D_\Psi(\x_{t+1},\mathbf z_{t+1})$.
This contradicts the optimality of $\x_{t+1}$. Suppose now that $\x_{t+1}(U)<k=\x_t(U)$. By Lemma~\ref{lem:E-fractional-augmentation}, there exist $j\in U$ and
$\delta>0$ such that
\[
\x_{t+1}(j)<\x_{t}(j)
\qquad\text{and}\qquad
\x_{t+1}+\delta\mathbf e_j\in P(\cM).
\]
According to Line 5 in Algorithm~\ref{alg:matroid-entropy-omd} and $\tilde{g}_t(t)\ge0$, we also have $x_{t+1}(j)<\x_{t}(j)\leq z_{t+1}(j)$. Hence, for all sufficiently small $\delta'>0$, $\x_{t+1}+\delta'\mathbf e_j\in P(\cM)$ and $\x_{t+1}(j)+\delta'<z_{t+1}(j)$ such that 

\[
\frac{\mathrm{d}D_\Psi(\x_{t+1}+\delta'\mathbf e_j,\mathbf z_{t+1})}{\mathrm{d}\delta'}=\log\frac{\x_{t+1}(j)+\delta'}{z_{t+1}(j)}< 0.
\] 
It follows that
\[
D_\Psi(\x_{t+1}+\delta'\mathbf e_j,\mathbf z_{t+1})
<
D_\Psi(\x_{t+1},\mathbf z_{t+1}),
\]
again contradicting the optimality of $\x_{t+1}$. Therefore, $\x_{t+1}(U)=k$.
\end{proof}

\subsubsection{Regret of Algorithm~\ref{alg:matroid-entropy-omd}}

The following result gives the regret guarantee.  Let
$\{\mathcal F_t\}_{t\geq0}$ denote the natural filtration, and suppose that
$\x_t$ is $\mathcal F_{t-1}$-measurable and
\begin{equation}
\label{eq:vector-unbiased-feedback}
\E[\tilde{\mathbf g}_t\mid\mathcal F_{t-1}]
=\mathbf g_t,
\qquad
\tilde{\mathbf g}_t\in[0,G]^U.
\end{equation}

\begin{theorem}
\label{thm:matroid-entropy-regret}
For every horizon $T\geq1$ and comparator $\mathbf u\in P(\cM)$,
Algorithm~\ref{alg:matroid-entropy-omd} satisfies
\begin{equation}
\label{eq:matroid-entropy-regret}
\E\!\left[
\sum_{t=1}^T
\langle\mathbf u-\x_t,\mathbf g_t\rangle
\right]
\leq
2G\cdot k
\sqrt{T\log\!\left(\frac{en}{k}\right)}.
\end{equation}
\end{theorem}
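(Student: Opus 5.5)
The plan is the standard online mirror descent analysis with entropic regularizer, applied to the gains $\tilde{\mathbf g}_t\in[0,G]^U$, followed by a conditional-expectation step to pass from $\tilde{\mathbf g}_t$ to $\mathbf g_t$. First, I fix a comparator $\mathbf u\in P(\cM)$ and analyze one step. By Line~5, $\nabla\Psi(\mathbf z_{t+1})=\nabla\Psi(\x_t)+\eta\tilde{\mathbf g}_t$. This gives the three-point identity
\[
\eta\langle \mathbf u-\x_t,\tilde{\mathbf g}_t\rangle
= D_\Psi(\mathbf u,\x_t)-D_\Psi(\mathbf u,\mathbf z_{t+1})+D_\Psi(\x_t,\mathbf z_{t+1}).
\]
Since $\x_{t+1}$ is the Bregman projection of $\mathbf z_{t+1}$ onto the convex set $P(\cM)$, the generalized Pythagorean inequality gives $D_\Psi(\mathbf u,\mathbf z_{t+1})\ge D_\Psi(\mathbf u,\x_{t+1})+D_\Psi(\x_{t+1},\mathbf z_{t+1})\ge D_\Psi(\mathbf u,\x_{t+1})$. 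Summing over $t$ then telescopes, yielding
\[
\eta\sum_{t}\langle\mathbf u-\x_t,\tilde{\mathbf g}_t\rangle\le D_\Psi(\mathbf u,\x_1)+\sum_t D_\Psi(\x_t,\mathbf z_{t+1}).
\]

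Next I bound each term on the right. For the initial term, Lemma~\ref{lem:E-entropy-center} gives $D_\Psi(\mathbf u,\x_1)\le k\log(en/k)$. For the stability term, I compute coordinatewise with $z_{t+1}(j)=x_t(j)e^{\eta\tilde g_t(j)}$:
\[
D_\Psi(\x_t,\mathbf z_{t+1})=\sum_j x_t(j)\bigl(e^{\eta\tilde g_t(j)}-1-\eta\tilde g_t(j)\bigr).
\]
Because $\eta\tilde g_t(j)\le \eta G\le1$, the inequality $e^{s}-1-s\le s^2$ for $s\in[0,1]$ applies. Together with full rank $\x_t(U)=k$ from Lemma~\ref{lem:E-full-rank}, this gives $D_\Psi(\x_t,\mathbf z_{t+1})\le \eta^2G^2 k$. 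Dividing by $\eta$,
\[
\sum_t\langle\mathbf u-\x_t,\tilde{\mathbf g}_t\rangle\le \frac{k\log(en/k)}{\eta}+\eta G^2kT.
\]

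Then I take expectations. Since $\x_t$ and $\mathbf u$ are $\mathcal F_{t-1}$-measurable, the tower property with \eqref{eq:vector-unbiased-feedback} gives $\E\langle\mathbf u-\x_t,\tilde{\mathbf g}_t\rangle=\E\langle\mathbf u-\x_t,\mathbf g_t\rangle$. It remains to plug in $\eta=\min\{1,\sqrt{\Lambda_{n,k}/T}\}/G$ with $\Lambda_{n,k}=\log(en/k)$. When $T\ge\Lambda_{n,k}$, both terms equal $Gk\sqrt{T\Lambda_{n,k}}$, for a total of $2Gk\sqrt{T\Lambda_{n,k}}$. When $T<\Lambda_{n,k}$, we have $\eta=1/G$ and the bound is $Gk\Lambda_{n,k}+GkT$. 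I expect this small-$T$ case to be the delicate point, since $Gk\Lambda_{n,k}$ can exceed the target $2Gk\sqrt{T\Lambda_{n,k}}$. There I would use the trivial bound instead: each term $\langle\mathbf u-\x_t,\mathbf g_t\rangle\le\langle\mathbf u,\mathbf g_t\rangle\le Gk$ because $\mathbf g_t\ge0$ and $\mathbf u(U)\le k$. Summing gives at most $GkT\le Gk\sqrt{T\Lambda_{n,k}}$. The proof should state this fallback explicitly.

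A second point to check is that the Pythagorean step is justified for the KL divergence on $P(\cM)$. This needs $\x_{t+1}$ to lie in the interior of the domain of $\Psi$, which holds by the positivity established in Lemma~\ref{lem:E-full-rank}.
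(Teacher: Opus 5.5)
Your proposal is correct and follows the paper's own argument. The shared ingredients are the Pythagorean inequality for the KL projection onto $P(\cM)$, the one-step expansion of $D_\Psi(\mathbf u,\mathbf z_{t+1})-D_\Psi(\mathbf u,\x_t)$ (algebraically identical to your three-point identity with stability term $D_\Psi(\x_t,\mathbf z_{t+1})$), the bound $e^{v}\le 1+v+v^2$ on $[0,1]$, the tower property, and Lemma~\ref{lem:E-entropy-center} for the initial divergence.

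Your explicit fallback $GkT\le Gk\sqrt{T\Lambda_{n,k}}$ for $T<\Lambda_{n,k}$ is a genuine improvement. The paper's final step silently skips that regime, where $\eta=1/G$ makes $k\Lambda_{n,k}/\eta+\eta G^2kT$ exceed $2Gk\sqrt{T\Lambda_{n,k}}$. The paper also states its conclusion with constant $1$ rather than the $2$ that its computation actually yields.
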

\begin{remark}
Both Theorem~\ref{thm:matroid-entropy-regret} and Lemma~\ref{lem:E-full-rank} immediately indicate the
validity of condition~(i) in Assumption~\ref{ass:ope-oracles}, as well
as Assumption~\ref{ass:ope-oracles1} and  Assumption~\ref{ass:ope-oracles2}, in which $G=1$.
\end{remark}\begin{proof}The Pythagorean inequality of Bregman divergence  gives
\[
D_\Psi(\mathbf u,\mathbf z_{t+1})
\geq
D_\Psi(\mathbf u,\x_{t+1})
+
D_\Psi(\x_{t+1},\mathbf z_{t+1}).
\]
Since Bregman divergences are nonnegative, it follows that $D_\Psi(\mathbf u,\x_{t+1})
\leq
D_\Psi(\mathbf u,\mathbf z_{t+1})$,

By the definition of the
generalized KL divergence, we also have 
\begin{align*}
&D_\Psi(\mathbf u,\mathbf z_{t+1})
-D_\Psi(\mathbf u,\x_t)\\
&=
\sum_{j\in U}
u_j\log\frac{x_{t,j}}{z_{t+1,j}}
+
\sum_{j\in U}(z_{t+1,j}-x_{t,j})\\
&=-\eta
\left\langle
\mathbf u,\tilde{\mathbf g}_t
\right\rangle
+\sum_{j\in U}x_{t,j}
\left(
e^{\eta\widetilde g_t(j)}-1\right),
\end{align*} where the second equality comes from $\log\frac{x_{t,j}}{z_{t+1,j}}
=
-\eta\widetilde g_t(j)$ and $z_{t+1,j}-x_{t,j}
=
x_{t,j}
\left(
e^{\eta\tilde g_t(j)}-1
\right)$.

Substituting these identities yields
\begin{align}
D_\Psi(\mathbf u,\x_{t+1})
&\leq D_\Psi(\mathbf u,\mathbf z_{t+1})\notag\\
&=
D_\Psi(\mathbf u,\x_t)
-\eta
\left\langle
\mathbf u,\tilde{\mathbf g}_t
\right\rangle
+\sum_{j\in U}x_{t,j}
\left(
e^{\eta\tilde g_t(j)}-1
\right).
\label{eq:kl-one-step-expansion}
\end{align}

Since
$0\leq\eta\tilde g_t(j)\leq1$ and
$e^v\leq1+v+v^2$ for every $v\in[0,1]$, we have
\[
e^{\eta\tilde g_t(j)}-1
\leq
\eta\tilde g_t(j)
+\eta^2\tilde g_t(j)^2.
\]
Applying this inequality to
Eq.~\eqref{eq:kl-one-step-expansion} gives
\begin{align*}
D_\Psi(\mathbf u,\x_{t+1})
&\leq
D_\Psi(\mathbf u,\x_t)
-\eta
\left\langle
\mathbf u,\tilde{\mathbf g}_t
\right\rangle\\
&\quad
+\eta
\left\langle
\x_t,\tilde{\mathbf g}_t
\right\rangle
+\eta^2
\sum_{j\in U}
x_{t,j}\tilde g_t(j)^2\\
&=
D_\Psi(\mathbf u,\x_t)
-\eta
\left\langle
\mathbf u-\x_t,\tilde{\mathbf g}_t
\right\rangle
+\eta^2
\sum_{j\in U}
x_{t,j}\tilde g_t(j)^2.
\end{align*}
Then, the following inequality holds:
\begin{equation}\label{eq:entropy-one-step}
\begin{aligned}
  \E[\left\langle
\mathbf u-\x_t,\mathbf g_t
\right\rangle]&=\E[\left\langle
\mathbf u-\x_t,\E_t[\widetilde{\mathbf g}_t]\right\rangle=\E[\left\langle
\mathbf u-\x_t,\widetilde{\mathbf g}_t
\right\rangle]
\\&\leq
\frac{
\E[D_\Psi(\mathbf u,\x_t)]
-
\E[D_\Psi(\mathbf u,\x_{t+1})]
}{\eta}
+
\eta
\E[\sum_{j\in U}
x_{t,j}\tilde g_t(j)^2].  
\end{aligned}
\end{equation}

Summing Eq.~\eqref{eq:entropy-one-step}, using
$\x_t(U)=k$, and applying $D_\Psi(\mathbf u,\x_1)
\leq k\frac{en}{k}$, we obtain
\begin{equation}
\label{eq:entropy-pathwise-regret}
\sum_{t=1}^T
\E[\left\langle
\mathbf u-\x_t,\mathbf g_t
\right\rangle]
\leq
\frac{k\log(\frac{en}{k})}{\eta}
+\eta G^{2}kT.
\end{equation}
As a result, we have  $\sum_{t=1}^T
\E[\left\langle
\mathbf u-\x_t,\mathbf g_t
\right\rangle]\leq kG\sqrt{T\log(\frac{en}{k}})$.\end{proof}\subsection{One-Dimensional Online Linear Oracle over \texorpdfstring{$[1/2,1]$}{[1/2,1]}}
\label{subsec:one-dimensional-linear-oracle}
In this subsection, we review the online gradient ascent, as shown in Algorithm~\ref{alg:interval-oga}, and set it as the weight oracle in our proposed \texttt{OPE} algorithm. More
precisely, in Algorithm~\ref{alg:interval-oga}, we suppose that $a_t$ is $\mathcal F_{t-1}$-measurable and set $\E[\widetilde h_t\mid\mathcal F_{t-1}]=h_t$ and $\tilde h_t\in[-G,G]$ for some $G>0$. Then, from the Section~2.4 of \citet{shalev2012online}, we can have

\begin{proposition}
\label{prop:interval-oga-regret}
For every horizon $T\geq1$ and comparator $a\in[1/2,1]$,
Algorithm~\ref{alg:interval-oga} satisfies
\begin{equation}
\label{eq:interval-oga-regret}
\E\!\left[
\sum_{t=1}^T(a-a_t)h_t
\right]
\leq G\sqrt{T}/4.
\end{equation}
\end{proposition}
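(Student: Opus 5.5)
The plan is to run the standard projected online gradient ascent argument on the realized stochastic feedback $\tilde h_t$, and only at the end pass to expectations using the filtration structure. I will assume that Algorithm~\ref{alg:interval-oga} is the usual projected update
\[
a_{t+1}=\Pi_{[1/2,1]}\bigl(a_t+\eta\tilde h_t\bigr),
\]
initialized at the midpoint $a_1=3/4$ with step size $\eta=1/(4G\sqrt{T})$. These two choices are exactly what makes the constant come out as $G\sqrt{T}/4$ rather than the cruder $G\sqrt{T}/2$.

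\textbf{Step 1: one-step inequality.} Fix a comparator $a\in[1/2,1]$. Projection onto an interval is nonexpansive, and $a$ is in the interval, so
\[
(a_{t+1}-a)^2\le(a_t+\eta\tilde h_t-a)^2=(a_t-a)^2-2\eta(a-a_t)\tilde h_t+\eta^2\tilde h_t^2 .
\]
Rearranging and using $|\tilde h_t|\le G$ gives the pathwise bound
\[
(a-a_t)\tilde h_t\le\frac{(a_t-a)^2-(a_{t+1}-a)^2}{2\eta}+\frac{\eta G^2}{2}.
\]

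\textbf{Step 2: telescoping.} Summing Step~1 over $t=1,\dots,T$, the squared distances telescope and the final term $-(a_{T+1}-a)^2/(2\eta)$ is dropped. Since $a_1=3/4$, every comparator satisfies $(a_1-a)^2\le 1/16$. Hence, pathwise,
\[
\sum_{t=1}^T(a-a_t)\tilde h_t\le\frac{1}{32\eta}+\frac{\eta G^2T}{2}.
\]

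\textbf{Step 3: from $\tilde h_t$ to $h_t$.} The comparator $a$ is fixed, and $a_t$ is $\mathcal F_{t-1}$-measurable. Therefore
\[
\E\bigl[(a-a_t)\tilde h_t\bigr]=\E\bigl[(a-a_t)\,\E[\tilde h_t\mid\mathcal F_{t-1}]\bigr]=\E\bigl[(a-a_t)h_t\bigr].
\]
Taking expectations in Step~2 then bounds the target quantity by $\frac{1}{32\eta}+\frac{\eta G^2T}{2}$.

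\textbf{Step 4: choosing $\eta$.} Substituting $\eta=1/(4G\sqrt{T})$ makes each of the two terms equal to $G\sqrt{T}/8$. Their sum is $G\sqrt{T}/4$, which is exactly the bound in Eq.~\eqref{eq:interval-oga-regret}.

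The main point of care is the constant. The bound $G\sqrt{T}/4$ holds only if the algorithm starts at the midpoint, which gives squared diameter $1/16$, and uses the tuned step size. If Algorithm~\ref{alg:interval-oga} instead starts at an endpoint, the squared distance is $1/4$ and the same argument yields only $G\sqrt{T}/2$. So I would first check the initialization and step size in the algorithm, and if they differ, adjust the constant or the algorithm accordingly. The rest of the argument is routine.
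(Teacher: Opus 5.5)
Your proof is correct and follows the same route as the paper, which does not write out an argument but cites the standard projected online gradient ascent analysis (Section~2.4 of Shalev-Shwartz, 2012): nonexpansive projection, telescoping squared distances, and the tower property to replace $\tilde h_t$ by $h_t$. Your explicit computation also checks that the midpoint start $a_1=3/4$ (so $(a_1-a)^2\le 1/16$) together with $\eta=1/(4G\sqrt{T})$ gives exactly the stated constant $G\sqrt{T}/4$, which the paper leaves implicit.
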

\begin{remark}
Proposition~\ref{prop:interval-oga-regret} immediately implies the
validity of condition~(ii) in Assumption~\ref{ass:ope-oracles}, in which the parameter $G=k$.
\end{remark}

\begin{coltalgorithm}[t]
\caption{Stochastic online gradient ascent over $[1/2,1]$}
\label{alg:interval-oga}
\KwIn{Horizon $T$ and feedback bound $G>0$}
Set $a_1\leftarrow3/4$ and
$\eta_{\mathcal A}\leftarrow1/(4G\sqrt{T})$\;
\For{$t=1,\ldots,T$}{
    Output $a_t\in[1/2,1]$ and receive stochastic scalar feedback
    $\widetilde h_t$\;
    Set
    $\displaystyle
    a_{t+1}\leftarrow
    \Pi_{[1/2,1]}\left(a_t+\eta\cdot\tilde h_t\right)$\;
}
\end{coltalgorithm}
\ifdefined\standaloneappendix
  \bibliography{references}
\fi
\end{document}